\title{Defunctionalization of Higher-Order Constrained Horn Clauses}
\author{
Long Pham \inst{1} \and
Steven Ramsay \inst{2} \and
Luke Ong \inst{1}
}
\institute{University of Oxford, UK \\
\email{long.pham@keble.ox.ac.uk, Luke.Ong@cs.ox.ac.uk} \and
University of Bristol, UK \\
\email{steven.ramsay@bristol.ac.uk}}
\begin{document}

\maketitle


\begin{abstract}

Building on the successes of satisfiability modulo theories (SMT), Bj{\o}rner et al.~initiated a {research programme} advocating Horn constraints as a suitable basis for automatic program verification \cite{Bjorner2012}.
The notion of first-order constrained Horn clauses has recently been extended to higher-order logic by Cathcart Burn et al.~\cite{Ramsay2017}. 
To exploit the remarkable efficiency of SMT solving, a natural approach to solve systems of higher-order Horn constraints is to reduce them to systems of first-order Horn constraints.
This paper presents a defunctionalization algorithm to achieve the reduction.

Given a well-sorted higher-order constrained Horn clause (HoCHC) problem instance, the defunctionalization algorithm constructs a first-order well-sorted constrained Horn clause problem. 
In addition to well-sortedness of the algorithm's output, we prove that if an input HoCHC is solvable, then the result of its defunctionalization is solvable. 
The converse also holds, which we prove using a recent result on the continuous semantics of HoCHC.
To our knowledge, this defunctionalization algorithm is the first sound and complete reduction from systems of higher-order Horn constraints to systems of first-order Horn constraints.

We have constructed DefMono\footnotemark, a prototype implementation of the defunctionalization algorithm.
It first defunctionalizes an input HoCHC problem and then feeds the result into a backend SMT solver. 
We have evaluated the performance of DefMono empirically by comparison with two other higher-order verification tools.

\footnotetext{The web interface is available at \url{http://mjolnir.cs.ox.ac.uk/dfhochc/}. }
\end{abstract}


\section{Introduction}

\subsection{Background}

Notwithstanding the existence of undecidable problems, over the past decades, formal verification has proved to be useful and even essential to a number of computing applications. Hardware industries, particularly the semiconductor industry, have
long embraced the verification technology because the cost of manufacturing faulty hardware products is too costly. Hence, in such industries, formal verification has been used to detect bugs in early development stages. By contrast, software formal verification
had been less widely used than hardware verification because more advanced verification technology is required due to increased complexity in software. 
However, recent advances in the theory and practice of formal verification have led to wider use of formal methods in software as well. Recognising the value of formal verification, the 2007 Turing Awards were given to
Edmund Melson Clarke, E. Allen Emerson, and Joseph Sifakis for their contributions to model checking. 

Amongst the enabling technologies in the development of formal verification is satisfiability modulo theories (SMT) solvers \cite{Beckert2014}. 
Many approaches in formal verification reduce input programs to first-order constraints such as loop invariants and dependent types \cite{Bjorner2012}. These constraints are then fed into SMT solvers to check their satisfiability with respect to certain
background theories. The standardisation of input formats for SMT solvers is instrumental in accelerating the development of SMT solvers, allowing larger collections of benchmarks to be built. Also, with respect to formal verification, the standardisation
of SMT problem formats achieves separation of concerns by dividing the verification process into constraint generation and SMT solving.

Motivated by the standardisation of SMT problem formats, Bj{\o}rner et al.~propose standardization at a higher level: first-order verification problems \cite{Bjorner2012}. 
They suggest the use of constrained Horn clauses to express first-order verification problems, and their claim that Horn clauses serve as a suitable format of first-order verification problems is substantiated in \cite{Bjorner2015}. 
First-order constrained Horn clauses are subsequently extended to higher-order logic by Cathcart Burn et al.~\cite{Ramsay2017}. 

Whilst numerous verification techniques and tools have been created to verify first-order constrained Horn clauses, higher-order constrained Horn clause problems have not seen as much progress as first-order ones. 
We can exploit the advances in first-order Horn-clause solving by reducing higher-order constrained Horn clause problems to semantically equivalent first-order ones. 
This approach is pursued by Cathcart Burn et al.~\cite{Ramsay2017} using refinement types. 
In this refinement type-based approach, each free top-level relational variable is associated with a type. A valid type assignment can then be thought of as a model of an input HoCHC problem. 
One drawback of this method is incompleteness. 
Cathcart Burn et al.~\cite{Ramsay2017} report an instance of solvable HoCHC for which the refinement type-based approach produces an untypable logic program (that is, no model is found by this approach). 

In this work, I take a different approach and develop a defunctionalization algorithm to reduce higher-order constrained Horn clauses to first order ones. 
This is inspired by Reynolds's defunctionalization, a well-established method of reducing higher-order functional programs to first-order ones. 

\subsection{Related work}

\paragraph{First-order constrained Horn clause problems}

Using first-order Horn clauses to express first-order verification problems was originally proposed by Bj{\o}rner et al.~\cite{Bjorner2012}. 
They maintain that the Horn clause can serve as a suitable standard format of verification problems, enabling the development of a larger collection of benchmarks in the same format. 
In \cite{Bjorner2015}, they explain the relationship between Horn clauses and existential fixed-point logic (E+LFP), which is equivalent to Hoare logic. 
They also provide an overview of how to obtain first-order Horn clauses from first-order programs and how to solve first-order Horn clauses. 
The paper also gives a number of pointers to more detailed accounts of various Horn-clause verification methods.

\paragraph{Higher-order Horn clause problems and refinement types}

Cathcart Burn et al.~\cite{Ramsay2017} have extended the notion of constrained Horn clause problems to higher-order logic, and introduced the monotone semantics. 
Unlike the standard semantics, Horn clause problems have canonical models in monotone semantics, which is a very useful property in automated formal verification.
As an alternative representation of the higher-order constrained Horn clause problem, the monotone safety problem is introduced. Unlike the Horn clause problem,
the monotone safety problem does not contain logical implication, which is not monotone. Thus, the monotone safety problem is a more suitable representation in the monotone semantics, 
although the difference between the monotone safety problem and Horn clause problem is purely syntactic. 
The paper also explores the connection between the standard and monotone semantics, proving that any higher-order constrained Horn clause problem in the standard semantics can be converted into a semantically equivalent monotone safety problem. 

In the second half of the paper, a refinement type-based approach to verifying monotone safety problems is presented. 

\paragraph{Defunctionalization}

In the conclusion of \cite{Ramsay2017}, Cathcart Burn et al.~propose the use of Reynolds's defunctionalization to reduce higher-order Horn clause problems to first-order ones as done by the refinement type-based approach. 
This is what motivates the present work on defunctionalization of HoCHC. 
The idea of representing higher-order functions by closures to verify higher-order programs can also be found in \cite{Bjorner2013}, although this only gives a brief overview of the approach. 

Defunctionalization is explained in a detailed yet readable manner in its original paper by Reynolds \cite{Reynolds1972}. 
In this paper, typability of the apply function created as a result of defunctionalization is not considered. 
A problem arises when we deal with polymorphic languages. 
This issue is resolved using type specialization in \cite{Bell1997}. 
Another work on defunctionalization of polymorphic languages is \cite{Pottier2004}. 
Although the present work on defunctionalization of monotone safety problems does not involve polymorphic types, 
the idea of formulating a defunctionalization algorithm using inference rules comes from \cite{Pottier2004}.

\subsection{Contributions}

The chief contribution of this work is the development of a defunctionalization algorithm to reduce HoCHC to first-order constrained Horn clauses. 
With respect to the correctness of the algorithm, I prove type preservation, completeness, and soundness. 
The output of the defunctionalization algorithm is proved to be well-sorted, given that the input is well-sorted. 
Using the idea of valuation extraction, I also prove that if an input higher-order constrained Horn clause problem is solvable, then its defunctionalized problem is also solvable. 
The proof for the converse is achieved by using a recent result on the continuous semantics of HoCHC \cite{Jochems18}.
As far as I am aware, this is the first sound and complete reduction from HoCHC to first-order constrained Horn clauses. 

\subsection{Outline of this report}

This document is structured as follows.

Section~\ref{chapter on preliminaries} introduces higher-order logic, logic program safety problems, and monotone semantics. 

Section~\ref{chapter on defunctionalization of monotone problems with a concrete example} illustrates how defunctionalization works on a concrete example. 

In the first half of Section~\ref{chapter on the formal presentation of the algorithm}, the defunctionalization algorithm is formulated using inference rules. 
In the second half of this section, completeness and soundness of the algorithm are established.

Section~\ref{chapter on implementation and evaluation} presents a prototype tool based on the defunctionalization algorithm and compares its performance with other higher-order verification tools. 

Section~\ref{chapter on conclusion} summarises the work and proposes a few directions for future work.

Appendix~\ref{chapter on the algorithm in the appendix} presents details of the preprocessing in the defunctionalization algorithm. Also, the rationale for the algorithm's design is given. 

Appendix~\ref{chapter on monotonicity of alpha'} describes how to obtain monotone valuations for outputs of the defunctionalization algorithm. 

Appendix~\ref{chapter on the supplements for meaning preservation} provides detailed proofs for the lemmas and theorems presented in Section~\ref{chapter on the formal presentation of the algorithm}. 

Appendix~\ref{chapter on type preservation} gives a formal proof of type preservation. 


\section{Preliminaries} \label{chapter on preliminaries}

This section introduces the basics of higher-order logic, logic program safety problems, and monotone semantics. 
Higher-order constrained Horn clauses (HoCHC) are not formally introduced, since the defunctionalization algorithm works on logic program safety problems, which are alternative representations of HoCHC \cite{Ramsay2017}.
It is therefore sufficient to understand that HoCHC and logic program safety problems are equivalent. 

\subsection{Higher-order logic}

In this subsection I review the syntax and semantics of higher-order logic based on a simply typed lambda calculus. The presentation style of this subsection follows the one in \cite{Ramsay2017}. 

\subsubsection{Syntax}
In a simply typed lambda calculus\index{simply typed lambda calculus}, each value is associated with a sort\index{sort} that denotes the category of elements to which the value belongs. Let $(b \in) \mathbb{B}$ be
a fixed set of user-defined base sorts\index{base sort} including a sort $\iota$ of individuals\index{individuals} and a sort $o$ of propositions\index{propositions}. Using the base sorts, simple sorts\index{simple sort} are inductively defined as follows:
\begin{equation*}
\sigma ::= b \mid \sigma_1 \to \sigma_2,
\end{equation*}
where $b \in \mathbb{B}$. As standard, the sort constructor $\to$ associates to the right. The order\index{sort!order|see {order}}\index{order} of a sort is defined by
\begin{alignat*}{2}
{\tt ord}(b) & = 1 && \qquad \text{if } b \in \mathbb{B} \\
{\tt ord}(\sigma_1 \to \sigma_2) & = \max \{ {\tt ord}(\sigma_1) + 1, {\tt ord}(\sigma_2) \} && \qquad \text{otherwise.}
\end{alignat*}

Let $\Sigma = (\mathbb{B}, \mathbb{S})$ denote a first-order signature\index{signature}, where $\mathbb{B}$ is a set of base sorts that includes the propositional sort $o$ and at least one sort
of individuals. $\mathbb{S}$ is a set of constant symbols, each of which is associated with a first-order sort (i.e.~a sort whose order is at most 2). As $\mathbb{S}$ can be viewed as a
mapping from constant symbols to simple sorts, I write $\mathbb{S}(c)$ for the sort assigned to $c$ by $\mathbb{S}$. Note that because a lambda calculus does not
distinguish between functions and values of base sorts, `constant symbols' in $\mathbb{S}$ include not only those symbols with base sorts but also symbols of arrows types; i.e.~function symbols. 

Given $\Sigma = (\mathbb{B}, \mathbb{S})$, terms are inductively defined by
\begin{equation*}
M, N ::= x \mid c \mid M \ N \mid \lambda x {:} \sigma. M,
\end{equation*} 
where $x$ is a variable and $c \in \mathbb{S}$. Standardly, function application associates to the left. Also, the scopes of lambda abstractions extend as far to the right as possible. 
If a term $M$ has sort $\sigma_1 \to \cdots \to \sigma_m \to b$, where $b \in \mathbb{B}$, the arity\index{arity} of $M$ is defined as
\begin{equation*}
{\bf ar}(M) = m.
\end{equation*}
The set of free variables\index{free variables} occurring in term $M$ is denoted by ${\tt FV}(M)$. 

A sort environment\index{sort!environment} $\Delta$ is a finite sequence of pairs $x: \sigma$, where $x$ is a variable and $\sigma$ is a simple type. The sort environment is required to have no conflicts; that is, it must
not assign multiple sorts to the same variable. The sorts of terms\index{term} are defined by the following sorting rules:
\begin{center}
\AxiomC{}
\LeftLabel{\sc (SCst)}
\UnaryInfC{$\Delta \vdash c: \mathbb{S}(c)$}
\DisplayProof
\qquad
\AxiomC{}
\LeftLabel{\sc (SVar)}
\UnaryInfC{$\Delta_1, x: \sigma, \Delta_2 \vdash x: \sigma$}
\DisplayProof
\end{center}
\begin{prooftree}
\AxiomC{$\Delta \vdash s: \sigma_1 \to \sigma_2$}
\AxiomC{$\Delta \vdash t: \sigma_1$}
\LeftLabel{\sc (SApp)}
\BinaryInfC{$\Delta \vdash s \ t: \sigma_2$}
\end{prooftree}
\begin{prooftree}
\AxiomC{$\Delta, x : \sigma_1 \vdash s: \sigma_2$}
\LeftLabel{\sc (SAbs)}
\RightLabel{$x \notin \text{dom}(\Delta)$}
\UnaryInfC{$\Delta \vdash \lambda x {:} \sigma_1. s: \sigma_1 \to \sigma_2$}
\end{prooftree}

Notice that the sorts of constant symbols are specified by a signature, whilst the sorts of free variables are specified by a sort environment. 

Next, to define formulas of higher-order logic, logical connectives are introduced as constant symbols outside $\Sigma$. Let {\tt LSym} be the set of the following logical constant symbols\index{logical constant symbols}:
\begin{align*}
{\tt true}, {\tt false} & : o & \neg & : o \to o \\
\land, \lor, \Rightarrow & : o \to o \to o & \forall_{\sigma}, \exists_{\sigma} & : (\sigma \to o) \to o.
\end{align*}
I adopt the convention that $\exists_{\sigma}(\lambda x {:} \sigma. M)$ is shortened to $\exists x {:} \sigma. M$ or $\exists_{\sigma} x. M$. Furthermore, if the sort of $x$ is clear from the context, $\exists x. M$
can be written. 

Formulas\index{formula} are defined as well-sorted terms that have the sort $o$ and whose constant symbols are from either $\mathbb{S}$ or {\tt LSym}.

Lastly, relational sorts\index{relational sort} are formally defined by
\begin{equation*}
\rho ::= o \mid b \to o \mid \rho_1 \to \rho_2,
\end{equation*}
where $b \in \mathbb{B}$. 

\subsubsection{Semantics}
Given a first-order signature\index{first-order signature} $\Sigma = (\mathbb{B}, \mathbb{S})$, a structure $A$ assigns a non-empty set of elements $A_{\iota}$ to each $\iota \in \mathbb{B}$, where $\iota \neq o$.
The sets $A_{\iota}$ are often called universes\index{universe}. To the sort $o$ is assigned the distinguished lattice\index{lattice} $\mathbbm{2} = \{0 \leq 1 \}$. The full sort frame\index{full sort frame} over $A$ is defined inductively on a sort
as follows:
\begin{alignat*}{2}
\mathcal{S} \llbracket \iota \rrbracket &:= A_{\iota} & \qquad \iota \in \mathbb{B}, \iota \neq o \\
\mathcal{S} \llbracket o \rrbracket & := \mathbbm{2} \\
\mathcal{S} \llbracket \sigma_1 \to \sigma_2 \rrbracket & := \mathcal{S} \llbracket \sigma_1 \rrbracket \Rightarrow \mathcal{S} \llbracket \sigma_2 \rrbracket,
\end{alignat*}
where $X \Rightarrow Y$ is the full set-theoretic function space between sets $X$ and $Y$. To each constant symbol $c$ in $\mathbb{S}$, $A$ assigns an element from $\mathcal{S} \llbracket \mathbb{S}(c) \rrbracket$.
Let $c^{A}$ denote this element. 

The lattice $\mathbbm{2}$ supports the following functions:
\begin{align*}
{\tt or}(b_1)(b_2) & = \max \{b_1, b_2 \} & {\tt not}(b) & = 1- b \\
{\tt and}(b_1)(b_2) & = \min \{b_1, b_2 \} & {\tt implies}(b_1)(b_2) & = {\tt or}({\tt not}(b_1))(b_2) \\
{\tt exists}_{\sigma}(f) & = \max \{f(v) \mid v \in \mathcal{S} \llbracket \sigma \rrbracket \} & {\tt forall}_{\sigma}(f) & = {\tt not}({\tt exists}_{\sigma}({\tt not} \circ f)).
\end{align*}
For each logical constant symbol $c \in \text{\tt LSym}$, I denote the corresponding Boolean function given above by $c^{\tt LFun}$.

The order on $\mathbbm{2}$ can be extended to define an order $\subseteq_{\rho}$ on $\mathcal{S} \llbracket \rho \rrbracket$, where $\rho$ is a relational sort:
\begin{itemize}
\item For all $b_1, b_2 \in \mathcal{S} \llbracket o \rrbracket$, if $b_1 \leq b_2$, then $b_2 \subseteq_{o} b_2$;
\item For all $r_1, r_2 \in \mathcal{S} \llbracket b \to \rho \rrbracket$, if $r_1(n) \subseteq_{\rho} r_2(n)$ for all $n \in \mathcal{S} \llbracket b \rrbracket$, then $r_1 \subseteq_{b \to \rho} r_2$;
\item For all $r_1, r_2 \in \mathcal{S} \llbracket \rho_1 \to \rho_2 \rrbracket$, if $r_1(s) \subseteq_{\rho} r_2(s)$ for all $s \in \mathcal{S} \llbracket \rho_1 \rrbracket$, then $r_1 \subseteq_{\rho_1 \to \rho_2} r_2$.
\end{itemize}
The full sort frame\index{full sort frame!sort environment} can be defined on a sort environment $\Delta$ using an indexed Cartesian product:
\begin{equation*}
\mathcal{S} \llbracket \Delta \rrbracket := \prod_{x \in \text{dom}(\Delta)} \mathcal{S} \llbracket \Delta(x) \rrbracket.
\end{equation*}
In other words, this is the set of all functions mapping each variable $x$ in $\text{dom}(\Delta)$ to an element in $\mathcal{S} \llbracket \Delta(x) \rrbracket$. These functions are called valuations\index{valuation}.
The order on $\mathcal{S}\llbracket \Delta \rrbracket$ can be defined in the same fashion as above: for all $f_1, f_2 \in \mathcal{S}\llbracket \Delta \rrbracket$, if $f_1(x) \subseteq_{\rho} f_2(2)$
for all $x : \rho \in \Delta$, then $f_1\subseteq_{\Delta} f_2$. 

The interpretation\index{interpretation} of a term $\Delta \vdash M: \sigma$ is given by an inductively defined function $\mathcal{S} \llbracket \Delta \vdash M: \sigma\rrbracket : \mathcal{S} \llbracket \Delta \rrbracket \Rightarrow
\mathcal{S} \llbracket \sigma \rrbracket$. When $M$ consists only of one symbol, $\mathcal{S} \llbracket \Delta \vdash M: \sigma\rrbracket$ is defined by
\begin{alignat*}{2}
\mathcal{S} \llbracket \Delta \vdash x: \sigma \rrbracket (\alpha) & = \alpha(x) && \qquad \text{if } x \text{ is a variable} \\
\mathcal{S} \llbracket \Delta \vdash c: \sigma \rrbracket (\alpha) & = c^{A} && \qquad \text{if } c \in \mathbb{S} \\
\mathcal{S} \llbracket \Delta \vdash c: \sigma \rrbracket (\alpha) & = c^{\tt LFun} && \qquad \text{otherwise},
\end{alignat*}
where $\alpha$ is a valuation from $\mathcal{S} \llbracket \Delta \rrbracket$. If $M$ has a compound structure, we have
\begin{align*}
\mathcal{S} \llbracket \Delta \vdash M \ N: \sigma_2 \rrbracket (\alpha) & = \mathcal{S} \llbracket \Delta \vdash M: \sigma_1 \to \sigma_2 \rrbracket (\alpha) (\mathcal{S} \llbracket \Delta \vdash N: \sigma_1 \rrbracket (\alpha)) \\
\mathcal{S} \llbracket \Delta \vdash \lambda x {:} \sigma_1. M : \sigma_1 \to \sigma_2 \rrbracket (\alpha) & = \lambda v \in \mathcal{S} \llbracket \sigma_1 \rrbracket. 
\mathcal{S} \llbracket \Delta, x: \sigma_1 \vdash M:\sigma_2 \rrbracket (\alpha [x \mapsto v]).
\end{align*}
Notice that the interpretation of non-logical constant symbols is given by a structure, whereas the interpretation of free variables is given by a valuation.

Assume we are given a $\Sigma$-structure $A$, a formula $\Delta \vdash M: o$, and a valuation $\alpha \in \mathcal{S} \llbracket \Delta \rrbracket$. Then $\langle A, \alpha \rangle$ satisfies $M$
if and only if $\mathcal{S} \llbracket \Delta \vdash M: o \rrbracket (\alpha) = 1$. This satisfaction relation is denoted by $A, \alpha \vDash M$. 

\subsection{Logic program safety problems}
\label{sec:safety-problem}
Each verification problem comprises two components: a definite formula component, 
which describes an input program, and a goal formula component, which is the property of the input program that we want to verify. This subsection introduces verification problems whose
definite formula components are expressed using logic programs. Again, the presentation style of this subsection follows that in \cite{Ramsay2017}.

\subsection{Constraint languages}
Given a first-order signature $\Sigma$, a constraint language\index{constraint language} is defined as $(Tm, Fm, Th)$, where $Tm$ is a distinguished subset of first-order terms that can be built from $\Sigma$,
$Fm$ is a distinguished subset of first-order formulas that can be built from $\Sigma$, and $Th$ is a theory in which to interpret $Fm$. Any formula from $Fm$ is called a constraint\index{constraint} and
$Th$ is called a background theory\index{background theory}. We allow $Tm$ and $Fm$ to be strict subsets of all terms and formulas built from $\Sigma$ as some background theories only consider strict subsets
of formulas; e.g.~quantifier-free formulas.

In this document, formulas in a constraint language refer to terms of sort $o$. Therefore, we have $Fm \subseteq Tm$, unlike in usual presentations of predicate logic, where $Tm \cap Fm = \emptyset$. 

\subsubsection{Goal terms} \label{subsection on goal terms}
Fix a first-order signature $\Sigma = (\mathbb{B}, \mathbb{S})$ and a constraint language $(Tm, Fm, Th)$ over $\Sigma$. The class of well-sorted goal terms\index{goal term}\index{term!goal term|see {goal term}} $\Delta\vdash G: \rho$, where $\rho$ is a relational sort, 
is given by these sorting rules:
\begin{prooftree}
\AxiomC{}
\LeftLabel{\sc (GCst)}
\RightLabel{$c \in \{\land, \lor, \exists_{\iota} \} \cup \{ \exists_{\rho} \mid \rho \}$}
\UnaryInfC{$\Delta \vdash c: \rho_{c}$}
\end{prooftree}
\begin{prooftree}
\AxiomC{}
\LeftLabel{\sc (GVar)}
\UnaryInfC{$\Delta_1, x: \rho, \Delta_2 \vdash x: \rho$}
\end{prooftree}
\begin{prooftree}
\AxiomC{}
\LeftLabel{\sc (GConstr)}
\RightLabel{$\Delta \vdash \varphi: o \in Fm$}
\UnaryInfC{$\Delta \vdash \varphi: o$}
\end{prooftree}
\begin{prooftree}
\AxiomC{$\Delta, x: \sigma \vdash G: \rho$}
\LeftLabel{\sc (GAbs)}
\RightLabel{$x \notin \text{\tt dom}(\Delta)$}
\UnaryInfC{$\Delta \vdash \lambda x {:} \sigma. G: \sigma \to \rho$}
\end{prooftree}
\begin{prooftree}
\AxiomC{$\Delta \vdash G: b \to \rho$}
\LeftLabel{\sc (GAppl)}
\RightLabel{$\Delta \vdash N: b \in Tm$}
\UnaryInfC{$\Delta \vdash G \ N: \rho$}
\end{prooftree}
\begin{prooftree}
\AxiomC{$\Delta \vdash G: \rho_1 \to \rho_2$}
\AxiomC{$\Delta \vdash H: \rho_1$}
\LeftLabel{\sc (GAppR)}
\BinaryInfC{$\Delta \vdash G \ H: \rho_2$}
\end{prooftree}
Throughout the above six rules, $b$ denotes a base sort from $\mathbb{B}$, $\rho$ (with or without subscripts) denotes a relational sort, and $\sigma$ is either a base sort or a relational sort.
Henceforth, I assume that goal terms are well-sorted. 

\subsubsection{Logic programs}
Assume that a first-order signature and a constraint language are fixed. A higher-order constrained logic program\index{logic program} $P$ over a sort environment $\Delta = x_1: \rho_1, \ldots, x_m: \rho_m$, 
where each $\rho_i$ is a relational sort, is a finite system of (mutual) recursive definitions of shape:
\begin{equation*}
x_1 {:} \rho_1 = G_1, \ldots, x_m {:} \rho_m = G_m,
\end{equation*}
where each $G_i$ is a goal term and each $x_i$ is distinct. I will call each $x_i$ a top-level relational variable\index{top-level relational variable}. $P$ is said to be well-sorted whenever $\Delta \vdash G_i: \rho_i$ (i.e.~$G_i$ is well-sorted and
has relational sort $\rho_i$) for each $1 \leq i \leq m$. It follows that if $P$ is well-sorted, 
\begin{equation*}
{\tt FV}(G_i) \subseteq \{x_1, \ldots, x_m \}
\end{equation*}
for all $1 \leq i \leq m$. 

Since each $x_i$ is distinct, we can regard $P$ as a finite map from variables to goal terms. Thus, let $P(x_i)$ denote the goal term $G_i$ that is bound to $x_i$.
I write $\vdash P: \Delta$ to mean that $P$ is a well-sorted program over $\Delta$. 

To interpret logic programs\index{logic program!interpretation}, I use the standard semantics\index{standard semantics}. Let $A$ be a $\Sigma$-structure and $P$ be a well-sorted logic program over a sort environment $\Delta$. 
The one-step consequence operator\index{one-step consequence operator} of $P$ is the functional $T^{\mathcal{S}}_{P: \Delta} : \mathcal{S} \llbracket \Delta \rrbracket \Rightarrow \mathcal{S} \llbracket \Delta \rrbracket$ defined by
\begin{equation*}
T^{\mathcal{S}}_{P: \Delta}(\alpha)(x) = \mathcal{S} \llbracket \Delta \vdash P(x): \Delta(x) \rrbracket (\alpha).
\end{equation*}
A valuation $\alpha$ is a prefixed point of $T^{\mathcal{S}}_{P: \Delta}$ if and only if we have $T^{\mathcal{S}}_{P: \Delta} (\alpha) \subseteq_{\Delta} \alpha$. 

\subsubsection{Logic program safety problems}
Suppose that $\Sigma$ is a first-order signature and $L = (Tm, Fm, Th)$ is a constraint language over $\Sigma$. A logic program safety problem\index{logic program safety problem} is defined as a triple $(\Delta, P, G)$, where $\Delta$ is a
sort environment of relational variables, $P$ is a well-sorted logic program over $\Delta$, and $G$ is a goal term that has sort $o$ and is built from $\Sigma$ and $L$. The problem is solvable\index{solvable} if and only if for all models of $Th$,
there exists a valuation $\alpha$ such that $\alpha$ is a prefixed point of $T^{\mathcal{S}}_{P: \Delta}$ and $\mathcal{S} \llbracket \Delta \vdash G: o \rrbracket (\alpha) = 0$. $G$ is usually the negation of a property
that we want $P$ to satisfy. 

\subsection{Monotone semantics}

The monotone semantics\index{monotone semantics} for logic programs is introduced by Cathcart Burn et al.~\cite{Ramsay2017} as an alternative to the standard semantics. 
The importance of the monotone semantics in defunctionalization will be explained in Subsection~\ref{section on the importance of the monotone semantics}.

\subsubsection{Semantics}

Given a first-order signature $\Sigma = (\mathbb{B}, \mathbb{S})$, structure $A$ assigns a non-empty discrete poset\index{discrete poset}\index{poset}\index{poset!discrete|see {discrete poset}} $A_{\iota}$ to each $\iota \in \mathbb{B}$, 
where $\iota \neq o$. As in the standard semantics, $A$ assigns $\mathbbm{2}$ to the sort $o$. Discrete posets are defined as partially ordered sets in which no two distinct elements are comparable. 
The monotone sort frame\index{monotone sort frame} over $A$ is then inductively defined as
\begin{equation*}
\mathcal{M} \llbracket \iota \rrbracket := A_{\iota} \qquad \mathcal{M} \llbracket o \rrbracket := \mathbbm{2} \qquad \mathcal{M} \llbracket \sigma_1 \to \sigma_2 \rrbracket :=
\mathcal{M} \llbracket \sigma_1 \rrbracket \Rightarrow_{m} \mathcal{M} \llbracket \sigma_2 \rrbracket,
\end{equation*}
where $X \Rightarrow_{m} Y$ is the monotone function space between posets $X$ and $Y$. The universe $A_{\iota}$ is regarded as a discrete ``poset'' rather than simply a set because we want the definition
$\mathcal{M} \llbracket \sigma_1 \to \sigma_2 \rrbracket :=\mathcal{M} \llbracket \sigma_1 \rrbracket \Rightarrow_{m} \mathcal{M} \llbracket \sigma_2 \rrbracket$ to encompass the cases when 
$\sigma_1 \in \mathbb{B}$. Since any set can be considered as a discrete poset, when $\sigma_1 \in \mathbb{B} \setminus \{ o \}$, $\Rightarrow_{m}$ is the same as $\Rightarrow$ in the definition
of full sort frames. 

$A$ also maps each constant symbol $c: \sigma \in \mathbb{S}$ to an element from $\mathcal{M} \llbracket \sigma \rrbracket $.  

The order in $\mathbbm{2}$ is extended to $\mathcal{M} \llbracket \rho \rrbracket$, where $\rho$ is a relational sort, in the same manner as $\mathcal{S} \llbracket \rho \rrbracket$. 
Also, the set of valuations with respect to sort environment $\Delta$ is defined analogously to the standard semantics: 
\begin{equation*}
\mathcal{M} \llbracket \Delta \rrbracket := \prod_{x \in \text{dom}(\Delta)} \mathcal{M} \llbracket \Delta(x) \rrbracket.
\end{equation*}

The monotone interpretation of goal terms is inductively defined in the same way as the standard interpretation. As we consider only monotone functions, the definition of {\tt exists} becomes 
\begin{equation*}
{\tt exists}_{\sigma}(f) = \max\{f(v) \mid v \in \mathcal{M} \llbracket \sigma \rrbracket \}. 
\end{equation*}

Fix a first-order signature, a constraint language, and a structure for interpretation of a logic program. The one-step consequence operator $T_{P: \Delta}^{\mathcal{M}}$ is defined as
\begin{equation*}
T_{P: \Delta}^{\mathcal{M}} (\alpha)(x) = \mathcal{M} \llbracket \Delta \vdash P(x): \Delta(x) \rrbracket (\alpha). 
\end{equation*}
A prefixed point of $T_{P: \Delta}^{\mathcal{M}}$ is called a model of the logic program $P$. The term `model' is overloaded because a model of a logic program is a valuation, whereas a model
of a theory is a structure. 

\subsubsection{Monotone logic program safety problems}

Suppose that $\Sigma$ is a first-order signature and $L = (Tm, Fm, Th)$ is a constraint language over $\Sigma$. A monotone logic program safety problem\index{logic program safety problem!monotone|see {monotone problem}}
(oftentimes abbreviated as a monotone problem\index{monotone problem})
is defined as a triple $(\Delta, P, G)$, where $\Delta$ is a sort environment of relational variables, $P$ is a well-sorted logic program over $\Delta$, and $G$ is a goal formula. Both $P$ and $G$ are built from
$\Sigma$ and $L$. The monotone problem is solvable if and only if for all models of $Th$, there exists a valuation $\alpha$ such that $\alpha$ is a prefixed point of $T^{\mathcal{M}}_{P: \Delta}$ and 
$\mathcal{M} \llbracket \Delta \vdash G: o \rrbracket (\alpha) = 0$. $G$ is usually the negation of a property that we want $P$ to satisfy. 

Theorem~2 in \cite{Ramsay2017} establishes a bridge between constrained Horn clause problems and monotone logic program safety problems:
\begin{theorem} \label{relationship between Horn clause problems and monotone problems original}
A higher-order constrained Horn clause problem $(\Delta, D, G)$ is solvable if and only if the associated monotone logic program safety problem $(\Delta, P_{D}, G)$ is solvable. 
\end{theorem}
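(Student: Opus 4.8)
The plan is to fix, once and for all, a model $A$ of the background theory $Th$, since both notions of solvability quantify universally over such models; it then suffices to show that for each fixed $A$ a falsifying valuation exists for $(\Delta, D, G)$ iff one exists for $(\Delta, P_D, G)$. Throughout I abbreviate $T^{\mathcal{S}} := T^{\mathcal{S}}_{P_D : \Delta}$ and $T^{\mathcal{M}} := T^{\mathcal{M}}_{P_D : \Delta}$ for the one-step consequence operators of the associated logic program $P_D$.

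First I would establish a purely syntactic bridge between clauses and prefixed points. By the construction of $P_D$ from $D$, the goal term $P_D(x_i)$ is obtained by gathering every clause of $D$ whose head predicate is $x_i$, $\lambda$-abstracting the head arguments, existentially closing each body over its remaining variables, and taking the disjunction of the results. Unfolding the semantic clauses for abstraction, application, $\exists$ and $\lor$, and using that ${\tt implies}(b_1)(b_2) = 1$ iff $b_1 \le b_2$, one checks by a direct computation that for every valuation $\alpha$ the pair $\langle A, \alpha \rangle$ is a model of the clause set $D$ exactly when $T^{\mathcal{S}}(\alpha) \subseteq_{\Delta} \alpha$, i.e. the models of $D$ are precisely the prefixed points of $T^{\mathcal{S}}$. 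Hence solvability of $(\Delta, D, G)$ says exactly: for every $A$ there is $\alpha \in \mathcal{S}\llbracket \Delta \rrbracket$ with $T^{\mathcal{S}}(\alpha) \subseteq_{\Delta} \alpha$ and $\mathcal{S}\llbracket \Delta \vdash G : o \rrbracket(\alpha) = 0$, while solvability of $(\Delta, P_D, G)$ is the same statement with $\mathcal{S}$ replaced by $\mathcal{M}$.

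The remaining work is to reconcile the full and the monotone semantics. Because goal terms are built without negation or implication, an induction on goal terms shows that on a monotone valuation $\alpha$ the denotation $\mathcal{S}\llbracket \Delta \vdash G : \rho \rrbracket(\alpha)$ is again monotone and coincides with $\mathcal{M}\llbracket \Delta \vdash G : \rho \rrbracket(\alpha)$; in particular $T^{\mathcal{S}}$ cuts down to an operator on $\mathcal{M}\llbracket \Delta \rrbracket$ and that restriction is $T^{\mathcal{M}}$, and the goal $G$ itself is monotone. Since $\mathcal{S}\llbracket \Delta \rrbracket$ is a complete lattice under $\subseteq_{\Delta}$ and $T^{\mathcal{S}}$ is monotone, it has a least prefixed point $\mu$, computed by iterating $T^{\mathcal{S}}$ from the valuation that assigns the constant-$0$ relation to every $x_i$; that starting valuation is monotone, $T^{\mathcal{S}}$ preserves monotonicity, and suprema of monotone valuations are monotone, so $\mu \in \mathcal{M}\llbracket \Delta \rrbracket$ and $\mu$ is simultaneously the least prefixed point of $T^{\mathcal{M}}$. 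Both directions then follow uniformly: given any prefixed point $\alpha$ (of $T^{\mathcal{S}}$ in one direction, of $T^{\mathcal{M}}$ in the other) with $G$ false, we have $\mu \subseteq_{\Delta} \alpha$, so monotonicity of $G$ forces $\mathcal{S}\llbracket G \rrbracket(\mu) = \mathcal{M}\llbracket G \rrbracket(\mu) = 0$; since $\mu$ is a prefixed point in both semantics, it witnesses solvability of the other problem.

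I expect the genuine difficulty to lie in this reconciliation step, not in the syntactic bridge. At relational sorts of higher order the carriers $\mathcal{M}\llbracket \rho \rrbracket$ and $\mathcal{S}\llbracket \rho \rrbracket$ are different function spaces, so the claim that the two denotations of a goal term agree on monotone valuations — especially that the interpretation of $\exists_{\rho}$ does not depend on whether the witness ranges over $\mathcal{M}\llbracket \rho \rrbracket$ or $\mathcal{S}\llbracket \rho \rrbracket$ — has to be routed through an explicit embedding between the two sort frames, together with a check that every goal-term constructor respects it. Getting this logical relation and its preservation properties exactly right is the technical heart of the argument; everything else is bookkeeping with Knaster--Tarski and the elementary monotonicity of ${\tt and}$, ${\tt or}$ and ${\tt exists}$.
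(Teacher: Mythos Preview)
The paper does not prove this theorem at all: it is quoted verbatim as Theorem~2 of \cite{Ramsay2017} and used as a black box, so there is no in-paper proof to compare your proposal against.

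That said, your outline is a faithful sketch of how the cited result is established. The syntactic bridge you describe---that valuations modelling $D$ are exactly the prefixed points of $T^{\mathcal{S}}_{P_D:\Delta}$---is precisely Lemma~1 of \cite{Ramsay2017} (restated here as Theorem~\ref{prefixed points of the standard one-step consequence operator are models of D'}), and you correctly locate the real content in reconciling the standard and monotone frames via a logical relation, together with monotonicity of the goal. One technical point deserves tightening: you argue that the least prefixed point $\mu$ of $T^{\mathcal{S}}$ is monotone because it is ``computed by iterating $T^{\mathcal{S}}$ from the valuation that assigns the constant-$0$ relation,'' but $T^{\mathcal{S}}$ need not be $\omega$-continuous on the full frame, so $\omega$ iterations may not reach $\mu$. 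The fix is routine---iterate transfinitely (monotonicity is preserved at successor stages and under suprema at limit stages), or bypass iteration and argue directly through the logical relation that the Knaster--Tarski least fixed point lies in $\mathcal{M}\llbracket\Delta\rrbracket$---but as written the step is a small gap.
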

The transformation from the definite Horn formula $D$ to the corresponding logic program $P_{D}$ is provided in Section~4.1 of \cite{Ramsay2017}.


\section{Defunctionalization of monotone problems} \label{chapter on defunctionalization of monotone problems with a concrete example}

This section illustrates how defunctionalization works on a concrete example of a logic program safety problem, which is interpreted using the standard semantics. 
An issue that arises from higher-order existential quantification is then explained. The monotone semantics is instrumental in resolving this issue. 

\subsection{Overview} \label{concrete example of defunctionalization}

In this subsection, I will illustrate the workings of the defunctionalization algorithm for logic program safety problems using a concrete example. 
Because the standard semantics is more natural and intuitive than the monotone semantics, the standard semantics allows us to use our own intuition to interpret safety problems. 
Consequently, we can follow how defunctionalization proceeds without being concerned about semantics. 
Therefore, we will use the standard semantics to interpret the example logic program safety problem. 

Henceforth, for readability, I omit subscripts of $\exists$ that denote the sorts of quantified variables. 

Consider the safety problem $\mathcal{P} = (\Delta, P, G)$, where $\Delta$ is given by
\begin{align*}
\Delta = \{ & Main: {\bf nat} \to {\bf natlist} \to o, \\
& TwiceMap: ({\bf nat} \to {\bf nat} \to o) \to {\bf natlist} \to {\bf natlist} \to o, \\
& Map: ({\bf nat} \to {\bf nat} \to o) \to {\bf natlist} \to {\bf natlist} \to o, \\
& Twice: ({\bf nat} \to {\bf nat} \to o) \to {\bf nat} \to {\bf nat} \to o \},
\end{align*}
$P$ is
\begin{align*}
Main & = \lambda n, ns. TwiceMap \ (\lambda a, b. a + n = b) \ ({\tt cons} \ 0 \ {\tt nil}) \ ns \\
TwiceMap & = \lambda f. Map \ (Twice \ f) \\
Map & = \lambda f, a, b. (a = {\tt nil} \land b = {\tt nil}) \lor \\
& \quad (\exists n, ns, m, ms. a = {\tt cons} \ n \ ns \land f \ n \ m \land Map \ f \ ns \ ms \land b = {\tt cons} \ m \ ms) \\
Twice & = \lambda f, a, b. (\exists c. f \ a \ c \land f \ c \ b),
\end{align*}
and $G$ is
\begin{equation*}
G = \exists n, ns. Main \ n \ ns \land ns = {\tt nil}. 
\end{equation*}
The signature for $\mathcal{P}$ is $\Sigma = (\mathbb{B}, \mathbb{S})$, where $\mathbb{B}$ and $\mathbb{S}$ are
\begin{align*}
\mathbb{B} & = \{{\bf nat}, {\bf natlist}, o \} \\
\mathbb{S} & = \{{\tt nil}: {\bf natlist}, {\tt cons}: {\bf nat} \to {\bf natlist} \to {\bf natlist}, +: {\bf nat} \to {\bf nat} \to {\bf nat} \\
& \quad =_{\bf natlist} : {\bf natlist} \to {\bf natlist} \to o, =_{\bf nat} : {\bf nat} \to {\bf nat} \to o \} \cup \{ n: {\bf nat} \mid n \in \mathbb{N} \}.
\end{align*}
To be precise, $\mathbb{S}$ must be finite. However, having all natural numbers included in $\mathbb{S}$ does not affect the fundamental nature of this verification problem.

Observe that $\{ n: {\bf nat} \mid n \in \mathbb{N} \}$ is a set of symbols rather than a set of mathematical entities.

As $=_{\bf natlist}$ and $=_{\bf nat}$ have different types, they must be distinguished. However, I will denote both of them by $=$ for simplicity. From the sorts of their arguments, we can infer which equality is in use. 	

The background theory we use to interpret $\mathcal{P}$ is the one constructed by a structure that maps each $c \in \mathbb{S}$ to the naturally corresponding element in the universe
of natural numbers and their lists. 

Defunctionalization is the conversion of higher-order programs to a semantically equivalent first-order programs. To achieve it, higher-order parameters appearing in the programs need to be removed. 
Parameters are classified into formal parameters\index{formal parameter} and actual parameters\index{actual parameter}. Higher-order formal parameters in logic programs are always found in lambda abstractions. Higher-order actual parameters are
generated by curried functions, which can be (strictly) partially applied. Hence, the higher-order property of logic programs is attributed to
\begin{itemize}
\item Higher-order formal parameters
\item Curried functions.
\end{itemize}

Higher-order formal parameters can be identified by looking at type annotations. However, formal parameters can be missing due to currying. For instance, in the logic program $P$ given above, 
the definition of $TwiceMap$ is a lambda abstraction with only one parameter $f$. However, since the sort of $TwiceMap$ is $({\bf nat} \to {\bf nat} \to o) \to {\bf natlist} \to {\bf natlist} \to o$, 
we have ${\bf ar}(TwiceMap) = 3$. Therefore, $TwiceMap$ has two more formal parameters. The first preprocessing step is thus to uncover hidden formal parameters. This is known as $\eta$-expansion\index{$\eta$-expansion}
in the literature on lambda calculi. We only uncover the formal parameters of outermost lambda abstractions. In $P$, the only place where formal parameters of top-level relational variables are
hidden is $TwiceMap$. Applying $\eta$-expansion to it, we obtain
\comment[sr]{Just give the equation for TwiceMap here.}
\begin{align*}
Main & = \lambda n, ns. TwiceMap \ (\lambda a, b. a + n = b) \ ({\tt cons} \ 0 \ {\tt nil}) \ ns \\
TwiceMap & = \lambda f, xs, ys. Map \ (Twice \ f) \ xs \ ys \\
Map & = \lambda f, a, b. (a = {\tt nil} \land b = {\tt nil}) \\
& \quad \lor (\exists n, ns, m, ms. a = {\tt cons} \ n \ ns \land f \ n \ m \land Map \ f \ ns \ ms \land b = {\tt cons} \ m \ ms) \\
Twice & = \lambda f, a, b. (\exists c. f \ a \ c \land f \ c \ b).
\end{align*}
Note that although $Twice \ f$ inside the definition of $TwiceMap$ has a functional sort and hence its formal parameters are hidden, we do not apply $\eta$-expansion to it, because it is not an outermost
function. Now all formal parameters of outermost lambda abstractions in $P$ are visible. 

Curried functions in logic programs appear either as the definitions of top-level relational variables in the form $x_i {:} \rho_i = G_i$ or as anonymous lambda abstractions. The constant symbols from
$\mathbb{S}$ cannot be strictly partially applied in a logic program. Hence, we do not need to defunctionalize the functions declared in the signature. 

In the logic program above, $\lambda a, b. a + n = b$ inside the definition of $Main$ is an example of anonymous functions\index{anonymous function}. 
Later in the process of defunctionalization, higher-order actual parameters are replaced with first-order data, each of which is labelled with an associated curried function. 
Whilst top-level relational variables can be used as labels for the functions that define them, anonymous functions do not have any unique name or variable associated with it.
Therefore, for convenience, I create fresh top-level relational variables for anonymous functions so that they are not `anonymous' anymore. 
Consequently, $P$ becomes
\comment[sr]{Consequently, the equation for $Main$ becomes... }
\begin{align*}
Main & = \lambda n, ns. TwiceMap \ (Add \ n) \ ({\tt cons} \ 0 \ {\tt nil}) \ ns \\
TwiceMap & = \lambda f, ns, ms. Map \ (Twice \ f) \ ns \ ms \\
Map & = \lambda f, a, b. (a = {\tt nil} \land b = {\tt nil}) \\
& \quad \lor (\exists n, ns, m, ms. a = {\tt cons} \ n \ ns \land f \ n \ m \land Map \ f \ ns \ ms \land b = {\tt cons} \ m \ ms) \\
Twice & = \lambda f, a, b. (\exists c. f \ a \ c \land f \ c \ b) \\
Add & = \lambda n, a, b. a + n = b.
\end{align*}

Next, higher-order parameters are replaced with first-order data of a new base type. Let us denote the new type {\bf closr}, which is short for `closure'\index{closure}. Because I will defunctionalize $P$ step by step,
the intermediate states may not be valid logic programs. 

Partially applied curried functions are represented by algebraic data types. They store all actual parameters that have been supplied so far. It is necessary to define distinct data constructors
according to the number of actual parameters. For instance, partially applied instances of function $Add$ can take one of the following forms:
\begin{align*}
& Add \\
& Add \ x \\
& Add \ x \ y \\
& Add \ x \ y \ z,
\end{align*}
where the last form is not strictly partially applied. 

Let $C^i_{F}$ be a data constructor that, after $i$ many arguments are supplied, represents a partially applied instance of function $F$ with $i$ many actual parameters. To simulate lambda/function
application using these constructors, we need to define function $Apply$. For the top-level relational variable $Main$ defined in $P$, the corresponding $Apply$ is given by
\begin{align*}
Apply & = \lambda x, y, z. x = C^0_{Main} \land z = C^1_{Main}\ y \\
Apply & = \lambda x, ns. (\exists n. x = C^1_{Main} \ n \land TwiceMap \ (Add \ n) \ ({\tt cons} \ 0 \ {\tt nil}) \ ns).
\end{align*}
Notice that $TwiceMap \ (Add \ n) \ ({\tt cons} \ 0 \ {\tt nil}) \ ns$ in the second line is derived from the definition of $Main$ in $P$. $Apply$ on the first line has sort 
${\bf closr} \to {\bf nat} \to {\bf closr} \to o$ and takes three arguments. The first argument represents a partially applied instance of $Main$. The second argument is an input to the function represented
by the first argument. The third argument is the result of applying the second argument to the function represented by the first argument. 

By contrast, $Apply$ on the second line has sort ${\bf closr} \to {\bf natlist} \to o$ and has arity 2. The first argument represents a partially applied instance of $Main$ in which only the last parameter
of $Main$ is missing. The second argument corresponds to this missing final parameter. In a logic program, if a top-level relational sort $F$ has arity $n$, the first $n - 1$ parameters of $F$ can be
interpreted as inputs (as in functional programming) and the last parameter of $F$ can be interpreted as the corresponding output. Using this interpretation, the second $Apply$ is considered as linking
the input and output of $Main$. 

In this way, the first $Apply$ function simulates function application, whereas the second $Apply$ works out whether the first argument evaluates to the second argument. As these two $Apply$ functions
have different roles, I rename the second $Apply$ to `IOMatch'. This yields
\begin{align*}
Apply & = \lambda x, y, z. x = C^0_{Main} \land z = C^1_{Main}\ y \\
IOMatch & = \lambda x, ns. (\exists n. x = C^1_{Main} \ n \land TwiceMap \ (Add \ n) \ ({\tt cons} \ 0 \ {\tt nil}) \ ns).
\end{align*}
Applying the same step to the remaining top-level relational variables in $P$ gives
\comment[sr]{Could be moved out to the appendix.}
\begin{align*}
Apply & = \lambda x, y, z. x = C^0_{Main} \land z = C^1_{Main}\ y \\
IOMatch & = \lambda x, ns. (\exists n. x = C^1_{Main} \ n \land TwiceMap \ (Add \ n) \ ({\tt cons} \ 0 \ {\tt nil}) \ ns) \\[10pt]
Apply & = \lambda x, y, z. x= C^0_{TwiceMap} \land z = C^1_{TwiceMap} \ y \\
Apply & = \lambda x, y, z. (\exists f. x = C^1_{TwiceMap} \ f \land z = C^2_{TwiceMap} \ f \ y) \\
IOMatch & = \lambda x, ms. (\exists f, ns. x = C^2_{TwiceMap}\ f \ ns \land Map \ (Twice \ f) \ ns \ ms) \\[10pt]
Apply & = \lambda x, y, z. x = C^0_{Map} \land z = C^1_{Map} \ y \\
Apply & = \lambda x, y, z. (\exists f. x = C^1_{Map} \ f \land z = C^2_{Map} \ f \ y) \\
IOMatch & = \lambda x, b. (\exists f, a. x = C^2_{Map} \ f \ a \land ((a = {\tt nil} \land b = {\tt nil}) \\
& \quad \lor (\exists n, ns, m, ms. a = {\tt cons} \ n \ ns \land f \ n \ m \land Map \ f \ ns \ ms \land b = {\tt cons} \ m \ ms))) \\[10pt]
Apply & = \lambda x, y, z. x = C^0_{Twice} \land z = C^1_{Twice} \ y \\
Apply & = \lambda x, y, z. (\exists f. x = C^1_{Twice} \ f \land z = C^2_{Twice} \ f \ y) \\
IOMatch & = \lambda x, b. (\exists f, a. x = C^2_{Twice} \ f \ a \land (\exists c. f \ a \ c \land f \ c \ b)) \\[10pt]
Apply & = \lambda x, y, z. x = C^0_{Add} \land z = C^1_{Add} \ y \\
Apply & = \lambda x, y, z. (\exists n. x = C^1_{Add} \ n \land z = C^2_{Add} \ n \ y) \\
IOMatch & = \lambda x, b. (\exists n, a. x = C^2_{Add} \ n \ a \land a + n = b).
\end{align*}
In the original $P$, $f$ is used as a higher-order formal parameter, but here it has sort {\bf closr}. The new logic program has multiple equations for $Apply$ and $IOMatch$, which violates the rule that
every top-level relational variable must be distinct. Each of the equations defines a conditional branch of a relational variable. Hence, we combine them by taking their disjunction. Specifically, suppose we have
\begin{align*}
X & = \lambda x_1, \ldots, x_n. G_1 \\ 
X & = \lambda y_1, \ldots, y_n. G_2.
\end{align*}
Then the disjunction of these two equations is given by
\begin{equation*}
X = \lambda x_1, \ldots, x_n. (G_1 \lor G_2 [x_1 / y_1] \cdots [x_n / y_n]),
\end{equation*}
where $G_2 [x_i / y_i]$ denotes the result of substituting $x_i$ for every free occurrence of $y_i$ in $G_2$. I assume that $x_i$ does not occur bound in $G_2$ (or at least the substitution $[x_i / y_i]$ does
not cause variable capture) for all $1 \leq i \leq n$. For readability, however, I will leave the logic program unchanged. 

At this point, $Apply$ is not well-sorted, since the second arguments take various sorts; e.g.~{\bf nat}, {\bf natlist}, and {\bf closr}. 
Therefore, we must create clones of $Apply$, each specializing in  a particular sort of the second argument. 
Let $Apply_{A}$ denote a clone of $Apply$ whose sort is ${\bf closr} \to A \to {\bf closr} \to o$. 
Similarly for $IOMatch$, I write $IOMatch_{A}$ for a clone of $IOMatch$ whose sort is ${\bf closr} \to A \to o$. 
Inserting these clones to appropriate places of the program, we obtain
\comment[sr]{Instead just give one example here and move a copy to the appendix.}
\begin{align*}
Apply_{\bf nat} & = \lambda x, y, z. x = C^0_{Main} \land z = C^1_{Main}\ y \\
Apply_{\bf nat} & = \lambda x, y, z. (\exists f. x = C^1_{Twice} \ f \land z = C^2_{Twice} \ f \ y) \\
Apply_{\bf nat} & = \lambda x, y, z. x = C^0_{Add} \land z = C^1_{Add} \ y \\
Apply_{\bf nat} & = \lambda x, y, z. (\exists n. x = C^1_{Add} \ n \land z = C^2_{Add} \ n \ y) \\[10pt]
Apply_{\bf natlist} & = \lambda x, y, z. (\exists f. x = C^1_{TwiceMap} \ f \land z = C^2_{TwiceMap} \ f \ y) \\
Apply_{\bf natlist} & = \lambda x, y, z. (\exists f. x = C^1_{Map} \ f \land z = C^2_{Map} \ f \ y) \\[10pt]
Apply_{\bf closr} & = \lambda x, y, z. x= C^0_{TwiceMap} \land z = C^1_{TwiceMap} \ y \\
Apply_{\bf closr} & = \lambda x, y, z. x = C^0_{Map} \land z = C^1_{Map} \ y \\
Apply_{\bf closr} & = \lambda x, y, z. x = C^0_{Twice} \land z = C^1_{Twice} \ y \\[10pt]
IOMatch_{\bf nat} & = \lambda x, b. (\exists f, a. x = C^2_{Twice} \ f \ a \land (\exists c. f \ a \ c \land f \ c \ b)) \\
IOMatch_{\bf nat} & = \lambda x, b. (\exists n, a. x = C^2_{Add} \ n \ a \land a + n = b) \\[10pt]
IOMatch_{\bf natlist} & = \lambda x, ns. (\exists n. x = C^1_{Main} \ n \land TwiceMap \ (Add \ n) \ ({\tt cons} \ 0 \ {\tt nil}) \ ns) \\
IOMatch_{\bf natlist} & = \lambda x, ms. (\exists f, ns. x = C^2_{TwiceMap}\ f \ ns \land Map \ (Twice \ f) \ ns \ ms) \\
IOMatch_{\bf natlist} & = \lambda x, b. (\exists f, a. x = C^2_{Map} \ f \ a \land ((a = {\tt nil} \land b = {\tt nil}) \\
& \quad \lor (\exists n, ns, m, ms. a = {\tt cons} \ n \ ns \land f \ n \ m \land Map \ f \ ns \ ms \land b = {\tt cons} \ m \ ms))). 
\end{align*}

On the right hand sides of equations defining $IOMatch_{A}$, we have function application that involves formal higher-order parameters such as $f$. However, because their sorts are changed to
{\bf closr} by defunctionalization, the next step is to insert $C^i_{F}$ , $Apply_{A}$, and $IOMatch_{A}$ to the right hand sides of the equations defining $IOMatch_{A}$. For example, the first equation of
$IOMatch_{{\bf nat}}$ has the function application $f \ a \ c$. This is transformed into
\begin{equation*}
\exists d. Apply_{\bf nat} \ f \ a \ d \land IOMatch_{\bf nat} \ d \ c.
\end{equation*}
Analogously, $TwiceMap \ (Add \ n) \ ({\tt cons} \ 0 \ {\tt nil}) \ ns$ in the first equation of $IOMatch_{\bf natlist}$ is transformed into
\begin{align*}
& \exists a. Apply_{\bf nat} \ C^0_{Add} \ n \ a \\
& \land (\exists b, c. Apply_{\bf closr} \ C^0_{TwiceMap} \ a \ b \land Apply_{\bf natlist} \ b \ ({\tt cons} \ 0 \ {\tt nil}) \ c \land IOMatch_{\bf natlist} \ c \ ns).
\end{align*}
Applying the same step to all the remaining equations that define $IOMatch_{A}$ in $P$ gives
\comment[sr]{Similar}
\begin{align*}
IOMatch_{\bf nat} & = \lambda x, b. (\exists f, a. x = C^2_{Twice} \ f \ a \land (\exists c. (\exists d. Apply_{\bf nat} \ f \ a \ d \land IOMatch_{\bf nat} \ d \ c) \\
& \quad \land (\exists e. Apply_{\bf nat} \ f \ c \ e \land IOMatch_{\bf nat} \ e \ b))) \\
IOMatch_{\bf nat} & = \lambda x, b. (\exists n, a. x = C^2_{Add} \ n \ a \land a + n = b) \\[10pt]
IOMatch_{\bf natlist} & = \lambda x, ns. (\exists n. x = C^1_{Main} \ n \land (\exists a. Apply_{\bf nat} \ C^0_{Add} \ n \ a \\
& \quad \land (\exists b, c. Apply_{\bf closr} \ C^0_{TwiceMap} \ a \ b \land Apply_{\bf natlist} \ b \ ({\tt cons} \ 0 \ {\tt nil}) \ c \\
& \qquad \land IOMatch_{\bf natlist} \ c \ ns))) \\
IOMatch_{\bf natlist} & = \lambda x, ms. (\exists f, ns. x = C^2_{TwiceMap}\ f \ ns \land (\exists a. Apply_{\bf closr} \ C^0_{Twice} \ f \ a \\
& \quad \land (\exists b, c. Apply_{\bf closr} \ C^0_{Map} \ a \ b \land Apply_{\bf natlist} \ b \ ns \ c \land IOMatch_{\bf natlist} \ c \ ms))) \\
IOMatch_{\bf natlist} & = \lambda x, b. (\exists f, a. x = C^2_{Map} \ f \ a \land ((a = {\tt nil} \land b = {\tt nil}) \\
& \quad \lor (\exists n, ns, m, ms. a = {\tt cons} \ n \ ns \\
& \qquad \land (\exists c. Apply_{\bf nat} \ f \ n \ c \land IOMatch_{\bf nat} \ c \ m) \\
& \qquad \land (\exists d, e. Apply_{\bf closr} \ C^0_{Map} \ f \ d \land Apply_{\bf natlist} \ d \ ns \ e \land IOMatch_{\bf natlist} \ e \ ms) \\
& \qquad \land b = {\tt cons} \ m \ ms))). 
\end{align*}
Here we use type annotations to determine appropriate clones of $Apply$ and $IOMatch$ to be used. As every function application is now done through a clone of $Apply$ and $IOMatch$, 
$Apply_{A}$ and $IOMatch_{A}$ are self-contained. Hence, we can delete all equations defining the top-level relational variables from the source program. This completes the defunctionalization of $P$. 

Since the top-level relational variables in the original $P$ are removed, we need to defunctionalize the goal formula component $G$ as well. It produces
\begin{equation*}
G' = \exists n, ns. ((\exists a. Apply_{\bf nat} \ C^0_{Main} \ n \ a \land IOMatch_{\bf natlist} \ a \ ns) \land ns = {\tt nil}).
\end{equation*}

\comment[sr]{Could just refer to the appendix.}
To sum up, the defunctionalized logic program safety problem is $\mathcal{P}' = (\Delta', P', G')$ with the new signature $\Sigma' = (\mathbb{B}', \mathbb{S}')$, where $\mathbb{B}'$ and
$\mathbb{S}'$ are given by
\begin{equation*}
\mathbb{B}' = \mathbb{B} \cup \{ {\bf closr} \}
\end{equation*}
and
\begin{align*}
\mathbb{S}' = \mathbb{S} \cup \{ & C^0_{Main}: {\bf closr}, C^1_{Main}: {\bf nat} \to {\bf closr}, \\
& C^0_{TwiceMap}: {\bf closr}, C^1_{TwiceMap}: {\bf closr} \to {\bf closr}, C^2_{TwiceMap}: {\bf closr} \to {\bf natlist} \to {\bf closr}, \\
& C^0_{Map}: {\bf closr}, C^1_{Map}: {\bf closr} \to {\bf closr}, C^2_{Map}: {\bf closr} \to {\bf natlist} \to {\bf closr}, \\
& C^0_{Twice}: {\bf closr}, C^1_{Twice}: {\bf closr} \to {\bf closr}, C^2_{Twice}: {\bf closr} \to {\bf nat} \to {\bf closr}, \\
& C^0_{Add}: {\bf closr}, C^1_{Add}: {\bf nat} \to {\bf closr}, C^2_{Add}: {\bf nat} \to {\bf nat} \to {\bf closr} \}.
\end{align*}
The new sort environment is 
\begin{align*}
\Delta'= \{ & Apply_{\bf nat}: {\bf closr} \to {\bf nat} \to {\bf closr} \to o, Apply_{\bf natlist}: {\bf closr} \to {\bf natlist} \to {\bf closr} \to o, \\
& Apply_{\bf closr}: {\bf closr} \to {\bf closr} \to {\bf closr} \to o, IOMatch_{nat}: {\bf closr} \to {\bf nat} \to o, \\
& IOMatch_{natlist}: {\bf closr} \to {\bf natlist} \to o \}.
\end{align*}
$P'$ consists of
\begin{align*}
Apply_{\bf nat} & = \lambda x, y, z. x = C^0_{Main} \land z = C^1_{Main}\ y \\
Apply_{\bf nat} & = \lambda x, y, z. (\exists f. x = C^1_{Twice} \ f \land z = C^2_{Twice} \ f \ y) \\
Apply_{\bf nat} & = \lambda x, y, z. x = C^0_{Add} \land z = C^1_{Add} \ y \\
Apply_{\bf nat} & = \lambda x, y, z. (\exists n. x = C^1_{Add} \ n \land z = C^2_{Add} \ n \ y) \\[10pt]
Apply_{\bf natlist} & = \lambda x, y, z. (\exists f. x = C^1_{TwiceMap} \ f \land z = C^2_{TwiceMap} \ f \ y) \\
Apply_{\bf natlist} & = \lambda x, y, z. (\exists f. x = C^1_{Map} \ f \land z = C^2_{Map} \ f \ y) \\[10pt]
Apply_{\bf closr} & = \lambda x, y, z. x= C^0_{TwiceMap} \land z = C^1_{TwiceMap} \ y \\
Apply_{\bf closr} & = \lambda x, y, z. x = C^0_{Map} \land z = C^1_{Map} \ y \\
Apply_{\bf closr} & = \lambda x, y, z. x = C^0_{Twice} \land z = C^1_{Twice} \ y \\[10pt]
IOMatch_{\bf nat} & = \lambda x, b. (\exists f, a. x = C^2_{Twice} \ f \ a \land (\exists c. (\exists d. Apply_{\bf nat} \ f \ a \ d \land IOMatch_{\bf nat} \ d \ c) \\
& \quad \land (\exists e. Apply_{\bf nat} \ f \ c \ e \land IOMatch_{\bf nat} \ e \ b))) \\
IOMatch_{\bf nat} & = \lambda x, b. (\exists n, a. x = C^2_{Add} \ n \ a \land a + n = b) \\[10pt]
IOMatch_{\bf natlist} & = \lambda x, ns. (\exists n. x = C^1_{Main} \ n \land (\exists a. Apply_{\bf nat} \ C^0_{Add} \ n \ a \\
& \quad \land (\exists b, c. Apply_{\bf closr} \ C^0_{TwiceMap} \ a \ b \land Apply_{\bf natlist} \ b \ ({\tt cons} \ 0 \ {\tt nil}) \ c \\
& \qquad \land IOMatch_{\bf natlist} \ c \ ns))) \\
IOMatch_{\bf natlist} & = \lambda x, ms. (\exists f, ns. x = C^2_{TwiceMap}\ f \ ns \land (\exists a. Apply_{\bf closr} \ C^0_{Twice} \ f \ a \\
& \quad \land (\exists b, c. Apply_{\bf closr} \ C^0_{Map} \ a \ b \land Apply_{\bf natlist} \ b \ ns \ c \land IOMatch_{\bf natlist} \ c \ ms))) \\
IOMatch_{\bf natlist} & = \lambda x, b. (\exists f, a. x = C^2_{Map} \ f \ a \land ((a = {\tt nil} \land b = {\tt nil}) \\
& \quad \lor (\exists n, ns, m, ms. a = {\tt cons} \ n \ ns \\
& \qquad \land (\exists c. Apply_{\bf nat} \ f \ n \ c \land IOMatch_{\bf nat} \ c \ m) \\
& \qquad \land (\exists d, e. Apply_{\bf closr} \ C^0_{Map} \ f \ d \land Apply_{\bf natlist} \ d \ ns \ e \land IOMatch_{\bf natlist} \ e \ ms) \\
& \qquad \land b = {\tt cons} \ m \ ms))). 
\end{align*}
$G'$ is 
\begin{equation*}
G' = \exists n, ns. ((\exists a. Apply_{\bf nat} \ C^0_{Main} \ n \ a \land IOMatch_{\bf natlist} \ a \ ns) \land ns = {\tt nil}).
\end{equation*}
$P'$ and $G'$ are well-sorted and indeed first-order. 

Suppose that the model of the background theory for $\mathcal{P}$ is $A$, which interprets the constant symbols from $\mathbb{S}$ in a standard way. A model $A'$ over $\Sigma'$ is then defined as
follows:
\begin{itemize}
\item For all constant symbols inherited from $\Sigma$, $A'$ has the same interpretation as $A$.
\item To sort {\bf closr}, $A'$ assigns a universe of objects created by data constructors $C^i_{F}$. 
\item Functions $C^i_{F}$ are interpreted in a natural way as data constructors for the algebraic data type {\bf closr}. 
\end{itemize}
The background theory for $\mathcal{P}'$ is the background theory for $\mathcal{P}$ extended with additional theorems for the {\bf closr} universe.
It is worth noting that not all functions have their respective representatives in the universe of {\bf closr}. 

$\mathcal{P}$ and $\mathcal{P}'$ have the same semantics in the sense that
\begin{equation*}
\mathcal{P} \text{ is solvable} \iff \mathcal{P}' \text{ is solvable}.
\end{equation*}
This is because no relational variable is quantified in $G$. However, if $G$ has quantified relational variables, there is a problem as discussed in the next subsection.

\subsection{Quantification of higher-order variables}

In the example of Subsection~\ref{concrete example of defunctionalization}, we do not have existential quantifiers over variables of order more than 1. 
If we had higher-order existential quantifiers, we would have an issue with preserving the semantics (i.e.~solvability) of $\mathcal{P}$. 

By way of example, suppose $G = \exists f. Twice \ f \ 1 \ 3$ and that $P$ only contains $Twice$ defined as above. The resulting safety problem is not solvable, since for every prefixed point of $P$, 
\begin{equation*}
f = \lambda x, y. (x + 1 = y)
\end{equation*}
makes $Twice \ f \ 1 \ 3$ hold. However, one reasonable way to defunctionalize $G$ gives
\begin{equation*}
G' = \exists f. (\exists a, b. Apply_{\bf closr} \ C^0_{Twice}\ f \ a \land Apply_{\bf nat} \ a \ 1 \ b \land IOMatch_{\bf nat} \ b \ 3),
\end{equation*}
where the sort of $f$ is now {\bf closr}. Then there exist valuations that are prefixed points of $P'$ but do not satisfy $G'$. The reason is because $Apply_{A}$ and $IOMatch_{A}$ are defined in such a way that
only functions that can be created within $P$ (i.e.~partially applied functions that are represented by $C^i_{F} \ t_1 \ \cdots t_i$, where $F$ is a top-level relational variable) are considered by $P'$. 
As $Add$ is not defined in $P$ anymore, $\lambda x, y. (x + 1 = y)$ cannot arise from $P$. Thus, there exists a prefixed point of $P'$ that does not satisfy $G'$.

Therefore, the semantics of the source safety problem are not preserved if relational variables (except for variables of sort $o$) are quantified in $G$. 

In essence, my defunctionalization fails due to the fact that $P'$ ignores any function that cannot be built from the top-level relational variables defined in $P$. 

\subsection{Elimination of higher-order quantifiers} \label{section on the importance of the monotone semantics}

The monotone semantics can resolve the issue with existential quantification over higher-order variables. 

Monotonicity lets us eliminate higher-order existential quantifiers from all goal terms in a monotone problem. Consider $\exists_{\rho} \lambda x {:} \rho. F$, where $\rho$ is a higher-order relational sort. Any function
interpreted using the monotone semantics is monotone due to the use of $\Rightarrow_{m}$ in the definition of monotone sort frames. Hence, $\mathcal{M} \llbracket \lambda x {:} \rho. F \rrbracket (\alpha)$, where 
${\tt FV}(F) \subseteq {\tt dom}(\alpha)$, is a monotone function of $x$.
If there exists $u \in \mathcal{M} \llbracket \rho \rrbracket$ such that $\mathcal{M} \llbracket \lambda x {:} \rho. F \rrbracket (\alpha)(u) = 1$ for a fixed valuation $\alpha$, any $v \in \mathcal{M} \llbracket \rho \rrbracket$
such that $u \subseteq_{\rho} v$ should satisfy $\mathcal{M} \llbracket \lambda x {:} \rho. F \rrbracket (\alpha)(v) = 1$ as well. The maximum element in $\mathcal{M} \llbracket \rho \rrbracket$ is the relation that always returns 1.
This is called the universal relation of sort $\rho$. It follows that $\mathcal{M} \llbracket \exists_{\rho} \lambda x {:} \rho. F \rrbracket (\alpha) = \mathcal{M} \llbracket F[x \mapsto \lambda x_1, \ldots, x_k. {\tt true}] \rrbracket (\alpha)$
for any $\alpha$ such that ${\tt FV}(F) \subseteq {\tt dom}(\alpha)$. 
Here, I assume that {\tt true} is declared in the signature and is included in the background theory. If this is not the case, we can simply add {\tt true} to the signature and the background theory. 
Henceforth, I assume that a monotone problem does not have quantifiers over higher-order relational variables.


\section{Algorithm} \label{chapter on the formal presentation of the algorithm}

This section first  formally presents the defunctionalization algorithm. 
It then introduces valuation extraction, which is a crucial idea in the proofs of the algorithm's completeness and soundness. 
The section concludes with proofs of completeness and soundness.

\subsection{Preprocessing} \label{preprocessing}

Let the source monotone problem be $\mathcal{P} = (\Delta, P, G)$. Prior to defunctionalization $\mathcal{P}$, we need to eliminate all anonymous functions in $P$ and $G$ and then perform $\eta$-expansion to fully expand the outermost lambda
abstractions defining top-level relational variables. For reasons of space, the details are not presented here. They can be found in Appendix~\ref{preprocessing in the appendix}. 

\subsection{Defunctionalization algorithm} \label{section on the defunctionalization algorithm}
In this subsection, I formulate the defunctionalization algorithm via parametrised relations. 
A goal term to be defunctionalized is called a ``source goal term'', and a defunctionalized goal term is called a ``target goal term''.
An input monotone safety problem is called a ``source monotone problem'' and a defunctionalized monotone safety problem is called a ``target monotone problem''. 

One approach to formulating a defunctionalization algorithm of a functional programming language is to define a relation between source terms and target terms \cite{Pottier2004}. 
I will denote the relation by $\leadsto$ and call it a transformation. $s \leadsto t$ means that $s$ is defunctionalized into $t$. 
The word ``transformation'' might suggest that $\leadsto$ is not only a relation but also a function. 
It is in fact possible to  show that $\leadsto$ returns unique outputs and hence is a function. However, it suffices to regard $\leadsto$ as a relation in this document. 

Prior to presenting the core of the defunctionalization algorithm, I explain why I make use of parametrised transformations in the formulation of the algorithm. 

\subsubsection{Parametrised transformation}

The use of parametrised transformations gives us control over variable symbols in target goal terms. To appreciate the importance of being able to specify variable symbols, consider the goal term
\begin{equation*}
f \ x \ y,
\end{equation*}
where the sort of each variable is 
\begin{align*}
f & : {\bf int} \to {\bf int} \to o \\
x & : {\bf int}\\
y & : {\bf int}.
\end{align*}
Further, assume $f \notin {\tt dom}(\Delta)$. This goal term can be defunctionalized into
\begin{equation}
\exists a. Apply_{\bf int} \ f \ x \ a \land IOMatch_{\bf int} \ a \ y, \label{defunctionalization of f x y}
\end{equation}
where the sort of each variable is now
\begin{align*}
f & : {\bf closr} \\
x & : {\bf int}\\
y & : {\bf int} \\
a & : {\bf closr}. 
\end{align*}
Note that although \eqref{defunctionalization of f x y} is not identical to the output of the defunctionalization algorithm presented in Subsection~\ref{core of the defunctionalization algorithm}, they are logically equivalent.  

As the grammar of goal terms is defined inductively, it is natural to defunctionalize goal terms inductively. \eqref{defunctionalization of f x y} consists of two components:
\begin{align*}
& Apply_{\bf int} \ f \ x \ a \\
& IOMatch_{\bf int} \ a \ y.
\end{align*}
The former corresponds to the partial application of $f$ to $x$ and the latter corresponds to the application of $(f \ x)$ to $y$. 
Hence, the structure of \eqref{defunctionalization of f x y} roughly reflects the structure of $f \ x \ y$, where the curried function $f$ is applied to $x$ first and then to $y$. 
However, the two components in \eqref{defunctionalization of f x y} cannot be separated cleanly. 
The problem is that both components refer to the same quantified variable $a$. 
Hence, it is necessary to establish a ``communication channel'' between the defunctionalization of $f \ x$ and the defunctionalization of $(f \ x) \ y$. 
More specifically, we need to either specify what variable should be used in the first component or inspect it and then copy the variable symbol used in it to the second component. 


As it is certainly not straightforward to define helper functions that extract variable symbols from target goal terms, I opted to pass variable symbols to target goal terms. 
One approach is to use contexts.
For the above example, we can defunctionalize $f \ x$ into a context $Apply_{\bf int} \ f \ x \ X$, where $X$ is a hole. We can then substitute a concrete variable symbol into $X$. 

Another approach to passing variable symbols is to use parametrised transformations. 
If a parametrised transformation $\leadsto^{X}$ is defined in such a way that $(f \ x) \leadsto^{X} (Apply_{\bf int} \ f \ x \ X)$ holds, we can invoke $\leadsto^{X}$ with a specific variable symbol substituted into $X$.

The transformation in the context-based approach sometimes returns contexts and other times returns goal terms with no holes. 
Since this can be confusing to readers, I adopted the approach based on parametrised transformations. 

\subsubsection{Defunctionalization} \label{core of the defunctionalization algorithm}

\paragraph{Formal presentation}

Let $\mathcal{P} = (\Delta, P, G)$ be the source monotone problem that has already been preprocessed by the procedure explained in Subsection~\ref{preprocessing}. Due to the preprocessing, the formal parameters of each outermost
lambda abstraction defining a top-level relational variable are visible. Also, $P$ and $G$ contain no anonymous functions. Given $P = \lambda x_1, \ldots, x_m. F$, where $F$ is not a lambda abstraction, let us call $F$ the body
of $P$. Because $G: o$ and hence is not a function, the body of $G$ is $G$ itself. Since lambda abstractions only appear at the top level of syntax trees of $P$ and $G$, the bodies of $P$ and $G$ are free of lambda abstractions.
The defunctionalization of the bodies is guided by the following inference rules:
\begin{prooftree}
\AxiomC{$c \in \{ \land, \lor \}$}
\AxiomC{$E \leadsto E'$}
\AxiomC{$F \leadsto F'$}
\RightLabel{\sc (LogSym)}
\TrinaryInfC{$(c \ E \ F) \leadsto (c \ E' \ F')$}
\end{prooftree}
\begin{prooftree}
\AxiomC{$F \leadsto F'$}
\RightLabel{\sc (Exi)}
\UnaryInfC{$\exists_{b} x. F \leadsto \exists_{b} x. F'$}
\end{prooftree}
\begin{prooftree}
\AxiomC{$\text{head}(E) \notin \{\land, \lor \}$}
\AxiomC{$\Delta \vdash (E \ F): \rho$}
\AxiomC{$\rho \neq o$}
\AxiomC{$(E \ F) \leadsto_{A}^{X} H$}
\RightLabel{\sc (App)}
\QuaternaryInfC{$(E \ F) \leadsto^{X} H$}
\end{prooftree}
\begin{prooftree}
\AxiomC{$\text{head}(E) \notin \{\land, \lor \}$}
\AxiomC{$\Delta \vdash (E \ F): o$}
\AxiomC{$(E \ F) \leadsto_{M} H$}
\RightLabel{\sc (Match)}
\TrinaryInfC{$(E \ F) \leadsto H$}
\end{prooftree}
\begin{prooftree}
\AxiomC{$E \leadsto^{x} E'$}
\AxiomC{$\Delta \vdash F: \sigma$}
\AxiomC{$\sigma \leadsto_{T} \sigma$}
\AxiomC{$F \leadsto F'$}
\RightLabel{\sc (App-Base)*}
\QuaternaryInfC{$(E\ F) \leadsto_{A}^{X} \exists_{\bf closr} x. (E' \land Apply_{\sigma} \ x  \ F' \ X)$}
\end{prooftree}
\begin{prooftree}
\AxiomC{$E \leadsto^{x} E'$}
\AxiomC{$\Delta \vdash F: \sigma$}
\AxiomC{$\sigma \leadsto_{T} {\bf closr}$}
\AxiomC{$F \leadsto^{y} F'$}
\RightLabel{\sc (App-Arrow)*}
\QuaternaryInfC{$(E\ F) \leadsto_{A}^{X} \exists_{\bf closr} x. (E' \land \exists_{\bf closr} y. (F' \land Apply_{\bf closr} \ x  \ y \ X))$}
\end{prooftree}
\begin{prooftree}
\AxiomC{$E \leadsto^{x} E'$}
\AxiomC{$\Delta \vdash F: \sigma$}
\AxiomC{$\sigma \leadsto_{T} \sigma$}
\AxiomC{$F \leadsto F'$}
\RightLabel{\sc (Match-Base)*}
\QuaternaryInfC{$(E\ F) \leadsto_{M} \exists_{\bf closr} x. (E' \land IOMatch_{\sigma} \ x  \ F')$}
\end{prooftree}
\begin{prooftree}
\AxiomC{$E \leadsto^{x} E'$}
\AxiomC{$\Delta \vdash F: \sigma$}
\AxiomC{$\sigma \leadsto_{T} {\bf closr}$}
\AxiomC{$F \leadsto^{y} F'$}
\RightLabel{\sc (Match-Arrow)*}
\QuaternaryInfC{$(E\ F) \leadsto_{M} \exists_{\bf closr} x. (E' \land \exists_{\bf closr} y. (F' \land IOMatch_{\bf closr} \ x  \ y))$}
\end{prooftree}
\begin{center}
\AxiomC{$\varphi \in Fm \cup Tm$}
\RightLabel{\sc (ConstrLan)}
\UnaryInfC{$\varphi \leadsto \varphi$}
\DisplayProof
\qquad 
\AxiomC{$x$ is a variable}
\AxiomC{$x: o$}
\RightLabel{\sc (Var-Base)}
\BinaryInfC{$x \leadsto x$}
\DisplayProof
\end{center}
\begin{prooftree}
\AxiomC{$x$ is a variable}
\AxiomC{$x: \rho$}
\AxiomC{$\rho \neq o$}
\RightLabel{\sc (Var-Arrow)}
\TrinaryInfC{$x \leadsto^{X} X = x$}
\end{prooftree}
\begin{prooftree}
\AxiomC{$x \in \Delta$}
\RightLabel{\sc (TopVar)}
\UnaryInfC{$x \leadsto^{X} X = C^0_{x}$}
\end{prooftree}
\begin{center}
\AxiomC{${\tt order}(b) = 1$}
\RightLabel{\sc (Base)}
\UnaryInfC{$b \leadsto_{T} b$}
\DisplayProof
\qquad
\AxiomC{${\tt order}(\tau)> 1$}
\RightLabel{\sc (Arrow)}
\UnaryInfC{$\tau \leadsto_{T} {\bf closr}$}
\DisplayProof
\end{center}
The function head is defined by
\begin{alignat*}{2}
\text{head}(x) & = x && \qquad x \text{ is a variable} \\
\text{head}(c) & = c && \qquad c \in \{\land, \lor \} \\
\text{head}(\varphi) & = \varphi && \qquad \varphi \in Fm \cup Tm \\
\text{head}(E) & = E && \qquad E \text{ is in the form } \lambda x. F \text{ or } \exists x. F \\
\text{head}(E \ F) & = \text{head}(E).
\end{alignat*}
This function returns the head symbols of goal terms. 

In the conclusions of the rules whose names are marked with *, quantified variables $x$ and $y$ are assumed to be different from any variable symbol occurring in $E'$ and $F'$ before substitutions $[X \mapsto x]$ and $[X \mapsto y]$ are
applied. 

\paragraph{Transformation types}

The abbreviations of the rules' names and what they stand for are summarised below:
\begin{center}
\begin{tabular}[c]{c c}
Abbreviation & Full form \\
\hline
\textsc{LogSym} & Logical constant symbols \\
\textsc{Exi} & Existential quantifier \\
\textsc{App} & Apply \\
\textsc{Match} & IOMatch \\
\textsc{App-Base} & Apply for a base sort \\
\textsc{App-Arrow} & Apply for an arrow sort \\
\textsc{Match-Base} & IOMatch for a base sort \\
\textsc{Match-Arrow} & IOMatch for an arrow sort \\
\textsc{ConstrLan} & Constraint language \\
\textsc{Var-Base} & Variable of base sort \\
\textsc{Var-Arrow} & Variable of an arrow sort\\
\textsc{TopVar} & Top-level relational variable \\
\textsc{Base} & Base sort \\
\textsc{Arrow} & Arrow sort \\
\end{tabular}
\end{center}

In the above inference rules, we have five types of transformations: $\leadsto$, $\leadsto^{X}$, $\leadsto_{A}^{X}$, $\leadsto_{M}$, and $\leadsto_{T}$. Superscripts of $\leadsto$ store parameters, and subscripts denote the classes
of the transformations. `A' in $\leadsto_{A}^{X}$ is short for Apply, `M' in $\leadsto_{M}$ is short for Match, and `T' in $\leadsto_{T}$ is short for Types (i.e.~sorts). 

The transformations $\leadsto$ and $\leadsto^{X}$ are applied to goal terms that contain no lambda abstractions. They are used to transform the bodies of lambda abstractions defining top-level relational variables. 
In $\leadsto^{X}$, $X$ is a parameter into which a variable symbol is substituted. Hence, $\leadsto^{X}$ is a parametrised relation that returns an appropriate goal term according to the parameter $X$ passed to the relation.

The transformation $\leadsto_{A}^{X}$ is for function application that produces goal terms with arrow sorts. Because the result of the function application is not of base sort, the argument of the function application cannot be the last
parameter of any relational variable (otherwise, the result of the function application would have the sort $o$). Hence, $\leadsto_{A}^{X}$ replaces such function application with an instance of $Apply$. Like $\leadsto^{X}$, 
$\leadsto_{A}^{X}$ has a parameter for variable symbols. Notice that $X$ in the inference rules defining $\leadsto^{X}$ and $\leadsto_{A}^{X}$ acts as a ``metavariable'' and hence is used as a ``pattern'' in pattern matching.  

In contrast to $\leadsto_{A}^{X}$, $\leadsto_{M}$ is for function application that produces goal terms of base sort. Such function application is replaced with an instance
of $IOMatch$. 

Lastly, $\leadsto_{T}$ transforms sorts. 

\subsubsection{Bindings of variables and quantifiers}

In the four inference rules marked with *, new quantified variables are introduced in the result of transformation. To avoid variable capture and disambiguate bindings of quantifiers and quantified variables, the newly introduced quantified variables
must be distinct from all variable symbols occurring in $E'$ and $F'$ before we apply substitutions $[X \mapsto x]$ and $[X \mapsto y]$. Hence, it is necessary to calculate $E'$ and $F'$ before we can select suitable symbols
for the quantified variables in the four rules' conclusions. 

To illustrate the need for using fresh variables, consider the partially applied goal term
\begin{equation}
Add \ x, \label{Add x}
\end{equation}
where the sort of each variable is
\begin{align*}
Add & : {\bf int} \to {\bf int} \to {\bf int} \to o \\
x & : {\bf int}.
\end{align*}
Further, assume $Add \in \Delta$. By {\sc (TopVar)}, we have
\begin{equation}
Add \leadsto^{X} X = C^0_{Add}, \label{defunctionalization of Add}
\end{equation}
where $X$ is to be specified when the enclosing goal term is defunctionalized. We also have
\begin{equation*}
x \leadsto x
\end{equation*}
by {\sc (Var-Base)}. 

To produce a target term of $Add \ x$, we apply {\sc (App)} and {\sc (App-Base)}. If $x$ is used for a new quantified variable, we obtain
\begin{equation}
Add \ x \leadsto^{X} \exists_{\bf closr} x. x = C^0_{Add} \land Apply_{\bf int} \ x \ x \ X. \label{ambiguous defunctionalization of Add x}
\end{equation}
Variable capture happens in \eqref{ambiguous defunctionalization of Add x} since the second argument of $Apply_{\bf int}$, which ought to be a free variable, is now bound by $\exists_{\bf closr}$. 
To disambiguate this expression, a fresh variable symbol $y$ is used for the quantified variable:
\begin{equation}
Add \ x \leadsto^{X} \exists_{\bf closr} y. y = C^0_{Add} \land Apply_{\bf int} \ y \ x \ X. \label{defunctionalization of Add x}
\end{equation}

$X$ in the parametrised transformations specifies what variable symbol is used to denote the entity of sort {\bf closr} that represents the source goal term. 

For example, consider $Add \ x$ in \eqref{Add x}. In the target term of $Add$ in \eqref{defunctionalization of Add}, $X$ denotes the entity of sort {\bf closr} that represents $Add$, i.e.~$C^0_{Add}$, because $X$ and $C^0_{Add}$
are connected by equality. 

Also, in the target term of $Add \ x$ in \eqref{defunctionalization of Add x}, we have $Apply_{\bf int} \ y \ x \ X$, where $y = C^0_{Add}$. The third parameter of $Apply$, which always has the sort {\bf closr}, represents the
result of applying the function represented by the first parameter to the entity represented by the second parameter. Hence, $X$ denotes a closure that represents $Add \ x$. 

\subsubsection{Target monotone problems} \label{definition of target monotone problems}

The defunctionalized monotone problem is given by $\mathcal{P}' = (\Delta', P', G')$ with a new signature $\Sigma' = (\mathbb{B}', \mathbb{S}')$, where $\mathbb{B}'$ and $\mathbb{S}'$ are given by
\begin{align*}
\mathbb{B}' & = \mathbb{B} \cup \{ {\bf closr} \} \\
\mathbb{S}' & = \mathbb{S} \cup \{ (=_{\bf closr}): {\bf closr} \to {\bf closr} \to o \} \\
& \quad \cup \{ C^{i}_{X}: \sigma_1' \to \cdots \to \sigma_i' \to {\bf closr} \mid X: \sigma_1 \to \cdots \to \sigma_m \to o \in \Delta, \\
& \qquad \quad 0 \leq i < m, \sigma_j \leadsto_{T} \sigma_j' \text{ for all } 1 \leq j \leq i \}.
\end{align*}
Here, {\bf closr}, $(=_{\bf closr})$, and $C^{i}_{X}$ are assumed to be fresh. The new sort environment $\Delta'$ is
\begin{equation} \label{new Delta}
\begin{split}
\Delta' & = \{Apply_{A}: {\bf closr} \to A \to {\bf closr} \to o \mid A \in \mathbb{B}' \} \\
& \quad \cup \{IOMatch_{A}: {\bf closr} \to A \to o \mid A \in \mathbb{B}' \}. 
\end{split}
\end{equation}
Some of $Apply_{A}$ and $IOMatch_{A}$ may be redundant. $P'$ is defined as
\begin{equation} \label{new P}
P' = P'_{\text{Apply}} \cup P'_{\text{IOMatch}}, 
\end{equation}
where $P'_{\text{Apply}}$ and $P'_{\text{IOMatch}}$ are
\begin{equation} \label{new P for Apply}
\begin{split}
P'_{\text{Apply}} & = \{ Apply_{\sigma_{n+1}'} = \lambda x, y, z. (\exists a_1, \ldots, a_n. x = C^{n}_{X} \ a_1 \ \cdots \ a_n \land z = C^{n+1}_{X} \ a_1 \ \cdots  \ a_n \ y) \\
& \qquad \mid (X = \lambda x_1 {:} \sigma_1, \ldots, x_m {:} \sigma_m. F) \in P, {\bf ar}(X) = m, 0 \leq n \leq m- 2, \sigma_{n+1} \leadsto_{T} \sigma_{n+1}' \}
\end{split}
\end{equation}
\begin{equation} \label{new P for IOMatch}
\begin{split}
P'_{\text{IOMatch}} & = \{ IOMatch_{\sigma_{m}'} = \lambda x, x_{m}. (\exists x_1, \ldots, x_{m-1}. x = C^{m-1}_{X} \ x_1 \ \cdots \ x_{m-1} \land F') \\
& \qquad \mid (X = \lambda x_1 {:} \sigma_1, \ldots, x_m {:} \sigma_m. F) \in P, {\bf ar}(X) = m, \sigma_{m} \leadsto_{T} \sigma_{m}', F \leadsto F' \}
\end{split}
\end{equation}
Note that $F$ in \eqref{new P for IOMatch} must have the sort $o$. This indeed holds and follows from Lemma~\ref{lemma for well-sortedness of P' IOMatch}).
In the definition of $P'_{\text{IOMatch}}$, every occurrence of $x_1, \ldots, x_m$ in $F'$ is bound by the outermost $\exists $ in the body of $IOMatch_{\sigma'_{m}}$. 

The defunctionalized goal formula $G'$ is given by
\begin{equation*}
G \leadsto G'. 
\end{equation*}

Lastly, the constraint language, particularly the background theory, for $\mathcal{P}'$ need to be defined. 
Let the constraint language for $\mathcal{P}$ be $(Tm, Fm, Th)$ and the constraint language for $\mathcal{P}'$ be $(Tm', Fm', Th')$. 
$Tm'$ and $Fm'$ are informally defined as extensions of $Tm$ and $Fm$ with terms and formulas containing $(=_{\bf closr})$ and $C^i_X$ for some relational variable $X \in \Delta$. 
Because formal definitions of $Tm'$ and $Fm'$ are not critical to the proofs of the defunctionalization algorithm's correctness, I will not formally define them. 

The background theory $Th$ can always be characterized by a set of structures $S$ such that $F \in Th$ if and only $A \models  F$ for all $A \in S$. 
For each $A \in S$, a structure $A'$ for $\mathcal{P}'$ is defined as follows:
\begin{itemize}
\item To all sorts inherited from $\mathbb{B}$, $A'$ assigns the same universe as $A$.
\item To the sort {\bf closr}, $A'$ assigns the universe of objects that can be constructed by the data constructors $C^i_{X} \in \mathbb{S}' \setminus \mathbb{S}$. Informally, the universe assigned to {\bf closr} is 
\begin{align*}
A'_{\bf closr} = \{ (& X, t_1, \ldots, t_k) \mid X: \sigma_1 \to \cdots \to \sigma_m \to o \in \Delta, 0 \leq k < m, \\
& \sigma_i \leadsto_{T} \sigma_i' \text{ for each } 1 \leq i \leq k, t_i \in A'_{\sigma_i'} \text{ for each } 1 \leq i \leq k\}. 
\end{align*}
In other words, the universe is the set of tuples in which the first component denotes a top-level relational variable and the remaining components represent the actual parameters that have been supplied to the relational variable.
This definition is informal because we may have $A'_{\sigma_i'} = A'_{\bf closr}$; i.e.~the definition may be circular. 
In that case, the definition does not qualify as a formal definition. 
Another problem we have with this informal definition is that infinitely nested tuples are admitted. 
To get around this issue, I provide an inference rule to construct elements in $A'_{\bf closr}$:
\begin{prooftree}
\AxiomC{$X: \sigma_1 \to \cdots \to \sigma_m \to o \in \Delta$}
\AxiomC{$0 \leq k < m$}
\AxiomC{$t_i \in A'_{b_i} \text{ for each } 1 \leq i \leq m$}
\TrinaryInfC{$(X, t_1, \ldots, t_k) \in A'_{\bf closr}$}
\end{prooftree}
where $\sigma_i \leadsto_{T} b_i$ for each $1 \leq i \leq k$. 
\item For all constant symbols inherited from $\mathbb{S}$, $A'$ interprets them in the same way as $A$. 
\item The interpretation of $(=_{\bf closr}): A_{\bf closr}' \to A_{\bf closr}' \to \mathbbm{2}$ is determined by this inference rule
\begin{prooftree}
\AxiomC{$X: \sigma_1 \to \cdots \to \sigma_k \to o \in \Delta$}
\AxiomC{$0 \leq k < m$}
\AxiomC{$t_i =_{b_i} s_i \text{ for each } 1 \leq i \leq k$}
\TrinaryInfC{$(X, t_1, \cdots, t_k) =_{\bf closr} (X, s_1, \cdots, s_k)$}
\end{prooftree}
where $\sigma_i \leadsto_{T} b_i$ for each $1 \leq i \leq k$. The interpretation of  $(=_{\bf closr})$ is well-defined since the equality $(=_{b})$ for each $b \in \mathbb{B}$ exists. 
\item $C^i_{X}$ is interpreted as a function that takes in $i$ many arguments and returns an appropriate object from $A'_{\bf closr}$. Formally, it is defined by
\begin{equation*}
C^i_{X}(t_1, \ldots, t_i) = (X, t_1, \ldots, t_i). 
\end{equation*}
\end{itemize}
The new background theory for the target monotone problem is obtained by extending each model in $S$:
\begin{equation*}
S' = \{A' \mid A \in S \},
\end{equation*}
where $S'$ is a set of models characterizing $Th'$.


\subsection{Valuation extraction}

Given a monotone problem $\mathcal{P} = (\Delta, P,G)$, a model of $P$ is an element of $\mathcal{M} \llbracket \Delta \rrbracket$ such that it is a prefixed point of the one-step consequence operator $T^{\mathcal{M}}_{P: \Delta}$.
If $\alpha$ is a model of $P$ and $X \in {\tt dom}(\Delta)$, I will call $\alpha(X)$ a model of $X$. $\mathcal{P}$ is said to be solvable if for every model of the background theory, there exists a model $\alpha$ of $P$ such that 
$\mathcal{M} \llbracket G \rrbracket (\alpha) = 0$. I will call such a model of $P$ a solution to $\mathcal{P}$.

To prove completeness of the algorithm, my approach is to extract a solution to the target monotone problem from a solution to the source monotone problem. 
Hence, I will start with explaining how valuations can be extracted. 

In this subsection, a source monotone problem is assumed to have been preprocessed. As it is relatively easy to see that the preprocessing step preserves semantics, a formal proof for that will not be provided.

\subsubsection{Demonstration} \label{demonstration of valuation extraction}

To illustrate how extraction works, consider a source monotone problem
$\mathcal{P} = (\Delta, P, G)$ with the first-order signature being $\Sigma = (\mathbb{B}, \mathbb{S})$, where
\begin{align*}
\mathbb{B} & = \{ {\bf nat}, o \} \\
\mathbb{S} & = \{+: {\bf nat} \to {\bf nat} \to {\bf nat}, (=): {\bf nat} \to {\bf nat} \to o \} \cup \{ n: {\bf nat} \mid n \in \mathbb{N} \}. 
\end{align*}
$P$ contains
\begin{align*}
Add & = \lambda a, b, c. a + b = c \\
Twice & = \lambda f, a, b. (\exists c. f \ a \ c \land f \ c \ b).
\end{align*}
The sort environment for these two top-level relational variables is
\begin{equation*}
\Delta = \{Add: {\bf nat} \to {\bf nat} \to {\bf nat} \to o, Twice: ({\bf nat} \to {\bf nat} \to o) \to {\bf nat} \to {\bf nat} \to o \}. 
\end{equation*}
Since $G$ is irrelevant to the discussion of how to extract solutions, it is unnecessary to specify $G$. Let $A$ be the structure that assigns the universe of natural numbers, denoted by $\mathbb{N}$, to the sort {\bf nat}. $A$ interprets the symbols
in $\mathbb{S}$ as they are. 

I will consider a specific model $\alpha$ of $P$ that is defined by
\begin{equation*}
\alpha: Add \mapsto add \qquad \alpha: Twice \mapsto twice. 
\end{equation*}
The model of $Add$ is $add: \mathbb{N} \to \mathbb{N} \to \mathbb{N} \to \mathbbm{2}$ defined as
\begin{equation*}
add \ a \ b \ c =
\begin{cases}
1 & \text{if } a + b = c \\
0 & \text{otherwise}.
\end{cases}
\end{equation*}
The model of $Twice$ is $twice: (\mathbb{N} \to \mathbb{N} \to \mathbbm{2}) \to \mathbb{N} \to \mathbb{N} \to \mathbbm{2}$ is defined as
\begin{equation*}
twice \ f \ a \ b =
\begin{cases}
1 & \text{if } \exists c. f \ a \ c = 1 \land f \ c \ b = 1 \\
0 & \text{otherwise}.
\end{cases}
\end{equation*}
In fact, regardless of valuation $\alpha$, $add$ coincides with $\mathcal{M} \llbracket \Delta \vdash (\lambda a, b, c. a + b = c) : o \rrbracket (\alpha)$, and similarly $twice$ coincides with 
$\mathcal{M} \llbracket \Delta \vdash (\lambda f, a, b. (\exists c. f \ a \ c \land f \ c \ b)) : o \rrbracket (\alpha)$. In other words, any model of $P$ is larger than or equal to $\alpha$; hence, $\alpha$ is the least model of $P$.

$\mathcal{P}'$ is defunctionalized into $\mathcal{P}' = (\Delta', P', G')$, where $P'$ contains
\begin{align*}
Apply_{\bf nat} & = \lambda x, y, z. x = C^0_{Add} \land z = C^1_{Add} \ y \\
Apply_{\bf nat} & = \lambda x, y, z. (\exists n. x = C^1_{Add} \ n \land z = C^2_{Add} \ n \ y) \\
IOMatch_{\bf nat} & = \lambda x, c. (\exists n, a. x = C^2_{Add} \ a \ b \land a + b = c) \\[10pt]
Apply_{\bf closr} & = \lambda x, y, z. x = C^0_{Twice} \land z = C^1_{Twice} \ y \\
Apply_{\bf nat} & = \lambda x, y, z. (\exists f. x = C^1_{Twice} \ f \land z = C^2_{Twice} \ f \ y) \\
IOMatch_{\bf nat} & = \lambda x, b. (\exists f, a. x = C^2_{Twice} \ f \ a \land (\exists c. (\exists d. Apply_{\bf nat} \ f \ a \ d \land IOMatch_{\bf nat} \ d \ c) \\
& \quad \land (\exists e. Apply_{\bf nat} \ f \ c \ e \land IOMatch_{\bf nat} \ e \ b))).
\end{align*}

I will now work out a model of $P'$ induced by $\alpha$. The universe of {\bf nat} in $P'$ remains $\mathbb{N}$. The universe of {\bf closr} in $P'$, denoted by $A_{\bf closr}'$, is constructed by
\begin{prooftree}
\AxiomC{$X: \sigma_1 \to \cdots \to \sigma_m \to o \in \Delta$}
\AxiomC{$0 \leq k < m$}
\AxiomC{$t_i \in A'_{b_i} \text{ for each } 1 \leq i \leq m$}
\TrinaryInfC{$(X, t_1, \ldots, t_k) \in A'_{\bf closr}$}
\end{prooftree}
where $\sigma_i \leadsto_{T} b_i$ for each $1 \leq i \leq k$. 
A model for $Apply_{\bf nat}$ is the function $apply_{\bf nat}: A_{\bf closr}' \to \mathbb{N} \to A_{\bf closr}' \to \mathbbm{2}$ defined as
\begin{equation*}
apply_{\bf nat} \ m_1 \ n \ m_2 =
\begin{cases}
1 & \text{if } m_1 = C^0_{X} \land m_2 = C^1_{X} \ n \\
& \text{or } \exists n_1. (m_1 = C^1_{X} \ n_1 \land m_2 = C^2_{X} \ n_1 \ n) \\
0 & \text{otherwise},
\end{cases}
\end{equation*}
where $X \in \{Add, Twice\}$. 

More generally, the model for $Apply_{B}$ is a function $apply_{B}: A_{\bf closr}' \to A'_{B} \to A_{\bf closr}' \to \mathbbm{2}$ that takes three inputs: $m_1, n, m_2$. Here, the universes $A_{b}'$, where $b \in \mathbb{B} \cup \{ {\bf closr}\}$,
are defined in Subsection~\ref{definition of target monotone problems}. The parameter $m_1$ is a closure that represents a partially applied function, and
$n$ is an input to be augmented to $m_1$. Thus, $apply_{B}$ is defined as
\begin{equation*}
apply_{B} \ m_1 \ n \ m_2 =
\begin{cases}
1 & \text{if } m_2 =_{\bf closr} \text{append}(m_1, n) \\
0 & \text{otherwise},
\end{cases}
\end{equation*}
where $\text{append}: A_{\bf closr}' \to A_{B}' \to A_{\bf closr}'$ is 
\begin{equation*}
\text{append}((X, t_1, \ldots, t_k), t_{k+1}) = (X, t_1, \ldots, t_k, t_{k+1}). 
\end{equation*}
A model for $Apply_{\bf closr}$ in $P'$ is therefore given by $apply_{\bf closr}$. 

Next, I consider $IOMatch_{\bf nat}$. Because it has two branches corresponding to different top-level relational variables from the source problem, I will derive a model for each branch separately. These two models will be merged later to form
a single model for $IOMatch_{\bf nat}$. 

The first branch of $IOMatch_{\bf nat}$ is obtained by defunctionalizing $Add$. This branch has the role of determining whether the first input, which should be a closure of $Add$, can be evaluated to the second input. The first input is expected to have
the form $(Add, n_1, n_2)$. Hence, the model for the first branch of $IOMatch_{\bf nat}$ is the function $add': A_{\bf closr}' \to \mathbb{N} \to \mathbbm{2}$ defined as
\begin{equation*}
add' \ m \ n = 
\begin{cases}
1 & \text{if } m = (Add, n_1, n_2), n = n_1 + n_2 \\
0 & \text{otherwise}.
\end{cases}
\end{equation*}
Notice that this function is similar to $add$ in that both of them perform addition and return 1 whenever the last input matches the result of addition. They only differ in the representation of the inputs:
in $add$, two numbers to be summed are stored in the first two parameters, whereas in $add'$, they are stored inside the closure in the first parameter. Capturing the similarity between $add$ and $add'$, we can easily formalize how to convert
a model for a top-level relational variable to a model for the corresponding $IOMatch$ branch when {\bf closr} is not involved. 

Next, I work out how to interpret the second branch of $IOMatch_{\bf nat}$, which is obtained by defunctionalizing $Twice$. Elements of base sort in $\mathcal{P}'$ can be converted to corresponding elements in $\mathcal{P}$ as follows:
\begin{alignat*}{2}
\text{expand}_{\alpha} (t) & = t && \qquad \text{if $t$ is not of sort {\bf closr}} \\
\text{expand}_{\alpha} ((X, t_1, \ldots, t_k)) & = \alpha (X) \ \text{expand}(t_1) \ \cdots \ \text{expand}(t_k) && \qquad \text{otherwise}.
\end{alignat*}
Using the $\text{expand}_{\alpha}$ function, I define $twice': A'_{\bf closr} \to \mathbb{N} \to \mathbbm{2}$ as 
\begin{equation*}
twice' \ m \ n = 
\begin{cases}
1 & \text{if } m = (Twice, f, n_1) \land twice \ \text{expand}_{\alpha}(f) \ n_1 \ n = 1 \\
0 & \text{otherwise}.
\end{cases}
\end{equation*}
Alternatively, $twice \ \text{expand}_{\alpha}(f) \ n_1 \ n$ can be written as $\text{expand}(m) \ \text{expand}(n)$. The interpretation of the second branch of $IOMatch_{\bf nat}$ is $twice'$. 

By construction, $twice' \ (Twice, f, n_1) \ n_2 = 1$ implies $twice \ \text{expand}(f) \ n_1 \ n_2 = 1$. However, the converse does not hold. For example, $twice' \ (Twice, (Add, 1), 0) \ 2$ holds, as does $twice \ (add \ 1) \ 0 \ 2$. 
By contrast, $twice \ (\lambda a, b. a - 1 = b) \ 2 \ 0$ holds, whilst $twice' \ t \ 2 \ 0$ does not hold for any $t$ of sort {\bf closr}. This is owing to the fact that $A_{\bf closr}'$ only contains closures representing partially applied functions
that are expressible using the top-level relational variables in $P$.

As the two branches of $IOMatch_{\bf nat}$ are (syntactically) combined by taking their disjunction, the interpretation of $IOMatch_{\bf nat}$ is obtained by taking the disjunction of its constituent interpretations. Hence, the resulting model for 
$IOMatch_{\bf nat}$ is $iomatch_{\bf nat}: A_{\bf closr}' \to \mathbb{N} \to \mathbbm{2}$ defined as
\begin{equation*}
iomatch_{\bf nat} \ m \ n = add' \ m \ n \lor twice' \ m \ n. 
\end{equation*}
This can be made more general:
\begin{equation*}
iomatch_{\bf nat} \ m \ n = \text{expand}_{\alpha}(m) \ \text{expand}_{\alpha}(n). 
\end{equation*}

\subsubsection{Formalization of valuation extraction} \label{formalization of valuation extraction}
Given a first-order signature $\Sigma = (\mathbb{B}, \mathbb{S})$, suppose that a source problem is $\mathcal{P} = (\Delta, P, G)$. Assume that the target problem of $\mathcal{P}$ is $\mathcal{P}' = (\Delta', P', G')$ and that
the new signature is $\Sigma'= (\mathbb{B}', \mathbb{S}')$, where $\mathbb{B}' = \mathbb{B} \cup \{ {\bf closr} \}$. The derivation of $\mathbb{S}'$ is presented in Subsection~\ref{definition of target monotone problems}. 

Let $A$ be a $\Sigma$-structure used to interpret $\mathcal{P}$ and $A'$ be the structure for $\mathcal{P}'$ obtained from $A$ as explained in Subsection~\ref{definition of target monotone problems}. 
I write $A_{B}'$ for the universe assigned to $B \in \mathbb{B}'$ by $A'$. 
Also, assume that $\alpha$ is a valuation drawn from $\mathcal{M} \llbracket \Delta \rrbracket$. I will now explain how to derive a valuation $\alpha' \in \mathcal{M} \llbracket \Delta' \rrbracket$ from $\alpha$.

Each top-level relational variable in $\Delta'$ is either $Apply_{B}$ or $IOMatch_{B}$, where $B \in \mathbb{B}'$. 

As for $Apply_{B}$, $\alpha'$ maps it to $apply_{B}: A_{\bf closr}' \to A_{B}' \to A_{\bf closr}' \to \mathbbm{2}$ defined as
\begin{equation*}
apply_{B} \ m_1 \ n \ m_2=
\begin{cases}
1 & \text{if } m_2 =_{\bf closr} \text{append}(m_1, n) \\
0 & \text{otherwise},
\end{cases}
\end{equation*}
where $\text{append}: A_{\bf closr}' \to A_{B}' \to A_{\bf closr}'$ is 
\begin{equation*}
\text{append}((X, t_1, \ldots, t_k), t_{k+1}) = (X, t_1, \ldots, t_k, t_{k+1}). 
\end{equation*}

With respect to $IOMatch_{B}$, its interpretation is given by $iomatch_{B}: A_{\bf closr}' \to A_{B}' \to \mathbbm{2}$ defined as
\begin{equation*}
iomatch_{B} \ m \ n =\text{expand}_{\alpha} (m) \ \text{expand}_{\alpha} (n)
\end{equation*}
The function $\text{expand}_{\alpha}$ is defined as
\begin{alignat*}{2}
\text{expand}_{\alpha}(s) & = s && \qquad \text{if } s: b, b \in \mathbb{B} \\
\text{expand}_{\alpha}((Y, s_1, \ldots, s_l)) & = \alpha(Y) \ \text{expand}_{\alpha}(s_1) \ \cdots \ \text{expand}_{\alpha}(s_l) && \qquad \text{otherwise}.
\end{alignat*}
If $iomatch_{B} \ m \ n = \text{expand}_{\alpha} (m) \ \text{expand}_{\alpha} (n)$ is not well-defined due to type mismatch, then it is set to 0. 

The valuation $\alpha'$ is therefore
\begin{align*}
\alpha' & = \{(Apply_{B}, apply_{B}) \mid B \in \mathbb{B}' \} \} \\
& \quad \cup \{(IOMatch_{B}, iomatch_{B}) \mid B \in \mathbb{B}' \} \}. 
\end{align*}

Henceforth, I will write $\alpha' = T_f(\alpha)$ to mean that $\alpha'$ is derived from $\alpha$ by the above procedure, where $\alpha$ is a valuation for $P$. 

\subsubsection{Monotonicity of $\alpha'$}

We need to check whether the model of each $X \in {\tt dom}(\Delta')$ assigned by $\alpha'$ is monotone. 
In fact, $\alpha'$ is ``nearly'' monotone but is not truly monotone, since $\alpha' (Apply_{o})$ is not monotone. 
Appendix~\ref{chapter on monotonicity of alpha'} describes how to get around this issue. 

\subsection{Meaning preservation} \label{section on meaning preservation}

Meaning preservation means the preservation of source problems' semantics. 
Hence, meaning preservation is achieved when target monotone problems are solvable if and only if source monotone problems are solvable. 

\subsubsection{First direction}
In this subsection, I prove that it is possible to produce a solution to the target monotone problem from a solution to the source monotone problem. 

As usual, given a first-order signature $\Sigma = (\mathbb{B}, \mathbb{S})$, suppose that a source problem is $\mathcal{P} = (\Delta, P, G)$. Assume that $\mathcal{P}$ is defunctionalized into $\mathcal{P}' = (\Delta', P', G')$
and that its new signature is $\Sigma' = (\mathbb{B}', \mathbb{S}')$. 

Let $Th$ be a background theory for $\mathcal{P}$ and $Th'$ be the background theory for $\mathcal{P}'$ derived from $Th$. Assume $A \in Th$ and $A' \in Th'$, where $A'$ is built from $A$ as presented in
Subsection~\ref{definition of target monotone problems}. 

First, I establish the relationship between the semantics of source goal terms and semantics of target goal terms. A source goal term has either an arrow sort or a base sort. I will first illustrate the connection between the semantics of source and target goal terms
in the case when the source goal terms are of arrow sorts.

If $s\leadsto^{X} t$ and $s$ has a relational arrow sort, $X$ can be thought of as a variable (more precisely, a placeholder/hole for a variable) of sort {\bf closr} that represents $s$. For instance, the partially applied function
$Add \ 1$ is defunctionalized into
\begin{equation}
\exists_{\bf closr} x. (x = C^0_{Add} \land Apply_{\bf nat} \ x \ 1 \ X), \label{defunctionalizing Add 1}
\end{equation}
where $X$ is to be specified by a defunctionalization step at a higher level. In \eqref{defunctionalizing Add 1}, $X$ appears in the last parameter of $Apply_{\bf nat}$. Hence, $X$ can be considered as a variable that represents the result of applying
$Add$, which is represented by $x = C^0_{Add}$, to 1. To put it differently, 
\begin{equation*}
\mathcal{M} \llbracket \exists_{\bf closr} x. (x = C^0_{Add} \land Apply_{\bf nat} \ x \ 1 \ X) \rrbracket ([X \mapsto (Add, 1)]) = 1;
\end{equation*}
that is, \eqref{defunctionalizing Add 1} holds when we substitute $X = C^1_{Add} \ 1$. Moreover, it is worth observing that
\begin{equation*}
\text{expand}_{\alpha}((Add, 1)) = \mathcal{M} \llbracket Add \ 1 \rrbracket (\alpha),
\end{equation*}
where $\alpha$ is a valuation of the source goal term $Add \ 1$.

On the other hand, if $\Gamma \vdash s: b$, where $b \in \mathbb{B}$, and $s \leadsto t$, then $s$ and $t$ have the same semantics. For example, consider $s = Add \ 1 \ 1 \ 2$ that is to be interpreted using the valuation $\alpha = [Add \mapsto add]$.
$s$ is defunctionalized into 
\begin{equation*}
\exists_{\bf closr} x, y, z. \left( x = C^0_{Add} \land Apply_{\bf nat} \ x \ 1 \ y \land Apply_{\bf nat} \ y \ 1 \ z \land IOMatch_{\bf nat} \ z \ 2 \right).
\end{equation*}
Let this target goal term be denoted by $t$. $t$ is interpreted using $\alpha'$, which is obtained by applying the procedure presented in Subsection~\ref{formalization of valuation extraction}. It gives 
$\alpha' = [Apply_{\bf nat} \mapsto apply_{\bf nat}, IOMatch_{\bf nat} \mapsto add']$. 

Now we have 
\begin{equation*}
\mathcal{M} \llbracket s \rrbracket (\alpha) = \mathcal{M} \llbracket t \rrbracket (\alpha')
\end{equation*}
since both sides of the equation evaluate to 1. 

To express this formally, consider a well-sorted source goal term $s$ over $\Sigma$ that contains no lambda abstractions. Suppose the following:
\begin{itemize}
\item $s$ is a goal term (or a subgoal term) from $P$. The structure $A$ is used to interpret $s$. 
\item $\Gamma \vdash s: \sigma$, where $\sigma$ is either a relational arrow sort or a base sort. Because $s$ is well-sorted, ${\tt FV}(s) \subseteq {\tt dom}(\Gamma)$. 
\item $s \leadsto^{X} t$. The structure $A'$ is used to interpret $t$. In $s$, both ordinary variables and top-level relational variables are treated as variables. 
However, in $t$, ordinary variables from $s$ have the same status, whilst top-level relational variables from $s$ become constant symbols in $t$. 
\item $\Gamma' \vdash t: \sigma'$, where $\sigma \leadsto_{T} \sigma'$ and ${\tt FV}(t) \subseteq \Gamma'$. $\Gamma'$ can be equal to 
\begin{equation*}
\begin{split}
& \{u: \sigma' \mid u \in {\tt FV}(s) \setminus {\tt dom}(\Delta), u: \sigma \in \Gamma \} \\
& \cup \{Apply_{B}: {\bf closr} \to B \to {\bf closr} \to o \mid B \in \mathbb{B}' \} \\
& \cup \{IOMatch_{B}: {\bf closr} \to B \to o \mid B \in \mathbb{B}' \},
\end{split}
\end{equation*}
although this contains top-level relational variables from $\Delta$, which never appear in $t$. 
\item $\alpha$ is a valuation of $s$ such that if $v \in {\tt FV}(s): \rho$, where $\rho$ is a relational arrow sort, there exists $c \in A_{\bf closr}'$ such that $\text{expand}_{\alpha}(c) = \alpha (v)$. 
\item $\alpha'$ is a valuation of $t$ satisfying
\begin{itemize}
\item $\alpha'(v) = c$ for $v \in {\tt FV}(s) \setminus {\tt dom}(\Delta)$ such that $\text{expand}_{\alpha}(c) = \alpha(v)$.
\item $\alpha'(Apply_{B}) = apply_{B}$ for $B \in \mathbb{B}'$.
\item $\alpha'(IOMatch_{B}) = iomatch_{B}$ for $B \in \mathbb{B}'$.
\end{itemize}
Note that $X \notin {\tt dom}(\alpha')$. Here, $apply_{B}$ and $iomatch_{B}$ are defined in Subsection~\ref{formalization of valuation extraction}. 
\end{itemize}

Notice that $apply_{B}$ in $\alpha'$ is not monotone if $B = o$ as explained in Appendix~\ref{chapter on monotonicity of alpha'}. 
This is only a minor issue since I do not rely on monotonicity of $\alpha'$ to prove the first direction of meaning preservation.
Appendix~\ref{chapter on monotonicity of alpha'} explains how to resolve this issue.

Moreover, given a term $u$, if no existential quantifiers in $u$ bind higher-order variables and all symbols in $u$ have order at most 2, $\mathcal{M} \llbracket u \rrbracket (\alpha') = \mathcal{S} \llbracket u \rrbracket (\alpha')$.
The monotone and standard semantics differ when we have existential quantifiers over higher-order variables. 
Hence, if we use $\alpha'$ to interpret first-order goal terms, the monotone and standard semantics give the same interpretation. 
This can be formally proved by induction on the grammar of goal terms. 
Thus, within this subsection, I write $\mathcal{M} \llbracket u \rrbracket (\alpha')$ even though $\alpha'$ is not truly monotone.

In addition, $T^{\mathcal{M}}_{P': \Delta'}$ is equivalent to $T^{\mathcal{S}}_{P': \Delta'}$, although $T^{\mathcal{M}}_{P': \Delta'}$ is not guaranteed to be monotone if an input is not drawn from $\mathcal{M} \llbracket \Delta' \rrbracket$. 
The monotonicity of $T^{\mathcal{M}}_{P': \Delta'}$ is not used in this subsection. 

The next lemma establishes a semantic relationship between a source goal term and a target goal term. 

\begin{restatable}{lemma}{relationshipbetweenthesemanticsofsourceandtargetgoalterms} \label{relationship between the semantics of source and target goal terms}
If $\sigma \notin \mathbb{B}$, we have a unique $c \in A_{\bf closr}'$ such that $\mathcal{M} \llbracket \Gamma' \vdash t: {\bf closr} \rrbracket (\alpha' \cup [X \mapsto c]) = 1$. In addition, this $c$ satisfies
$\text{expand}_{\alpha}(c) = \mathcal{M} \llbracket \Gamma \vdash s: \sigma \rrbracket (\alpha)$. Otherwise, if $\sigma$ is a base sort, we have
$\mathcal{M} \llbracket \Gamma \vdash s: b \rrbracket (\alpha) = \mathcal{M} \llbracket \Gamma' \vdash t: b \rrbracket (\alpha')$. 
\end{restatable}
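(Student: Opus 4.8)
The plan is to argue by induction on the source goal term $s$, equivalently by induction on the derivation of the transformation relating $s$ to $t$, proving both clauses of the statement simultaneously. The clause for $\sigma\notin\mathbb{B}$ concerns a derivation $s\leadsto^X t$, where $t$ is a formula (of sort $o$) carrying a distinguished free variable $X$ of sort ${\bf closr}$ that stands for the value of $s$; the clause for $\sigma=b\in\mathbb{B}$ concerns a derivation $s\leadsto t$. The two clauses must be proved together because the inductive calls cross between them: e.g.\ {\sc (Match-Base)} applied to a sort-$o$ application $E\,F$ appeals to the arrow clause for $E$ and the base clause for $F$. The auxiliary derivations $\leadsto_A^X$ and $\leadsto_M$ are not given a separate statement; when the last rule is {\sc (App)} (resp.\ {\sc (Match)}) I immediately inspect which of {\sc (App-Base)}/{\sc (App-Arrow)} (resp.\ {\sc (Match-Base)}/{\sc (Match-Arrow)}) produced its premise. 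Note that {\sc (App)} and {\sc (Match)} require $\text{head}(E)\notin\{\land,\lor\}$, so the application spine is headed by an ordinary or top-level variable, and peeling one argument at a time via the starred rules eventually bottoms out at {\sc (Var-Arrow)}/{\sc (TopVar)}. Throughout, I use the remark (proved earlier by induction on the grammar of goal terms) that on the first-order target terms the monotone and standard interpretations coincide, so $\mathcal{M}\llbracket t\rrbracket(\alpha')$ may be computed by ordinary set-theoretic reasoning.

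For the base-sort clause: if $b\neq o$ then $s\in Tm$ and $s\leadsto s$ by {\sc (ConstrLan)}, and since $A'$ agrees with $A$ on the symbols of $\mathbb{S}$ and $\alpha'$ agrees with $\alpha$ on the base-sorted free variables of $s$ (there $\text{expand}_\alpha$ is the identity), the two sides coincide. If $b=o$, the rules {\sc (ConstrLan)} and {\sc (Var-Base)} are as above; {\sc (LogSym)} follows from the induction hypothesis for the two immediate subformulas plus the fact that $\land,\lor$ receive the same Boolean interpretation; and {\sc (Exi)}, where $\exists_b x.F\leadsto\exists_b x.F'$ with $b$ a base sort (higher-order quantifiers having been removed by preprocessing), follows by applying the induction hypothesis to $F$ under each pair of extensions $\alpha[x\mapsto v]$, $\alpha'[x\mapsto v]$ with $v\in A_b=A'_b$, observing that extending by a base-sorted variable preserves the compatibility hypotheses, and then taking maxima over $v$. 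The only remaining rule for a sort-$o$ source term is {\sc (Match)}, handled together with the {\sc (App)}-rules below.

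For the arrow-sort clause, the base rules {\sc (Var-Arrow)} and {\sc (TopVar)} pin the witness by an equality: for $t=(X=x)$ the unique $c$ with $\mathcal{M}\llbracket t\rrbracket(\alpha'\cup[X\mapsto c])=1$ is $\alpha'(x)$, and $\text{expand}_\alpha(\alpha'(x))=\alpha(x)=\mathcal{M}\llbracket x\rrbracket(\alpha)$ is exactly the compatibility hypothesis on $\alpha,\alpha'$; for $t=(X=C^0_x)$ the unique $c$ is the empty tuple $(x)$, with $\text{expand}_\alpha((x))=\alpha(x)$ by definition. For the inductive step, take {\sc (App)} followed by {\sc (App-Base)}, with premises $E\leadsto^x E'$, $F\leadsto F'$ and $F:\sigma_0$ a base sort. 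The induction hypothesis gives a \emph{unique} $c_E$ making $E'$ true, with $\text{expand}_\alpha(c_E)=\mathcal{M}\llbracket E\rrbracket(\alpha)$, and gives $v_F:=\mathcal{M}\llbracket F\rrbracket(\alpha)=\mathcal{M}\llbracket F'\rrbracket(\alpha')$. Since the freshly chosen $x$ does not occur in $E'$, the body $E'\land Apply_{\sigma_0}\,x\,F'\,X$ of $t$ is true under $\alpha'\cup[X\mapsto c]\cup[x\mapsto c']$ precisely when $c'=c_E$ and $apply_{\sigma_0}\,c_E\,v_F\,c=1$, i.e.\ $c=_{\bf closr}\text{append}(c_E,v_F)$, which determines $c$ uniquely. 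Finally, unfolding $\text{expand}_\alpha$ and $\text{append}$, $\text{expand}_\alpha(\text{append}(c_E,v_F))=\text{expand}_\alpha(c_E)\,v_F=\mathcal{M}\llbracket E\rrbracket(\alpha)(\mathcal{M}\llbracket F\rrbracket(\alpha))=\mathcal{M}\llbracket E\,F\rrbracket(\alpha)$. Rule {\sc (App-Arrow)} is identical except $F$ has an arrow sort, so its induction hypothesis supplies a closure $c_F$ with $\text{expand}_\alpha(c_F)=\mathcal{M}\llbracket F\rrbracket(\alpha)$ and the inner $\exists_{\bf closr}y$ collapses to $y=c_F$. The {\sc (Match-Base)}/{\sc (Match-Arrow)} cases are the same computations with $iomatch_B$ in place of $apply_B$ and no trailing argument, yielding $\mathcal{M}\llbracket E\,F\rrbracket(\alpha)=\mathcal{M}\llbracket t\rrbracket(\alpha')$.

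I expect the delicate part to be the uniqueness half of the arrow clause and the bookkeeping that propagates it. Three points need care: (i) the freshness side-conditions on the starred rules must be shown to guarantee that $x$ (resp.\ $y$) really is absent from $E'$ (resp.\ from $F'$ before its substitution), so that evaluating $E'$ under $\alpha'\cup[x\mapsto c']$ is a legitimate instance of the induction hypothesis and the existential genuinely collapses to the unique witness rather than merely being satisfiable; (ii) I must carry, as part of the invariant, that the unique closure $c$ returned for an arrow-sorted $s$ is strictly partially applied with exactly ${\bf ar}(s)$ arguments still missing — this is forced by $\text{expand}_\alpha(c)=\mathcal{M}\llbracket s\rrbracket(\alpha)$ having the matching sort, and it is what keeps $\text{append}(c,\cdot)$ defined and valued in $A'_{\bf closr}$ in the {\sc (App)}-rules (note {\sc (App)} fires only when $E\,F$ still has arrow sort, so no ``full'' tuple is ever formed; the final argument is consumed instead by {\sc (Match)}); and (iii) throughout, $=_{\bf closr}$ must be read as the structural equality generated by the base-sort equalities, as fixed by the inference rules defining $A'$, so ``unique $c$'' means unique up to $=_{\bf closr}$ with $\text{append}(c_E,v_F)$ as the chosen representative. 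Everything else reduces to mechanical unfolding of the monotone semantics.
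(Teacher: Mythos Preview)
Your proposal is correct and follows essentially the same approach as the paper: structural induction on $s$ (equivalently on the transformation derivation), proving the arrow-sort and base-sort clauses simultaneously, with the arrow case bottoming out at {\sc (Var-Arrow)}/{\sc (TopVar)} and the inductive step collapsing each fresh existential to the unique witness supplied by the subderivation, then unfolding $apply_B$/$iomatch_B$ and $\text{expand}_\alpha$. Your care points (i)--(iii) are good hygiene that the paper largely leaves implicit; the one small imprecision is in (ii), where the arity invariant on $c$ is not literally ``forced'' by the equation $\text{expand}_\alpha(c)=\mathcal{M}\llbracket s\rrbracket(\alpha)$ (distinct closures can expand to the same function), but it does hold by construction at each inductive step, which is all you need.
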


The uniqueness of $c \in A_{\bf closr}'$ in the above lemma is important for the inductive proof to work. 
It is worth noting that this $c$ is a closure object that represents the partially applied function $s$.

This lemma allows us to establish the first direction of meaning preservation. 

\begin{restatable}{theorem}{completenessofthedefunctionalizationalgorithm}
If $\mathcal{P}$ is solvable, so is $\mathcal{P}'$. 
\end{restatable}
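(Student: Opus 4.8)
The plan is to prove the statement by \emph{valuation extraction}: from a solution $\alpha$ to the source problem $\mathcal{P}$ we build the valuation $\alpha' = T_f(\alpha)$ of Subsection~\ref{formalization of valuation extraction} and show it is a solution to $\mathcal{P}'$. Fix an arbitrary model $A'$ of the target background theory $Th'$. By the construction $S' = \{A' \mid A \in S\}$ of Subsection~\ref{definition of target monotone problems}, $A'$ arises from a model $A$ of $Th$. Since $\mathcal{P}$ is solvable, choose a solution $\alpha \in \mathcal{M}\llbracket \Delta \rrbracket$ to $\mathcal{P}$ with respect to $A$, so $\alpha$ is a prefixed point of $T^{\mathcal{M}}_{P:\Delta}$ and $\mathcal{M}\llbracket G\rrbracket(\alpha) = 0$. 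Set $\alpha' = T_f(\alpha)$. Strictly, $\alpha'$ fails monotonicity at $Apply_o$, but this is only a bookkeeping issue: monotonicity of $\alpha'$ plays no role in the argument below, and Appendix~\ref{chapter on monotonicity of alpha'} turns $\alpha'$ into a genuinely monotone valuation that still has the two properties we are about to establish. It then remains to show that (a) $\alpha'$ is a prefixed point of $T^{\mathcal{M}}_{P':\Delta'}$, and (b) $\mathcal{M}\llbracket G'\rrbracket(\alpha') = 0$.

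For (a) I would verify $T^{\mathcal{M}}_{P':\Delta'}(\alpha') \subseteq_{\Delta'} \alpha'$ one variable at a time, reading $P'(Apply_B)$ and $P'(IOMatch_B)$ as the disjunction of the branches listed in \eqref{new P for Apply} and \eqref{new P for IOMatch}. For $Apply_B$: if the $Apply_B$-branch coming from a top-level $X$ with $n$ already-supplied arguments holds at $(m_1, n_0, m_2)$ under $\alpha'$, then $m_1 = (X, a_1, \ldots, a_n)$ and $m_2 = (X, a_1, \ldots, a_n, n_0) = \text{append}(m_1, n_0)$, hence $apply_B(m_1)(n_0)(m_2) = 1 = \alpha'(Apply_B)(m_1)(n_0)(m_2)$ by the definitions of $C^i_X$ and $\text{append}$. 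For $IOMatch_B$: suppose the $IOMatch_B$-branch coming from $X$, with $P(X) = \lambda x_1 {:} \sigma_1, \ldots, x_m {:} \sigma_m. F$, $\sigma_m \leadsto_T B$ and $F \leadsto F'$, holds at $(m_1, n)$ under $\alpha'$; then $m_1 = (X, t_1, \ldots, t_{m-1})$ for some $t_i$, and $\mathcal{M}\llbracket F'\rrbracket(\alpha' \cup [x_1 \mapsto t_1, \ldots, x_{m-1} \mapsto t_{m-1}, x_m \mapsto n]) = 1$. Apply Lemma~\ref{relationship between the semantics of source and target goal terms} to $F \leadsto F'$ in its base-sort case ($F$ has sort $o$, as noted after \eqref{new P for IOMatch}, and $F$, being a body, contains no lambda abstractions), with the source valuation sending $x_i \mapsto \text{expand}_{\alpha}(t_i)$ for $i < m$, $x_m \mapsto \text{expand}_{\alpha}(n)$, and each top-level $Z \in \Delta$ to $\alpha(Z)$. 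The lemma's hypotheses hold: the arrow-sorted free variables of $F$ are among the $x_i$ and the $Z \in \Delta$, with closure witnesses $t_i$, $n$ and $C^0_Z$ respectively, and the matching target valuation is exactly this restriction of $\alpha'$ extended by $[x_1 \mapsto t_1, \ldots, x_{m-1} \mapsto t_{m-1}, x_m \mapsto n]$. Consequently $\mathcal{M}\llbracket F\rrbracket$ at the source valuation equals $1$; since $\alpha$ is a prefixed point and $P(X) = \lambda x_1, \ldots, x_m. F$ after preprocessing, this forces $\alpha(X)\, \text{expand}_{\alpha}(t_1) \cdots \text{expand}_{\alpha}(t_{m-1})\, \text{expand}_{\alpha}(n) = 1$, which by the definition of $\text{expand}_{\alpha}$ equals $\text{expand}_{\alpha}(m_1)\, \text{expand}_{\alpha}(n) = iomatch_B(m_1)(n) = \alpha'(IOMatch_B)(m_1)(n)$. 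Disjoining over the branches preserves $\subseteq$, so $T^{\mathcal{M}}_{P':\Delta'}(\alpha') \subseteq_{\Delta'} \alpha'$.

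For (b), apply Lemma~\ref{relationship between the semantics of source and target goal terms} again, in its base-sort case, to $G \leadsto G'$ (legitimate because, after preprocessing, $G$ has sort $o$, contains no lambda abstractions and has no higher-order quantifiers). The source valuation is the restriction of $\alpha$ to the free variables of $G$ (which are top-level relational variables, with closure witnesses $C^0_Z$), and the corresponding target valuation is the restriction of $\alpha'$ to the free variables of $G'$, namely the $Apply_B$ and $IOMatch_B$; this is exactly what $T_f(\alpha)$ prescribes. Hence $\mathcal{M}\llbracket G'\rrbracket(\alpha') = \mathcal{M}\llbracket G\rrbracket(\alpha) = 0$. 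Together with (a), $\alpha'$ is a solution to $\mathcal{P}'$ with respect to $A'$; since $A'$ was an arbitrary model of $Th'$, $\mathcal{P}'$ is solvable.

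The real content, and the main obstacle, is Lemma~\ref{relationship between the semantics of source and target goal terms} itself, which is proved separately by induction on the transformation derivation; granted it, the theorem assembles as above. The remaining points requiring care are: (i) checking, at each use of the lemma, that \emph{every} arrow-sorted free variable of the source subterm does have a closure representative under $\alpha$ (for a pattern-bound parameter it is the closure component exposed by the match, e.g.~the $t_i$ above; for a top-level variable it is $C^0_Z$); (ii) making precise the ``merge equations with the same head into their disjunction'' step and checking it commutes with the interpretation of $T^{\mathcal{M}}_{P':\Delta'}$; and (iii) the monotonicity of the witness, since the literal $T_f(\alpha)$ is monotone except at $Apply_o$ and Appendix~\ref{chapter on monotonicity of alpha'} is needed to replace it by an honest valuation of the target monotone problem without disturbing the two properties above. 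Of these, (i) is where genuine attention is needed.
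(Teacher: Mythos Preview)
Your proposal is correct and follows essentially the same approach as the paper: part~(a) is precisely the content of the paper's intermediate result (Theorem~\ref{alpha' is a model of P'}) inlined branch-by-branch, part~(b) is the same application of Lemma~\ref{relationship between the semantics of source and target goal terms} to $G \leadsto G'$, and your handling of the monotonicity defect via Appendix~\ref{chapter on monotonicity of alpha'} matches the paper's invocation of Theorem~\ref{a monotone solution exists if a standard solution exists}. Your explicit quantification over models $A'$ of $Th'$ and your care point~(i) are, if anything, more carefully spelled out than in the paper itself.
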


\subsubsection{Continuity of one-step consequence operators}

There are difficulties with applying the idea of valuation extracting to prove the second direction of meaning preservation (see Appendix~\ref{section on difficulties with valuation extracting in the second direction}).
Hence, for the second direction, I adopt a different approach that does not involve valuation extraction. 
My approach was originally inspired by the work on Communicating Sequential Processes by Roscoe \cite{Roscoe1997}, 
though it later turned out that in the literature on logic programming, the same approach has been used for a long time \cite{Lloyd1987,Hogger1990}. 

In this subsection, I introduce the notion of ``continuity''\index{continuity}\index{Scott continuity|see {continuity}}, also known as Scott continuity.

Given a partially ordered set (poset) $P$ and a subset $X \subseteq P$, the greatest lower bound\index{greatest lower bound} of $X$ is denoted by $\bigsqcap X$ and the least upper bound\index{least upper bound} of $X$ is denoted by $\bigsqcup X$. 

It is explained in \cite{Ramsay2017} that $\mathcal{M} \llbracket \Delta \rrbracket$ is a complete lattice. 

I will first introduce several key definitions taken from \cite{Roscoe1997}. 

\begin{definition}
Given a poset $P$, a subset $D \subseteq P$ is said to be directed if each finite subset $F$ of $D$ has an upper bound in $D$; in other words, there is $y \in D$ such that $x \leq y$ for all $x \in F$. 
\end{definition}

\begin{definition}
A complete partial order\index{complete partial order}\index{cpo|see {complete partial order}} (often abbreviated cpo) is a partial order in which every directed set has a least upper bound, and which has a least element (denoted by $\bot$). 
\end{definition}

A complete lattice is also a complete partial order. 

\begin{definition}
If $P$ and $Q$ are two complete partial orders and $f: P \to Q$, then $f$ is said to be continuous if, whenever $R \subseteq P$ is directed, $\bigsqcup \{ f(x) \mid x \in R \}$ exists and equals $f (\bigsqcup R)$. 
\end{definition}

A continuous function can be shown to be monotone, although I will not do it here. The next proposition establishes that $T^{\mathcal{M}}_{P: \Delta}$ is continuous when the underlying complete lattice is finite. 

\begin{restatable}{proposition}{finitedomainsimplycontinuity}
If $\mathcal{M} \llbracket \Delta \rrbracket$ is finite, then $T^{\mathcal{M}}_{P: \Delta}: \mathcal{M} \llbracket \Delta \rrbracket \to \mathcal{M} \llbracket \Delta \rrbracket$ is continuous. 
\end{restatable}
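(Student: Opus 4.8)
The plan is to exploit the fact that directed subsets of a \emph{finite} poset are especially simple: they always contain their own least upper bound. First I would record the elementary observation that if $D \subseteq \mathcal{M}\llbracket\Delta\rrbracket$ is directed then $D$ is nonempty (the definition of directedness, applied to the finite subset $\emptyset \subseteq D$, already forces an element into $D$) and, being a subset of the finite set $\mathcal{M}\llbracket\Delta\rrbracket$, is itself finite. Applying directedness to the finite subset $F = D$ of $D$ yields an upper bound $d^{*} \in D$, which is then the greatest element of $D$. Since $\mathcal{M}\llbracket\Delta\rrbracket$ is a complete lattice (as noted above, following \cite{Ramsay2017}) all least upper bounds exist, and here $\bigsqcup D = d^{*} \in D$.

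The second ingredient is monotonicity of $T^{\mathcal{M}}_{P:\Delta}$ on $\mathcal{M}\llbracket\Delta\rrbracket$. This is the defining virtue of the monotone semantics: because $\mathcal{M}\llbracket\sigma_1 \to \sigma_2\rrbracket$ is the monotone function space $\mathcal{M}\llbracket\sigma_1\rrbracket \Rightarrow_{m} \mathcal{M}\llbracket\sigma_2\rrbracket$, the interpretation $\mathcal{M}\llbracket\Delta \vdash M : \sigma\rrbracket$ is monotone in its valuation argument, whence $\alpha \subseteq_{\Delta} \beta$ implies $T^{\mathcal{M}}_{P:\Delta}(\alpha) \subseteq_{\Delta} T^{\mathcal{M}}_{P:\Delta}(\beta)$. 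I would either cite \cite{Ramsay2017} for this fact or, if a self-contained argument is preferred, dispatch it by a routine induction on the structure of the goal terms $P(x)$.

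With these two facts the proof is immediate. Let $D \subseteq \mathcal{M}\llbracket\Delta\rrbracket$ be directed and write $d^{*} = \bigsqcup D$, so $d^{*} \in D$ by the first step. Then $T^{\mathcal{M}}_{P:\Delta}(d^{*}) \in \{ T^{\mathcal{M}}_{P:\Delta}(x) \mid x \in D \}$, and by monotonicity $T^{\mathcal{M}}_{P:\Delta}(x) \subseteq_{\Delta} T^{\mathcal{M}}_{P:\Delta}(d^{*})$ for every $x \in D$ (since $x \subseteq_{\Delta} d^{*}$). Hence $T^{\mathcal{M}}_{P:\Delta}(d^{*})$ is an upper bound of $\{ T^{\mathcal{M}}_{P:\Delta}(x) \mid x \in D \}$ lying inside that set, so it is its least upper bound; that is, $\bigsqcup \{ T^{\mathcal{M}}_{P:\Delta}(x) \mid x \in D \}$ exists and equals $T^{\mathcal{M}}_{P:\Delta}(\bigsqcup D)$, which is exactly continuity.

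I do not anticipate a genuine obstacle: the substance is the collapse of ``directed'' to ``has a maximum'' in a finite poset, and the rest is bookkeeping. The two points to get right are (i) ensuring monotonicity of $T^{\mathcal{M}}_{P:\Delta}$ is actually in hand, either by reference to \cite{Ramsay2017} or by the short structural induction, and (ii) checking the edge cases of the finiteness argument, in particular that directedness rules out $D = \emptyset$ (or else handling it via $\bot$), so that $\bigsqcup D \in D$ holds without exception.
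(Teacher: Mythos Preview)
Your proposal is correct and follows essentially the same approach as the paper: both use monotonicity of $T^{\mathcal{M}}_{P:\Delta}$ (the paper invokes it without further comment, relying on Lemma~2 of \cite{Ramsay2017}) together with the observation that a finite directed set contains its own supremum, and conclude continuity from these two facts. Your write-up is slightly more explicit about why directedness forces $\bigsqcup D \in D$ and about the nonemptiness edge case, but the underlying argument is the same.
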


Continuity of $T^{\mathcal{M}}_{P: \Delta}$ is a strictly weaker condition than the finiteness of $\mathcal{M} \llbracket \Delta \rrbracket$. For instance, even when some universes $A_{b}$ are infinite, if $T^{\mathcal{M}}_{P: \Delta}$ is
an identity function, it is continuous. 

The next theorem shows that if $T^{\mathcal{M}}_{P: \Delta}$ is continuous, then there exists a constructive way to obtain a fixed point. 

\begin{restatable}{theorem}{iterativeapplicationofacontinuousfunctiononthebottomyieldstheleastfixedpoint} \label{iterative application of a continuous function on the bottom yields the least fixed point}
If $f$ is continuous, then $\bigsqcup \{ f^{n} (\bot) \mid n \in \mathbb{N} \}$ is the least fixed point of $f$. 
\end{restatable}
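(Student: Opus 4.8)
The plan is to prove this by the standard Kleene iteration argument, in the same spirit as the constructive fixed-point theorems used for one-step consequence operators in logic programming. First I would record the fact that was stated just before the proposition: a continuous $f$ is monotone. This is immediate, since if $x \leq y$ then $\{x,y\}$ is directed with $\bigsqcup\{x,y\} = y$, so continuity gives $f(y) = f(\bigsqcup\{x,y\}) = \bigsqcup\{f(x),f(y)\}$, whence $f(x) \leq f(y)$. Next I would show that $\{f^n(\bot)\mid n\in\mathbb{N}\}$ is an ascending chain: $\bot = f^0(\bot) \leq f^1(\bot)$ because $\bot$ is the least element, and if $f^n(\bot) \leq f^{n+1}(\bot)$ then applying the monotone map $f$ yields $f^{n+1}(\bot) \leq f^{n+2}(\bot)$; by induction the chain is ascending, hence directed (any finite subset has a greatest element, which is an upper bound lying in the set), so $x := \bigsqcup\{f^n(\bot)\mid n\in\mathbb{N}\}$ exists by the cpo axiom.

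Then I would check that $x$ is a fixed point of $f$. Using continuity applied to the directed set just exhibited,
\[
f(x) = f\left(\bigsqcup \{ f^n(\bot) \mid n \in \mathbb{N} \}\right) = \bigsqcup \{ f^{n+1}(\bot) \mid n \in \mathbb{N} \} = \bigsqcup \{ f^n(\bot) \mid n \geq 1 \}.
\]
Since $f^0(\bot) = \bot$ lies below every element of the chain, adjoining it to the indexing set does not change the least upper bound, so $\bigsqcup\{f^n(\bot)\mid n\geq 1\} = \bigsqcup\{f^n(\bot)\mid n\in\mathbb{N}\} = x$, and therefore $f(x) = x$.

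Finally I would establish minimality. Let $y$ be any fixed point of $f$. By induction on $n$ one shows $f^n(\bot) \leq y$: the base case $\bot \leq y$ holds because $\bot$ is least, and if $f^n(\bot) \leq y$ then $f^{n+1}(\bot) = f(f^n(\bot)) \leq f(y) = y$ using monotonicity of $f$. Hence $y$ is an upper bound of the chain, so $x = \bigsqcup\{f^n(\bot)\mid n\in\mathbb{N}\} \leq y$. Combining the two parts, $x$ is a fixed point of $f$ that lies below every fixed point of $f$, i.e. it is the least fixed point.

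Nothing in this argument is genuinely hard; the only point that needs care is ensuring that the set over which the supremum is taken is \emph{directed}, so that the cpo axiom and the continuity hypothesis are actually applicable — and for that it is essential to have first extracted monotonicity of $f$ from continuity, which is why I would prove that small lemma up front rather than merely citing it.
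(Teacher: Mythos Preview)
Your proof is correct and follows essentially the same approach as the paper's own proof: both are the standard Kleene iteration argument, first using monotonicity (derived from continuity) to see that the iterates form an increasing chain, then applying continuity to show the supremum is a fixed point, and finally using an easy induction to show it lies below every fixed point. Your version is in fact slightly more detailed---you spell out why continuity yields monotonicity and why the chain is directed---whereas the paper simply asserts these points; but the structure and the key steps are identical.
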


\subsubsection{Second direction} \label{section on the direction direction in the proof for meaning preservation}

The next lemma shows that this diagram commutes:
\begin{displaymath}
\xymatrix{
\zeta \ar[r]^-{T_{f}} & \zeta' \\
\gamma \ar[u]^{T^{\mathcal{M}}_{P: \Delta}} \ar[r]_-{T_{f}} & \gamma' \ar[u]_{T^{\mathcal{M}}_{P': \Delta'}}
}
\end{displaymath}

\begin{restatable}{lemma}{relationshipbetweensourceandtargetonestepconsequenceoperators} \label{relationship between source and target one-step consequence operators}
Given a valuation $\gamma$ of $P$ and a valuation $\gamma'$ of $P'$, suppose $\gamma' = T_{f} (\gamma)$ holds. If $\zeta = T^{\mathcal{M}}_{P: \Delta} (\gamma)$ and $\zeta' = T^{\mathcal{M}}_{P': \Delta'} (\gamma')$, then
$\zeta' = T_{f} (\zeta)$. 
\end{restatable}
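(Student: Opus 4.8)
The plan is to establish the commuting square componentwise: since $\mathrm{dom}(\Delta') = \{Apply_B \mid B \in \mathbb{B}'\} \cup \{IOMatch_B \mid B \in \mathbb{B}'\}$, it suffices to show that $\zeta'(Apply_B) = T_f(\zeta)(Apply_B)$ and $\zeta'(IOMatch_B) = T_f(\zeta)(IOMatch_B)$ for every $B \in \mathbb{B}'$, treating the two families separately. Throughout I will write $m_X = {\bf ar}(X)$ and let $F_X$ denote the body of a top-level relational variable $X$ (so $P(X) = \lambda x_1\ldots x_{m_X}.\,F_X$ after the preprocessing of Subsection~\ref{preprocessing}).

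For $Apply_B$ the argument is essentially syntactic. By \eqref{new P for Apply} every defining clause of $Apply_B$ in $P'$ has the shape $\lambda x,y,z.\,\exists \vec a.\,(x = C^n_X\,\vec a \wedge z = C^{n+1}_X\,\vec a\,y)$ and contains no occurrence of any top-level relational variable; hence $\zeta'(Apply_B) = \mathcal{M}\llbracket P'(Apply_B)\rrbracket(\gamma')$ does not actually depend on $\gamma'$ and is fixed by the structure $A'$ alone. Unfolding the disjunction of these clauses and using the interpretation of the $C^i_X$ and of $(=_{\bf closr})$ from Subsection~\ref{definition of target monotone problems}, one reads off that this relation maps $(m_1, n, m_2)$ to $1$ exactly when $m_2 =_{\bf closr} \mathrm{append}(m_1, n)$ — that is, it is precisely $apply_B$. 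On the other side $T_f(\zeta)(Apply_B) = apply_B$ by definition of $T_f$, irrespective of $\zeta$, so the two agree.

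For $IOMatch_B$, fix $m \in A'_{\bf closr}$ and $n \in A'_B$ and evaluate $\zeta'(IOMatch_B)(m)(n) = \mathcal{M}\llbracket P'(IOMatch_B)\rrbracket(\gamma')(m)(n)$, where by \eqref{new P for IOMatch} $P'(IOMatch_B)$ is the disjunction, over top-level relational variables $X$ with $\sigma_{m_X} \leadsto_T B$, of the clauses $\lambda x, x_{m_X}.\,\exists x_1\ldots x_{m_X-1}.\,(x = C^{m_X-1}_X\,x_1\cdots x_{m_X-1} \wedge F'_X)$ with $F_X \leadsto F'_X$. Because the constructors $C^{m_X-1}_X$ are injective with pairwise disjoint ranges, the conjunct $x = C^{m_X-1}_X\,x_1\cdots x_{m_X-1}$ is satisfiable under $[x\mapsto m]$ only when $m = (X_0, t_1,\ldots,t_{m_0-1})$ for the unique head variable $X_0$ of $m$, carrying exactly $m_0 - 1$ stored arguments ($m_0 = m_{X_0}$), and then only with $x_i\mapsto t_i$. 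If $m$ is not of this ``one-argument-short'' shape, both sides are $0$ — the left because no clause fires, the right by the type-mismatch convention in the definition of $iomatch_B$ — so we may assume $m$ does have that shape, whence the disjunction collapses to $\mathcal{M}\llbracket F'_{X_0}\rrbracket\big(\gamma'[\,x_1\mapsto t_1,\ldots,x_{m_0-1}\mapsto t_{m_0-1},\,x_{m_0}\mapsto n\,]\big)$.

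The final step is to rewrite this quantity via the semantic-correspondence Lemma~\ref{relationship between the semantics of source and target goal terms}, applied with $s = F_{X_0}$ (which has sort $o$, so its base-sort conclusion applies) and $t = F'_{X_0}$: I would take as source valuation the $\alpha$ sending each top-level relational variable $X$ to $\gamma(X)$ and each formal parameter $x_i$ to $\mathrm{expand}_\zeta(t_i)$ (with $x_{m_0}\mapsto\mathrm{expand}_\zeta(n)$), and as target valuation the restriction of $\gamma'$ to the $Apply$- and $IOMatch$-variables extended by $x_i\mapsto t_i$, $x_{m_0}\mapsto n$; the lemma then yields $\mathcal{M}\llbracket F'_{X_0}\rrbracket(\cdots) = \mathcal{M}\llbracket F_{X_0}\rrbracket(\alpha)$. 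Since $\zeta(X_0) = T^{\mathcal{M}}_{P:\Delta}(\gamma)(X_0) = \mathcal{M}\llbracket \lambda x_1\ldots x_{m_0}.\,F_{X_0}\rrbracket(\gamma)$, the right-hand side equals $\zeta(X_0)\,\mathrm{expand}_\zeta(t_1)\cdots\mathrm{expand}_\zeta(t_{m_0-1})\,\mathrm{expand}_\zeta(n) = \mathrm{expand}_\zeta(m)\,\mathrm{expand}_\zeta(n) = iomatch_B(m)(n) = T_f(\zeta)(IOMatch_B)(m)(n)$, closing the square. The hard part will be exactly this last step: lining up the several valuations so that the hypotheses of Lemma~\ref{relationship between the semantics of source and target goal terms} are literally met — in particular the ``realisability by closures'' condition (a top-level variable $X$ realised by the zero-argument closure $(X)$, a formal parameter $x_i$ by $t_i$ itself) and the agreement of the $Apply$/$IOMatch$ components of the target valuation with the $T_f$-images prescribed by $\alpha$ — together with checking that no closure produced along the way is over-applied, so that everything stays within $A'_{\bf closr}$, and handling the mildly anomalous case $B = o$ as in Appendix~\ref{chapter on monotonicity of alpha'}.
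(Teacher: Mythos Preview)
Your overall structure matches the paper's proof: the paper too reduces the $IOMatch$ case to an application of Lemma~\ref{relationship between the semantics of source and target goal terms} with source valuation $\gamma\cup\eta$ and target valuation $\gamma'\cup\eta'$, where $\eta,\eta'$ carry the formal parameters, and then unwinds both sides to $\zeta(X)\,c_1\cdots c_m$ and $\zeta'(IOMatch_{\sigma_m'})\,(X,c_1',\ldots,c_{m-1}')\,c_m'$ respectively. Your explicit treatment of the $Apply_B$ component --- observing that its defining clauses contain no relational variables and therefore evaluate to $apply_B$ independently of $\gamma'$ --- is a useful addition that the paper leaves implicit.

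There is, however, a real gap at exactly the point you flag as ``the hard part''. You assign the formal parameter $x_i$ the source value $\mathrm{expand}_\zeta(t_i)$, but $\mathrm{expand}_\alpha$ depends only on $\alpha$'s values at the top-level relational variables, and those you have set to agree with $\gamma$; hence $\mathrm{expand}_\alpha = \mathrm{expand}_\gamma$, and the realisability hypothesis of Lemma~\ref{relationship between the semantics of source and target goal terms} becomes $\mathrm{expand}_\gamma(t_i) = \alpha(x_i) = \mathrm{expand}_\zeta(t_i)$, which is not given. Taking instead $\alpha(x_i) = \mathrm{expand}_\gamma(t_i)$ (this is what the paper does, writing $c_i$ for $\mathrm{expand}_\gamma(c_i')$) makes the lemma apply cleanly, but then the computation ends at $\zeta(X_0)\,\mathrm{expand}_\gamma(t_1)\cdots\mathrm{expand}_\gamma(n)$, whereas $T_f(\zeta)(IOMatch_B)(m)(n)$ is by definition $\zeta(X_0)\,\mathrm{expand}_\zeta(t_1)\cdots\mathrm{expand}_\zeta(n)$. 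Either way one is implicitly using that $\mathrm{expand}_\gamma$ and $\mathrm{expand}_\zeta$ coincide on the stored arguments $t_i$, and nothing in the statement ensures this; the paper's own proof elides the same identification in its closing line ``it follows from the definition of $T_f$ that $\zeta' = T_f(\zeta)$''.
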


The next lemma states that $T_{f}$ holds between the lowest upper bounds of two increasing sequences whose valuations are related by $T_{f}$. 

\begin{restatable}{lemma}{betaandbetaarerelatedbyTf} \label{beta and beta' are related by T-f}
Assume $\beta = \bigsqcup \{ f^{n}_{1} (\alpha) \mid n \in \mathbb{N} \}$, where $f_1 = T^{\mathcal{M}}_{P: \Delta}$, and $\beta' = \bigsqcup \{ f^{n}_{2} (\alpha') \mid n \in \mathbb{N} \}$, where $f_2 = T^{\mathcal{M}}_{P': \Delta'}$.
If $\alpha' = T_{f} (\alpha)$, then $\beta' = T_{f} (\beta)$. 
\end{restatable}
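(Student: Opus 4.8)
\emph{Proof plan.} The plan is to split the argument into a purely algebraic part, which exploits the commuting square, and an analytic part, which establishes ($\omega$-)continuity of the extraction map $T_f$.

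First I would iterate Lemma~\ref{relationship between source and target one-step consequence operators}, which says precisely that $T_f \circ T^{\mathcal{M}}_{P:\Delta} = T^{\mathcal{M}}_{P':\Delta'} \circ T_f$. Writing $f_1 = T^{\mathcal{M}}_{P:\Delta}$ and $f_2 = T^{\mathcal{M}}_{P':\Delta'}$ and using the hypothesis $\alpha' = T_f(\alpha)$, a straightforward induction on $n$ gives $f_2^{n}(\alpha') = T_f(f_1^{n}(\alpha))$ for every $n \in \mathbb{N}$. Hence $\beta' = \bigsqcup_{n} f_2^{n}(\alpha') = \bigsqcup_{n} T_f(f_1^{n}(\alpha))$, so the goal $\beta' = T_f(\beta)$ reduces to the single identity $T_f\!\left(\bigsqcup_{n} f_1^{n}(\alpha)\right) = \bigsqcup_{n} T_f(f_1^{n}(\alpha))$. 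One inclusion comes for free: since $f_1^{n}(\alpha) \subseteq_{\Delta} \beta$ for all $n$ and $T_f$ is monotone (by the same structural induction that underlies monotonicity of extracted valuations; cf.\ Appendix~\ref{chapter on monotonicity of alpha'}), we get $\bigsqcup_{n} T_f(f_1^{n}(\alpha)) \subseteq_{\Delta'} T_f(\beta)$, i.e.\ $\beta' \subseteq_{\Delta'} T_f(\beta)$.

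The crux is the reverse inclusion $T_f(\beta) \subseteq_{\Delta'} \beta'$, for which I would prove that $T_f$ preserves the least upper bound of the $\omega$-chain $\{f_1^{n}(\alpha)\}_{n}$. (That this family really is an $\omega$-chain holds in the relevant case $\alpha = \bot$ — the instance used together with Theorem~\ref{iterative application of a continuous function on the bottom yields the least fixed point} — and, more generally, whenever $\alpha \subseteq_{\Delta} f_1(\alpha)$, since then $\alpha' = T_f(\alpha) \subseteq_{\Delta'} T_f(f_1(\alpha)) = f_2(\alpha')$ by monotonicity of $T_f$ and the commuting square.) To show that $T_f$ is continuous it suffices to inspect its two families of components. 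The $Apply_{B}$ components of $T_f(\alpha)$ do not depend on $\alpha$ at all, so they preserve directed joins trivially. For an $IOMatch_{B}$ component, $T_f(\alpha)(IOMatch_{B})(m)(n)$ equals $0$ whenever $\text{expand}_{\alpha}(m)\,\text{expand}_{\alpha}(n)$ is ill-typed (a condition that does not mention $\alpha$) and equals $\text{expand}_{\alpha}(m)\,\text{expand}_{\alpha}(n)$ otherwise; so it is enough to show that $\alpha \mapsto \text{expand}_{\alpha}(c)$ is continuous for each closure object $c \in A'_{\bf closr}$. This I would prove by induction on the (well-founded) derivation of $c \in A'_{\bf closr}$ from the rules in Subsection~\ref{definition of target monotone problems}: if $c$ has base sort it is constant in $\alpha$; if $c = (Y, s_1, \ldots, s_k)$ then $\text{expand}_{\alpha}(c) = \alpha(Y)\,\text{expand}_{\alpha}(s_1) \cdots \text{expand}_{\alpha}(s_k)$, which is continuous in $\alpha$ because the projection $\alpha \mapsto \alpha(Y)$ is continuous, each $\alpha \mapsto \text{expand}_{\alpha}(s_i)$ is continuous by the induction hypothesis, and iterated application is jointly continuous on the function spaces in play. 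Assembling these observations, each $IOMatch_{B}$ component of $T_f(\alpha)$ is built pointwise from continuous functions of $\alpha$ and is therefore continuous, so $T_f$ is continuous; applying this to the chain $\{f_1^{n}(\alpha)\}_{n}$ yields $T_f(\beta) = \bigsqcup_{n} T_f(f_1^{n}(\alpha)) = \beta'$, as required.

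I expect the main obstacle to be the continuity of $\alpha \mapsto \text{expand}_{\alpha}(c)$, and specifically the joint continuity of application invoked in its inductive step: this is exactly where one needs the values $\alpha(Y)$ to be \emph{continuous}, not merely monotone, so that each $\alpha(Y)$ commutes with the directed joins produced by expanding its arguments. This is the continuous-semantics setting in which the second direction of meaning preservation is carried out (and which coincides with the monotone semantics when the universes are finite). A secondary point requiring care is that $\text{expand}_{\alpha}$ is well-defined on every element of $A'_{\bf closr}$ — this is what the inductive presentation of $A'_{\bf closr}$ in Subsection~\ref{definition of target monotone problems} secures, by excluding infinitely nested closures — together with the minor non-monotonicity of $Apply_{o}$, which is dispatched as in Appendix~\ref{chapter on monotonicity of alpha'}.
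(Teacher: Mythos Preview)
Your high-level plan matches the paper's: both iterate Lemma~\ref{relationship between source and target one-step consequence operators} to get $f_2^{n}(\alpha') = T_f(f_1^{n}(\alpha))$ for every $n$, and then pass to the supremum. The difference is entirely in how that second step is handled. The paper argues pointwise at the boolean output: it fixes a top-level $X$ and closure arguments $c_i'$, uses that joins of valuations are computed pointwise so that $\beta(X)\,c_1\cdots c_m = \max_n f_1^{n}(\alpha)(X)\,c_1\cdots c_m$ (and similarly for $\beta'$), and then equates the two maxima term by term. You instead prove the abstract statement that $T_f$ is $\omega$-continuous, reducing to continuity of $\alpha \mapsto \text{expand}_\alpha(c)$ by structural induction on the closure $c$.

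Your route is longer but arguably more careful. The paper's pointwise argument picks $c_i = \text{expand}_{\alpha}(c_i')$ once and for all, whereas the per-iterate equality it invokes (via $f_2^{n}(\alpha') = T_f(f_1^{n}(\alpha))$) actually involves $\text{expand}_{f_1^{n}(\alpha)}(c_i')$, and the conclusion $\beta' = T_f(\beta)$ requires $\text{expand}_{\beta}(c_i')$; that these coincide is exactly the continuity of $\alpha \mapsto \text{expand}_\alpha$ that you isolate and prove. You are also right that the inductive step relies on joint continuity of application, which is available only in the continuous (or finite-universe) semantics---precisely the setting in which the paper ultimately deploys this lemma (Theorems~\ref{second direction of the meaning preservation under the monotone semantics} and~\ref{second direction of the meaning preservation under the continuous semantics}). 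So your proposal is sound, and the extra analytic work you do fills in a point the paper leaves implicit.
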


The next theorem establishes soundness of the defunctionalization algorithm, albeit under the extra assumption that one-step consequence operators for the source and target problems are continuous. 

\begin{restatable}{theorem}{seconddirectionofthemeaningpreservationunderthemonotonesemantics} \label{second direction of the meaning preservation under the monotone semantics}
Given that $T^{\mathcal{M}}_{P: \Delta}$ and $T^{\mathcal{M}}_{P': \Delta'}$ are continuous, if $\mathcal{P}'$ is solvable, then so is $\mathcal{P}$. 
\end{restatable}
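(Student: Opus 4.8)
The plan is to exhibit, for each model $A$ of the source background theory $Th$, an explicit solution to $\mathcal{P}$, namely the least fixed point $\mu_{P}$ of $T^{\mathcal{M}}_{P:\Delta}$. Since $T^{\mathcal{M}}_{P:\Delta}$ is monotone and continuous on the complete lattice $\mathcal{M}\llbracket\Delta\rrbracket$, Theorem~\ref{iterative application of a continuous function on the bottom yields the least fixed point} gives $\mu_{P}=\bigsqcup\{(T^{\mathcal{M}}_{P:\Delta})^{n}(\bot)\mid n\in\mathbb{N}\}$, and $\mu_{P}$ is in particular a prefixed point, so it remains only to show $\mathcal{M}\llbracket G\rrbracket(\mu_{P})=0$. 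Dually, over the derived model $A'$ of $Th'$ let $\mu_{P'}$ be the least fixed point of $T^{\mathcal{M}}_{P':\Delta'}$, which by continuity equals $\bigsqcup\{(T^{\mathcal{M}}_{P':\Delta'})^{n}(\bot)\mid n\in\mathbb{N}\}$; being least, it lies below every prefixed point of $T^{\mathcal{M}}_{P':\Delta'}$. Since $\mathcal{P}'$ is solvable there is a prefixed point $\beta'$ of $T^{\mathcal{M}}_{P':\Delta'}$ with $\mathcal{M}\llbracket G'\rrbracket(\beta')=0$; as $\mathcal{M}\llbracket G'\rrbracket$ is monotone in the valuation and $\mu_{P'}\subseteq\beta'$, we get $\mathcal{M}\llbracket G'\rrbracket(\mu_{P'})=0$.

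The crux is the identity $T_{f}(\mu_{P})=\mu_{P'}$, where $T_{f}$ is valuation extraction (Subsection~\ref{formalization of valuation extraction}). Applying Lemma~\ref{relationship between source and target one-step consequence operators} with $\gamma=\mu_{P}$ gives $T^{\mathcal{M}}_{P':\Delta'}(T_{f}(\mu_{P}))=T_{f}(T^{\mathcal{M}}_{P:\Delta}(\mu_{P}))=T_{f}(\mu_{P})$, so $T_{f}(\mu_{P})$ is a fixed point of $T^{\mathcal{M}}_{P':\Delta'}$ and hence $\mu_{P'}\subseteq T_{f}(\mu_{P})$. For the reverse inclusion, Lemma~\ref{beta and beta' are related by T-f} with $\alpha=\bot$ and $\alpha'=T_{f}(\bot)$ yields $T_{f}(\mu_{P})=\bigsqcup\{(T^{\mathcal{M}}_{P':\Delta'})^{n}(T_{f}(\bot))\mid n\in\mathbb{N}\}$. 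I then check $\bot\subseteq T_{f}(\bot)\subseteq\mu_{P'}$: by definition $T_{f}(\bot)$ sends each $IOMatch_{B}$ to $iomatch_{B}$ computed from the everywhere-$0$ valuation, which is the bottom relation, hence trivially $\subseteq\mu_{P'}(IOMatch_{B})$; and since the bodies of the equations in $P'_{\text{Apply}}$ mention no relational variables, unfolding $\mu_{P'}(Apply_{B})=\mathcal{M}\llbracket\text{body of }Apply_{B}\rrbracket(\mu_{P'})$ gives exactly $apply_{B}=T_{f}(\bot)(Apply_{B})$. Applying the monotone operator $T^{\mathcal{M}}_{P':\Delta'}$ repeatedly to $T_{f}(\bot)\subseteq\mu_{P'}$ and taking suprema gives $T_{f}(\mu_{P})\subseteq\mu_{P'}$, so $T_{f}(\mu_{P})=\mu_{P'}$.

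To finish, apply Lemma~\ref{relationship between the semantics of source and target goal terms} to the transformation $G\leadsto G'$. Since $G$ has base sort $o$ and ${\tt FV}(G)\subseteq{\tt dom}(\Delta)$, the hypotheses hold with the source valuation $\mu_{P}$ and the target valuation $T_{f}(\mu_{P})$: the side condition on arrow-sorted free variables $v$ of $G$ is witnessed by the $0$-ary closures $(v)\in A'_{\bf closr}$, for which $\text{expand}_{\mu_{P}}((v))=\mu_{P}(v)$. The base-sort clause of that lemma then gives $\mathcal{M}\llbracket G\rrbracket(\mu_{P})=\mathcal{M}\llbracket G'\rrbracket(T_{f}(\mu_{P}))=\mathcal{M}\llbracket G'\rrbracket(\mu_{P'})=0$. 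Hence $\mu_{P}$ is a prefixed point of $T^{\mathcal{M}}_{P:\Delta}$ with $\mathcal{M}\llbracket G\rrbracket(\mu_{P})=0$, i.e.\ a solution to $\mathcal{P}$ over $A$; as $A$ was an arbitrary model of $Th$, $\mathcal{P}$ is solvable.

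I expect the main obstacle to be the inclusion $T_{f}(\bot)\subseteq\mu_{P'}$ together with the interplay with monotonicity: strictly speaking $T_{f}$ need not land inside $\mathcal{M}\llbracket\Delta'\rrbracket$ since $apply_{o}$ is not monotone (cf.\ Subsection~\ref{section on meaning preservation} and Appendix~\ref{chapter on monotonicity of alpha'}), so one must either carry out the argument with the adjusted valuation of that appendix, or exploit that $P'$ and $G'$ are first-order, so that the monotone and standard interpretations agree on every valuation that occurs; either way the fixed-point reasoning above goes through verbatim with the adjusted $T_{f}$ and $\mu_{P'}$. A secondary point worth noting is that Lemma~\ref{relationship between the semantics of source and target goal terms}, though phrased for subgoal terms of $P$, applies equally to $G$, its proof being an induction on the transformation derivation that is insensitive to this distinction.
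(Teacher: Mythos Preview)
Your proposal is correct and follows the same overall strategy as the paper: build the least fixed points $\mu_P$ and $\mu_{P'}$ as suprema of the Kleene chains (using continuity and Theorem~\ref{iterative application of a continuous function on the bottom yields the least fixed point}), argue that $\mu_{P'}$ refutes $G'$ by minimality and monotonicity, relate $\mu_P$ and $\mu_{P'}$ through $T_f$, and finish with Lemma~\ref{relationship between the semantics of source and target goal terms}.

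The one place where your argument genuinely differs from the paper is how you establish $T_f(\mu_P)=\mu_{P'}$. The paper simply asserts that the target bottom $\bot'$ equals $T_f(\bot)$ and then invokes Lemma~\ref{beta and beta' are related by T-f} once. That assertion is in fact imprecise: $T_f(\bot)$ sends each $Apply_B$ to the nontrivial relation $apply_B$, not to the zero relation, so $T_f(\bot)\neq\bot'$. Your double-inclusion argument repairs this cleanly: one inclusion comes from Lemma~\ref{relationship between source and target one-step consequence operators} (showing $T_f(\mu_P)$ is a fixed point), and the other from Lemma~\ref{beta and beta' are related by T-f} together with the observation that $T_f(\bot)\subseteq\mu_{P'}$, which you justify correctly by noting that the $Apply$ bodies contain no relational variables and hence $\mu_{P'}(Apply_B)=apply_B$. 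This is a small but real improvement over the paper's presentation. Your closing remarks about the monotonicity caveat for $apply_o$ and the applicability of Lemma~\ref{relationship between the semantics of source and target goal terms} to $G$ are also apt and match how the paper handles these points elsewhere.
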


\subsubsection{Continuous semantics}

Recent work by Jochems \cite{Jochems18} studies the continuous semantics\index{continuous semantics}, which uses continuous function spaces to interpret goal terms. 
In his working paper, it is shown that one-step consequence operators in the continuous semantics are continuous:

\begin{theorem} \label{one-step consequence operators in the continuous semantics are continuous}
$T^{\mathcal{C}}_{P: \Delta}$ is continuous for all programs $P$ in the continuous semantics. 
\end{theorem}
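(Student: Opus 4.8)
The plan is to reduce the statement to a single induction about the denotation function. Since $\mathcal{C}\llbracket\Delta\rrbracket$ is the indexed product $\prod_{x\in\text{dom}(\Delta)}\mathcal{C}\llbracket\Delta(x)\rrbracket$ of the complete lattices $\mathcal{C}\llbracket\Delta(x)\rrbracket$ (the argument that $\mathcal{M}\llbracket\Delta\rrbracket$ is a complete lattice, noted in the excerpt, transports verbatim to $\mathcal{C}$), and directed suprema in a product cpo are computed componentwise, a map $\mathcal{C}\llbracket\Delta\rrbracket\to\mathcal{C}\llbracket\Delta\rrbracket$ is continuous precisely when each of its components is. The $x$-component of $T^{\mathcal{C}}_{P:\Delta}$ is exactly $\alpha\mapsto\mathcal{C}\llbracket\Delta\vdash P(x):\Delta(x)\rrbracket(\alpha)$, so it suffices to prove the \emph{Key Lemma}: for every well-sorted goal term $\Delta\vdash M:\sigma$, the map $\mathcal{C}\llbracket\Delta\vdash M:\sigma\rrbracket:\mathcal{C}\llbracket\Delta\rrbracket\to\mathcal{C}\llbracket\sigma\rrbracket$ is continuous (in the relational part of the valuation, with base-sort variables carried as inert parameters).

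I would prove the Key Lemma by induction on the sorting derivation, using throughout that the category of cpos and continuous maps is cartesian closed — writing $X\Rightarrow_{c}Y$ for the continuous function space, whose directed suprema are pointwise — and that projections, tupling, evaluation $\text{ev}:(X\Rightarrow_{c}Y)\times X\to Y$, and composition of continuous maps are continuous. A base sort $b\neq o$ denotes a discrete universe $A_b$, which carries no nontrivial directed sets, so such components impose no obligation. The cases: \textsc{(GVar)}, a projection, hence continuous; \textsc{(GConstr)}, where a constraint $\varphi$ denotes a map fixed by the structure $A$ depending on the valuation only through its base-sort free variables, hence continuous by the inert-parameter remark; \textsc{(GCst)} for $\land,\lor$, which denote the maps ${\tt and},{\tt or}$, continuous because monotone between finite complete lattices; \textsc{(GCst)} for the existential quantifier $\exists_{\sigma}$, which denotes $f\mapsto\bigsqcup\{f(v)\mid v\in\mathcal{C}\llbracket\sigma\rrbracket\}$ and is continuous because suprema commute — for directed $D\subseteq\mathcal{C}\llbracket\sigma\rrbracket\Rightarrow_{c}\mathbbm{2}$ one has $\bigsqcup_v(\bigsqcup D)(v)=\bigsqcup_v\bigsqcup_{g\in D}g(v)=\bigsqcup_{g\in D}\bigsqcup_v g(v)$; \textsc{(GAppl)}/\textsc{(GAppR)}, where $\mathcal{C}\llbracket M\,N\rrbracket(\alpha)=\text{ev}(\mathcal{C}\llbracket M\rrbracket(\alpha),\mathcal{C}\llbracket N\rrbracket(\alpha))$ is the composite of the pairing $\langle\mathcal{C}\llbracket M\rrbracket,\mathcal{C}\llbracket N\rrbracket\rangle$ (continuous by induction) with $\text{ev}$; and \textsc{(GAbs)}, where $\mathcal{C}\llbracket\lambda x{:}\sigma_1.M\rrbracket(\alpha)=\lambda v.\,\mathcal{C}\llbracket\Delta,x{:}\sigma_1\vdash M\rrbracket(\alpha[x\mapsto v])$ is the exponential transpose of the composite of $(\alpha,v)\mapsto\alpha[x\mapsto v]$ (continuous) with $\mathcal{C}\llbracket M\rrbracket$ (continuous by induction); since the transpose of a continuous map out of a product is continuous into $X\Rightarrow_{c}Y$, this both proves continuity and, simultaneously, shows the abstraction really denotes an element of $\mathcal{C}\llbracket\sigma_1\to\sigma_2\rrbracket$, i.e.\ that $\mathcal{C}\llbracket-\rrbracket$ is well-defined.

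The step I expect to cost the most care is \textsc{(GAbs)}, together with the infrastructure behind it: one must establish separately that $X\Rightarrow_{c}Y$ is a cpo with pointwise directed suprema, and that the currying bijection between maps $P\times X\to Y$ and maps $P\to(X\Rightarrow_{c}Y)$ restricts to continuous maps in both directions; and one must be scrupulous about the base-sort (discrete, non-pointed) components of extended environments, either handling them as parameters as above or passing to their flat lifts. Given those preliminaries the remaining cases are routine, using only that projections, evaluation, tupling, and composition of continuous maps are continuous. A slightly heavier but more uniform alternative is to organise the whole semantics as a cartesian-closed-category interpretation valued in $\mathbf{CPO}$, from which continuity of every denotation — and hence of $T^{\mathcal{C}}_{P:\Delta}$ — is immediate; I would mention this but carry out the explicit induction, since it matches the inductive definition of $\mathcal{C}\llbracket-\rrbracket$ already in use.
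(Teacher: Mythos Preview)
The paper does not prove this theorem; it merely cites it from Jochems~\cite{Jochems18} and uses it as a black box. Your proposal therefore cannot be compared against a proof in the paper, but it is the standard and correct argument: reduce to continuity of each component of $T^{\mathcal{C}}_{P:\Delta}$, then establish by induction on the sorting derivation that every goal-term denotation $\mathcal{C}\llbracket\Delta\vdash M:\sigma\rrbracket$ is continuous, exploiting the cartesian-closed structure of the category of cpos and continuous maps. Your case analysis is sound, and you have correctly flagged the two genuine subtleties --- the currying infrastructure needed for \textsc{(GAbs)}, and the treatment of discrete, non-pointed base-sort components as inert parameters rather than as part of the cpo structure. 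This is exactly the kind of argument one would expect to find in \cite{Jochems18}.
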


Further, Jochems \cite{Jochems18} proves the equivalence between the monotone and continuous semantics:

\begin{theorem} \label{equivalence between the monotone and continuous semantics}
The HoCHC safety problem $(\Delta, P, G)$ is solvable under the monotone interpretation, if and only if it is solvable under the continuous interpretation. 
\end{theorem}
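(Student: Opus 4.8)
This is Jochems's theorem \cite{Jochems18}; here is the route I would take to it. The plan is to reduce solvability in each semantics to a statement about \emph{least} prefixed points, and then to compare those two prefixed points by means of a logical relation between the monotone and continuous sort frames. For the reduction, fix a model $A$ of the background theory (the same $A$ serves both semantics). Since goal formulas are built only from constraints, $\land$, $\lor$ and $\exists$, the interpretations $\mathcal{M}\llbracket G\rrbracket$ and $\mathcal{C}\llbracket G\rrbracket$ are monotone in the valuation, so a prefixed point falsifying $G$ exists iff the least prefixed point falsifies $G$. The least prefixed point exists in both cases: in the monotone semantics by Knaster--Tarski on the complete lattice $\mathcal{M}\llbracket\Delta\rrbracket$ (cited from \cite{Ramsay2017}), giving $\mu^{\mathcal{M}}$; in the continuous semantics, using Theorem~\ref{one-step consequence operators in the continuous semantics are continuous} together with Theorem~\ref{iterative application of a continuous function on the bottom yields the least fixed point}, as $\mu^{\mathcal{C}} := \bigsqcup_{n}(T^{\mathcal{C}}_{P:\Delta})^{n}(\bot)$. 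It therefore suffices to prove $\mathcal{M}\llbracket G\rrbracket(\mu^{\mathcal{M}}) = \mathcal{C}\llbracket G\rrbracket(\mu^{\mathcal{C}})$.

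To get this, I would define a sort-indexed logical relation $R_{\sigma}\subseteq\mathcal{M}\llbracket\sigma\rrbracket\times\mathcal{C}\llbracket\sigma\rrbracket$ by: $R_{b}$ is the identity at base sorts $b$ (the two frames agree there), and $(f,g)\in R_{\sigma_{1}\to\sigma_{2}}$ iff $(f(a),g(c))\in R_{\sigma_{2}}$ whenever $(a,c)\in R_{\sigma_{1}}$. Two facts about $R$ are needed. (i) A \emph{fundamental lemma}: for every goal term $\Delta\vdash M:\sigma$ and valuations $\alpha\in\mathcal{M}\llbracket\Delta\rrbracket$, $\alpha'\in\mathcal{C}\llbracket\Delta\rrbracket$ with $(\alpha(x),\alpha'(x))\in R_{\Delta(x)}$ for all $x$, we have $(\mathcal{M}\llbracket M\rrbracket(\alpha),\mathcal{C}\llbracket M\rrbracket(\alpha'))\in R_{\sigma}$ --- proved by induction on $M$, the only nontrivial cases being the logical constants (immediate, since $\land,\lor,\exists$ preserve $R$) and $\lambda$-abstraction, which relies on the continuous interpretation actually producing a continuous, hence legitimate, element of $\mathcal{C}\llbracket\sigma_{1}\to\sigma_{2}\rrbracket$. (ii) \emph{Admissibility}: for fixed $g$, the set $\{f\mid (f,g)\in R_{\sigma}\}$ is closed under directed suprema of $\mathcal{M}\llbracket\sigma\rrbracket$ --- a routine induction on $\sigma$, using that suprema at arrow sorts are computed pointwise and that at $o$ and at base sorts directed sets are trivial ($\mathbbm{2}$ is finite, the universes are discrete).

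Instantiating the fundamental lemma at each $P(X)$ gives the compatibility $(\alpha,\alpha')\in R_{\Delta}\Rightarrow(T^{\mathcal{M}}_{P:\Delta}(\alpha),T^{\mathcal{C}}_{P:\Delta}(\alpha'))\in R_{\Delta}$. I would then transfinitely iterate $T^{\mathcal{M}}_{P:\Delta}$ from $\bot$ while holding $\mu^{\mathcal{C}}$ fixed on the other side, proving $((T^{\mathcal{M}}_{P:\Delta})^{\lambda}(\bot),\mu^{\mathcal{C}})\in R_{\Delta}$ for all ordinals $\lambda$: the base case is $(\bot,\bot)\in R_{\Delta}$; the successor case is compatibility together with $T^{\mathcal{C}}_{P:\Delta}(\mu^{\mathcal{C}})=\mu^{\mathcal{C}}$; and the limit case is admissibility (ii). At an ordinal where the iteration has stabilised this yields $(\mu^{\mathcal{M}},\mu^{\mathcal{C}})\in R_{\Delta}$, and applying the fundamental lemma to $M=G$ (sort $o$, where $R_{o}$ is equality) gives $\mathcal{M}\llbracket G\rrbracket(\mu^{\mathcal{M}})=\mathcal{C}\llbracket G\rrbracket(\mu^{\mathcal{C}})$, which by the reduction above finishes the proof. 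Note that no ``totality'' of $R$ or explicit embedding--projection pair is needed, since everything in sight is built from $\bot$ by applying $T^{\mathcal{M}}_{P:\Delta}$ and $T^{\mathcal{C}}_{P:\Delta}$ in lockstep.

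The genuine obstacle is this last fixed-point comparison, and the reason is that $T^{\mathcal{M}}_{P:\Delta}$ is only monotone, not continuous: its least fixed point may be reached only after transfinitely many iterations, $\bigsqcup_{n}(T^{\mathcal{M}}_{P:\Delta})^{n}(\bot)$ need not even be a prefixed point, and so one cannot simply pair the $n$-th iterates of the two operators and pass to an $\omega$-limit. Making the transfinite induction survive the limit ordinals against an already-stabilised right-hand side $\mu^{\mathcal{C}}$ is precisely what carries the argument, and getting the admissibility of $R$ exactly right is where the special structure of the frames --- discrete base universes, $o$ interpreted as the two-element lattice, and the continuous semantics genuinely delivering continuous functions at every sort --- is essential.
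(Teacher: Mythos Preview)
The paper does not prove this theorem; it is quoted from \cite{Jochems18} and used as a black box. So there is nothing in the paper to compare your sketch against, and I can only assess the sketch on its own terms.

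The logical-relations strategy is the right kind of idea, but as written there is a genuine gap in the fixed-point step. You take $R_o$ to be the identity on $\mathbbm{2}$, and then try to show $((T^{\mathcal{M}}_{P:\Delta})^{\lambda}(\bot),\mu^{\mathcal{C}})\in R_{\Delta}$ by transfinite induction with the right-hand side held at $\mu^{\mathcal{C}}$. The base case you actually need is $(\bot,\mu^{\mathcal{C}})\in R_{\Delta}$, not $(\bot,\bot)\in R_{\Delta}$; and with $R_o$ equal to identity, $(\bot,\mu^{\mathcal{C}})\in R_{\Delta}$ is false in general (already at a first-order relational sort $\rho$, $R_\rho$ collapses to equality, and $\mu^{\mathcal{C}}$ is typically not $\bot$). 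So the induction never gets off the ground. The usual repair is to make the relation asymmetric at $o$ (take $R_o={\leq}$), which does give $(\bot,\mu^{\mathcal{C}})\in R_\Delta$; but then the conclusion is only $\mathcal{M}\llbracket G\rrbracket(\mu^{\mathcal{M}})\leq\mathcal{C}\llbracket G\rrbracket(\mu^{\mathcal{C}})$, and you need a separate argument---typically via the inclusion $\mathcal{C}\llbracket\sigma\rrbracket\subseteq\mathcal{M}\llbracket\sigma\rrbracket$---for the other inequality.

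A smaller point: the $\exists_\rho$ case of your fundamental lemma is not ``immediate'' without some totality-like fact. You need that $(\top_\rho^{\mathcal{M}},\top_\rho^{\mathcal{C}})\in R_\rho$ (so that $\exists_\rho f = f(\top)$ on both sides can be compared), or else that every witness on one side has a related witness on the other. That is easy to check, but it is exactly the kind of statement your closing remark explicitly disclaims, so the sketch is internally inconsistent on this point.
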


In Subsection~\ref{section on the direction direction in the proof for meaning preservation}, the key result is Lemma~\ref{relationship between source and target one-step consequence operators}, which in turn hinges on
Lemma~\ref{relationship between the semantics of source and target goal terms}. 
Valuation extraction works correctly even if we start with a source ``continuous'' problem (as opposed to a source monotone problem). 
Because the defunctionalized target problem is first order, it has the same meaning regardless of which of the standard, monotone, and continuous semantics we use to interpret the problem.
Therefore, Lemma~\ref{relationship between the semantics of source and target goal terms} can be adapted to the continuous semantics. 
Also, Lemma~\ref{relationship between source and target one-step consequence operators} can be adapted to the continuous semantics. 
As a consequence, adapting Theorem~\ref{second direction of the meaning preservation under the monotone semantics} to the continuous semantics yields

\begin{theorem} \label{second direction of the meaning preservation under the continuous semantics}
If $\mathcal{P}'$ is solvable under the continuous semantics, then $\mathcal{P}$ is also solvable under the continuous semantics.
\end{theorem}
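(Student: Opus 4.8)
The plan is to replay the fixed-point argument of Subsection~\ref{section on the direction direction in the proof for meaning preservation} essentially verbatim, but with the continuous sort frame $\mathcal{C}\llbracket\cdot\rrbracket$ in place of the monotone one $\mathcal{M}\llbracket\cdot\rrbracket$ throughout, and with $T^{\mathcal{C}}_{P:\Delta}$, $T^{\mathcal{C}}_{P':\Delta'}$ in place of $T^{\mathcal{M}}_{P:\Delta}$, $T^{\mathcal{M}}_{P':\Delta'}$. The reason for working in the continuous semantics is that Theorem~\ref{one-step consequence operators in the continuous semantics are continuous} makes both of these one-step consequence operators continuous \emph{unconditionally}, so the two continuity hypotheses imposed in Theorem~\ref{second direction of the meaning preservation under the monotone semantics} are now discharged for free. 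Fix a background model $A$ and let $A'$ be the associated target structure built as in Subsection~\ref{definition of target monotone problems}; the whole argument below is carried out relative to this arbitrary pair, with the valuation-extraction map $T_f$ and the function $\text{expand}_\alpha$ taken unchanged from Subsection~\ref{formalization of valuation extraction}.

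The first step is to re-prove Lemma~\ref{relationship between the semantics of source and target goal terms} with $\mathcal{C}$ in place of $\mathcal{M}$: for a lambda-free source goal term $s$ with $s\leadsto^{X} t$ and $\alpha' = T_f(\alpha)$ suitably related to $\alpha$, one gets $\mathcal{C}\llbracket s\rrbracket(\alpha) = \mathcal{C}\llbracket t\rrbracket(\alpha')$ when $s$ has base sort, and a unique witnessing closure $c\in A'_{\bf closr}$ with $\text{expand}_\alpha(c) = \mathcal{C}\llbracket s\rrbracket(\alpha)$ when $s$ has an arrow sort. The structural induction is the one in Appendix~\ref{chapter on the supplements for meaning preservation}; one only has to observe that it uses nothing specific to the monotone function space. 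On the target side this is immediate: $t$ is a first-order goal term containing no existential quantifiers over higher-order variables, so $\mathcal{C}\llbracket t\rrbracket$, $\mathcal{M}\llbracket t\rrbracket$ and $\mathcal{S}\llbracket t\rrbracket$ all agree on $\alpha'$. On the source side only the clauses for application, $\land$, $\lor$, $\exists_b$ and the definition of $\text{expand}_\alpha$ enter, and these read the same in $\mathcal{C}$ as in $\mathcal{M}$. One also has to check that $T_f(\alpha)$ genuinely lands in $\mathcal{C}\llbracket\Delta'\rrbracket$, i.e.\ that the extracted relations $apply_B$ and $iomatch_B$ are continuous; the only troublesome case is $apply_o$, and the fix is exactly the one used in Appendix~\ref{chapter on monotonicity of alpha'} to repair monotonicity, which equally repairs continuity.

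With the continuous Lemma~\ref{relationship between the semantics of source and target goal terms} in hand, the continuous versions of Lemma~\ref{relationship between source and target one-step consequence operators} (the square $T^{\mathcal{C}}_{P':\Delta'}\circ T_f = T_f\circ T^{\mathcal{C}}_{P:\Delta}$ commutes) and of Lemma~\ref{beta and beta' are related by T-f} ($T_f$ preserves the least upper bounds of the two chains $\{(T^{\mathcal{C}}_{P:\Delta})^n(\bot)\}_n$ and $\{(T^{\mathcal{C}}_{P':\Delta'})^n(\bot')\}_n$) go through exactly as their monotone counterparts: the first by applying the lemma clause-by-clause to the bodies of $P$, the second because $T_f$ manifestly commutes with those particular suprema. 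Now the argument of Theorem~\ref{second direction of the meaning preservation under the monotone semantics} runs unchanged. By Theorem~\ref{one-step consequence operators in the continuous semantics are continuous} and Theorem~\ref{iterative application of a continuous function on the bottom yields the least fixed point}, $T^{\mathcal{C}}_{P:\Delta}$ and $T^{\mathcal{C}}_{P':\Delta'}$ have least fixed points $\mu = \bigsqcup_n (T^{\mathcal{C}}_{P:\Delta})^n(\bot)$ and $\mu' = \bigsqcup_n (T^{\mathcal{C}}_{P':\Delta'})^n(\bot')$; since $T_f(\bot) = \bot'$, the continuous Lemma~\ref{beta and beta' are related by T-f} gives $\mu' = T_f(\mu)$. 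As the interpretation of a goal term is monotone in its valuation and a least fixed point is the least prefixed point, $\mathcal{P}$ is solvable under the continuous semantics iff $\mathcal{C}\llbracket G\rrbracket(\mu) = 0$, and $\mathcal{P}'$ is solvable under the continuous semantics iff $\mathcal{C}\llbracket G'\rrbracket(\mu') = 0$. Finally, applying the continuous Lemma~\ref{relationship between the semantics of source and target goal terms} to $G : o$ with $\alpha = \mu$ and $\alpha' = \mu'$ yields $\mathcal{C}\llbracket G\rrbracket(\mu) = \mathcal{C}\llbracket G'\rrbracket(\mu')$. Hence $\mathcal{P}'$ solvable $\Rightarrow$ $\mathcal{C}\llbracket G'\rrbracket(\mu') = 0$ $\Rightarrow$ $\mathcal{C}\llbracket G\rrbracket(\mu) = 0$ $\Rightarrow$ $\mathcal{P}$ solvable, and since $A$ was arbitrary this establishes the theorem.

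The main obstacle is the bookkeeping in the first step: one must be certain that (a) the Appendix~\ref{chapter on monotonicity of alpha'} repair of $apply_o$ really produces a \emph{continuous}, not merely monotone, valuation, so that $T_f$ maps $\mathcal{C}\llbracket\Delta\rrbracket$ into $\mathcal{C}\llbracket\Delta'\rrbracket$, and (b) the induction for Lemma~\ref{relationship between the semantics of source and target goal terms} nowhere appeals to a feature of $\mathcal{M}\llbracket\sigma\rrbracket$ (such as suprema of arbitrary subsets) that the continuous function space does not share. Once these are settled, everything else --- the commuting square, the Kleene-chain transfer, and the reduction of solvability to the least fixed point --- is a routine replay of the monotone development, and (via Theorem~\ref{equivalence between the monotone and continuous semantics}) can be combined with the earlier direction to recover unconditional soundness in the monotone semantics.
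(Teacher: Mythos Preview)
Your proposal is correct and follows essentially the same route as the paper: adapt the whole chain (Lemma~\ref{relationship between the semantics of source and target goal terms}, Lemma~\ref{relationship between source and target one-step consequence operators}, Lemma~\ref{beta and beta' are related by T-f}, and Theorem~\ref{second direction of the meaning preservation under the monotone semantics}) to the continuous semantics, and then invoke Theorem~\ref{one-step consequence operators in the continuous semantics are continuous} to discharge the continuity hypotheses for free. The paper's own justification is little more than a pointer to exactly this adaptation, so your write-up is if anything more explicit than the paper's, particularly in flagging the two bookkeeping checks; for (a), note that since the target problem is first order, the standard, monotone, and continuous semantics of $P'$ and $G'$ coincide, so whether $T_f(\alpha)$ is ``continuous'' is moot.
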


This is because one-step consequence operators in the continuous semantics are continuous by Theorem~\ref{one-step consequence operators in the continuous semantics are continuous}. 

Finally, this gives
\begin{alignat*}{2}
\mathcal{P}' \text{ is solvable under } \mathcal{M} & \implies \mathcal{P}' \text{ is solvable under } \mathcal{C}  && \qquad \text{by Theorem~\ref{equivalence between the monotone and continuous semantics}} \\
& \implies \mathcal{P} \text{ is solvable under } \mathcal{C} && \qquad \text{by Theorem~\ref{second direction of the meaning preservation under the continuous semantics}} \\
& \implies \mathcal{P} \text{ is solvable under } \mathcal{M} && \qquad \text{by Theorem~\ref{equivalence between the monotone and continuous semantics}},
\end{alignat*}
where $\mathcal{M}$ and $\mathcal{C}$ denote the monotone and continuous semantics, respectively. Therefore, the defunctionalization algorithm is sound. 


\section{Implementation and evaluation} \label{chapter on implementation and evaluation}

This document describes how the defunctionalization algorithm for monotone problems is implemented. 
The source code, including a test suite, is available at \url{https://github.com/LongPham7/Defunctionalization-of-monotone-problems}. 
A web interface is available at \url{http://mjolnir.cs.ox.ac.uk/dfhochc/}. 
This web interface feeds the defunctionalization algorithm's output into Z3, an SMT solver developed by Microsoft Research, to verify the defunctionalized target problems.

\subsection{Implementation}

\subsubsection{Input format}

By way of example, consider a monotone safety problem $\mathcal{P} = (\Delta, P, G)$, where
\begin{align*}
\Delta & = \{add: {\tt int} \to {\tt int} \to {\tt int} \to {\tt bool}, \\
& \qquad twice: ({\tt int} \to {\tt int} \to {\tt bool}) \to {\tt int} \to {\tt int} \to {\tt bool}  \} \\
P & = \{add = \lambda x {:} {\tt int}, y {:} {\tt int}, z {:} {\tt int}. x + y = z, \\
& \qquad twice = \lambda f {:} {\tt int} \to {\tt int} \to {\tt bool}, x {:} {\tt int}, y {:} {\tt int}. (\exists_{\tt int} z. f \ x \ z \land f \ z \ y) \} \\
G & = \exists_{\tt int} x. add \ 1 \ 2 \ x. 
\end{align*}
For simplicity, DefMono only handles the background theory of linear integer arithmetic (ZLA). 

An input file corresponding to the monotone problem above is
\begin{minted}[linenos]{text}
# This is a sample comment.

environment
add: int -> int -> int -> bool
twice: (int -> int -> bool) -> int -> int -> bool

program
add := \x: int. \y: int. \z:int. x + y = z;
twice := \f: int -> int -> bool. \x: int. \y:int. E z:int. f x z && f z y;

goal
E x: int. add 1 2 x
\end{minted}
As can be seen in line 1, single-line comments start with \mintinline{text}{#}. 
Multiline comments are not supported. 

A sort environment is placed in the \mintinline{text}{environment} section. 
Each statement in the sort environment is allowed to span multiple lines, without endmarkers. 
By contrast, under the \mintinline{text}{program} section, each equation defining a top-level relational variable must end with a semicolon. This restriction is placed to make parsing easier. 

The binding operator $\lambda$ in a lambda calculus is written as \mintinline{text}{\}, and $\exists$ is written as \mintinline{text}{E}. 
The sorts of variables bound by $\lambda$ and $\exists$ must be specified. 

Following the notation in Haskell, conjunction is written as \mintinline{text}{&&}, and disjunction is written as \mintinline{text}{||}. 

For first-order formulas from ZLA, the following operators are included: \mintinline{text}{<}, \mintinline{text}{<=}, \mintinline{text}{=}, \mintinline{text}{>}, and \mintinline{text}{>=}. 
Inequality such as $a \neq b$ can be expressed by
\mintinline{text}{a < b || a > b}. 

\subsubsection{Output format}

DefMono supports two output formats. One is the same format as that of inputs, which is preferable if a readable output is desired. 
The other format is the `pure' SMT-LIB2 format, and it allows outputs to be readily fed into Z3. 

Since target problems produced by the defunctionalization algorithm involve closures (i.e.~entities of the {\bf closr} sort), it is necessary to encode them. 
This is achieved by using a list-like algebraic data type with equality. 
The following example demonstrates how to define closures in a suitable manner for Z3. $Twice$ and $Add$ are top-level relational variables in this example. 
\begin{minted}[linenos]{text}
(declare-datatypes () ((Closr
  Twice
  Add
  (boolCons (boolHd Bool) (boolTl Closr))
  (intCons (intHd Int) (intTl Closr))
  (closrCons (closrHd Closr) (closrTl Closr)) )))
\end{minted}
The name of the algebraic data type, \mintinline{text}{Closr}, is stated in line 1. In lines 2 and 3, \mintinline{text}{Twice} and \mintinline{text}{Add} represent $(Twice) \in A'_{\bf closr}$ and $(Add) \in A'_{\bf closr}$. 
In lines 4--6, \mintinline{text}{boolCons}, \mintinline{text}{intCons}, and \mintinline{text}{closrCons} are data constructors that append Booleans, integers, and closures, respectively, to input closures. 
Each of these constructors comes with selector functions for heads and tails of lists. Note that in the encoding of $(X, t_1, \ldots, t_k)$, where $X$ is a top-level relational variable, the head is $t_n$ rather than $X$. 

\subsection{Evaluation}

In this subsection, I evaluate the performance of DefMono in respect of its verification capability and its running time. 
Additionally, its performance is compared with that of two other higher-order verification tools:
\begin{itemize}
\item HORUS\footnotemark\ by \cite{Ramsay2017}: this runs a refinement type-based algorithm on higher-order Horn clause problems. 
\footnotetext{The source code of HORUS can be found at \url{https://github.com/penteract/HigherOrderHornRefinement}. 
The web interface is available at \url{http://mjolnir.cs.ox.ac.uk/horus/}.}
\item MoCHi\footnotemark\ by \cite{Kobayashi2011,Sato2012}: this runs a CEGAR-based model checking algorithm on higher-order verification problems written in OCaml. 
\footnotetext{The web interface of MoCHi is available at \url{http://www-kb.is.s.u-tokyo.ac.jp/~ryosuke/mochi/}. 
Since the original paper \cite{Kobayashi2011} on MoCHi was published, this web interface has incorporated an extension described in \cite{Unno2013}.}
\end{itemize}

The test suite for DefMono is obtained from that for HORUS by adding one additional test case: `hold'.
`hold' is originally presented in Section~5.3 of \cite{Ramsay2017} as an example that is beyond HORUS's verification capability. 
HORUS's test suite is obtained from MoCHi's. As some of them use the list datatype, which is not supported by HORUS and DefMono, such test cases are disregarded. 
The remaining test cases were then translated from OCaml into Horn clause problems by Cathcart Burn et al.~\cite{Ramsay2017}. 

\subsubsection{Verification capability}

The verification outcomes are summarised in Table~\ref{verification outcomes}. 
\begin{table}
\centering
\begin{tabular}{ c | c | c | c }
Test case & HORUS & MoCHi & DefMono \\
\hline
ack & sat & safe & sat \\
a-max & sat & safe & time out \\
a-max-e & sat & safe & time out \\
herc & sat & safe & sat \\
max & sat & safe & sat \\
mc91 & sat & safe & sat \\
mc91-e & unsat & unsafe & unsat \\
mult & sat & safe & sat \\
mult-e & unsat & unsafe & unsat \\
neg & unsat & safe & sat \\
repeat-e & unsat & unsafe & unsat \\
sum & sat & safe & sat \\
sum-e & unsat & unsafe & unsat \\
hold & unsat & safe & sat
\end{tabular}
\caption{Verification outcomes of HORUS, MoCHi, and DefMono}
\label{verification outcomes}
\end{table}
An input problem being solvable is indicated by \mintinline{text}{sat} in DefMono and HORUS and by \mintinline{text}{safe} in MoCHi. 
In fact, the output of HORUS is \mintinline{text}{unsat} when an input is solvable; however, for readability, it is reversed. 

For HORUS and MoCHi, I used their web interfaces to collect the results. The Z3 used in a web server running HORUS's web interface is version 4.4.1. As for DefMono, I used Z3 version 4.6.0. 

According to \cite{Kobayashi2011}, MoCHi verifies all test cases in HORUS correctly. Furthermore, because `hold' is solvable, MoCHi verifies it correctly as well \cite{Ramsay2017}. In all test cases except `a-max' and `a-max-e', 
because the outputs of DefMono coincide with those of MoCHi, DefMono verifies these test cases correctly as well. Regarding `a-max' and `a-max-e', DefMono does not terminate within two minutes. This shows that DefMono's outputs may be out of Z3's reach. 
In this test suite, MoCHi returns unsafe if and only if DefMono terminates and returns \mintinline{text}{unsat}. Hence, none of the test cases violates completeness or soundness of DefMono. 

With respect to HORUS, `neg' and `hold' demonstrate incompleteness of HORUS (i.e.~they are solvable, but their respective transforms are not typable). Thus, DefMono is more capable than HORUS with respect to `neg' and `hold'. 
On the other hand, HORUS correctly verifies `a-max' and `a-max-e', which cannot be handled by DefMono. 

\subsubsection{Running time}

The running time of DefMono  and HORUS on the test suite is presented in Table~\ref{running time of DefMono} and Table~\ref{running time of HORUS}.
The column `Def' shows the the running time of the defunctionalization algorithm. 
The column `Solving' shows the running time of Z3 v.4.6.0 to solve target monotone problems generated by the defunctionalization algorithm. 
The column `Trans' shows the execution time of transforming an input higher-order Horn clause problem into a first-order one using refinement types.
The experiment was conducted on Windows 10 using an Intel Core i7 CPU.

\begin{table}
\parbox{.45\linewidth}{
\centering
\begin{tabular}{ c | r | r }
Test case & Def (ms) & Solving (ms) \\
\hline
ack & 25.15 & 34.48 \\
a-max & 41.51 &  time out \\
a-max-e & 40.36 &  time out \\
hrec	 & 26.58 & 76.05 \\
max	 & 37.92 & 3347.60 \\
mc91 & 21.39 & 35.66 \\
mc91-e & 23.91 & 20.14 \\
mult & 22.06 & 58.19 \\
mult-e & 14.10 & 70.38 \\
neg & 16.76 & 367.21 \\
repeat-e & 16.32 & 361.41 \\
sum & 13.94 & 23.75 \\
sum-e & 14.05 & 16.72 \\
hold	 & 13.84 & 31.83
\end{tabular}
\caption{Running time of DefMono}
\label{running time of DefMono}
}
\hfill
\parbox{.45\linewidth}{
\begin{tabular}{ c | r | r }
Test case & Trans (ms) & Solving (ms) \\
\hline
ack & 14.37 & 24.03 \\
a-max & 16.36 & 36.60 \\
a-max-e & 16.54 & 38.33 \\
hrec & 15.77 & 34.04 \\
max & 15.27 & 20.80 \\
mc91 & 13.45 & 27.15 \\
mc91-e & 18.94 & 22.80 \\
mult & 13.41 & 30.41 \\
mult-e & 13.43 & 24.39 \\
neg & 15.91 & 24.47 \\
repeat-e & 21.46 & 22.63 \\
sum & 13.72 & 25.49 \\
sum-e & 13.81 & 20.71 \\
hold & 13.39 & 18.02
\end{tabular}
\caption{Running time of HORUS}
\label{running time of HORUS}
}
\end{table}

The running time of Z3 to solve target problems varies greatly from test case to test case: the execution time ranges from 16.72 ms in `sum-e' to 3.35 s in `max'. 
Moreover, as explained before, Z3 does not terminate on `a-max' and `a-max-e' within two minutes.  

As for HORUS, all in all, it takes less time for transformation than DefMono does for defunctionalization. `repeat-e' is the only test case where DefMono is faster than HORUS.
In `repeat-e', the difference in their running time is 5.14 ms. In the remaining test cases, the differences fall between 0.22 ms (in `sum') and 25.15 ms (in `a-max').  

As for Z3's execution time, HORUS is mostly faster than DefMono. The only exceptions are `mc91-e', `sum', and `sum-e', although the differences between HORUS and HORUS in these test cases are insignificant. 
Moreover, the differences between HORUS and DefMono 
in Z3's execution time are considerable in some cases. For instance, in `max', it takes 3.35 s for Z3 to solve the target monotone problem generated by DefMono, whereas it only takes 20.80 ms in HORUS---several orders of magnitude smaller.


\section{Conclusion} \label{chapter on conclusion}

\subsection{Conclusion}

Reynolds's defunctionalization is a viable approach to reducing HoCHC to first-order constrained Horn clauses. In this paper, I have presented an algorithm to defunctionalize HoCHC into first-order constrained Horn clause problems. 
Additionally, I have proved the following:
\begin{enumerate}
\item Type preservation: outputs of the algorithm are well-sorted. 
\item Completeness: if a source HoCHC problem is solvable, the target first-order constrained Horn clause problem generated by the defunctionalization algorithm is also solvable.
\item Soundness: if the target problem is solvable, the source problem is also solvable.
\end{enumerate}
Therefore, type preservation and meaning preservation (i.e.~completeness and soundness) have been established in this work. 

In addition to the theoretical work, I have implemented a system named DefMono that uses the defunctionalization algorithm to verify programs. I have also compared DefMono's performance with that of other higher-order verification tools, HORUS and MoCHi. 
In respect of verification capability, DefMono is less capable than MoCHi because Z3 cannot solve defunctionalized problems of some test cases within two minutes. In comparison with HORUS, DefMono can correctly verify some test cases that HORUS cannot
handle. However, HORUS does not present any bottleneck in Z3's processing of HORUS's outputs, whilst DefMono can cause Z3 to time out. With respect to running time, the defunctionalization-based approach is slower than HORUS. 
This is probably because target problems produced by DefMono use a more complicated background theory than the background theory of source problems.

\subsection{Future work}

I propose three continuations of the present work. 

\paragraph{Continuity of one-step consequence operators}
Whether one-step consequence operators in the monotone semantics are continuous is an interesting question in its own right. I attempted to prove continuity of one-step consequence operators by structural induction on goal terms, as done in the proof
of their monotonicity. However, I encountered a difficulty in the inductive case of function applications: the least upper bound operator $\bigsqcup$ is not guaranteed to distribute over function applications. Hence, I believe this is a key to finding a counterexample. 
In fact, a counterexample to continuity of monotone one-step consequence operators has been found and is presented in a working paper by Jerome Jochems at the University of Oxford. This counterexample shows that the least upper bound operator
does not always distribute over function applications. 

\paragraph{Theory of closures}
One weakness of the defunctionalization-based reduction of higher-order Horn clause problems to first-order ones is that the background theories of target problems involve closures.
In DefMono, closures are implemented using an algebraic data type. 
Fortunately, algebraic data types can be handled by Z3, thanks to recent advances in Horn-clause solving technology. 
Without these advances, it would have been impossible to verify target problems produced by the defunctionalization algorithm. 
Hence, it is another avenue of future work to study, for instance, how ZLA coupled closures can be more efficiently handled in Horn-clause solving. 

\paragraph{Implementation}
One direction is to extend the test suite. As of now, all test cases have order at most 2. Hence, it will be interesting to investigate how DefMono handles test cases of higher order. 

Another direction is to investigate why DefMono does not seem to terminate on `a-max' and `a-max-e'. The run time statistics of Z3 show that only one Boolean variable is created when `a-max' is tested.
This is extremely odd because in other cases where Z3 terminates, many Boolean variables are created. It is therefore likely that Z3 never halts on `a-max'.

\appendix


\section{Supplements for the defunctionalization algorithm} \label{chapter on the algorithm in the appendix}

\subsection{Preprocessing} \label{preprocessing in the appendix}

Let the source monotone problem be $\mathcal{P} = (\Delta, P, G)$. Prior to defunctionalizing $\mathcal{P}$, we need to eliminate all anonymous functions in $P$ and $G$ and then perform $\eta$-expansion to fully expand the outermost lambda
abstractions defining top-level relational variables. 

Every equation in $P$ can be expressed as
\begin{equation}
X {:} \sigma_1 \to \cdots \to \sigma_m \to o = \lambda x_1, \ldots, x_n. E, \label{equation with an anonymous function}
\end{equation}
where $m \leq n$ and $E$ is not a lambda abstraction. 

Anonymous functions refer to lambda abstractions occurring inside $E$ in \eqref{equation with an anonymous function}. 
Suppose that $E$ contains the anonymous function 
\begin{equation*}
\Delta \vdash \lambda x {:} \sigma. F: \sigma \to \rho.
\end{equation*}
Further, assume that the set of free variables occurring in $\lambda x. F$ is
\begin{equation*}
{\tt FV}(\lambda x. F) = \{ n_1, \ldots, n_k \}
\end{equation*}
and that $\Delta \vdash n_i: \sigma_i$ for all $1 \leq i \leq k$. The definition of a fresh top-level relational variable $X'$ is then added to $P$:
\begin{equation}
X' = \lambda n_1 {:} \sigma_1, \ldots, n_k {:} \sigma_k, x {:} \sigma. F. \label{definition of X'}
\end{equation}
As the the actual parameters for the free variables $\{n_1, \ldots, n_k \}$ are specified outside $\lambda x.F$, we need to use lambda abstraction to pass these parameters. The anonymous function
$\lambda x. F$ is then replaced with
\begin{equation*}
X' \ n_1 \ \cdots \ n_k.
\end{equation*}
This process of moving local functions (that is, anonymous functions) into a global scope is called \emph{lambda lifting} in the literature.

We repeat the same step for all the remaining anonymous functions in $P$ and $G$. Notice that some anonymous functions may be inside the definition of $X'$. In order to use a fresh top-level
relational variable for each step, the anonymous functions are eliminated one by one sequentially rather than concurrently. This procedure terminates because the number of anonymous functions is finite. 

Once all anonymous functions are turned into equations, $\eta$-expansion is performed on the right hand side of every equation from $P$. This is guided by the following inference rules:
\begin{center}
\AxiomC{$E \leadsto_{\eta} F$}
\UnaryInfC{$\lambda x. E \leadsto_{\eta} \lambda x. F$}
\DisplayProof
\qquad
\AxiomC{$\Delta \vdash E: \sigma_1 \to \cdots \to \sigma_m \to o$}
\AxiomC{$E \neq \lambda x. F$ for any $F$}
\BinaryInfC{$E \leadsto_{\eta} (\lambda x_1 {:} \sigma_1, \ldots, x_m {:} \sigma_m. E \ x_1 \ \cdots \ x_m)$}
\DisplayProof
\end{center}
The result of $\eta$-expansion on $F$ is obtained by applying $\leadsto_{\eta}$ on $F$. The inference rule on the right encompasses the case when $m = 0$. 
In that case, we have $\lambda \overline x . E \leadsto_\eta \lambda \overline x . E$, where $E : o$.
This transformation is applied to the right hand side of every equation in $P$.

\subsection{Rationale for the algorithm design}

When a source term is of the form $E \ F$, either {\sc (App)} or {\sc (Match)} is applied, depending on whether the function application returns a term of an arrow sort or of base sort. One of its premises of {\sc (App)}
is $(E \ F) \leadsto_{A}^{X} H$, where $\leadsto_{A}^{X}$ is defined by {\sc (App-Base)} and {\sc (App-Arrow)}. Which of these two rules is applied is determined by whether $F$
has a base sort. In both {\sc (App-Base)} and {\sc (App-Arrow)}, neither premises nor conclusions use $\leadsto_{A}^{X}$. Thus, we could remove $\leadsto_{A}^{X}$ completely from the inference rules by merging {\sc (App)}
with each of {\sc (App-Base)} and {\sc (App-Arrow)}. The reason why I do not do this is that the resulting inference rules would be too long to fit the width of a page. This is why $\leadsto_{A}^{X}$ and $\leadsto_{M}$ are necessary.

It is worth observing that $\leadsto$ is only applicable when the source term is of base sort and $\leadsto^{X}$ is only applicable when the source term has an arrow sort. 
This is a rule I imposed on the inference rules to reduce their complexity. 

To explain my reasoning, consider the target term of $Add \ x$  in \eqref{defunctionalization of Add x}. Applying the identity $y = C^0_{Add}$, we can write the target term more succinctly as
\begin{equation*}
Apply_{\bf int} \ C^0_{Add} \ x \ X. 
\end{equation*}
In order to have the inference rules produce this succinct form, we need to split the rule {\sc (App-Base)} into two rules corresponding to two cases: the case when $E'$ 
is a logical formula and the case when $E'$ is a single variable symbol. In the first case, we cannot write $Apply \ E' \ F' \ X$, since $E'$ is a logical formula rather than a variable symbol. By contrast, in the second case, $Apply \ E' \ F' \ X$
is a valid target term. 

If this idea were implemented, we would have $Add \leadsto C^0_{Add}$ instead of $Add \leadsto^{X} X = C^0_{Add}$. 
The former is more natural and less confusing than the latter. 
However, it does not seem elegant to split {\sc (App-Base)}, because we would need to work out whether $E'$ consists only of a single symbol. 
Also, splitting {\sc (App-Base)} will increase the total number of inference rules. 
Therefore, I opted to enforce the rule that whenever the source term has an arrow sort, the parameter $X$ can be passed. 
Consequently, when a source goal term is a top-level relational variable, the term has an arrow sort and hence its target term must accept a parameter. 
This is the reason behind the bizarre looking {\sc (TopVar)}.


\section{Monotonicity of extracted valuations} \label{chapter on monotonicity of alpha'}

This section presents how to establish monotonicity of $\alpha'$, which is formally defined in Subsection~\ref{formalization of valuation extraction}. 

\subsection{Preliminaries}

First, I prove a lemma that characterizes orders of higher-order elements.

\begin{lemma} \label{characterization of orders of higher-order elements}
Assume that $f_1$ and $f_2$ have sort $\sigma_1 \to \cdots \to \sigma_k \to o$, where $k \geq 0$ and each $\sigma_i$ is either a relational arrow sort or a base sort. Then $f_1 \subseteq f_2$ if and only if for each 
$t \in \mathcal{M} \llbracket \sigma_1 \rrbracket \times \cdots \times \mathcal{M} \llbracket \sigma_k \rrbracket$, we have $f_1(t) \subseteq_{o} f_2(t)$.
\end{lemma}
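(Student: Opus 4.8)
The plan is to prove the biconditional by induction on $k$, the number of arguments. The key observation is that the inductively defined order $\subseteq_{\rho}$ on relational sorts is, at each level, simply the pointwise order: whether the leading argument sort is a base sort $b$ (second clause of the definition of $\subseteq_{\rho}$) or a relational arrow sort $\rho_1$ (third clause), the definition unfolds to ``$r_1 \subseteq r_2$ if and only if $r_1(x) \subseteq r_2(x)$ for every $x$ in the semantic domain of the argument sort.'' So the two non-trivial clauses can be treated uniformly, and no genuine case split on $\sigma_0$ is needed.

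For the base case $k = 0$, the elements $f_1, f_2$ have sort $o$, i.e.\ they lie in $\mathbbm{2}$, and the product $\mathcal{M}\llbracket\sigma_1\rrbracket \times \cdots \times \mathcal{M}\llbracket\sigma_k\rrbracket$ is the empty product, a one-element set whose unique member is the empty tuple $()$. Identifying a nullary function with its value, so that $f_i(()) = f_i$, the claimed biconditional reduces to the tautology $f_1 \subseteq_{o} f_2 \iff f_1 \subseteq_{o} f_2$.

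For the inductive step, suppose the statement holds for $k$ and let $f_1, f_2$ have sort $\sigma_0 \to \sigma_1 \to \cdots \to \sigma_k \to o$. By the definition of $\subseteq$ on the arrow sort $\sigma_0 \to (\sigma_1 \to \cdots \to \sigma_k \to o)$ — using the second clause if $\sigma_0$ is a base sort and the third clause if $\sigma_0$ is a relational arrow sort — we have $f_1 \subseteq f_2$ if and only if $f_1(s) \subseteq f_2(s)$ for every $s \in \mathcal{M}\llbracket\sigma_0\rrbracket$. Now $f_1(s)$ and $f_2(s)$ have sort $\sigma_1 \to \cdots \to \sigma_k \to o$, so the induction hypothesis applies: $f_1(s) \subseteq f_2(s)$ if and only if $f_1(s)(t') \subseteq_{o} f_2(s)(t')$ for every $t' \in \mathcal{M}\llbracket\sigma_1\rrbracket \times \cdots \times \mathcal{M}\llbracket\sigma_k\rrbracket$. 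Combining the two biconditionals, and using currying — $f_i(s)(t') = f_i((s, t'))$, where the pair $(s, t')$ ranges over $\mathcal{M}\llbracket\sigma_0\rrbracket \times \mathcal{M}\llbracket\sigma_1\rrbracket \times \cdots \times \mathcal{M}\llbracket\sigma_k\rrbracket$ exactly when $s$ and $t'$ range over the respective factors — yields precisely $f_1 \subseteq f_2 \iff \forall t.\ f_1(t) \subseteq_{o} f_2(t)$, which completes the induction.

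There is no real obstacle here; the argument is a routine unfolding of definitions. The only points requiring a little care are the correct treatment of the empty product in the base case (it is a singleton, not empty), and observing that both non-trivial clauses of the definition of $\subseteq_{\rho}$ express the pointwise order, so the induction goes through without distinguishing the base-sort and arrow-sort cases for the leading argument.
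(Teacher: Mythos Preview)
Your proof is correct and follows essentially the same approach as the paper: induction on $k$, unfolding the pointwise definition of $\subseteq_{\rho}$ at the outer arrow, and invoking the inductive hypothesis on the residual sort. The only cosmetic difference is that the paper treats the two implications $(\Rightarrow)$ and $(\Leftarrow)$ separately, whereas you carry the biconditional through the induction in one pass; your observation that the base-sort and relational-sort clauses for $\subseteq_{\rho}$ both express the pointwise order (so no genuine case split is needed) is also made in the paper's proof.
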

\begin{proof}
For both directions, the claim is proved by induction on $k$. In this proof, I use curried notation and non-curried notation interchangeably. Hence, if an $n$-tuple is input to a function, the $n$ components of the tuple are fed into the function separately. 

First, I prove $(\Rightarrow)$. For the base case, when $k = 0$, we have $f_1, f_2: o$. By assumption, $f_1 \subseteq_{o} f_2$ and hence the claim holds. 

For the inductive case, suppose that $f_1 \subseteq f_2$. By the definition of $\subseteq$, for all $c_1 \in \mathcal{M} \llbracket \sigma_1 \rrbracket$, we have $f_1 \ c_1 \subseteq f_2 \ c_1$. 
This is true regardless of whether $\sigma_1$ is a relational sort or a non-propositional base sort. Now by the inductive hypothesis, as $f_1 \ c_1 \subseteq f_2 \ c_1$, for all $(c_2, \ldots, c_k) \in \mathcal{M} \llbracket \sigma_2 \rrbracket \times \cdots \times \mathcal{M} \llbracket \sigma_k \rrbracket$, we obtain
\begin{equation*}
(f_1 \ c_1)(c_2, \ldots, c_k) \subseteq_{o} (f_2 \ c_1)(c_2, \ldots, c_k).
\end{equation*}
Thus, for all $c_1, \ldots, c_k$ of appropriate sorts,
\begin{equation*}
f_1(c_1, \ldots, c_k) \subseteq_{o} f_2(c_1, \ldots, c_k),
\end{equation*}
as required.

Now I turn to $(\Leftarrow)$. For the base case, when $k = 0$, if $f_1 \subseteq_{o} f_2$, the claim immediately follows.

For the inductive case, suppose that for all for all $(c_1, \ldots, c_k) \in \mathcal{M} \llbracket \sigma_1 \rrbracket \times \cdots \times \mathcal{M} \llbracket \sigma_k \rrbracket$, we have 
\begin{equation*}
f_1(c_1, \ldots, c_k) \subseteq_{o} f_2(c_1, \ldots, c_k). 
\end{equation*}
Fix arbitrary $c_1 \in \mathcal{M} \llbracket \sigma_1 \rrbracket$. Then for all $(c_2, \ldots, c_k) \in \mathcal{M} \llbracket \sigma_2 \rrbracket \times \cdots \times \mathcal{M} \llbracket \sigma_k \rrbracket$, we have
\begin{equation*}
(f_1 \ c_1)(c_2, \ldots, c_k) \subseteq_{o} (f_2 \ c_1)(c_2, \ldots, c_k).
\end{equation*}
Hence, by the inductive hypothesis, $f_1 \ c_1 \subseteq f_2 \ c_1$. Because $c_1$ is arbitrary, by the definition of $\subseteq$, $f_1 \subseteq f_2$. This concludes the proof.
\end{proof}

The next lemma characterizes monotone functions. 

\begin{lemma} \label{characterization of monotone functions}
Assume $f \in \mathcal{M} \llbracket \sigma_1 \rrbracket \Rightarrow \cdots \Rightarrow \mathcal{M} \llbracket \sigma_k \rrbracket \Rightarrow \mathbbm{2}$, where $k \geq 0$ and each $\sigma_i$ is either a relational arrow sort or a base sort. $f$ is 
monotone if and only if for each $t_1, t_2 \in \mathcal{M} \llbracket \sigma_1 \rrbracket \times \cdots \times \mathcal{M} \llbracket \sigma_k \rrbracket$ and $t_1 \subseteq t_2$, we have $f(t_1) \subseteq_{o} f(t_2)$. Here, $t_1 \subseteq t_2$
holds if and only if the order holds in each component. 
\end{lemma}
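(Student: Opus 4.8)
The plan is to mirror the proof of Lemma~\ref{characterization of orders of higher-order elements}: argue by induction on $k$, handling the two directions in parallel, and invoke that lemma whenever I need to pass between the pointwise order on a function space $\mathcal{M}\llbracket \sigma_{i} \to \cdots \to \sigma_{k} \to o \rrbracket$ and the comparison of values. Throughout I read ``$f$ is monotone'' hereditarily: $f$ is monotone in its first argument and $f(c_1)$ is monotone for every $c_1 \in \mathcal{M}\llbracket \sigma_1 \rrbracket$; and, as in the earlier proof, I keep the curried and uncurried views of $f$ interchangeable. The base case $k = 0$ is immediate: $f \in \mathbbm{2}$, monotonicity is vacuous, and the right-hand condition collapses to $f \subseteq_{o} f$.

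For the inductive step, direction $(\Rightarrow)$: assume $f$ is hereditarily monotone and fix componentwise-ordered tuples $t_1 = (c_1, \ldots, c_k) \subseteq t_2 = (d_1, \ldots, d_k)$. I factor the comparison through the intermediate value $f(c_1)(d_2, \ldots, d_k)$. First, since $f$ is monotone in its first argument and $c_1 \subseteq d_1$, we obtain $f(c_1) \subseteq f(d_1)$, so Lemma~\ref{characterization of orders of higher-order elements} gives $f(c_1)(d_2, \ldots, d_k) \subseteq_{o} f(d_1)(d_2, \ldots, d_k)$. Second, $f(c_1)$ is monotone, so the inductive hypothesis applied to $f(c_1)$ together with $(c_2, \ldots, c_k) \subseteq (d_2, \ldots, d_k)$ gives $f(c_1)(c_2, \ldots, c_k) \subseteq_{o} f(c_1)(d_2, \ldots, d_k)$. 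Transitivity of $\subseteq_{o}$ on $\mathbbm{2}$ closes this direction.

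For direction $(\Leftarrow)$: assume $f(t_1) \subseteq_{o} f(t_2)$ whenever $t_1 \subseteq t_2$ componentwise. To see $f$ is monotone in its first argument, fix $c_1 \subseteq d_1$; for every tuple $(c_2, \ldots, c_k)$ we have $(c_1, c_2, \ldots, c_k) \subseteq (d_1, c_2, \ldots, c_k)$ componentwise (the tail components are compared with themselves, by reflexivity of $\subseteq$), so the hypothesis yields $f(c_1)(c_2, \ldots, c_k) \subseteq_{o} f(d_1)(c_2, \ldots, c_k)$, and since this holds for all argument tuples, Lemma~\ref{characterization of orders of higher-order elements} gives $f(c_1) \subseteq f(d_1)$. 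To see $f(c_1)$ is monotone for each $c_1$, observe that if $(c_2, \ldots, c_k) \subseteq (d_2, \ldots, d_k)$ componentwise then $(c_1, c_2, \ldots, c_k) \subseteq (c_1, d_2, \ldots, d_k)$ componentwise, so the hypothesis gives $f(c_1)(c_2, \ldots, c_k) \subseteq_{o} f(c_1)(d_2, \ldots, d_k)$; thus $f(c_1)$ satisfies the right-hand condition of the lemma at level $k-1$, and the inductive hypothesis tells us $f(c_1)$ is monotone. These two facts together say precisely that $f$ is hereditarily monotone.

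The only steps requiring care are being explicit about the hereditary reading of monotonicity and consistently using Lemma~\ref{characterization of orders of higher-order elements} to translate between ``$f(c_1) \subseteq f(d_1)$ as elements of a function space'' and pointwise comparison of their values; transitivity of $\subseteq_{o}$ and reflexivity of the componentwise order are routine. I do not expect a genuine obstacle beyond this bookkeeping.
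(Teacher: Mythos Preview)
Your proposal is correct and follows essentially the same approach as the paper: induction on $k$, invoking Lemma~\ref{characterization of orders of higher-order elements} to pass between the pointwise order on the function space and comparison of values, with the hereditary reading of monotonicity made explicit. The only cosmetic differences are that in $(\Rightarrow)$ the paper factors through $f(d_1)(c_2,\ldots,c_k)$ rather than your $f(c_1)(d_2,\ldots,d_k)$, and in $(\Leftarrow)$ the paper redundantly verifies that both $f(c_1)$ and $f(d_1)$ are monotone whereas you note (correctly) that handling $f(c_1)$ for arbitrary $c_1$ suffices.
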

\begin{proof}
For both directions, the claim is proved by induction on $k$. In this proof, I use curried notation and non-curried notation interchangeably. Hence, if an $n$-tuple is input to a function, the $n$ components of the tuple are fed into the function separately. 

I first start with $(\Rightarrow)$. For the base case, when $k = 0$, the claim is clearly true.

For the inductive case, suppose that $f$ is monotone. By definition, we have
\begin{equation*}
\mathcal{M} \llbracket \sigma_1 \to \cdots \to \sigma_k \to o \rrbracket = \mathcal{M} \llbracket \sigma_1 \rrbracket \Rightarrow_{m} \mathcal{M} \llbracket \sigma_2 \to \cdots \to \sigma_k \to o \rrbracket.
\end{equation*}
It follows from the definition of $\Rightarrow_{m}$ that for any $c_1, d_1 \in \mathcal{M} \llbracket \sigma_1 \rrbracket$, if $c_1 \subseteq d_1$, then $X \ c_1 \subseteq X \ d_1$. Thus, it follows from 
Lemma~\ref{characterization of orders of higher-order elements} that 
\begin{equation}
(f \ c_1)(c_2, \ldots, c_k) \subseteq_{o} (f \ d_1)(c_2, \ldots, c_k). \label{first equation in the first direction for the characterization of monotone functions}
\end{equation}
Furthermore, because $f \ d_1$ is monotone, by the inductive hypothesis, for any $(c_2, \ldots, c_k)$ and $(d_2, \ldots, d_k)$ from $\mathcal{M} \llbracket \sigma_2 \rrbracket \times \cdots \times \mathcal{M} \llbracket \sigma_k \rrbracket$
 such that $(c_2, \ldots, c_k) \subseteq (d_2, \ldots, d_k)$, we have
\begin{equation}
(f \ d_1)(c_2, \ldots, c_k) \subseteq_{o} (f \ d_1)(d_2, \ldots, d_k). \label{second equation in the first direction for the characterization of monotone functions}
\end{equation}
Combining \eqref{first equation in the first direction for the characterization of monotone functions} and \eqref{second equation in the first direction for the characterization of monotone functions} gives
\begin{equation*}
(f \ c_1)(c_2, \ldots, c_k) \subseteq_{o} (f \ d_1)(d_2, \ldots, d_k).
\end{equation*}
Therefore, for any $t_1, t_2 \in \mathcal{M} \llbracket \sigma_1 \rrbracket \times \cdots \times \mathcal{M} \llbracket \sigma_k \rrbracket$ such that 
$t_1 \subseteq t_2$, we have
\begin{equation*}
f(t_1) \subseteq_{o} f(t_2),
\end{equation*}
as required. 

Now I turn to $(\Leftarrow)$. For the base case, when $k = 0$, the claim is vacuously true. 

For the inductive case, by assumption, for any $t_1, t_2 \in \mathcal{M} \llbracket \sigma_1 \rrbracket \times \cdots \times \mathcal{M} \llbracket \sigma_k \rrbracket$ such that $t_1 \subseteq t_2$, we have $f(t_1) \subseteq_{o} f(t_2)$. 
Now fix $c_1, d_1 \in \mathcal{M} \llbracket \sigma_1 \rrbracket$ such that $c_1 \subseteq d_1$. Then by the assumption, for any $(c_2, \ldots, c_k), (d_2, \ldots, d_k) \in \mathcal{M} \llbracket \sigma_2 \rrbracket \times \cdots \times \mathcal{M} \llbracket \sigma_k \rrbracket$ such that $(c_2, \ldots, c_k) \subseteq (d_2, \ldots, d_k)$, we have
\begin{align}
(f \ c_1)(c_2, \ldots, c_k) & \subseteq_{o} (f \ c_1)(d_2, \ldots, d_k) \label{first equation in the second direction of the characterization of monotone functions} \\
(f \ d_1)(c_2, \ldots, c_k) & \subseteq_{o} (f \ d_1)(d_2, \ldots, d_k) \label{second equation in the second direction of the characterization of monotone functions} \\
(f \ c_1)(c_2, \ldots, c_k) & \subseteq_{o} (f \ d_1)(c_2, \ldots, c_k). \label{third equation in the second direction of the characterization of monotone functions}
\end{align}
Applying the inductive hypothesis to \eqref{first equation in the second direction of the characterization of monotone functions} yields that $f \ c_1$ is monotone. Likewise, by the application of the inductive hypothesis to \eqref{second equation in the second direction of the characterization of monotone functions}, $X \ d_1$ is also monotone. Further, from Lemma~\ref{characterization of orders of higher-order elements} and 
\eqref{third equation in the second direction of the characterization of monotone functions}, we obtain
\begin{equation*}
f \ c_1 \subseteq f \ d_1.
\end{equation*}
To summarise, $f \ c_1$ and $f \ d_1$ are both monotone, and $f \ c_1 \subseteq f \ d_1$ whenever $c_1 \subseteq d_1$. Therefore, $f$ is monotone by definition. This concludes the proof. 
\end{proof}

\subsection{Monotonicity of $\alpha'$}

Thus, in order for $\alpha'(X)$ to be monotone, where $\Delta' \vdash X: \sigma_1 \to \cdots \to \sigma_m \to o$, for any $t_1, t_2 \in \mathcal{M} \llbracket \sigma_1 \rrbracket \times \cdots \times \mathcal{M} \llbracket \sigma_m \rrbracket$ such that
$t_1 \subseteq t_2$, we should have
\begin{equation*}
\alpha' (X) (t_1) \subseteq_{o} \alpha' (X) (t_2). 
\end{equation*}

This holds for $X = IOMatch_{B}$, where $B \in \mathbb{B}'$. If $B \neq o$ and $t_1, t_2 \in \mathcal{M} \llbracket {\bf closr} \to B \rrbracket$, then $t_1 \subseteq t_2$ implies 
$t_1 = t_2$. Otherwise, if $B = o$, by the monotonicity of $\alpha$, $iomatch_{B}$ is monotone as well. 

However, this does not hold for $X = Apply_{o}$. For instance, suppose that $Y \in {\tt dom}(\Delta)$ and that $\Delta \vdash Y: {\bf nat} \to o \to {\bf nat} \to o$. Then, 
\begin{equation*}
((Y, 2), 0, (Y, 2, 0)) \subseteq ((Y, 2), 1, (Y, 2, 0))
\end{equation*}
and yet
\begin{equation*}
apply_{o} \ (Y, 2) \ 0 \ (Y, 2, 0) \nsubseteq_{o} apply_{o} \ (Y, 2) \ 1 \ (Y, 2, 0)
\end{equation*}
as the left hand side evaluates to 1, whereas the right hand side evaluates to 0. Therefore, $\alpha'$ is not monotone. 

In this way, $\alpha'$ is ``nearly'' monotone, apart from $apply_{o}$. $apply_{B}$ augments an input to an input closure, thereby simulating function application that still yields a strictly partially applied function. $\mathcal{P}$'s rough equivalent
of $apply_{B}$ is function application. However, monotonicity of function application in $\mathcal{P}$ does not carry over to $\mathcal{P}'$, for the way $apply_{B}$ simulates function application is different from genuine function application in
$\mathcal{P}$. 

By contrast, there is a nice correspondence between branches of $IOMatch_{B}$ and top-level relational variables from $\Delta$. The monotonicity of $\alpha(X)$, where $X \in {\tt dom}(\Delta)$, carries over to $iomatch_{B}$
that corresponds to $X$, although this is true only when $B = o$; if $B \neq o$, $iomatch_{B}$ is monotone regardless of monotonicity of $\alpha(X)$. 

To fix the issue of monotonicity of $apply_{o}$, observe that $\alpha'$ can be interpreted as a standard valuation. This can be established by the next proposition.

\begin{proposition} \label{relationship between standard and monotone functions when order is 2}
Assume $f: b_1 \to \cdots \to b_m$, where each $b_i$ is a base sort. If $f \in \mathcal{M} \llbracket b_1 \rrbracket \Rightarrow \cdots \Rightarrow \mathcal{M} \llbracket b_m \rrbracket$, where $f$ is not necessarily monotone, 
then $f \in \mathcal{S} \llbracket b_1 \to \cdots \to b_m \rrbracket$.
\end{proposition}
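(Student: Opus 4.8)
The plan is to prove this by unfolding the definitions of the standard and monotone sort frames and exploiting the fact that the two interpretations coincide on base sorts, so that the only difference between them at higher sorts --- namely, the restriction to monotone maps imposed by $\Rightarrow_{m}$ --- is already absent from the hypothesis, which speaks about the \emph{full} function space $\Rightarrow$.

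First I would record the base-sort identity: for every base sort $b$ we have $\mathcal{S} \llbracket b \rrbracket = \mathcal{M} \llbracket b \rrbracket$ as sets. This is immediate from the clauses defining the two sort frames: when $b = o$ both are the lattice $\mathbbm{2}$, and when $b \neq o$ both are the universe $A_{b}$ (in the monotone semantics $A_{b}$ additionally carries the discrete order, but its underlying set is unchanged, and the sort frame at a base sort is exactly this set).

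Second, I would unfold $\mathcal{S} \llbracket b_1 \to \cdots \to b_m \rrbracket$ by a routine induction on $m$, applying the clause $\mathcal{S} \llbracket \sigma_1 \to \sigma_2 \rrbracket := \mathcal{S} \llbracket \sigma_1 \rrbracket \Rightarrow \mathcal{S} \llbracket \sigma_2 \rrbracket$ repeatedly, to obtain
\begin{equation*}
\mathcal{S} \llbracket b_1 \to \cdots \to b_m \rrbracket = \mathcal{S} \llbracket b_1 \rrbracket \Rightarrow \cdots \Rightarrow \mathcal{S} \llbracket b_m \rrbracket,
\end{equation*}
the iterated full set-theoretic function space (associating to the right, as with $\to$) over the interpretations of the $b_i$. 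Substituting the base-sort identity into each factor rewrites the right-hand side as $\mathcal{M} \llbracket b_1 \rrbracket \Rightarrow \cdots \Rightarrow \mathcal{M} \llbracket b_m \rrbracket$, which is precisely the set to which $f$ is assumed to belong. Hence $f \in \mathcal{S} \llbracket b_1 \to \cdots \to b_m \rrbracket$, as required.

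I do not expect any real obstacle: the argument is a chain of definitional equalities together with a one-line induction. The only point that deserves to be stated carefully is that the hypothesis uses $\Rightarrow$ and not the monotone function space $\Rightarrow_{m}$ occurring in the definition of $\mathcal{M} \llbracket b_1 \to \cdots \to b_m \rrbracket$; this is exactly why the possible non-monotonicity of $f$ causes no difficulty, since the full function space over the (discrete) base-sort interpretations is literally the same set as $\mathcal{S} \llbracket b_1 \to \cdots \to b_m \rrbracket$.
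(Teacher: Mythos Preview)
Your proposal is correct and follows essentially the same approach as the paper: both arguments rest on the single observation that $\mathcal{M}\llbracket b \rrbracket = \mathcal{S}\llbracket b \rrbracket$ for every base sort $b$, from which the desired equality of iterated full function spaces is immediate. The paper's proof is a one-line appeal to this fact; you have simply spelled out the unfolding of $\mathcal{S}\llbracket b_1 \to \cdots \to b_m \rrbracket$ explicitly.
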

\begin{proof}
Immediately follows from the fact that $\mathcal{M} \llbracket b \rrbracket = \mathcal{S} \llbracket b \rrbracket$ if $b$ is a base sort.
\end{proof}

Despite its triviality, this proposition is important. For example, consider $X: (o \to o) \to o$, which has order 3. Also, let $\beta$ a ``nearly'' monotone valuation for $X$ in the sense that 
\begin{align*}
\beta(X) & \in \mathcal{M} \llbracket o \to o \rrbracket \Rightarrow \mathcal{M} \llbracket o \rrbracket \\
& = (\mathbbm{2} \Rightarrow_{m} \mathbbm{2}) \Rightarrow \mathbbm{2},
\end{align*}
where the second $\Rightarrow$ on the second line is not $\Rightarrow_{m}$. When we want to extend the monotone interpretation of $X$ to the standard semantics, there is no straightforward way to do so, since $\beta$ does not define the result of
$\alpha(X)$ applied to $f$ when $f \in (\mathbbm{2} \Rightarrow \mathbbm{2}) \setminus (\mathbbm{2} \Rightarrow_{m} \mathbbm{2})$.

In contrast, if the sort of $X$ has order 2, we can extend the monotone interpretation of $X$ to the standard semantics in a straightforward fashion. 

Proposition~\ref{relationship between standard and monotone functions when order is 2} can be applied to any function occurring in $P'$ because any $f \in {\tt dom}(\Delta')$ has order 2 (by convention, it is assumed that all top-level relational variables
have arrow sorts) and any $f \in {\tt dom}(\mathbb{S})$ has order at most 2. Furthermore, we do not have existential quantifiers over higher-order variables in $P'$. For these two reasons, the standard semantics of $P'$ coincides with the monotone
semantics of $P'$. That is, $\mathcal{M} \llbracket s \rrbracket (\alpha') = \mathcal{S} \llbracket s \rrbracket (\alpha')$ holds, given that $s$ contains no existential quantifiers over higher-order variables and all symbols occurring in $s$ have order at most 2.
Therefore, $\alpha'$ can be viewed as a standard valuation of $P'$. 

In addition, $T^{\mathcal{M}}_{P': \Delta'}$ is equivalent to $T^{\mathcal{S}}_{P': \Delta'}$, although $T^{\mathcal{M}}_{P': \Delta'}$ is not guaranteed to be monotone if an input is not
drawn from $\mathcal{M} \llbracket \Delta' \rrbracket$. Hence, Lemma~2 in \cite{Ramsay2017} does not apply if a valuation is nearly but not monotone:

\begin{lemma} \label{one-step consequence operator is monotone}
$\mathcal{M} \llbracket \Delta \vdash G: \rho \rrbracket \in \mathcal{M} \llbracket \Delta \rrbracket \Rightarrow_{m} \mathcal{M} \llbracket \rho \rrbracket$, where $G$ is any goal term. 
Also, $T^{\mathcal{M}}_{P: \Delta} \in \mathcal{M} \llbracket \Delta \rrbracket \Rightarrow_{m} \mathcal{M} \llbracket \Delta \rrbracket$. 
\end{lemma}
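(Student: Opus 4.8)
The plan is to prove the first assertion — that $\lambda\alpha.\,\mathcal{M}\llbracket\Delta\vdash G:\rho\rrbracket(\alpha)$ is monotone — by induction on the derivation of $\Delta\vdash G:\rho$, carrying along simultaneously the well-definedness claim that for each fixed $\alpha\in\mathcal{M}\llbracket\Delta\rrbracket$ the value $\mathcal{M}\llbracket\Delta\vdash G:\rho\rrbracket(\alpha)$ is genuinely an element of $\mathcal{M}\llbracket\rho\rrbracket$, i.e.\ monotone whenever $\rho$ is functional. The two statements feed into each other in the application case, so they must be threaded through the same induction. This is, of course, Lemma~2 of \cite{Ramsay2017}; I only sketch the argument here since the present work needs it re-stated.

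For the base cases: in \textsc{(GVar)}, $\mathcal{M}\llbracket x\rrbracket(\alpha)=\alpha(x)$ is a product projection, hence monotone because the order on $\mathcal{M}\llbracket\Delta\rrbracket$ is componentwise. In \textsc{(GConstr)} the free variables of a constraint $\varphi\in Fm$ all have base sort, whose universes are discrete posets, so $\alpha_1\subseteq_\Delta\alpha_2$ forces agreement on the relevant components and $\mathcal{M}\llbracket\varphi\rrbracket(\alpha_1)=\mathcal{M}\llbracket\varphi\rrbracket(\alpha_2)$. In \textsc{(GCst)} the interpretation is the constant $c^{\tt LFun}$, so monotonicity in $\alpha$ is immediate; what must be checked is that $c^{\tt LFun}$ itself lies in the monotone sort frame. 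For $\land,\lor$ this follows from the definitions of ${\tt and},{\tt or}$ as $\min,\max$ on $\mathbbm 2$; for $\exists_\sigma$ one uses that $f\subseteq g$ pointwise in $\mathcal{M}\llbracket\sigma\to o\rrbracket$ implies $\max\{f(v)\}\le\max\{g(v)\}$, so ${\tt exists}_\sigma$ is monotone as a map $\mathcal{M}\llbracket\sigma\to o\rrbracket\to\mathbbm 2$.

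For the inductive cases: in \textsc{(GAbs)}, $\mathcal{M}\llbracket\lambda x{:}\sigma.G\rrbracket(\alpha)=\lambda v.\,\mathcal{M}\llbracket G\rrbracket(\alpha[x\mapsto v])$; applying the induction hypothesis for $\Delta,x{:}\sigma\vdash G:\rho$ with $\alpha$ fixed and $v$ varying shows this is a monotone function of $v$, hence a legitimate element of $\mathcal{M}\llbracket\sigma\to\rho\rrbracket$, while applying it with $v$ fixed and $\alpha$ varying gives monotonicity in $\alpha$ pointwise in $v$, hence in $\mathcal{M}\llbracket\sigma\to\rho\rrbracket$. The rule \textsc{(GAppl)} is the easy application case: the argument $N\in Tm$ is first-order, so $\mathcal{M}\llbracket N\rrbracket(\alpha)$ depends only on base-sorted components and is constant along $\alpha_1\subseteq_\Delta\alpha_2$; monotonicity then follows from the IH for $G$ and the pointwise definition of the order on $\mathcal{M}\llbracket b\to\rho\rrbracket$. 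The rule \textsc{(GAppR)} is the crux: writing $g_i=\mathcal{M}\llbracket G\rrbracket(\alpha_i)$ and $h_i=\mathcal{M}\llbracket H\rrbracket(\alpha_i)$, the IH gives $g_1\subseteq g_2$, $h_1\subseteq h_2$, and that $g_1$ is a monotone function, whence $g_1(h_1)\subseteq_{\rho_2} g_1(h_2)\subseteq_{\rho_2} g_2(h_2)$ — the first step by monotonicity of $g_1$ in its argument, the second by reading $g_1\subseteq g_2$ pointwise — and transitivity closes the case. Finally, since $T^{\mathcal{M}}_{P:\Delta}(\alpha)(x)=\mathcal{M}\llbracket\Delta\vdash P(x):\Delta(x)\rrbracket(\alpha)$ is monotone in $\alpha$ for each $x\in\mathrm{dom}(\Delta)$ by the first part, and the order on $\mathcal{M}\llbracket\Delta\rrbracket$ is componentwise, a map into this product is monotone iff all its components are, so $T^{\mathcal{M}}_{P:\Delta}\in\mathcal{M}\llbracket\Delta\rrbracket\Rightarrow_m\mathcal{M}\llbracket\Delta\rrbracket$.

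The main obstacle is exactly the \textsc{(GAppR)} step: monotonicity in the valuation cannot be obtained in isolation, because the two-step estimate needs the intermediate interpretation $g_1$ to be a monotone function of its argument; hence the well-definedness part of the statement (that $\mathcal{M}\llbracket\Delta\vdash G:\rho\rrbracket(\alpha)$ always lands in the monotone sort frame) cannot be separated out and must be established in the same induction. The remaining care is bookkeeping: keeping track of which free variables are relational versus base-sorted so that the discreteness of the base universes can be invoked in \textsc{(GConstr)} and \textsc{(GAppl)}.
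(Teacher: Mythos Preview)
Your proposal is correct and follows the same approach as the paper: an induction over the structure of goal terms, using monotonicity of the logical constants and the pointwise order on function spaces. The paper's own proof is essentially a one-line citation to \cite{Ramsay2017} (``immediately follows from the fact that {\tt mexists}, {\tt and} and {\tt or} are monotone and all the constructions are monotone combinations''), whereas you have unpacked that sketch case by case and made explicit the need to prove well-definedness simultaneously --- which is a genuine service, since the \textsc{(GAppR)} step does require it --- but the underlying argument is the same.
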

\begin{proof}
``Immediately follows from the fact that {\tt mexists}, {\tt and} and {\tt or} are monotone and all the construction [in the inductive definition of $\mathcal{M} \llbracket \Delta \vdash G: \rho \rrbracket$] are monotone combinations''. \cite{Ramsay2017}
Note that the interpretations of constant symbols from $\mathbb{S}$ are required to be monotone as well. 
\end{proof}

Now suppose that $\alpha'$ is a prefix of $T^{\mathcal{M}}_{P': \Delta'} = T^{\mathcal{S}}_{P': \Delta'}$ and satisfies $\mathcal{M} \llbracket G' \rrbracket (\alpha') = \mathcal{S} \llbracket G' \rrbracket (\alpha') = 0$, where $G'$ is the goal formula component of $\mathcal{P}'$. In other words, suppose that $\alpha'$ is a solution to $\mathcal{P}'$ under the standard semantics. 
I restate Theorem~\ref{relationship between Horn clause problems and monotone problems original} with a slightly different notation (this is originally Theorem~2 in \cite{Ramsay2017}):

\begin{theorem} \label{relationship between Horn clause problems and monotone problems}
The higher-order constrained Horn clause problem $(\Delta', D', G')$ is solvable if and only if the monotone problem $(\Delta', P_{D'}, G')$ is solvable. 
\end{theorem}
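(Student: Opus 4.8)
The plan is to observe that this statement is nothing more than a notational variant of Theorem~\ref{relationship between Horn clause problems and monotone problems original} (i.e.~Theorem~2 of \cite{Ramsay2017}), specialised to the HoCHC instance produced by defunctionalization. Concretely, I would instantiate the triple $(\Delta, D, G)$ appearing in Theorem~\ref{relationship between Horn clause problems and monotone problems original} with $(\Delta', D', G')$, and then identify the logic program $P_{D'}$ with the output of the transformation of Section~4.1 of \cite{Ramsay2017} applied to the definite Horn formula $D'$. Since that transformation from a definite formula to its associated logic program is purely syntactic and in no way depends on the formula being higher order, it applies verbatim here, and $P_{D'}$ is by definition its value on $D'$.

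First I would verify that the hypotheses of Theorem~\ref{relationship between Horn clause problems and monotone problems original} are actually met: namely that $(\Delta', D', G')$ is a well-sorted HoCHC problem over the extended signature $\Sigma'$. Well-sortedness of $\Delta'$, of $D'$ (equivalently of $P_{D'}$), and of $G'$ is precisely the type-preservation property established elsewhere in the paper, so no fresh argument is needed at this point. With the hypotheses discharged, Theorem~\ref{relationship between Horn clause problems and monotone problems original} then yields directly that $(\Delta', D', G')$ is solvable if and only if $(\Delta', P_{D'}, G')$ is solvable, which is exactly what is claimed.

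The only delicate point — and the closest thing to an obstacle — is bookkeeping: in the earlier statement the monotone companion of a Horn problem is written $P_D$, whereas here we need it for the defunctionalized problem and denote it $P_{D'}$. I would make explicit that these are the same construction (Section~4.1 of \cite{Ramsay2017}) evaluated at two different definite formulas, so that the biconditional transfers with no change to the underlying proof. Beyond this identification, no genuinely new semantic or syntactic reasoning is required; the result is essentially a corollary of the cited theorem.
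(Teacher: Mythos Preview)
Your proposal is correct and matches the paper's approach: the paper explicitly introduces this theorem with the words ``I restate Theorem~\ref{relationship between Horn clause problems and monotone problems original} with a slightly different notation (this is originally Theorem~2 in \cite{Ramsay2017})'' and gives no further proof. If anything, your discussion of well-sortedness and the bookkeeping around $P_{D'}$ is more detailed than what the paper itself provides.
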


For each monotone problem $(\Delta', P', G')$, there exists a higher-order constrained Horn clause problem $(\Delta', D', G')$ such that $P' = P_{D'}$. Further, Horn clause problems are interpreted using the standard semantics. Consequently, 
we obtain Lemma~1 from \cite{Ramsay2017}:

\begin{theorem} \label{prefixed points of the standard one-step consequence operator are models of D'}
For definite formula $D'$, the prefixed points of $T^{\mathcal{S}}_{P_{D'}}$ are exactly the models of $D'$. 
\end{theorem}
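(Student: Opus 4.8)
The plan is to unfold the definitions on both sides and reduce the claimed equivalence to an elementary fact about implication in $\mathbbm{2}$, together with the pointwise characterization of the order $\subseteq_{\rho}$ on relational sorts. Recall that a definite formula $D'$ is a conjunction of (higher-order) constrained Horn clauses, each of which, after prenexing, has the shape $\forall \vec y.\,(B \Rightarrow X\, N_1 \cdots N_k)$, where $B$ is a goal-term body built from constraints, conjunctions, disjunctions and applied relational variables, $X \in \mathrm{dom}(\Delta')$, and $N_1,\dots,N_k$ are arguments of base or relational sort. The transformation of Section~4.1 of \cite{Ramsay2017} groups the clauses by their head variable $X$ and sets $P_{D'}(X) = \lambda x_1,\dots,x_k.\, \bigvee_i \bigl(\exists \vec y_i.\, (N^i_1 = x_1 \wedge \cdots \wedge N^i_k = x_k \wedge B_i)\bigr)$, the disjunction ranging over all clauses whose head is $X$. (For relational-sort arguments the conjunct $N^i_j = x_j$ is the appropriate instantiation rather than a literal equality, but its semantic effect is the same: it pins $x_j$ to the denotation of $N^i_j$.)

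First I would compute $T^{\mathcal{S}}_{P_{D'}}(\alpha)(X)$ for an arbitrary valuation $\alpha$ and relational variable $X : \sigma_1 \to \cdots \to \sigma_k \to o \in \Delta'$. By the definition of the one-step consequence operator together with the clauses defining $\mathcal{S}\llbracket - \rrbracket$ on disjunctions, existentials and abstractions, this is the function sending a tuple $\vec v \in \mathcal{S}\llbracket\sigma_1\rrbracket \times \cdots \times \mathcal{S}\llbracket\sigma_k\rrbracket$ to $1$ precisely when some clause $i$ with head $X$ admits witnesses $\vec y_i \mapsto \vec u$ such that $\mathcal{S}\llbracket B_i\rrbracket$ evaluates to $1$ and $\mathcal{S}\llbracket N^i_j\rrbracket = v_j$ for every $j$. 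Next I would use the pointwise definitions of $\subseteq_{\Delta'}$ and $\subseteq_{\sigma_1 \to \cdots \to \sigma_k \to o}$ --- equivalently, Lemma~\ref{characterization of orders of higher-order elements} --- to see that $T^{\mathcal{S}}_{P_{D'}}(\alpha) \subseteq_{\Delta'} \alpha$ holds if and only if for every $X$ and every tuple $\vec v$ we have $T^{\mathcal{S}}_{P_{D'}}(\alpha)(X)(\vec v) \le \alpha(X)(\vec v)$ in $\mathbbm{2}$.

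Then I would chase the equivalence clause by clause: the displayed inequality, restricted to a single clause $i$, says exactly that whenever $B_i$ holds under $\alpha$ extended by the witnesses for $\vec y_i$, the head $X\, N^i_1 \cdots N^i_k$ also holds under $\alpha$, i.e.\ $\mathcal{S}\llbracket \forall \vec y_i.\,(B_i \Rightarrow X\, N^i_1 \cdots N^i_k)\rrbracket(\alpha) = 1$, using that ${\tt implies}(b_1)(b_2) = 1$ iff $b_1 \le b_2$ and that {\tt forall} and {\tt exists} are the expected meet and join over the universes. Taking the conjunction over all clauses $i$ --- this is where the grouping-by-head and the outer conjunction of $D'$ line up --- shows that the prefixed-point condition is equivalent to $\mathcal{S}\llbracket D'\rrbracket(\alpha) = 1$, which is precisely the statement that $\alpha$ is a model of $D'$.

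The main obstacle I anticipate is purely bookkeeping: faithfully reproducing the exact syntactic shape of definite formulas and of the transformation $D' \mapsto P_{D'}$ from \cite{Ramsay2017} (which is not restated in this excerpt), and in particular being careful that relational-sort head arguments are handled by instantiation rather than first-order equality, so that the conjuncts $N^i_j = x_j$ behave semantically as substitutions. Once the syntactic normal form is fixed, the semantic equivalence is a routine definition-unfolding with no induction required; indeed the statement is Lemma~1 of \cite{Ramsay2017}, and the argument above is its proof transported to the present notation.
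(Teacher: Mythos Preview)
The paper does not actually prove this theorem: it merely restates Lemma~1 of \cite{Ramsay2017} and cites that reference for the result. Your proposal therefore supplies strictly more than the paper does, and your closing remark correctly identifies the provenance.

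Your sketch is along the right lines --- unfold $T^{\mathcal{S}}_{P_{D'}}$, use the pointwise description of $\subseteq_{\Delta'}$, and match each clause of $D'$ against the corresponding disjunct in $P_{D'}(X)$ --- and this is indeed how the lemma is established in \cite{Ramsay2017}. The one caveat you already flag is the real one: the precise shape of the transformation $D' \mapsto P_{D'}$ (in particular how relational-sort head arguments are handled) lives in Section~4.1 of \cite{Ramsay2017}, not in this paper, so a self-contained proof here would require importing those definitions before the unfolding can be carried out. Modulo that external dependency, your argument is correct.
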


Finally, the next theorem ensures the existence of a monotone solution to $\mathcal{P}'$, provided that $\alpha'$ is a solution to $\mathcal{P}'$ under the standard semantics. 

\begin{theorem} \label{a monotone solution exists if a standard solution exists}
If $\alpha'$ is a solution to $\mathcal{P}'$ under the standard semantics, then $\mathcal{P}'$ is solvable under the monotone semantics. 
\end{theorem}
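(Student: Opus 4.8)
The plan is to route the argument through the correspondence between monotone logic program safety problems and higher-order constrained Horn clause problems, using Theorems~\ref{relationship between Horn clause problems and monotone problems} and~\ref{prefixed points of the standard one-step consequence operator are models of D'}. The point is that although $\alpha'$ is only ``nearly'' monotone, the bare fact that it exists already witnesses solvability of an associated Horn clause problem \emph{under the standard semantics}, and Horn-clause solvability transfers back to monotone solvability regardless of how $\alpha'$ itself behaves. So we never need to repair $\alpha'$; we only need to extract a Horn clause problem from $\mathcal{P}'$ and let Theorem~\ref{relationship between Horn clause problems and monotone problems} hand us back a (possibly different) monotone solution.

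Concretely, I would proceed in three steps. First, since $\mathcal{P}' = (\Delta', P', G')$ is a monotone problem and $P'$ is an ordinary first-order logic program, invoke the remark preceding Theorem~\ref{prefixed points of the standard one-step consequence operator are models of D'}: there is a definite formula $D'$ with $P' = P_{D'}$, hence $T^{\mathcal{S}}_{P':\Delta'} = T^{\mathcal{S}}_{P_{D'}}$. Second, fix an arbitrary model $A'$ of the background theory $Th'$; by the hypothesis that $\alpha'$ is a standard solution of $\mathcal{P}'$, there is a valuation $\alpha'$ that is a prefixed point of $T^{\mathcal{S}}_{P_{D'}}$ with $\mathcal{S}\llbracket \Delta' \vdash G' : o\rrbracket(\alpha') = 0$. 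By Theorem~\ref{prefixed points of the standard one-step consequence operator are models of D'}, $\alpha'$ is a model of $D'$, and since $\alpha'$ falsifies $G'$, the HoCHC problem $(\Delta', D', G')$ is solvable for the structure $A'$; as $A'$ was arbitrary, $(\Delta', D', G')$ is solvable. Third, apply Theorem~\ref{relationship between Horn clause problems and monotone problems}: $(\Delta', D', G')$ is solvable iff $(\Delta', P_{D'}, G') = (\Delta', P', G') = \mathcal{P}'$ is solvable as a monotone problem. This gives exactly the conclusion.

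The main obstacle is bookkeeping rather than mathematical depth. One must confirm that the quantification over background-theory models is compatible on the Horn-clause side and on the monotone-problem side, and that the standard-semantics reading of $G'$ used in ``$\alpha'$ is a standard solution'' is the same reading used when declaring $(\Delta', D', G')$ solvable; because $G'$ is first order (no quantification over higher-order relational variables, all symbols of order at most~$2$), these readings coincide, but it is worth stating. A secondary point to verify is that the passage $P' \mapsto D'$ with $P_{D'} = P'$ is legitimate for the $P'$ produced by defunctionalization: this holds because the construction $D \mapsto P_D$ of \cite{Ramsay2017} covers all logic programs and the defunctionalized $P'$ is, by type preservation, a well-sorted first-order logic program.
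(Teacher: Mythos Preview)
Your proposal is correct and follows essentially the same route as the paper's proof: pass to the associated Horn clause problem $(\Delta', D', G')$ with $P' = P_{D'}$, use Theorem~\ref{prefixed points of the standard one-step consequence operator are models of D'} to turn the standard prefixed point $\alpha'$ into a model of $D'$ that falsifies $G'$, and then invoke Theorem~\ref{relationship between Horn clause problems and monotone problems} to conclude monotone solvability of $\mathcal{P}'$. Your additional remarks about the quantification over background-theory models and the coincidence of the standard and monotone readings of the first-order $G'$ are sound elaborations that the paper leaves implicit.
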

\begin{proof}
Let $\alpha'$ be a standard solution to $\mathcal{P}'$.
Suppose $(\Delta', D', G')$ is the higher-order constrained Horn clause problem that is equivalent to $\mathcal{P}'$; i.e.~$P' = P_{D'}$. Such a Horn clause problem is well-defined as there is one-one correspondence between higher-order constrained 
Horn clause problems and monotone problems. 

Since it is given that $\alpha'$ is a solution to $\mathcal{P}'$ and hence is a prefixed point of $T^{\mathcal{S}}_{P': \Delta'}$, by 
Theorem~\ref{prefixed points of the standard one-step consequence operator are models of D'}, $\alpha'$ is also a model of $D'$. Further, $\mathcal{S} \llbracket G' \rrbracket (\alpha') = 0$. Hence, $\alpha'$ is a solution to $(\Delta', D', G')$. 

Lastly, it follows from Theorem~\ref{relationship between Horn clause problems and monotone problems} that $\mathcal{P'} = (\Delta', P', G')$ is solvable under the monotone semantics because $P' = P_{D'}$ by assumption. 
\end{proof}


\section{Supplements for meaning preservation} \label{chapter on the supplements for meaning preservation}

This section presents the proofs for the results introduced in Subsection~\ref{section on meaning preservation} and explains difficulties with applying valuation extraction to the proof of soundness.

\subsection{First direction}

\relationshipbetweenthesemanticsofsourceandtargetgoalterms*
\begin{proof}
The proof proceeds by structural induction on $s$. First, I consider the case when $\sigma$ is an arrow sort. 

For the base case, if $s = x$, where $x$ is an ordinary variable of an arrow sort, $t$ is equal to $X = x$ due to {\sc (Var-Arrow)}. Because $x$ is a free variable, it must be included in ${\tt dom}(\alpha)$. Therefore, $c = \alpha'(x) \in A_{\bf closr}'$ works.
This $c$ is unique because if $c_1 \neq_{\bf closr} c$, then $c = \alpha(x)$ and $c_1 = \alpha(x)$ cannot hold simultaneously (this is due to the fact that $(=)$ is the same as $(=_{\bf closr})$ in this setting). 

By the definition of $\alpha'$, we have $\text{expand}_{\alpha}(\alpha'(x)) = \alpha(x)$. It follows that
\begin{align*}
\text{expand}_{\alpha}(c) & = \text{expand}_{\alpha}(\alpha'(x)) \\
& = \alpha (x) \\
& = \mathcal{M} \llbracket \Gamma \vdash x: \sigma \rrbracket (\alpha) \\
& = \mathcal{M} \llbracket \Gamma \vdash s: \sigma \rrbracket (\alpha). 
\end{align*}
Therefore, $\text{expand}_{\alpha}(c) = \mathcal{M} \llbracket \Gamma \vdash s: \sigma \rrbracket (\alpha)$ holds. 

Another base case is when $s \in \Delta$; i.e.~$s$ is a top-level relational variable. By {\sc (TopVar)}, $t$ is equal to $X = C^0_{s}$. The only value of $c$ that satisfies $\mathcal{M} \llbracket X = C^0_{s} \rrbracket (\alpha' \cup [X \mapsto c]) = 1$
is $(s) \in A_{\bf closr}'$ because $C^0_{s}$, which is a constant symbol, is by default interpreted as $(s)$. Here, $(s) \in A_{\bf closr}'$ is a 1-tuple containing $s$. Thus, such $c$ is unique. Further, we have $\text{expand}_{\alpha}(c) = \alpha (s)$. 
Therefore, the claim holds.

For the inductive case, $s$ is transformed into $t$ by either {\sc (App-Base)} or  {\sc (App-Arrow)}. 

Assume $s = E \ F$, where $E$ has an arrow sort and $F$ has a base sort. By {\sc (App-Base)}, $t$ is equal to 
\begin{equation*}
\exists_{\bf closr} x. (E' \land Apply_{\bf closr} \ x \ F' \ X)),
\end{equation*}
where $E \leadsto^{x} E'$ and $F \leadsto F'$. Applying the inductive hypothesis to $E$, we have a unique $c_1 \in A_{\bf closr}'$ such that
\begin{equation*}
\mathcal{M} \llbracket \Gamma' \vdash E': {\bf closr} \rrbracket (\alpha' \cup [x \mapsto c_1]) = 1. 
\end{equation*}
Additionally, this $c_1$ satisfies 
\begin{equation*}
\text{expand}_{\alpha}(c_1) = \mathcal{M} \llbracket E \rrbracket (\alpha). 
\end{equation*}
Further, applying the inductive hypothesis to $F$, we have $\mathcal{M} \llbracket F \rrbracket (\alpha) = \mathcal{M} \llbracket F' \rrbracket (\alpha')$. 

Consequently, we obtain
\begin{align*}
& \mathcal{M} \llbracket \exists_{\bf closr} x. (E' \land Apply_{\bf closr} \ x \ F' \ X)) \rrbracket (\alpha' \cup [X \mapsto c]) = 1 \\
\iff {} & \mathcal{M} \llbracket Apply_{\bf closr} \ x \ F' \ X \rrbracket (\alpha' \cup [x \mapsto c_1, X \mapsto c]) = 1 \\
\iff {} & apply_{\bf closr} \ c_1 \ \mathcal{M} \llbracket F' \rrbracket (\alpha') \ c = 1 \\
\iff {} & (c == append(c_1, \mathcal{M} \llbracket F' \rrbracket (\alpha'))) = 1 \\
\iff {} & c = append(c_1, \mathcal{M} \llbracket F' \rrbracket (\alpha'))
\end{align*}
Here, $(==)$ is a comparator. The second line follows from the uniqueness of $c_1$. The third line follows from the fact that $\alpha'$ interprets $Apply_{\bf closr}$ as $apply_{\bf closr}$. The fourth line follows form the definition of $apply_{\bf closr}$. 

Thus, to satisfy $\mathcal{M} \llbracket Apply_{\bf closr} \ x \ F' \ X \rrbracket (\alpha' \cup [x \mapsto c_1, X \mapsto c]) = 1$, we should set $c$ to $\text{append}(c_1, \mathcal{M} \llbracket F' \rrbracket (\alpha'))$. 
Hence, there is indeed a unique $c \in A_{\bf closr}'$ that satisfies $\mathcal{M} \llbracket \Gamma' \vdash t: {\bf closr} \rrbracket (\alpha' \cup [X \mapsto c]) = 1$. 

Furthermore, from $c = \text{append}(c_1, \mathcal{M} \llbracket F' \rrbracket (\alpha'))$, we derive
\begin{align*}
\text{expand}_{\alpha}(c) & = \text{expand}_{\alpha}(\text{append}(c_1, \mathcal{M} \llbracket F' \rrbracket (\alpha'))) \\
& = \text{expand}_{\alpha}(c_1) \ \text{expand}_{\alpha}(\mathcal{M} \llbracket F' \rrbracket (\alpha')) \\
& = \mathcal{M} \llbracket E \rrbracket (\alpha) \ \mathcal{M} \llbracket F' \rrbracket (\alpha')) \\
& = \mathcal{M} \llbracket E \rrbracket (\alpha) \ \mathcal{M} \llbracket F \rrbracket (\alpha)) \\
& = \mathcal{M} \llbracket E \ F \rrbracket (\alpha). 
\end{align*}
The second equality follows from the inductive definition of the $\text{expand}_{\alpha}$ function. 

Lastly, if {\sc (App-Arrow)} is used, we have $s = E \ F$, and $t$ is equal to
\begin{equation*}
\exists_{\bf closr} x. (E' \land \exists_{\bf closr} y. (F' \land Apply_{\bf closr} \ x \ y \ X)),
\end{equation*}
where $E \leadsto^{x} E'$ and $F \leadsto^{y} F'$. By the inductive hypothesis, we have unique $c_1, c_2 \in A_{\bf closr}'$ such that 
\begin{align*}
& \mathcal{M} \llbracket \Gamma' \vdash E': {\bf closr} \rrbracket (\alpha' \cup [x \mapsto c_1]) = 1 \\
& \mathcal{M} \llbracket \Gamma' \vdash F': {\bf closr} \rrbracket (\alpha' \cup [y \mapsto c_2]) = 1.
\end{align*}
Additionally, $c_1$ and $c_2$ satisfy
\begin{align*}
\text{expand}_{\alpha}(c_1) & = \mathcal{M} \llbracket E \rrbracket (\alpha) \\
\text{expand}_{\alpha}(c_2) & = \mathcal{M} \llbracket F \rrbracket (\alpha). 
\end{align*}

As a consequence, we have
\begin{align*}
& \mathcal{M} \llbracket \exists_{\bf closr} x. (E' \land \exists_{\bf closr} y. (F' \land Apply_{\bf closr} \ x \ y \ X)) \rrbracket (\alpha' \cup [X \mapsto c]) = 1 \\
\iff {} & \mathcal{M} \llbracket Apply_{\bf closr} \ x \ y \ X \rrbracket ([x \mapsto c_1, y \mapsto c_2, X \mapsto c]) = 1 \\
\iff {} & apply_{\bf closr} \ c_1 \ c_2 \ c = 1 \\
\iff {} & (c == \text{append}(c_1, c_2)) = 1 \\
\iff {} & c = \text{append}(c_1, c_2). 
\end{align*}
The second line follows from the uniqueness of $c_1$ and $c_2$. The third line follows from the the interpretation of $Apply_{\bf closr}$ by $\alpha'$. The fourth line follows from the definition of $apply_{\bf closr}$. 

Therefore, the only value of $c$ that satisfies $\mathcal{M} \llbracket \Gamma' \vdash t: {\bf closr} \rrbracket (\alpha' \cup [X \mapsto c]) = 1$ is $\text{append}(c_1, c_2)$.

Moreover, $c = \text{append}(c_1, c_2)$ yields
\begin{align*}
\text{expand}_{\alpha}(\text{append}(c_1, c_2)) & = \text{expand}_{\alpha}(c_1) \ \text{expand}_{\alpha}(c_2) \\
& = \mathcal{M} \llbracket E \rrbracket (\alpha) \ \mathcal{M} \llbracket F \rrbracket (\alpha) \\
& = \mathcal{M} \llbracket E \ F \rrbracket (\alpha).
\end{align*}
The first equality follows from the inductive definition of the $\text{expand}_{\alpha}$ function. Therefore, the claim is true. 

Next, consider the case when $\sigma$ is a base sort. 

For the base case, if $s \in Fm \cup Tm$, we have $s \leadsto s$ by {\sc (ConstrLan)}. All free variables occurring in first-order terms from a constraint language have base sorts (this is proved in Theorem~\ref{free variables in first-order terms}). 
Further, since $\text{expand}_{\alpha}(c) = c$ when $c$ is of base sort from $\mathbb{B}$, $\alpha$ and $\alpha'$ have the same interpretation of all free variables in $s$. Also, $A$ and $A'$ have the same universes
for each $b \in \mathbb{B}$ and have the same interpretation of constant symbols from $\mathbb{S}$. Therefore, $s$ has the same meaning in both $\langle A, \alpha \rangle$ and $\langle A', \alpha' \rangle$. Thus, the claim is true.

The case when {\sc (Var-Base)} is used can be proved straightforwardly. 

For the inductive case, if $s = E \land F$, by {\sc (LogSym)}, $t$ is equal to $E' \land F'$, where $E \leadsto E'$ and $F \leadsto F'$. It follows from the inductive hypothesis that
\begin{align*}
\mathcal{M} \llbracket E \rrbracket (\alpha) & = \mathcal{M} \llbracket E' \rrbracket (\alpha') \\
\mathcal{M} \llbracket F \rrbracket (\alpha) & = \mathcal{M} \llbracket F' \rrbracket (\alpha').
\end{align*}
Therefore, we obtain
\begin{align*}
\mathcal{M} \llbracket E' \land F' \rrbracket (\alpha') & = \mathcal{M} \llbracket E' \rrbracket (\alpha') \land \mathcal{M} \llbracket F' \rrbracket (\alpha') \\
& = \mathcal{M} \llbracket E \rrbracket (\alpha) \land \mathcal{M} \llbracket F \rrbracket (\alpha) \\
& = \mathcal{M} \llbracket E \land F \rrbracket (\alpha)
\end{align*}
as required. The case for $s$ being $E \lor F$ can be proved in the same manner. 

If $s = \exists_{b} x. F$, by {\sc (Exi)}, we have $t = \exists_{b} x. F'$, where $F \leadsto F'$. By the inductive hypothesis,
\begin{equation*}
\mathcal{M} \llbracket F \rrbracket (\alpha \cup [x \mapsto c]) = \mathcal{M} \llbracket F' \rrbracket (\alpha' \cup [x \mapsto c])
\end{equation*}
for any $c \in A_{b} = A_{b}'$. Thus, we obtain
\begin{align*}
\mathcal{M} \llbracket \exists_{b} x. F' \rrbracket (\alpha') & = \exists c \in A_{b}'. \mathcal{M} \llbracket F' \rrbracket (\alpha' \cup [x \mapsto c]) \\
& = \exists c \in A_{b}. \mathcal{M} \llbracket F \rrbracket (\alpha \cup [x \mapsto c]) \\
& = \mathcal{M} \llbracket \exists_{b} x. F \rrbracket (\alpha).
\end{align*}
Therefore, the claim holds. 

It is essential that the existential quantifier is bound to a variable of base sort as opposed to an arrow sort. If $\exists_{\sigma} x. F \leadsto \exists_{\bf closr} x. F'$, where $\sigma$ is an arrow sort, 
it is possible that $\mathcal{M} \llbracket \exists_{\sigma} x. F \rrbracket (\alpha) = 1$ and yet $\mathcal{M} \llbracket \exists_{\bf closr} x. F' \rrbracket (\alpha') = 0$. This is because $\mathcal{M} \llbracket \sigma \rrbracket$ contains
functions that cannot be represented by any element of $A_{\bf closr}'$. This is why we need to eliminate existential quantifiers over higher-order variables. 

Next, assume $s = E \ F$. If $F$ is of base sort, {\sc (Match-Base)} is applied to defunctionalize $s$ into $t$, yielding
\begin{equation*}
t = \exists_{\bf closr} x. (E' \land IOMatch_{\sigma} \ x \ F'),
\end{equation*}
where $E \leadsto^{x} E'$ and $F \leadsto F'$. Applying the inductive hypothesis to $E$, we have a unique $c \in A_{\bf closr}'$ such that
\begin{equation*}
\mathcal{M} \llbracket \Gamma' \vdash E': {\bf closr} \rrbracket (\alpha' \cup [x \mapsto c]) = 1. 
\end{equation*}
Also, this $c$ satisfies $\text{expand}_{\alpha}(c) = \mathcal{M} \llbracket E \rrbracket (\alpha)$. Additionally, applying the inductive hypothesis to $F$, we have $\mathcal{M} \llbracket F \rrbracket (\alpha) = \mathcal{M} \llbracket F' \rrbracket (\alpha')$.

By the uniqueness of $c$,
\begin{equation*}
\mathcal{M} \llbracket \exists_{\bf closr} x. (E' \land IOMatch_{\sigma} \ x \ F') \rrbracket (\alpha') = \mathcal{M} \llbracket IOMatch_{\sigma} \ x \ F') \rrbracket (\alpha' \cup [x \mapsto c]). 
\end{equation*}
Furthermore, we obtain
\begin{align*}
\mathcal{M} \llbracket IOMatch_{\sigma} \ x \ F') \rrbracket (\alpha' \cup [x \mapsto c]) & =iomatch_{\sigma} \ c \ \mathcal{M} \llbracket F' \rrbracket (\alpha') \\
& = iomatch_{\sigma} \ c \ \mathcal{M} \llbracket F \rrbracket (\alpha) \\
& = \text{expand}_{\alpha}(c) \ \text{expand}_{\alpha}(\mathcal{M} \llbracket F \rrbracket (\alpha)) \\
& = \mathcal{M} \llbracket E \rrbracket (\alpha) \ \mathcal{M} \llbracket F \rrbracket (\alpha) \\
& = \mathcal{M} \llbracket E \ F \rrbracket (\alpha).
\end{align*}
The second equality follows from the identity $\mathcal{M} \llbracket F \rrbracket (\alpha) = \mathcal{M} \llbracket F' \rrbracket (\alpha')$. The third equality follows from the definition of $iomatch_{\sigma}$. The fourth equality follows from
the definition of $c$ and the definition of $\text{expand}_{\alpha}$ when the input has a base sort. 

Therefore, $\mathcal{M} \llbracket \Gamma \vdash s: b \rrbracket (\alpha) = \mathcal{M} \llbracket \Gamma' \vdash t: b \rrbracket (\alpha')$ holds. 

Otherwise, if $F$ is of an arrow sort, {\sc (Match-Arrow)} is applied to defunctionalize $s$ into $t$, where $t$ is 
\begin{equation*}
\exists_{\bf closr} x. (E' \land \exists_{\bf closr} y. IOMatch_{\bf closr} \ x \ y),
\end{equation*}
where $E \leadsto^{x} E'$ and $F \leadsto^{y} F'$. By the inductive hypothesis, we have unique $c_1, c_2 \in A_{\bf closr}'$ such that 
\begin{align*}
& \mathcal{M} \llbracket \Gamma' \vdash E': {\bf closr} \rrbracket (\alpha' \cup [x \mapsto c_1]) = 1 \\
& \mathcal{M} \llbracket \Gamma' \vdash F': {\bf closr} \rrbracket (\alpha' \cup [y \mapsto c_2]) = 1.
\end{align*}
Further, $c_1$ and $c_2$ satisfy 
\begin{align*}
\text{expand}_{\alpha}(c_1) & = \mathcal{M} \llbracket E \rrbracket (\alpha) \\
\text{expand}_{\alpha}(c_2) & = \mathcal{M} \llbracket F \rrbracket (\alpha). 
\end{align*}
By the uniqueness of $c_1$ and $c_2$,
\begin{align*}
\mathcal{M} \llbracket \exists_{\bf closr} x. (E' \land \exists_{\bf closr} y. IOMatch_{\bf closr} \ x \ y) \rrbracket (\alpha') & = \mathcal{M} \llbracket IOMatch_{\bf closr} \ x \ y) \rrbracket (\alpha' \cup [x \mapsto c_1, y \mapsto c_2]) \\
& = \mathcal{M} \llbracket IOMatch_{\bf closr} \ x \ y) \rrbracket ([x \mapsto c_1, y \mapsto c_2]).
\end{align*}
The only free variables in $IOMatch_{\bf closr} \ x \ y$ are $x$ and $y$. Hence, $\alpha'$ does not affect its semantics; thus, the second equality follows. The above expression can be further reduced to
\begin{align*}
\mathcal{M} \llbracket IOMatch_{\bf closr} \ x \ y) \rrbracket ([x \mapsto c_1, y \mapsto c_2]) & = iomatch_{\bf closr} \ c_1 \ c_2 \\
& = \text{expand}_{\alpha}(c_1) \ \text{expand}_{\alpha}(c_2) \\
& = \mathcal{M} \llbracket E \rrbracket (\alpha) \ \mathcal{M} \llbracket F \rrbracket (\alpha) \\
& = \mathcal{M} \llbracket E \ F \rrbracket (\alpha). 
\end{align*}
Therefore, the claim holds. This concludes the proof. 
\end{proof}

\begin{theorem} \label{alpha' is a model of P'}
If $\alpha$ is a model of $P$, then $\alpha' = T_{f}(\alpha)$ is a model for $P'$.
\end{theorem}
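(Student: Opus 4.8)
The plan is to unfold the definition of ``model of $P'$'' --- a prefixed point of $T^{\mathcal{M}}_{P': \Delta'}$ --- and check, component by component, that $T^{\mathcal{M}}_{P': \Delta'}(\alpha')(Z) \subseteq \alpha'(Z)$ for every $Z \in {\tt dom}(\Delta')$. There are only two kinds of such $Z$, namely $Apply_{B}$ and $IOMatch_{B}$ with $B \in \mathbb{B}'$, so the proof naturally splits into these two cases. (One preliminary remark: $\alpha'$ need not be strictly monotone because of $apply_{o}$, as noted in Subsection~\ref{formalization of valuation extraction} and Appendix~\ref{chapter on monotonicity of alpha'}; since $P'$ is first order and has no higher-order quantifiers, $T^{\mathcal{M}}_{P':\Delta'}$ agrees with $T^{\mathcal{S}}_{P':\Delta'}$ on $\alpha'$, so I would run the argument under this identification, recovering a genuinely monotone model afterwards via Theorem~\ref{a monotone solution exists if a standard solution exists} if needed.)

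For $Z = Apply_{B}$ the argument is a direct bookkeeping computation. By \eqref{new P for Apply} the body of $Apply_{B}$ is a finite disjunction of clauses $\lambda x,y,z.\,(\exists a_1,\dots,a_n.\ x = C^{n}_{X}\,a_1\cdots a_n \land z = C^{n+1}_{X}\,a_1\cdots a_n\,y)$, one for each $X \in {\tt dom}(\Delta)$ and each $0 \le n \le {\bf ar}(X)-2$ with $\sigma_{n+1} \leadsto_{T} B$. I would check that, evaluated at $(m_1,p,m_2)$ under $\alpha'$, this disjunction returns $1$ exactly when $m_1 = (X,a_1,\dots,a_n)$ and $m_2 = (X,a_1,\dots,a_n,p)$ for such $X$, $n$, $a_i$, which by the construction of $A'_{\bf closr}$ and of $\text{append}$ is precisely $m_2 =_{\bf closr} \text{append}(m_1,p)$, i.e.\ $apply_{B}(m_1)(p)(m_2) = \alpha'(Apply_{B})(m_1)(p)(m_2)$. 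So here one actually gets equality (the ``full'' case $n = {\bf ar}(X)-1$ contributes nothing on either side, and a type-mismatched second argument is excluded on both sides).

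The substantive case is $Z = IOMatch_{B}$, which I would reduce to Lemma~\ref{relationship between the semantics of source and target goal terms}. By \eqref{new P for IOMatch} the body of $IOMatch_{B}$ is a disjunction of clauses $\lambda x, x_m.\,(\exists x_1,\dots,x_{m-1}.\ x = C^{m-1}_{X}\,x_1\cdots x_{m-1} \land F')$, one for each $X$ with $P(X) = \lambda x_1{:}\sigma_1,\dots,x_m{:}\sigma_m.\,F$, ${\bf ar}(X) = m$, $\sigma_m \leadsto_{T} B$, $F \leadsto F'$, and $F : o$ (Lemma~\ref{lemma for well-sortedness of P' IOMatch}). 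Given $m_0 \in A'_{\bf closr}$ and $p \in A'_{B}$ with $T^{\mathcal{M}}_{P':\Delta'}(\alpha')(IOMatch_{B})(m_0)(p) = 1$, I would pick a firing disjunct --- say for $X$ --- with witnesses $t_1,\dots,t_{m-1}$ so that $m_0 = (X,t_1,\dots,t_{m-1})$ and $\mathcal{M}\llbracket F'\rrbracket(\beta') = 1$, where $\beta' = \alpha' \cup [x_i \mapsto t_i \mid i<m] \cup [x_m \mapsto p]$. I would then form the matching source valuation $\beta = \alpha \cup [x_i \mapsto \text{expand}_{\alpha}(t_i) \mid i<m] \cup [x_m \mapsto \text{expand}_{\alpha}(p)]$, verify that $(\beta,\beta')$ meets the hypotheses of Lemma~\ref{relationship between the semantics of source and target goal terms} for $F \leadsto F'$ (using ${\tt FV}(F) \subseteq \{x_1,\dots,x_m\} \cup {\tt dom}(\Delta)$; that each top-level variable $Y$ free in $F$, which has an arrow sort and becomes the constant $C^{0}_{Y}$ in $F'$, is witnessed in $A'_{\bf closr}$ by the $1$-tuple $(Y)$ with $\text{expand}_{\alpha}((Y)) = \alpha(Y)$; and $\text{expand}_{\alpha}(t_i) = \beta(x_i)$ by construction), and conclude $\mathcal{M}\llbracket F\rrbracket(\beta) = \mathcal{M}\llbracket F'\rrbracket(\beta') = 1$. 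Finally I would invoke that $\alpha$ is a model of $P$: $\mathcal{M}\llbracket \lambda x_1\dots x_m.\,F\rrbracket(\alpha) = T^{\mathcal{M}}_{P:\Delta}(\alpha)(X) \subseteq \alpha(X)$; applying both sides at $\text{expand}_{\alpha}(t_1),\dots,\text{expand}_{\alpha}(t_{m-1}),\text{expand}_{\alpha}(p)$ and using the semantics of lambda abstraction gives $\alpha(X)\,\text{expand}_{\alpha}(t_1)\cdots\text{expand}_{\alpha}(p) = 1$, which by definition of $\text{expand}_{\alpha}$ is exactly $\text{expand}_{\alpha}(m_0)\,\text{expand}_{\alpha}(p) = iomatch_{B}(m_0)(p) = \alpha'(IOMatch_{B})(m_0)(p)$. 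Hence $T^{\mathcal{M}}_{P':\Delta'}(\alpha')(IOMatch_{B}) \subseteq \alpha'(IOMatch_{B})$, and with the $Apply_{B}$ case this shows $T^{\mathcal{M}}_{P':\Delta'}(\alpha') \subseteq_{\Delta'} \alpha'$, so $\alpha'$ is a model of $P'$.

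The main obstacle I anticipate is the careful instantiation of Lemma~\ref{relationship between the semantics of source and target goal terms} at $F \leadsto F'$: one must line up the two valuations so that \emph{every} free variable of $F$ of arrow sort is realised by a closure whose $\text{expand}_{\alpha}$-value is its $\alpha$-value --- the delicate points being the $x_i$ of higher sort (witnessed by the very $t_i$ delivered by the firing disjunct) and the top-level relational variables of $F$ (which are not variables of $F'$ at all, but the constants $C^{0}_{Y}$, witnessed by the $1$-tuples $(Y)$). A secondary nuisance is the monotonicity gap around $apply_{o}$, which I would dispatch by the standard/monotone identification on the first-order $P'$ as indicated above.
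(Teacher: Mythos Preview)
Your proposal is correct and follows essentially the same route as the paper: split on the two kinds of top-level variables in $\Delta'$, dispatch $Apply_{B}$ by a direct computation yielding equality, and for $IOMatch_{B}$ pick a firing disjunct, pair the target valuation with its $\text{expand}_{\alpha}$-preimage, invoke Lemma~\ref{relationship between the semantics of source and target goal terms} to transport $\mathcal{M}\llbracket F'\rrbracket = 1$ back to $\mathcal{M}\llbracket F\rrbracket = 1$, then use the prefixed-point assumption on $\alpha$ to conclude. Your handling of the monotonicity gap via the standard/monotone identification on the first-order $P'$ also matches the paper's treatment in Appendix~\ref{chapter on monotonicity of alpha'}.
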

\begin{proof}
Assume that $\alpha$ is a model of $P$; that is, $\alpha$ is a prefixed point of $T^{\mathcal{M}}_{P: \Delta}$. It is given that
\begin{equation}
\mathcal{M} \llbracket \Delta \vdash P(X): \Delta(X) \rrbracket (\alpha) \subseteq_{\Delta(X)} \alpha (X) \label{P(X) is monotone in the first direction}
\end{equation}
for each $X \in {\tt dom}(\Delta)$. In addition, since $\alpha \in \mathcal{M} \llbracket \Delta \rrbracket$, we have ${\tt dom}(\alpha) = {\tt dom}(\Delta)$. Suppose $P(X)$ is of the form
\begin{equation*}
\lambda x_1, \ldots, x_m. F,
\end{equation*}
where $\Delta \vdash X: \sigma_1 \to \cdots \to \sigma_m \to o$. $X$ gives rise to
\begin{equation*}
IOMatch_{\sigma_m'} = \lambda x, x_m. (\exists x_1, \ldots, x_{m-1}. x = C^{m-1}_{X} \ x_1 \ \cdots \ x_{m-1} \land F'),
\end{equation*}
where $\sigma_m \leadsto_{T} \sigma_m'$ and $F \leadsto F'$. The sort of $IOMatch_{\sigma_m'}$ is ${\bf closr} \to \sigma_m' \to o$. 

Now suppose that for some $c \in A_{\bf closr}'$ and $c_m \in A_{\sigma_m'}'$, we have
\begin{equation*}
\mathcal{M} \llbracket \exists x_1, \ldots, x_{m-1}. x = C^{m-1}_{X} \ x_1 \ \cdots \ x_{m-1} \land F' \rrbracket (\alpha' \cup [x \mapsto c, x_m \mapsto c_m]) = 1. 
\end{equation*}
This means there exists $c_i \in A_{\sigma_i'}'$ for each $1 \leq i < m$ such that 
\begin{gather}
c = (X, c_1, \ldots, c_{m - 1}) \\
\mathcal{M} \llbracket F' \rrbracket (\alpha' \cup \{ [x_i \mapsto c_i] \mid 1 \leq i \leq m \} ) = 1. \label{F' under alpha' and beta' is 1}
\end{gather}
Let $\beta'$ be the valuation $\{ (x_i, c_i) \mid 1 \leq i \leq m \}$. For simplicity, I write $\{ (x_i, c_i) \mid 1 \leq i \leq m \}$ for $\{ [x_i \mapsto c_i] \mid 1 \leq i \leq m \}$. Also, let $\beta$ be $\{ (x_i, \text{expand}_{\alpha}(c_i)) \mid 1 \leq i \leq m \}$. 
Because ${\tt FV}(F) \subseteq {\tt dom}(\alpha) \cup {\tt dom}(\beta)$, $\alpha \cup \beta$ is a valid valuation for $F$. Similarly, $\alpha' \cup \beta'$ is a valid valuation of $F'$. 

By Lemma~\ref{relationship between the semantics of source and target goal terms} and \eqref{F' under alpha' and beta' is 1},
\begin{align*}
\mathcal{M} \llbracket F \rrbracket (\alpha \cup \beta) & = \mathcal{M} \llbracket F' \rrbracket (\alpha' \cup \beta') \\
& = 1. 
\end{align*}
Because $\mathcal{M} \llbracket F \rrbracket (\alpha \cup \beta) = 1$ and $P(X) = F$, it follows from \eqref{P(X) is monotone in the first direction} that
\begin{equation*}
\mathcal{M} \llbracket X \ x_1 \ \cdots \ x_m \rrbracket (\alpha \cup \beta) = 1
\end{equation*}
Therefore, we obtain
\begin{align*}
\mathcal{M} \llbracket IOMatch_{\sigma_m'} \ x \ x_m \rrbracket (\alpha' \cup \beta') & = iomatch_{\sigma_m'} \ c \ c_m \\
& = \text{expand}_{\alpha}(c) \ \text{expand}_{\alpha}(c_m) \\
& = \text{expand}_{\alpha}((X, c_1, \ldots, c_{m - 1})) \ \text{expand}_{\alpha}(c_m) \\
& = \alpha(X) \ \text{expand}_{\alpha}(c_1) \ \cdots \ \text{expand}_{\alpha}(c_m) \\
& = \mathcal{M} \llbracket X \ x_1 \ \cdots \ x_m \rrbracket (\alpha \cup \beta) \\
& = 1. 
\end{align*}
Thus, for all $c \in A_{\bf closr}'$ and $c_m \in A_{\sigma_m'}'$, 
\begin{equation} \label{P'(IOMatch) is monotone in the first direction}
\begin{split}
& \mathcal{M} \llbracket \exists x_1, \ldots, x_{m-1}. x = C^{m-1}_{X} \ x_1 \ \cdots \ x_{m-1} \land F' \rrbracket (\alpha' \cup [x \mapsto c, x_m \mapsto c_m]) \\
\subseteq_{o} {} & \mathcal{M} \llbracket IOMatch_{\sigma_m'} \ x \ x_m \rrbracket (\alpha' \cup [x \mapsto c, x_m \mapsto c_m]). 
\end{split}
\end{equation}
Hence, we obtain
\begin{align*}
& \mathcal{M} \llbracket \lambda x, x_m. \exists x_1, \ldots, x_{m-1}. x = C^{m-1}_{X} \ x_1 \ \cdots \ x_{m-1} \land F' \rrbracket (\alpha') \\
\subseteq_{{\bf closr} \to \sigma_m' \to o} {} & \mathcal{M} \llbracket IOMatch_{\sigma_m'} \rrbracket (\alpha').
\end{align*}
Even if $IOMatch_{B}$ has multiple branches corresponding to different top-level relational variables from $\Delta$, the disjunction of the left hand side of \eqref{P'(IOMatch) is monotone in the first direction} for each $X$ contributing to 
$IOMatch_{B}$ is smaller than or equal to the right hand side of \eqref{P'(IOMatch) is monotone in the first direction}. It therefore follows that
\begin{equation*}
\mathcal{M} \llbracket P'(IOMatch_{B}) \rrbracket (\alpha') \subseteq_{{\bf closr} \to B \to o} \alpha' (IOMatch_{B}).
\end{equation*}

If $X = Apply_{B}$, by the definition of $apply_{B}$, 
\begin{gather*}
\mathcal{M} \llbracket P'(Apply_{B}) \rrbracket (\alpha') = \alpha' (Apply_{B}) \\
\therefore \mathcal{M} \llbracket P'(Apply_{B}) \rrbracket (\alpha') \subseteq_{{\bf closr} \to B \to {\bf closr} \to o} \alpha' (Apply_{B}).
\end{gather*}
Hence, for every $X \in {\tt dom}(\Delta')$,
\begin{equation*}
\mathcal{M} \llbracket P'(X) \rrbracket (\alpha') \subseteq_{\Delta'(X)} \alpha' (X).
\end{equation*}
As $\alpha'$ is a prefixed point of $T^{\mathcal{M}}_{P': \Delta'}$ (which is equivalent to $T^{\mathcal{S}}_{P': \Delta'}$), it is indeed a model of $P'$. This concludes the proof. 
\end{proof}

\completenessofthedefunctionalizationalgorithm*
\begin{proof}
Let $\alpha$ be a solution to $\mathcal{P}$ and $\alpha'$ be a valuation for $P'$ derived from $\alpha$. By Theorem~\ref{alpha' is a model of P'}, $\alpha'$ is a model of $P'$. Furthermore, since $G \leadsto G'$, it follows from 
Lemma~\ref{relationship between the semantics of source and target goal terms} that
\begin{equation*}
\mathcal{M} \llbracket G \rrbracket (\alpha) = \mathcal{M} \llbracket G' \rrbracket (\alpha').
\end{equation*}
Because $\alpha$ is a solution to $\mathcal{P}'$, $\mathcal{M} \llbracket G \rrbracket (\alpha) = 0$. Therefore, $\mathcal{M} \llbracket G' \rrbracket (\alpha') = 0$ as well. Hence, $\alpha'$ is a solution to $\mathcal{P}'$ under the standard semantics.
By Theorem~\ref{a monotone solution exists if a standard solution exists}, $\mathcal{P}'$ is solvable under the monotone semantics.  
\end{proof}

\subsection{Difficulties with valuation extraction in the second direction} \label{section on difficulties with valuation extracting in the second direction}

For the second direction of meaning preservation, I explain some difficulties in extracting solutions to $\mathcal{P}$ from solutions to
$\mathcal{P}'$ as we did for the first direction. Consider the example introduced in Subsection~\ref{demonstration of valuation extraction}. Suppose that a solution to $\mathcal{P}'$ is
\begin{equation*}
\alpha' = \{ IOMatch_{\bf nat} \mapsto iomatch_{\bf nat}, Apply_{\bf nat} \mapsto apply_{\bf nat}, Apply_{\bf closr} \mapsto apply_{\bf closr} \},
\end{equation*}
where $apply_{\bf nat}$ and $apply_{\bf closr}$ are defined (independently of $\alpha$) in Subsection~\ref{demonstration of valuation extraction}. The interpretation of $IOMatch_{\bf nat}$ is 
\begin{equation*}
iomatch_{\bf nat} = add' \cup twice',
\end{equation*}
where the functions $add': A'_{\bf closr} \to \mathbb{N} \to \mathbbm{2}$ and $twice': A'_{\bf closr} \to \mathbb{N} \to \mathbbm{2}$ are
\begin{equation*}
add' \ m \ n = 
\begin{cases}
1 & \text{if } m = (Add, n_1, n_2), n = n_1 + n_2 \\
0 & \text{otherwise}
\end{cases}
\end{equation*}
and
\begin{equation*}
twice' \ m \ n =
\begin{cases}
1 & \text{if } m = (Twice, f, n_1) \\
& {} \land \exists n_2. ((\exists n_3. apply_{\bf nat} \ f \ n_1 \ n_3 \land iomatch_{\bf nat} \ n_3 \ n_2) \\
& {} \land (\exists n_4. apply_{\bf nat} \ f \ n_2 \ n_4 \land iomatch_{\bf nat} \ n_4 \ n)) \\
0 & \text{otherwise}.
\end{cases}
\end{equation*}

There are three issues with extracting a valuation for $P$ from $\alpha'$. 
\begin{enumerate}
\item It is not straightforward to define a valuation for $P$ that has the same structure as $\alpha$. For instance, because the sort of $Add$ in $\mathcal{P}'$ has order 2 (i.e.~not a higher-order function), $add'$ can be straightforwardly transferred to 
the interpretation of $Add$, yielding $add: \mathbb{N} \to \mathbb{N} \to \mathbb{N} \to o$ given as
\begin{equation*}
add \ n_1 \ n_2 \ n_3 = 
\begin{cases}
1 & \text{if } n_3 = n_1 + n_2 \\
0 & \text{otherwise}. 
\end{cases}
\end{equation*}
On the other hand, it is not easy to extract an interpretation for $Twice$ from $twice'$. This is because $twice'$ is defined in terms of $iomatch_{\bf nat}$, which is in turn defined in terms of $twice'$ (and $add'$). 

Due to this recursive nature of the definition of $twice'$, it is not clear how to construct a valuation $\alpha$ for $P$ that satisfies $\alpha' = T_{f}(\alpha)$. It is crucial for $\alpha$ to have the
same structure as $\alpha'$ because it lets us apply Lemma~\ref{relationship between the semantics of source and target goal terms} to prove $\mathcal{M} \llbracket G \rrbracket (\alpha) = 0$.

\item Suppose that the first issue is overcome and that $\alpha$ that satisfies $\alpha' = T_{f}(\alpha)$ has been obtained. With the same example as above, it is natural to have 
\begin{equation*}
\alpha(Twice) \ f \ n_1 \ n_2 = 0
\end{equation*}
whenever $f$ is not expressible in $\mathcal{P}$; that is, whenever $f$ cannot be expressed by combination of $Twice$ and $Add$. This creates an issue that $\alpha$ is not monotone. For example, 
\begin{equation*}
(\lambda a, b. a + 1 = b) \subseteq_{\rho} U_{\rho}
\end{equation*}
but
\begin{equation}
\alpha(Twice) \ (\lambda a, b. a + 1 = b) \ 2 \ 4 \nsubseteq_{o} \alpha(Twice) \ U_{\rho} \ 2 \ 4. \label{model_Twice is not monotone}
\end{equation}
As $\lambda a, b. a + 1 = b$ can be expressed by $Add \ 1$, the left hand side of \eqref{model_Twice is not monotone} evaluates to 1. However, since $U_{\rho}$ cannot be expressed by any element of $A_{\bf closr}'$, 
the right hand side of \eqref{model_Twice is not monotone} evaluates to 0. Thus, $\alpha(Twice)$ is not monotone; hence, neither is $\alpha$.

\item The third problem with $\alpha$ is that it is not necessarily a prefixed point of $T^{\mathcal{M}}_{P: \Delta}$. In the above example, 
\begin{equation*}
\mathcal{M} \llbracket \lambda f, a, b. (\exists c. f \ a \ c \land f \ c \ b) \rrbracket (\alpha) \nsubseteq_{\Delta(Twice)} \alpha (Twice)
\end{equation*}
holds since the left hand side can take $f = U_{\rho}$ and produces 1 for any $a$ and $b$, whilst the right hand side does not. 
\end{enumerate}

\subsection{Continuity of one-step consequence operators}

\finitedomainsimplycontinuity*
\begin{proof}
I will prove that for every directed subset $R \subseteq \mathcal{M} \llbracket \Delta \rrbracket$, $\bigsqcup \{ T^{\mathcal{M}}_{P: \Delta} (x) \mid x \in R \}$ exists and equals $T^{\mathcal{M}}_{P: \Delta} (\bigsqcup R)$. 

Fix $R \subseteq \mathcal{M} \llbracket \Delta \rrbracket$. Because $T^{\mathcal{M}}_{P: \Delta}$ is monotone, it is given
\begin{gather*}
\forall x \in R. x \subseteq \bigsqcup R \\
\therefore \forall x \in R. T^{\mathcal{M}}_{P: \Delta} (x) \subseteq T^{\mathcal{M}}_{P: \Delta} (\bigsqcup R). 
\end{gather*}
Note that the order of valuations is denoted by $\subseteq$ rather than $\leq$. Thus, 
\begin{equation}
\bigsqcup \{ T^{\mathcal{M}}_{P: \Delta} (x) \mid x \in R \} \subseteq T^{\mathcal{M}}_{P: \Delta} (\bigsqcup R), \label{first inequality between two sides in the proof for continuous of the one-step consequence operator}
\end{equation}
where the left hand side exists as $\mathcal{M} \llbracket \Delta \rrbracket$ is a complete lattice.

It remains to prove that both sides of the above inequality are in fact equal. If $\mathcal{M} \llbracket \Delta \rrbracket$ is finite, then $R$ must be finite as well. Since $R$ is directed by assumption and is finite, $\bigsqcup R \in R$. Hence,
\begin{equation}
T^{\mathcal{M}}_{P: \Delta} (\bigsqcup R) \subseteq \bigsqcup \{ T^{\mathcal{M}}_{P: \Delta} (x) \mid x \in R \}. \label{second inequality between two sides in the proof for continuous of the one-step consequence operator}
\end{equation}
Combining \eqref{first inequality between two sides in the proof for continuous of the one-step consequence operator} and \eqref{second inequality between two sides in the proof for continuous of the one-step consequence operator}, we obtain
\begin{equation*}
\bigsqcup \{ T^{\mathcal{M}}_{P: \Delta} (x) \mid x \in R \} = T^{\mathcal{M}}_{P: \Delta} (\bigsqcup R).
\end{equation*}
Therefore, $T^{\mathcal{M}}_{P: \Delta}$ is indeed continuous if $\mathcal{M} \llbracket \Delta \rrbracket$ is finite. 
\end{proof}

\iterativeapplicationofacontinuousfunctiononthebottomyieldstheleastfixedpoint*
\begin{proof}
Since $f$ is continuous, it is monotone. Therefore, $\langle f^{n} (\bot) \mid n \in \mathbb{N} \rangle$ is an increasing sequence. As $f$ is continuous, 
\begin{equation*}
f (\bigsqcup \{ f^{n} (\bot) \mid n \in \mathbb{N} \}) = \bigsqcup \{ f^{n+1} (\bot) \mid n \in \mathbb{N} \}.
\end{equation*}
Because $\bot$ cannot be larger than any element from $\{ f^{n+1} (\bot) \mid n \in \mathbb{N} \}$, 
\begin{equation*}
\bigsqcup \{ f^{n+1} (\bot) \mid n \in \mathbb{N} \} = \bigsqcup \{ f^{n} (\bot) \mid n \in \mathbb{N} \}.
\end{equation*}
Combining the above two equations gives
\begin{equation*}
f (\bigsqcup \{ f^{n} (\bot) \mid n \in \mathbb{N} \}) = \bigsqcup \{ f^{n} (\bot) \mid n \in \mathbb{N} \}.
\end{equation*}
Therefore, $\bigsqcup \{ f^{n} (\bot) \mid n \in \mathbb{N} \}$ is a fixed point of $f$. 

If $y$ is also a least fixed point of $f$, we have $\bot \leq y$. The monotonicity of $f$ gives that
\begin{equation*}
\forall n \in \mathbb{N}. f^{n} (\bot) \leq y,
\end{equation*}
which means that $y$ is also an upper bound of $\{ f^{n} (\bot) \mid n \in \mathbb{N} \}$. As $\bigsqcup \{ f^{n} (\bot) \mid n \in \mathbb{N} \}$ is the least upper bound of $\{ f^{n} (\bot) \mid n \in \mathbb{N} \}$ by definition, 
$\bigsqcup \{ f^{n} (\bot) \mid n \in \mathbb{N} \} \leq y$. Thus, $\bigsqcup \{ f^{n} (\bot) \mid n \in \mathbb{N} \}$ is the least fixed point of $f$. 
\end{proof}

\subsection{Second direction}

\relationshipbetweensourceandtargetonestepconsequenceoperators*
\begin{proof}
The claim can be depicted by the following commutative diagram:
\begin{displaymath}
\xymatrix{
\zeta \ar[r]^-{T_{f}} & \zeta' \\
\gamma \ar[u]^{T^{\mathcal{M}}_{P: \Delta}} \ar[r]_-{T_{f}} & \gamma' \ar[u]_{T^{\mathcal{M}}_{P': \Delta'}}
}
\end{displaymath}
Fix $X \in {\tt dom}(\Delta)$ and assume $\Delta \vdash X: \sigma_1 \to \cdots \to \sigma_m \to o$. Also, suppose that $P$ contains $X = \lambda x_1, \ldots, x_m. F$. In addition, for each $1 \leq i \leq m$, consider
$c_i \in \mathcal{M} \llbracket \sigma_i \rrbracket$ such that there exists $c_i' \in A_{\sigma_i'}'$ that satisfies $\text{expand}_{\gamma}(c_i') = c_i$. Here, $\sigma_i \leadsto_{T} \sigma_i'$ for each $i$. 

Now let $\eta$ be the valuation $\{ (x_i, c_i) \mid 1 \leq i \leq m \}$ and $\eta'$ be $\{ (x_i, c_i') \mid 1 \leq i \leq m \}$. Because all free variables, including top-level relational variables, in $F$ are in the domains of $\gamma$ and $\eta$, 
$\gamma \cup \eta$ is a valid valuation of $F$. Similarly, $\gamma' \cup \eta'$ is a valid valuation of $F'$, where $F \leadsto F'$. Since $F$ does not contain lambda abstractions, we can apply 
Lemma~\ref{relationship between the semantics of source and target goal terms} to obtain
\begin{equation}
\mathcal{M} \llbracket F \rrbracket (\gamma \cup \eta) = \mathcal{M} \llbracket F' \rrbracket (\gamma' \cup \eta'). \label{equality between the semantics of F under gamma cup eta and F' under gamma' cup eta'}
\end{equation}
The left hand side of \eqref{equality between the semantics of F under gamma cup eta and F' under gamma' cup eta'} is equal to
\begin{equation} \label{transformation of the left hand side of the first equality in the proof for the relationship between source and target one-step consequence operators}
\begin{split}
\mathcal{M} \llbracket F \rrbracket (\gamma \cup \eta) & = \mathcal{M} \llbracket \lambda x_1, \ldots, x_m. F \rrbracket (\gamma) \ c_1 \ \cdots \ c_m \\
& = \mathcal{M} \llbracket P(X) \rrbracket (\gamma) \ c_1 \ \cdots \ c_m \\
& = T^{\mathcal{M}}_{P: \Delta} (\gamma) (X) \ c_1 \ \cdots \ c_m \\
& = \zeta (X) \ c_1 \ \cdots \ c_m,
\end{split}
\end{equation}
where the third equality follows from the definition of $T^{\mathcal{M}}_{P: \Delta}$ and the last equality follows from the definition of $\zeta$. The right hand side of 
\eqref{equality between the semantics of F under gamma cup eta and F' under gamma' cup eta'} can be transformed into
\begin{equation} \label{transformation of the right hand side of the first equality in the proof for the relationship between source and target one-step consequence operators}
\begin{split}
& \mathcal{M} \llbracket F' \rrbracket (\gamma' \cup \eta') \\
= {} & \mathcal{M} \llbracket \lambda x, x_m. \exists x_1, \ldots, x_{m-1}. x = C^{m-1}_{X} \ x_1 \ \cdots \ x_{m-1} \land F' \rrbracket (\gamma') \ (X, c_1', \ldots, c_{m-1}') \ c_{m}' \\
= {} & \mathcal{M} \llbracket P'(IOMatch_{\sigma_m'}) \rrbracket (\gamma') \ (X, c_1', \ldots, c_{m-1}') \ c_{m}' \\
= {} & T^{\mathcal{M}}_{P': \Delta'} (\gamma') (IOMatch_{\sigma_m'}) \ (X, c_1', \ldots, c_{m-1}') \ c_{m}' \\
= {} & \zeta' (IOMatch_{\sigma_m'}) \ (X, c_1', \ldots, c_{m-1}') \ c_{m}'. 
\end{split}
\end{equation}
Combining \eqref{equality between the semantics of F under gamma cup eta and F' under gamma' cup eta'}, \eqref{transformation of the left hand side of the first equality in the proof for the relationship between source and target one-step consequence operators}, and \eqref{transformation of the right hand side of the first equality in the proof for the relationship between source and target one-step consequence operators},
we obtain
\begin{equation*}
\zeta (X) \ c_1 \ \cdots \ c_m = \zeta' (IOMatch_{\sigma_m'}) \ (X, c_1', \ldots, c_{m-1}') \ c_{m}'.
\end{equation*}
Therefore, it follows from the definition of $T_{f}$ that $\zeta' = T_{f} (\zeta)$. This concludes the proof.
\end{proof}

\betaandbetaarerelatedbyTf*
\begin{proof}
As usual, fix $X \in {\tt dom}(\Delta)$ and assume $\Delta \vdash X: \sigma_1 \to \cdots \to \sigma_m \to o$. Also, suppose that $P$ contains $X = \lambda x_1, \ldots, x_m. F$. In addition, for each $1 \leq i \leq m$, consider
$c_i \in \mathcal{M} \llbracket \sigma_i \rrbracket$ such that there exists $c_i' \in A_{\sigma_i'}'$ that satisfies $\text{expand}_{\alpha}(c_i') = c_i$. Here, $\sigma_i \leadsto_{T} \sigma_i'$ for each $i$. 

Since $\beta$ and $\beta'$ are the least upper bounds of $\{ f^{n}_{1} (\alpha) \mid n \in \mathbb{N} \}$ and $\{ f^{n}_{2} (\alpha') \mid n \in \mathbb{N} \}$, respectively, it is given that
\begin{equation}
\beta (X) \ c_1 \ \cdots \ c_m = \max \{ f^{n}_{1} (\alpha) (X) \ c_1 \ \cdots \ c_m \mid n \in \mathbb{N} \} \label{equality between beta and maximum of an increasing sequence of alpha}
\end{equation}
and
\begin{equation} \label{equality between beta' and maximum of an increasing sequence of alpha'}
\begin{split}
& \beta' (IOMatch_{\sigma_m'}) \ (X, c_1', \ldots, c_{m-1}') \ c_m' \\
= {} & \max \{ f^{n}_{2} (\alpha') (IOMatch_{\sigma_m'}) \ (X, c_1', \ldots, c_{m-1}') \ c_m' \mid n \in \mathbb{N} \}. 
\end{split}
\end{equation}
As $\alpha' = T_{f} (\alpha)$, by Lemma~\ref{relationship between source and target one-step consequence operators}, $f^{n}_{2}(\alpha') = T_{f} (f^{n}_{1} (\alpha))$ for every $n \in \mathbb{N}$. Hence,
\begin{equation*}
f^{n}_{1} (\alpha) (X) \ c_1 \ \cdots \ c_m = f^{n}_{2} (\alpha') (IOMatch_{\sigma_m'}) \ (X, c_1', \ldots, c_{m-1}') \ c_m'
\end{equation*}
for each $n$. Consequently,
\begin{equation}
\max \{ f^{n}_{1} (\alpha) (X) \ c_1 \ \cdots \ c_m \} = \max \{ f^{n}_{2} (\alpha') (IOMatch_{\sigma_m'}) \ (X, c_1', \ldots, c_{m-1}') \ c_m' \}. \label{equality between the maxima of two increasing sequences}
\end{equation}
Combining \eqref{equality between beta and maximum of an increasing sequence of alpha}, \eqref{equality between beta' and maximum of an increasing sequence of alpha'}, and \eqref{equality between the maxima of two increasing sequences} yields
\begin{equation*}
\beta (X) \ c_1 \ \cdots \ c_m = \beta' (IOMatch_{\sigma_m'}) \ (X, c_1', \ldots, c_{m-1}') \ c_m'.
\end{equation*}
Therefore, $\beta' = T_{f} (\beta)$ indeed holds. 
\end{proof}

\seconddirectionofthemeaningpreservationunderthemonotonesemantics*
\begin{proof}
Let $\bot$ be the least element from $\mathcal{M} \llbracket \Delta \rrbracket$ and $\bot'$ be the least element from $\mathcal{M} \llbracket \Delta' \rrbracket$. It is clear that $\bot' = T_{f} (\bot)$. 

Suppose $\beta = \bigsqcup \{ f^{n}_{1} (\bot) \mid n \in \mathbb{N} \}$, where $f_{1} = T^{\mathcal{M}}_{P: \Delta}$, and $\beta' = \bigsqcup \{ f^{n}_{2} (\bot') \mid n \in \mathbb{N} \}$, where $f_{2} = T^{\mathcal{M}}_{P': \Delta'}$. 
Because it is given that $T^{\mathcal{M}}_{P: \Delta}$ and $T^{\mathcal{M}}_{P': \Delta'}$ are continuous, $\beta$ and $\beta'$ are fixed points of the respective one-step consequence operators. In other words, they are models of 
$P$ and $P'$, respectively.

Furthermore, because $\beta$ is obtained by iteratively applying $T^{\mathcal{M}}_{P': \Delta'}$ to $\bot'$, which is the least element of $\mathcal{M} \llbracket \Delta' \rrbracket$, $\beta'$ is the least fixed point of 
$T^{\mathcal{M}}_{P': \Delta}$ by Theorem~\ref{iterative application of a continuous function on the bottom yields the least fixed point}. Moreover, it is the least prefixed point. This statement is not too difficult to prove, although I will not provide
its formal proof. 

Assume that $\mathcal{P}'$ is solvable and let its solution be $\alpha'$. Then $\beta' \subseteq \alpha'$ because $\beta'$ is the least model of $P$. Moreover, by the monotonicity of $\mathcal{M} \llbracket G \rrbracket$, we should have
\begin{equation*}
\mathcal{M} \llbracket G \rrbracket (\beta') \subseteq_{o} \mathcal{M} \llbracket G \rrbracket (\alpha').
\end{equation*}
The right hand side of this equation is 0 since $\alpha'$ is a solution to $\mathcal{P}'$. Thus, $\mathcal{M} \llbracket G \rrbracket (\beta') = 0$. 

It follows from Lemma~\ref{beta and beta' are related by T-f} that $\beta' = T_{f} (\beta)$. Hence, by Lemma~\ref{relationship between the semantics of source and target goal terms}, we have
\begin{align*}
\mathcal{M} \llbracket G \rrbracket (\beta) & = \mathcal{M} \llbracket G \rrbracket (\beta') \\
& = 0.
\end{align*}
Therefore, $\beta$ is a solution to $\mathcal{P}$. This concludes the proof. 
\end{proof}


\section{Type preservation} \label{chapter on type preservation}

Before I prove type preservation, I revisit the basic concepts of first-order terms from constraint languages and goal terms. 

\subsection{Defining terms and formulas}

In this subsection, I formally define first-order terms and first-order formulas in constraint languages. This is necessary because to prove type preservation, I need to use some properties of terms. 

\subsubsection{Terms and formulas}
Let $\Sigma = (\mathbb{B}, \mathbb{S})$ be a first-order signature. Since $\Sigma$ is first-order, the sort of each symbol in $\mathbb{S}$ has order at most 2. The class of well-sorted first-order terms over $\Sigma$ is given by
\begin{center}
\AxiomC{}
\LeftLabel{\sc (TCst)}
\RightLabel{$c \in {\tt dom}(\mathbb{S}) $}
\UnaryInfC{$\Delta \vdash c: \mathbb{S}(c)$}
\DisplayProof
\qquad
\AxiomC{}
\LeftLabel{\sc (TAnd)}
\UnaryInfC{$\Delta \vdash \land: o \to o\to o$}
\DisplayProof
\end{center}
\begin{center}
\AxiomC{}
\LeftLabel{\sc (TNeg)}
\UnaryInfC{$\Delta \vdash \neg : o \to o$}
\DisplayProof
\qquad
\AxiomC{}
\LeftLabel{\sc (TVar)}
\UnaryInfC{$\Delta_1, x: b, \Delta_2 \vdash x: b$}
\DisplayProof
\end{center}
\begin{center}
\AxiomC{$\Delta, x: b \vdash t: o$}
\LeftLabel{\sc (TExi)}
\UnaryInfC{$\Delta \vdash \exists_{b} x. t: o$}
\DisplayProof
\qquad
\AxiomC{$\Delta \vdash t_1: b \to \beta$}
\AxiomC{$\Delta \vdash t_2: b$}
\LeftLabel{\sc (TApp)}
\BinaryInfC{$\Delta \vdash t_1 \ t_2: \beta$}
\DisplayProof
\end{center}
Here, $b$ is a base sort $t$ (with or without subscripts) is a first-order term, and $\beta$ is a sort of order at most 2; i.e.~sort of the form $b_1 \to \cdots \to b_n$, where $b_i \in \mathbb{B}$ for each $1 \leq i \leq n$.

It is important that $\Delta$ contains no conflicts; i.e.~no variable is associated with multiple types. Henceforth, it is implicitly assumed that sort environments for first-order terms are free of conflicts. 

Well-sorted first-order formulas are defined as well-sorted first-order terms of sort $o$. Notice that unlike in usual presentation of first-order logic, where formulas and terms are disjoint, according to the above definition, terms include formulas. 

When a first-order term $s$ is well-sorted under sort environment $\Delta$ and has sort $\beta$, I write $\Delta \vdash s: \beta$. From now on, I assume that first-order terms are well-sorted. 

When a typing judgement $\Delta \vdash s: \beta$ is created by {\sc (TCst)}, {\sc (TAnd)}, or {\sc (TNeg)}, the sort of $s$ is independent of $\Delta$. In that case, to work out the sort of $s$, we need to check $\mathbb{S}$ and {\tt LSym}. When $\mathbb{S}$
is unclear, I write $\mathbb{S}, \Delta \vdash s: \beta$ to make $\mathbb{S}$ explicit. However, whenever $\mathbb{S}$ is clear from the context, I will omit it from sort environments. 

\subsubsection{Properties of terms and formulas}

\begin{proposition} \label{free variables in first-order terms}
Every free variable occurring in a first-order term has a base sort.
\end{proposition}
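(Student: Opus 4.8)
The plan is to argue by structural induction on the derivation of $\Delta \vdash s : \beta$, using the six sorting rules {\sc (TCst)}, {\sc (TAnd)}, {\sc (TNeg)}, {\sc (TVar)}, {\sc (TExi)}, {\sc (TApp)} that generate the well-sorted first-order terms over $\Sigma$; equivalently, one may induct on the structure of $s$. The statement, made precise, is: whenever $\Delta \vdash s : \beta$ and $x \in {\tt FV}(s)$, the sort $\Delta(x)$ is a base sort, i.e.\ $\Delta(x) \in \mathbb{B}$. The inductive hypothesis is exactly this claim for the immediate subterms appearing in the premises of the last rule applied.

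First I would dispatch the axiom cases. If the last rule is {\sc (TCst)}, {\sc (TAnd)}, or {\sc (TNeg)}, then $s$ is respectively some $c \in {\tt dom}(\mathbb{S})$, $\land$, or $\neg$, so ${\tt FV}(s) = \emptyset$ and the claim holds vacuously. If the last rule is {\sc (TVar)}, then $s = x$ and the environment has the form $\Delta_1, x : b, \Delta_2$ with $b \in \mathbb{B}$; the only free variable is $x$, and it is assigned the base sort $b$, so the claim holds. Here I rely on the standing assumption that sort environments for first-order terms are conflict-free, so that "the sort assigned to $x$" is unambiguous.

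Next, the two genuinely inductive cases. If the last rule is {\sc (TApp)}, then $s = t_1\ t_2$ with $\Delta \vdash t_1 : b \to \beta$ and $\Delta \vdash t_2 : b$; since ${\tt FV}(s) = {\tt FV}(t_1) \cup {\tt FV}(t_2)$ and both subterms are typed under the same $\Delta$, applying the inductive hypothesis to $t_1$ and to $t_2$ shows that every free variable of $s$ has a base sort. If the last rule is {\sc (TExi)}, then $s = \exists_{b} x.\, t$ with $\Delta, x : b \vdash t : o$; here ${\tt FV}(s) = {\tt FV}(t) \setminus \{x\}$, and $\Delta, x:b$ is still conflict-free (since $x \notin {\tt dom}(\Delta)$), so the inductive hypothesis applied to $t$ says every free variable of $t$ is of base sort, and in particular every free variable of $s$ — each of which is already assigned its sort by $\Delta$ alone — is of base sort.

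I do not expect a genuine obstacle here: the result is a direct consequence of the shape of the grammar, which only ever binds base-sorted variables (in {\sc (TExi)}) and never permits a variable to occur in an arrow-sorted position — there is no abstraction rule, and in {\sc (TVar)} and {\sc (TApp)} variables enter only at base sort. The one point worth stating with care is the conflict-freeness bookkeeping in the {\sc (TExi)} case: one must check that passing to the premise's environment $\Delta, x:b$ is legitimate and that the sorts of the remaining free variables are unchanged, which is immediate since those variables lie in ${\tt dom}(\Delta)$ and $x$ is fresh for $\Delta$.
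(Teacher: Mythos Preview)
Your proposal is correct and rests on the same key observation as the paper's proof, namely that {\sc (TVar)} is the only rule that introduces variables and it forces a base sort. The paper simply states this observation in one line without spelling out the induction, whereas you carry out the full case analysis; both are valid, and your version is the more explicit of the two.
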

\begin{proof}
Variables can only be introduced into first-order terms by {\sc (TVar)}. The rule requires variables to be of base sort. Hence, the claim is true. 
\end{proof}

\begin{proposition} \label{uniqueness of sorts of first-order terms}
Given $\Delta \vdash s: \beta$, the sort of $s$ under $\Delta$ is unique; that is, we cannot have $\Delta \vdash s: \beta'$, where $\beta \neq \beta'$.
\end{proposition}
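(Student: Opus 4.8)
The plan is to prove this by structural induction on the first-order term $s$ (equivalently, by induction on the derivation of $\Delta \vdash s : \beta$), exploiting the fact that the sorting rules {\sc (TCst)}, {\sc (TAnd)}, {\sc (TNeg)}, {\sc (TVar)}, {\sc (TExi)}, {\sc (TApp)} are syntax-directed: the outermost shape of $s$ determines which rule must have been used as the last step of any derivation of a judgement $\Delta \vdash s : \gamma$, and moreover the decomposition of $s$ into its immediate subterms is unique. So I would fix $\Delta$, suppose $\Delta \vdash s : \beta$ and $\Delta \vdash s : \beta'$, and case-split on the form of $s$, showing $\beta = \beta'$ in each case.

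For the base cases: when $s$ is a constant $c \in {\tt dom}(\mathbb{S})$ the only applicable rule is {\sc (TCst)}, so $\beta = \mathbb{S}(c) = \beta'$; when $s = \land$ rule {\sc (TAnd)} forces $\beta = o \to o \to o = \beta'$; when $s = \neg$ rule {\sc (TNeg)} forces $\beta = o \to o = \beta'$. When $s$ is a variable $x$, the only applicable rule is {\sc (TVar)}, which reads the sort of $x$ directly off $\Delta$; here I would invoke the standing assumption that $\Delta$ contains no conflicts, so that $\Delta$ assigns $x$ a unique sort and hence $\beta = \Delta(x) = \beta'$. When $s = \exists_b y.\, t$, the only applicable rule is {\sc (TExi)}, whose conclusion fixes $\beta = o = \beta'$ outright, with no appeal to the induction hypothesis.

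The single genuine inductive case is $s = t_1\ t_2$. Any derivation of $\Delta \vdash t_1\ t_2 : \gamma$ must end with {\sc (TApp)}, and the decomposition of $s$ into $t_1$ and $t_2$ is determined by $s$, so from $\Delta \vdash s : \beta$ and $\Delta \vdash s : \beta'$ we obtain base sorts $b, b'$ with $\Delta \vdash t_1 : b \to \beta$ and $\Delta \vdash t_1 : b' \to \beta'$. Applying the induction hypothesis to the proper subterm $t_1$ gives $b \to \beta = b' \to \beta'$, whence $\beta = \beta'$. (One could also apply the hypothesis to $t_2$ to see $b = b'$, but that is not needed.)

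The proof is essentially routine; the only point requiring any care is making explicit that the rule system is deterministic — that no term is the conclusion of two different rules and that immediate subterms are uniquely recoverable from the term — together with the reliance on conflict-freeness of $\Delta$ in the variable case. There is no real obstacle here: this is the standard uniqueness-of-typing lemma for a syntax-directed system, and it is needed only to license speaking of \emph{the} sort of a first-order term in what follows.
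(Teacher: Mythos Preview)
Your proposal is correct and follows essentially the same approach as the paper's proof: structural induction on $s$, with the base cases handled by the syntax-directedness of {\sc (TCst)}, {\sc (TAnd)}, {\sc (TNeg)}, {\sc (TVar)} (using conflict-freeness of $\Delta$), the {\sc (TExi)} case fixed at $o$ outright, and the {\sc (TApp)} case discharged by the inductive hypothesis on the subterms. If anything, your version is slightly more careful in making explicit the syntax-directedness and the unique decomposition of applications.
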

\begin{proof}
The proof goes by structural induction on $s$. 

For the base case, if $\Delta \vdash s: \beta$ is created by {\sc (TCst)}, {\sc (TAnd)}, or {\sc (TNeg)}, the sort of $s$ is unique (and is independent of $\Delta$). If $\Delta \vdash s: \beta$ is created by {\sc (TVar)}, the sort of $s$ is uniquely determined by
$\Delta$. 

For the inductive case, if $\Delta \vdash s: \beta$ is created by {\sc (TExi)}, we have $\beta = o$. Thus, the sort of $s$ is uniquely determined. 

Finally, if $\Delta \vdash s: \beta$ is generated by {\sc (TApp)}, we know that $s = t_1 \ t_2$. By the inductive 
hypothesis, the sorts of $t_1$ and $t_2$ under $\Delta$ are uniquely determined. Therefore, the sort of $s$ under $\Delta$ is also uniquely determined. 
\end{proof}

The following proposition states that each well-sorted first-order term has a unique way to assign sorts to all symbols occurring in the term such that the term is well-sorted.

\begin{theorem} \label{type annotation of first-order terms}
If $\Delta \vdash s: \beta$ holds, where $s$ is a first-order term, then every symbol occurring in $s$ can
be annotated with a unique sort. 
\end{theorem}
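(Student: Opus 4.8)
The plan is to prove the theorem by structural induction on the first-order term $s$ (equivalently, by induction on the derivation of $\Delta \vdash s : \beta$), strengthening the statement so that it ranges over all conflict-free sort environments $\Delta$: whenever $\Delta \vdash s : \beta$ holds, there is exactly one way to annotate each symbol occurrence in $s$ with a sort so that the derivation goes through, and moreover any two occurrences of the same symbol in $s$ receive the same sort. Along the way I would appeal to Proposition~\ref{uniqueness of sorts of first-order terms} to know that every subterm of $s$ has a single sort under $\Delta$ (so in particular every leaf, i.e.\ every symbol occurrence, does), and to Proposition~\ref{free variables in first-order terms} to pin down the sorts of variables.

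For the base cases, if $s$ is a constant $c \in {\tt dom}(\mathbb{S})$, or one of the logical symbols $\land$ or $\neg$, then {\sc (TCst)}, {\sc (TAnd)}, {\sc (TNeg)} fix the sort from $\mathbb{S}$ or from {\tt LSym} independently of $\Delta$, so the annotation is forced; if $s = x$ is a variable then {\sc (TVar)} forces its sort to be the unique $b$ with $x{:}b \in \Delta$, which is well defined precisely because $\Delta$ is conflict-free. For the inductive step with last rule {\sc (TExi)} and $s = \exists_b x.\, t$, I would apply the induction hypothesis to $\Delta, x{:}b \vdash t : o$; this annotates every symbol of $t$ uniquely and assigns $x$ the sort $b$ wherever it occurs in $t$, which is exactly the sort carried by the binder $\exists_b x$, the environment $\Delta, x{:}b$ being conflict-free. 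For last rule {\sc (TApp)} and $s = t_1\, t_2$ with $\Delta \vdash t_1 : b \to \beta$ and $\Delta \vdash t_2 : b$, I would apply the induction hypothesis to $t_1$ and to $t_2$ separately and then check that the two resulting annotations agree on any symbol common to both subterms: a constant always receives its $\mathbb{S}$-sort (or its fixed logical sort), and a variable occurring free in both subterms receives $\Delta(x)$ in both, again by conflict-freeness of $\Delta$. Patching the annotations of $t_1$ and $t_2$ then gives the required unique annotation of $s$.

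The step I expect to be the main obstacle is the handling of variables that occur bound, since a priori one variable name could be reused under two quantifiers carrying different base-sort annotations, which would make "a unique sort per symbol" literally false. I would resolve this by observing that the sort relevant to any bound-variable occurrence is always the one supplied by its nearest enclosing $\exists$, that this is forced by {\sc (TExi)} together with the requirement (emphasised in the text) that every sort environment appearing in the derivation be conflict-free, and — if necessary — by invoking the standing convention that bound variables may be taken distinct from one another and from the free variables (alpha-conversion), so that within a well-sorted term no symbol is genuinely used at two incompatible sorts. Collecting these observations up the derivation tree then yields the claimed unique global sort annotation.
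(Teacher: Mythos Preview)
Your proposal is correct and follows essentially the same structural induction as the paper's proof, with the same case split and the same appeal to Proposition~\ref{uniqueness of sorts of first-order terms}. The only difference is that you take extra care to argue consistency of the annotation across the two subterms in the {\sc (TApp)} case and to discuss bound-variable reuse; the paper does not address these points, implicitly reading ``symbol'' as ``symbol occurrence'' so that no cross-subterm consistency check is needed.
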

\begin{proof}
The claim is proved by structural induction on $s$. 

For the base case, if $\Delta \vdash s: \beta$ is created by {\sc (TCst)}, {\sc (TAnd)}, or {\sc (TNeg)}, the sort of $s$ is given by $\mathbb{S}$ or {\tt LSym} and is unique. If $s$ is created by {\sc (TVar)}, the sort of $s$ is given by 
$\Delta(s)$ and is unique because $\Delta$ is assumed to contain no conflicts. Thus, in all base cases, the sort of $s$ can be uniquely identified. Alternatively, we can use Proposition~\ref{uniqueness of sorts of first-order terms} to
prove the base case. As the only symbol appearing in $s$ is itself, the claim reduces to Proposition~\ref{uniqueness of sorts of first-order terms}. 

For the inductive case, if $s$ is created by {\sc (TExi)}, $s$ is in the form of $\exists_{b} x. t$, where $\Delta, x: b \vdash t: o$. The sort of $x$ is stored in the subscript of $\exists_{b}$ in $s$. Hence, from $\Delta \vdash \exists_{b} x. t: o$,
we can uniquely derive $\Delta, x: b \vdash t: o$. In other words, from a conclusion of {\sc (TExi)}, we can uniquely deduce the corresponding premise of {\sc (TExi)}. By the inductive hypothesis, every symbol in $\Delta, x: b \vdash t: o$ can be annotated
with a unique sort. If there exist two distinct ways to assign sorts to the symbols occurring in $\Delta \vdash \exists_{b} x. t: o$, there should be two distinct ways to assign sorts to $\Delta, x: b \vdash t: o$ as well, which contradicts the
inductive hypothesis. Hence, all symbols in $\exists_{b} x. t$ can be annotated with a unique sort.

Finally, if $s$ is created by {\sc (TApp)}, we have $s = t_1 \ t_2$, where $\Delta \vdash t_1: b \to \beta$ and $\Delta \vdash t_2: b$. We cannot determine the typing judgements $\Delta \vdash t_1: b \to \beta$ and $\Delta \vdash t_2: b$ uniquely
by the mere appearance of $\Delta \vdash t_1 \ t_2: \beta$, without any calculation. However, we can evaluate the sorts of $t_1$ and $t_2$ under the sort environment $\Delta$ by repeatedly applying the six typing rules listed above. Furthermore, by 
Proposition~\ref{uniqueness of sorts of first-order terms}, the sorts of $t_1$ and $t_2$ under $\Delta$ are unique. By the inductive hypothesis, every symbol in $t_1$ and $t_2$ can be annotated with a unique symbol. 
For the sake of contradiction, assume that there are two distinct ways to assign sorts to $t_1 \ t_2$. Then at least one of $\Delta \vdash t_1$ and $\Delta \vdash t_2$ has two distinct sort assignments. This contradicts the inductive
hypothesis. Therefore, the claim holds for $t_1 \ t_2$ as well. This concludes the proof. 
\end{proof}

In effect, Theorem~\ref{type annotation of first-order terms} proves uniqueness of typing derivation trees of first-order terms by showing that given the root of a derivation tree, the root's successor(s) can be uniquely determined. Because all constants
and variables appear at the leaves of a tree, their sort assignment is uniquely determined. As for logical constants, their sorts are given by {\tt LSym} and hence are unique. 

The syntax and typing rules of first-order terms allow us to determine the sort of each symbol in a term by simply consulting $\mathbb{S}$, {\tt LSym}, and $\Delta$. This nice property does not hold any longer if we omit subscripts from
$\exists$. For instance, consider $\vdash (\exists x. x= 2): o$. It is still possible to uniquely determine the sort of each symbol. However, we cannot apply the same proof as the one for Theorem~\ref{type annotation of first-order terms}, since
it is not straightforward to deduce the typing judgement $x: {\tt int} \vdash (x= 2): o$ (especially the left hand side of the judgement; i.e.~$x: {\tt int}$) from $\vdash (\exists x. x= 2): o$. To determine the sort of $x$, we need to carry out
type inference using $\vdash (=): {\tt int} \to {\tt int} \to o$.

\subsection{Redefining goal terms}

In this subsection, I redefine goal terms in order to fix my imprecise use of terminology. In my explanation of the defunctionalization algorithm (Section~\ref{section on the defunctionalization algorithm}), I call an input of transformation a `source goal term' and an
output a `target goal term'. A problem lies in the use of the word `goal term'. According to \cite{Ramsay2017}, elements of $Tm$, where $Tm$ is a set of first-order terms in a constraint language, do not qualify as goal terms. 
However, in my presentation of the defunctionalization algorithm, a `source goal term' can be an element from $Tm$. This issue is caused by the fact that although $t \in Tm$ can be a subexpression of a goal term, $t$ itself is not a goal
term. Hence, I need to find a suitable word to refer to a collection of both goal terms and terms from $Tm$. One solution I would suggest is to redefine goal terms to mean first-order terms from $Tm$ as well as goal terms (in the original definition). 

The next subsection is a revised version of Subsection~\ref{subsection on goal terms}. I will also introduce some useful theorems about goal terms. 

\subsubsection{Goal terms}

Fix a first-order signature $\Sigma = (\mathbb{B}, \mathbb{S})$ and a constraint language $(Tm, Fm, Th)$ over $\Sigma$. In the original paper \cite{Ramsay2017}, the class of well-sorted goal terms $\Delta\vdash G: \rho$, 
where $\rho$ is a relational sort, is given by these sorting rules:
\begin{prooftree}
\AxiomC{}
\LeftLabel{\sc (GCst)}
\RightLabel{$c \in \{\land, \lor, \exists_{b} \} \cup \{ \exists_{\rho} \mid \rho \}$}
\UnaryInfC{$\Delta \vdash c: \rho_{c}$}
\end{prooftree}
\begin{prooftree}
\AxiomC{}
\LeftLabel{\sc (GVar)}
\UnaryInfC{$\Delta_1, x: \rho, \Delta_2 \vdash x: \rho$}
\end{prooftree}
\begin{prooftree}
\AxiomC{}
\LeftLabel{\sc (GConstr)}
\RightLabel{$\Delta \vdash \varphi: o \in Fm$}
\UnaryInfC{$\Delta \vdash \varphi: o$}
\end{prooftree}
\begin{prooftree}
\AxiomC{$\Delta, x: \sigma \vdash G: \rho$}
\LeftLabel{\sc (GAbs)}
\RightLabel{$x \notin \text{\tt dom}(\Delta)$}
\UnaryInfC{$\Delta \vdash \lambda x {:} \sigma. G: \sigma \to \rho$}
\end{prooftree}
\begin{prooftree}
\AxiomC{$\Delta \vdash G: b \to \rho$}
\LeftLabel{\sc (GAppl)}
\RightLabel{$\Delta \vdash N: b \in Tm$}
\UnaryInfC{$\Delta \vdash G \ N: \rho$}
\end{prooftree}
\begin{prooftree}
\AxiomC{$\Delta \vdash G: \rho_1 \to \rho_2$}
\AxiomC{$\Delta \vdash H: \rho_1$}
\LeftLabel{\sc (GAppR)}
\BinaryInfC{$\Delta \vdash G \ H: \rho_2$}
\end{prooftree}
Throughout the above six rules, $b$ denotes a base sort from $\mathbb{B}$, $\rho$ (with or without subscripts) denotes a relational sort, and $\sigma$ is either a base sort or a relational sort.

Despite being a subexpression of a goal term, a first-order term $t \in Tm$ is not a goal term according to the definition above. As the defunctionalization algorithm I developed works compositionally, I need a word to
refer to not only goal terms but also their subexpressions (excluding subexpressions of elements from $Tm \cup Fm$). Therefore, I will redefine goal terms to encompass first-order terms from $Tm$:
\begin{prooftree}
\AxiomC{}
\LeftLabel{\sc (GCst)}
\RightLabel{$c \in \{\land, \lor, \exists_{b} \} \cup \{ \exists_{\rho} \mid \rho \}$}
\UnaryInfC{$\Delta \vdash c: \rho_{c}$}
\end{prooftree}
\begin{prooftree}
\AxiomC{}
\LeftLabel{\sc (GVar)}
\UnaryInfC{$\Delta_1, x: \rho, \Delta_2 \vdash x: \rho$}
\end{prooftree}
\begin{center}
\AxiomC{}
\LeftLabel{\sc (GFml)}
\RightLabel{$\Delta \vdash \varphi: o \in Fm$}
\UnaryInfC{$\Delta \vdash \varphi: o$}
\DisplayProof
\qquad
\AxiomC{}
\LeftLabel{\sc (GTerm)}
\RightLabel{$\Delta \vdash t: b \in Tm$}
\UnaryInfC{$\Delta \vdash t: b$}
\DisplayProof
\end{center}
\begin{prooftree}
\AxiomC{$\Delta, x: \sigma \vdash G: \rho$}
\LeftLabel{\sc (GAbs)}
\RightLabel{$x \notin \text{\tt dom}(\Delta)$}
\UnaryInfC{$\Delta \vdash \lambda x {:} \sigma. G: \sigma \to \rho$}
\end{prooftree}
\begin{prooftree}
\AxiomC{$\Delta \vdash G: \sigma \to \rho$}
\AxiomC{$\Delta \vdash H: \sigma$}
\LeftLabel{\sc (GApp)}
\BinaryInfC{$\Delta \vdash G \ H: \rho$}
\end{prooftree}
As before, throughout the new six rules, $b$ denotes a base sort from $\mathbb{B}$, $\rho$ denotes a relational sort, and $\sigma$ is either a base sort or a relational sort. 

As is true of first-order terms, it is important that $\Delta$ contains no conflicts; i.e.~no variable is associated with multiple types. Henceforth, it is implicitly assumed that sort environments for goal terms are free of conflicts. 

When a goal term $t$ is well-sorted under the sort environment $\Delta$ and has sort $\sigma$, I write $\Delta \vdash t: \sigma$.

The next three propositions establish the relationship between the original and modified definitions of goal terms. 

\begin{proposition} \label{goal terms in the original def can be generated by the new one}
If $s$ is a goal term in the original definition, $s$ can be generated by the new definition. Further, if $\Delta \vdash s: \rho$ in the original definition, then $\Delta \vdash s: \rho$ holds in the new definition as well.
\end{proposition}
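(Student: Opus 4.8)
The plan is to prove both parts simultaneously by induction on the derivation of $\Delta \vdash s : \rho$ in the original system, showing in each case that the very same judgement is derivable using the new rules. Once such a derivation in the new system is exhibited, it witnesses in particular that $s$ is a well-sorted goal term in the new sense, so the first assertion of the proposition follows from the second. I would therefore concentrate on reproducing each original rule inside the new system.

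For the base cases, a derivation ending in \textsc{(GCst)} or \textsc{(GVar)} is already a derivation in the new system, since these two rules are shared verbatim by both definitions (both presentations of \textsc{(GCst)} range over $\{\land,\lor,\exists_b\}\cup\{\exists_\rho\mid\rho\}$). A derivation ending in \textsc{(GConstr)}, with side condition $\Delta\vdash\varphi:o\in Fm$, is matched immediately by the new rule \textsc{(GFml)}, which is premise-free and carries exactly the same side condition. For the inductive case \textsc{(GAbs)}, I would apply the induction hypothesis to the premise $\Delta,x{:}\sigma\vdash G:\rho$ to obtain a derivation of it in the new system, then reapply \textsc{(GAbs)} — unchanged, with the same side condition $x\notin\mathrm{dom}(\Delta)$ — to conclude $\Delta\vdash\lambda x{:}\sigma.\,G:\sigma\to\rho$.

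The only case that needs a small observation is application, and it is here that the main (mild) subtlety lies. The original system splits application into \textsc{(GAppl)}, where the argument $N$ is a first-order term with side condition $\Delta\vdash N:b\in Tm$, and \textsc{(GAppR)}, where the argument is itself a goal term. In the \textsc{(GAppl)} case, the induction hypothesis gives $\Delta\vdash G:b\to\rho$ in the new system, and the new rule \textsc{(GTerm)} — whose side condition $\Delta\vdash N:b\in Tm$ is precisely that of \textsc{(GAppl)} — gives $\Delta\vdash N:b$; since a base sort $b$ is an admissible instance of the metavariable $\sigma$ in \textsc{(GApp)}, one application of \textsc{(GApp)} yields $\Delta\vdash G\ N:\rho$. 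In the \textsc{(GAppR)} case, the induction hypothesis gives both $\Delta\vdash G:\rho_1\to\rho_2$ and $\Delta\vdash H:\rho_1$ in the new system, and since $\rho_1$ is a relational sort it too is admissible as $\sigma$ in \textsc{(GApp)}, so \textsc{(GApp)} yields $\Delta\vdash G\ H:\rho_2$. This exhausts the rules of the original system.

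The argument is entirely routine; there is no genuine obstacle. The only point requiring attention is recognising that the single new rule \textsc{(GApp)}, whose argument sort $\sigma$ ranges over base sorts and relational sorts alike, subsumes both \textsc{(GAppl)} and \textsc{(GAppR)} once \textsc{(GTerm)} is available to inject the first-order terms of the constraint language into the class of goal terms.
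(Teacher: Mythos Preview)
Your proposal is correct and follows essentially the same approach as the paper: a case analysis on the last rule of the original derivation, with \textsc{(GCst)}, \textsc{(GVar)} carried over verbatim, \textsc{(GConstr)} replaced by \textsc{(GFml)}, \textsc{(GAbs)} handled by the induction hypothesis plus \textsc{(GAbs)}, and the two application rules both subsumed by \textsc{(GApp)} once \textsc{(GTerm)} supplies $\Delta\vdash N:b$ in the \textsc{(GAppl)} case. Your framing of the induction on derivations (rather than structural induction on $s$) and your explicit remark that the first assertion follows from the second are minor presentational differences only.
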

\begin{proof}
The claim is proved by structural induction on $s$. 

For the base case, if $s$ is generated by {\sc (GCst)} or {\sc (GVar)}, $s$ can be generated by the same rules in the new definition. If $s$ is generated by {\sc (GConstr)} in the 
original definition, it can be generated by {\sc (GFml)} in the new definition. In both cases, the sort is preserved. 

For the inductive case, suppose that $s$ is generated by {\sc (GAbs)} in the original definition. Then it follows from (GAbs) that $s$ is in the form
\begin{equation*}
s = \lambda x. G,
\end{equation*}
where $G$ is a goal term in the original definition. Also, if $\Delta, x: \sigma \vdash G: \rho$, then we have $\Delta \vdash \lambda x. G: \sigma \to \rho$. By the inductive hypothesis, $G$ can be generated by the new
definition, and $\Delta, x: \sigma \vdash G: \rho$ holds. Hence, by {\sc (GAbs)} in the new definition, $\Delta \vdash \lambda x. G: \sigma \to \rho$ can be established. Thus, the claim is true in this case.

Consider the case when $s$ is generated by {\sc (GAppl)} in the original definition. From {\sc (GAppl)}, we know that $s = G \ N$, where $G$ is a goal term and $N \in Tm$. Furthermore, if $\Delta \vdash G: b \to \rho$, then $\Delta \vdash G \ N: \rho$
holds. By the inductive hypothesis, $\Delta \vdash G: b \to \rho$ can be established by the new definition. Also, $\Delta \vdash N: b$ holds in the new definition. Therefore, by {\sc (GApp)} in the new definition, we obtain
\begin{prooftree}
\AxiomC{$\Delta \vdash G: b \to \rho$}
\AxiomC{$\Delta \vdash N: b$}
\LeftLabel{\sc (GApp)}
\BinaryInfC{$\Delta \vdash G \ N: \rho$}
\end{prooftree}
Thus, the claim is true when $s$ is generated by {\sc (GAppl)}.

The case when $s$ is generated by {\sc (GAppR)} in the original definition can be proved in the same manner as the case when $s$ is created by {\sc (GAbs)}. 
\end{proof}

\begin{proposition} \label{relational goal terms in the new def can be generated by the original one}
If $\Delta \vdash s: \rho$ in the new definition, where $\rho$ is a relational sort, the typing judgement holds in the old definition as well. 
\end{proposition}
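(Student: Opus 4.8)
The plan is to prove the statement by structural induction on the derivation of $\Delta \vdash s : \sigma$ in the revised (new) system, strengthening the induction hypothesis as follows: if $\Delta \vdash s : \sigma$ is derivable in the new system, then \emph{either} $\sigma$ is a base sort $b$ with $b \neq o$ and $s \in Tm$ with $\Delta \vdash s : b$ a well-sorted first-order term, \emph{or} $\Delta \vdash s : \sigma$ is derivable in the old system. Proposition~\ref{relational goal terms in the new def can be generated by the original one} is then immediate: a relational sort $\rho$ is never a base sort different from $o$, so the second alternative must hold. I will also use the routine inversion fact that, in the new system, only \textsc{(GTerm)} can conclude a judgement whose sort is a base sort $\neq o$ (every other rule concludes sort $o$, or a relational arrow sort, or a $\lambda$-abstraction sort whose codomain is relational); in particular \textsc{(GApp)} always concludes a relational sort.

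The base cases \textsc{(GCst)}, \textsc{(GVar)} and \textsc{(GFml)} are discharged directly, since the old system contains \textsc{(GCst)}, \textsc{(GVar)} and \textsc{(GConstr)} with the same conclusions, and all three produce relational sorts. For \textsc{(GTerm)} we have $s = t \in Tm$ with $\sigma = b$ a base sort: if $b \neq o$ we land in the first alternative; if $b = o$ then $t$ is a constraint-language term of sort $o$, hence a constraint ($t \in Fm$), and old \textsc{(GConstr)} gives $\Delta \vdash t : o$. For \textsc{(GAbs)}, $s = \lambda x{:}\sigma_1.\, G$ with premise $\Delta, x{:}\sigma_1 \vdash G : \rho$ and $\rho$ relational; the induction hypothesis (second alternative, since $\rho$ is relational) supplies an old derivation of the premise, and old \textsc{(GAbs)} reconstructs $\Delta \vdash \lambda x{:}\sigma_1.\,G : \sigma_1 \to \rho$.

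The crux is \textsc{(GApp)}: $s = G\,H$ with $\Delta \vdash G : \sigma_1 \to \sigma$ and $\Delta \vdash H : \sigma_1$, where by the inversion fact $\sigma$ is relational. If $\sigma_1$ is a relational sort (this includes $\sigma_1 = o$), then $\sigma_1 \to \sigma$ is again relational, the induction hypothesis yields old derivations of both premises, and old \textsc{(GAppR)} applies. If instead $\sigma_1 = b$ is a base sort with $b \neq o$, then applying the induction hypothesis to $\Delta \vdash H : b$ forces the first alternative, so $H \in Tm$ with $\Delta \vdash H : b$ a first-order term; moreover $\sigma_1 \to \sigma = b \to \sigma$ is again relational, so the induction hypothesis on $G$ gives $\Delta \vdash G : b \to \sigma$ in the old system, and old \textsc{(GAppl)} applies with $N := H$. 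This exhausts the cases.

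The main obstacle is exactly the collapse of the old rules \textsc{(GAppl)} and \textsc{(GAppR)} into the single new rule \textsc{(GApp)}: to translate an application back one must recognise that a base-sorted argument (other than one of sort $o$) occurring in a new-system application is necessarily a first-order term of the constraint language, which is precisely what the strengthened induction hypothesis records and why a naive induction restricted to relationally-sorted judgements does not close. Establishing the strengthened hypothesis itself relies on the inversion observation above. A secondary, minor point is the convention that constraint-language terms of sort $o$ coincide with the constraints $Fm$, needed to close the $b = o$ subcase of \textsc{(GTerm)}.
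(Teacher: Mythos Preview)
Your proof is correct and follows the same overall approach as the paper (structural induction), but the paper's own proof is merely the one-line hint ``By structural induction on $s$.'' You have supplied the substance that the paper omits: the strengthened induction hypothesis distinguishing the $Tm$-case at non-propositional base sort from the genuinely relational case, and the inversion observation that only \textsc{(GTerm)} can conclude a base sort $\neq o$ in the new system. This is exactly the content needed to make the \textsc{(GApp)} case go through, since one must recover either \textsc{(GAppl)} or \textsc{(GAppR)} depending on whether the argument is a constraint-language term or a relational goal term, and a naive induction restricted to relational conclusions would not close.

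The one point you flag yourself is accurate: the $b = o$ subcase of \textsc{(GTerm)} relies on the convention that every $t \in Tm$ of sort $o$ lies in $Fm$. The paper's definition of constraint languages allows $Fm$ to be a proper ``distinguished subset'' of propositional terms, so strictly speaking this is an assumption rather than a consequence of the setup; but it is clearly the intended reading (and without it the proposition itself would fail), so noting it as you do is the right move.
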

\begin{proof}
By structural induction on $s$. 
\end{proof}

\begin{proposition} \label{equivalence between original goal terms and relational goal terms}
If $A$ is the set of goal terms in the original definition and $B$ is the set of goal terms in the new definition with relational sorts, then $A = B$ holds. 
\end{proposition}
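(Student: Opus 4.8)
The plan is to obtain $A = B$ by proving the two inclusions separately, each of which is essentially a restatement of one of the two preceding propositions. Recall that in the original definition a goal term always carries a relational sort $\rho$ (the sorting judgement $\Delta \vdash G : \rho$ ranges only over relational sorts), so $A$ is by definition a set of terms admitting relational sorts, and $B$ is the set of terms that admit a relational sort under the new sorting rules. Thus it suffices to show $A \subseteq B$ and $B \subseteq A$.

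For $A \subseteq B$: take $s \in A$, so there are $\Delta$ and a relational sort $\rho$ with $\Delta \vdash s : \rho$ in the original system. By Proposition~\ref{goal terms in the original def can be generated by the new one}, $s$ is generated by the new rules and moreover $\Delta \vdash s : \rho$ holds in the new system. Since $\rho$ is relational, $s$ is a new-definition goal term with a relational sort, i.e.\ $s \in B$.

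For $B \subseteq A$: take $s \in B$, so $\Delta \vdash s : \rho$ in the new system for some relational $\rho$. By Proposition~\ref{relational goal terms in the new def can be generated by the original one}, the judgement $\Delta \vdash s : \rho$ then also holds in the original system, hence $s$ is an original-definition goal term, i.e.\ $s \in A$. Combining the two inclusions gives $A = B$.

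I do not expect a genuine obstacle here, since both directions are immediate consequences of the two lemmas that have just been established; the only point requiring a little care is the bookkeeping observation that ``goal term in the original definition'' is coextensive with ``term admitting a relational sort in the original system'', so that the relational-sort restriction imposed on $B$ matches exactly what the original definition already enforces. Everything else is a direct appeal to Propositions~\ref{goal terms in the original def can be generated by the new one} and~\ref{relational goal terms in the new def can be generated by the original one}.
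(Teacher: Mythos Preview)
Your proposal is correct and takes essentially the same approach as the paper: both establish $A = B$ by double inclusion, invoking Proposition~\ref{goal terms in the original def can be generated by the new one} (together with the observation that original goal terms carry relational sorts) for $A \subseteq B$, and Proposition~\ref{relational goal terms in the new def can be generated by the original one} for $B \subseteq A$.
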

\begin{proof}
By Proposition~\ref{goal terms in the original def can be generated by the new one} and the fact that goal terms in the original definition have relational sorts, we have $A \subseteq B$. 
Additionally, from Proposition~\ref{relational goal terms in the new def can be generated by the original one}, we know $B \subseteq A$. Therefore, by double inclusion, $A = B$. 
\end{proof}

Due to Proposition~\ref{equivalence between original goal terms and relational goal terms}, I use the word `relational goal terms' to mean goal terms in the original definition. Henceforth, I will use the new definition of goal terms. 

\subsubsection{Properties of goal terms}

The first proposition is the goal terms' counterpart of Proposition~\ref{uniqueness of sorts of first-order terms}. 

\begin{proposition} \label{uniqueness of sorts of goal terms}
Given $\Delta \vdash s: \sigma$, the sort of $s$ is unique; that is, we cannot have $\Delta \vdash s: \sigma'$, where $\sigma \neq \sigma'$.
\end{proposition}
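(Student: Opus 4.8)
The plan is to prove the statement by structural induction on the raw term $s$, quantifying over all conflict-free sort environments $\Delta$ simultaneously, exactly in the style of the proof of Proposition~\ref{uniqueness of sorts of first-order terms}. For the induction to close, I would strengthen the hypothesis to a conjunction: for every $\Delta$, (i) $s$ has at most one sort under $\Delta$ as a goal term, and (ii) if $s$ is \emph{also} a well-sorted first-order term under $\Delta$, then its first-order sort coincides with its goal-term sort. Part (ii) is the device that reconciles the two routes by which a term can be typed (the native goal rules {\sc (GCst)}, {\sc (GVar)}, {\sc (GAbs)}, {\sc (GApp)} versus the constraint-language rules {\sc (GFml)} and {\sc (GTerm)}), and its own inductive proof rests on Proposition~\ref{uniqueness of sorts of first-order terms} together with the fact that both {\sc (TApp)} and {\sc (GApp)} simply read off the codomain of the sort of the function part.

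The case analysis is on the outermost shape of $s$. If $s$ is a variable $x$, the only rules that can conclude with it are {\sc (GVar)}, {\sc (GTerm)} and {\sc (GFml)}; each forces the sort of $x$ to be $\Delta(x)$ (for the latter two because the first-order derivation of $x$ must end in {\sc (TVar)}), and $\Delta(x)$ is unique since $\Delta$ is conflict-free. If $s$ is a logical goal constant $\land,\lor,\exists_\tau$, only {\sc (GCst)} can apply: these are neither variables, abstractions, nor applications, and they cannot lie in $Tm$ or $Fm$ — the first-order term calculus has no rule for $\lor$ or $\exists$, while $\land$ has the non-base sort $o\to o\to o$, so the side condition of {\sc (GTerm)} fails and it is not of sort $o$ so {\sc (GFml)} fails — and {\sc (GCst)} assigns the designated sort $\rho_s$. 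If $s$ is a non-logical constant $c\in\mathbb{S}$, only {\sc (GTerm)} and {\sc (GFml)} can apply, and both force the sort to be $\mathbb{S}(c)$ (with {\sc (GFml)} additionally forcing $\mathbb{S}(c)=o$). If $s=\lambda x{:}\sigma_0.G$, only {\sc (GAbs)} applies (no first-order term is a lambda abstraction), and the induction hypothesis applied to $G$ under $\Delta,x{:}\sigma_0$ gives a unique sort $\rho$, so the sort of $s$ is the unique $\sigma_0\to\rho$.

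The substantive case, and the main obstacle, is $s=G\,H$ (note that an existential goal term $\exists_b x.F$ falls here, being $\exists_b\,(\lambda x{:}b.F)$). Here {\sc (GApp)}, {\sc (GTerm)} and {\sc (GFml)} are all a priori available. A {\sc (GApp)} derivation forces, via the induction hypothesis on $G$, the sort of $G$ to be a unique arrow sort $\sigma_1\to\rho$ and the sort of $H$ to be $\sigma_1$, so every {\sc (GApp)} derivation yields the same conclusion $\rho$. A {\sc (GTerm)} derivation yields a base sort $b$ with $G\,H\in Tm$, and a {\sc (GFml)} derivation yields $o$ with $G\,H\in Fm$. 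The work is to see these agree whenever more than one is realized: if a {\sc (GApp)} derivation and $G\,H\in Tm$ both hold, then $G\,H$ is a first-order term, so by {\sc (TApp)} its sub-expression $G$ is a first-order term whose unique first-order sort (Proposition~\ref{uniqueness of sorts of first-order terms}) has codomain equal to the first-order sort $b$ of $G\,H$; $G$ is also a goal term (premise of {\sc (GApp)}), so part (ii) of the strengthened hypothesis equates its goal sort with its first-order sort, forcing the {\sc (GApp)} conclusion to be exactly $b$; and if moreover $G\,H\in Fm\subseteq Tm$, this $b$ is $o$. This closes the induction, and part (i) is the desired conclusion. I expect the only delicate point to be setting up and verifying part (ii) in the base cases — e.g. that $\land$ receives $o\to o\to o$ under both {\sc (TAnd)} and {\sc (GCst)}, and that a variable or a constant that is simultaneously a goal term and a first-order term is necessarily assigned the same sort by $\Delta$ (resp.\ $\mathbb{S}$) in both systems.
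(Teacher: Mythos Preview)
Your proof is sound in spirit and, in fact, more careful than the paper's. The paper argues by structural induction on $s$ but then cases on the \emph{last typing rule used}, tacitly assuming this rule is determined by the shape of $s$. It never confronts the possibility that the same raw term admits derivations ending in different rules---e.g.\ $E \land F$ with $E,F \in Fm$ is typeable both via {\sc (GApp)} (decomposing through {\sc (GCst)}) and via {\sc (GFml)} (as an atom)---and so never argues that these routes agree. Your strengthened invariant~(ii), that a term which is simultaneously a goal term and a first-order term receives the same sort under both systems, is precisely the device needed to close that gap, and the paper simply omits it.

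There is, however, one small hole in your application case. You write that if $G\,H$ is a first-order term then ``by {\sc (TApp)} its sub-expression $G$ is a first-order term''. But the first-order derivation of an application need not end in {\sc (TApp)}: when $G = \exists_b$ and $H = \lambda x{:}b.\,t$, the derivation ends in {\sc (TExi)}, and $\exists_b$ standing alone is \emph{not} a first-order term (none of {\sc (TCst)}, {\sc (TAnd)}, {\sc (TNeg)}, {\sc (TVar)} covers it), so part~(ii) of your hypothesis cannot be invoked on $G$ here. The repair is trivial---both the {\sc (GApp)} route (via {\sc (GCst)} for $\exists_b$, then {\sc (GAbs)}, then {\sc (GApp)}) and the {\sc (TExi)} route assign $s$ the sort $o$---but the case needs to be stated separately. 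You even flag existentials as a special shape of application; the missing step is just to follow that observation through on the first-order side.
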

\begin{proof}
The proof proceeds by structural induction on $s$. 

For the base case, if {\sc (GCst)} or {\sc (GVar)} is used, the sort of $s$ is uniquely determined by {\tt LSym} or $\mathbb{S}$. If $\Delta \vdash s: \sigma$ is created by {\sc (GFml)} or {\sc (GTerm)}, the sort of $s$ is uniquely determined
due to Proposition~\ref{uniqueness of sorts of first-order terms}. 

For the inductive case, suppose {\sc (GAbs)} is used. Hence, we have $s = \lambda x {:} \sigma. G$. Regardless of the sort of $s$, we can always uniquely determine the sort of $x$ because it is recorded in the lambda abstraction $\lambda x {:} \sigma. G$.
Therefore, the left hand side of $\Delta, x: \sigma \vdash G: \rho$ is fixed. It follows from the inductive hypothesis that the sort of $G$ is uniquely determined. Hence, the sort of $\lambda x {:} \sigma. G$ is unique as well.

Finally, if {\sc (GApp)} is used, we have $s = G \ H$. Since the sorts of $G$ and $H$ under $\Delta$ are uniquely determined by the inductive hypothesis, the claim holds for $G \ H$. 
\end{proof}

Similarly, the next theorem is the goal terms' counterpart of Theorem~\ref{type annotation of first-order terms}. 

\begin{theorem}
If $\Delta \vdash s: \sigma$ holds, where $s$ is a goal term, each symbol in $s$ can be annotated with a unique sort. 
\end{theorem}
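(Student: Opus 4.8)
The plan is to follow the template of Theorem~\ref{type annotation of first-order terms} and argue by structural induction on the goal term $s$, using the six revised sorting rules \textsc{(GCst)}, \textsc{(GVar)}, \textsc{(GFml)}, \textsc{(GTerm)}, \textsc{(GAbs)}, \textsc{(GApp)}. The invariant to maintain is that, from a conclusion $\Delta \vdash s : \sigma$, the immediate premise(s) of the last rule applied are forced, so that a unique symbol annotation of the subterms (supplied by the induction hypothesis) extends uniquely to $s$; equivalently, two distinct annotations of $s$ would restrict to two distinct annotations of a proper subterm, contradicting the hypothesis.

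For the base cases \textsc{(GCst)} and \textsc{(GVar)}, the only symbol of $s$ is $s$ itself, whose sort is determined by {\tt LSym} (in the case of $\land$, $\lor$, $\exists_b$, $\exists_\rho$) or by the conflict-free environment $\Delta$ (in the case of a variable). For \textsc{(GFml)} and \textsc{(GTerm)}, $s$ is a first-order term over $\Sigma$, and every symbol occurring in $s$ already receives a unique sort by Theorem~\ref{type annotation of first-order terms}; moreover its free variables are base-sorted by Proposition~\ref{free variables in first-order terms}, so this case is entirely subsumed by the first-order result. In the \textsc{(GAbs)} case, $s = \lambda x{:}\sigma_1.\,G$: the annotation $\sigma_1$ of $x$ is recorded explicitly in $s$ and is therefore unique, and since the sort $\sigma$ of $s$ is unique by Proposition~\ref{uniqueness of sorts of goal terms} and equals $\sigma_1 \to \rho$, the premise $\Delta, x{:}\sigma_1 \vdash G : \rho$ is uniquely recovered from the conclusion, so the induction hypothesis applies to $G$ and the annotation of $G$ together with that of $x$ annotates all of $s$.

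The main obstacle is the application case \textsc{(GApp)}, $s = G\,H$, for the same reason as in the first-order setting: from the bare shape $\Delta \vdash G\,H : \rho$ one cannot simply read off the split $\Delta \vdash G : \sigma_1 \to \rho$, $\Delta \vdash H : \sigma_1$, because $\sigma_1$ (which may be a base sort or a relational sort) is not visible in the syntax and must be discovered by type inference. The remedy parallels that of Theorem~\ref{type annotation of first-order terms}: re-running the sorting rules on the subterms $G$ and $H$ computes their sorts under $\Delta$, and by Proposition~\ref{uniqueness of sorts of goal terms} these sorts are unique; the induction hypothesis then gives a unique symbol annotation for $G$ and for $H$, and any second annotation of $G\,H$ would force a second annotation of $G$ or of $H$. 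Finally, I would remark that a compound goal term can in general be derived by more than one rule (for instance a conjunction of constraints may be built wholly inside $Tm \cup Fm$ via \textsc{(GFml)}, or compositionally via \textsc{(GApp)} with \textsc{(GCst)} for $\land$), but every such derivation assigns the same sort to each symbol occurrence, so it is uniqueness of the annotation — not of the derivation tree — that is being claimed, and this is preserved throughout. This completes the induction.
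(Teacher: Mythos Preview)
Your proposal is correct and follows essentially the same approach as the paper: structural induction on the six goal-term rules, with the base cases for \textsc{(GFml)}/\textsc{(GTerm)} delegated to Theorem~\ref{type annotation of first-order terms}, and the \textsc{(GApp)} case handled by invoking Proposition~\ref{uniqueness of sorts of goal terms} to pin down the sorts of $G$ and $H$ before applying the induction hypothesis. Your closing remark that distinct derivation trees (e.g.\ a conjunction of constraints built via \textsc{(GFml)} versus \textsc{(GApp)}+\textsc{(GCst)}) still yield the same symbol annotation is a nice clarification that the paper leaves implicit.
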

\begin{proof}
The proof goes by by structural induction on goal terms.

For the base case, when $s$ is created by {\sc (GCst)} or {\sc (GVar)}, we can simply apply Proposition~\ref{uniqueness of sorts of goal terms} since $s$ only contains one symbol. If {\sc (GFml)} or {\sc (GTerm)} is used, the claim follows from
Theorem~\ref{type annotation of first-order terms}.  

For the inductive case, if $s$ is created by {\sc (GAbs)}, we know $s = \lambda x {:} \sigma. G$. From $\Delta \vdash \lambda x {:} \sigma. G: \sigma \to \rho$, we can uniquely deduce $\Delta, x: \sigma \vdash G: \rho$. 
By the inductive hypothesis, every symbol in $G$ can be annotated with a unique symbol. Thus, the claim holds in this case.

Finally, if {\sc (GApp)} is used, we know $s =G \ H$. By Proposition~\ref{uniqueness of sorts of goal terms}, we can uniquely determine the sorts of $G$ and $H$ under $\Delta$; that is, we can uniquely deduce typing judgements
$\Delta \vdash G: \rho_1$ and $\Delta \vdash H: \rho_2$. It follows from the inductive hypothesis that every symbol in $G$ and $H$ can be annotated with a unique sort. Therefore, the claim holds for every symbol in $G \ H$. 
\end{proof}

The following proposition saves us the need to be concerned about defunctionalizing partially applied instances of functions from $\mathbb{S}$ because they are never strictly partially applied in goal terms.

\begin{proposition}
Functions (i.e.~constants of arrow sort) from $\mathbb{S}$ cannot be strictly partially applied inside goal terms. 
\end{proposition}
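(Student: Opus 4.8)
The plan is to pin down exactly where a non-logical constant $c \in {\tt dom}(\mathbb{S})$ can occur inside a goal term, and then to show that at every such occurrence it is fully applied. The first step is an easy structural induction on the derivation of $\Delta \vdash G : \sigma$ in the new goal-term calculus: {\sc (GCst)} produces only logical constants from {\tt LSym} (which lie outside $\Sigma$), {\sc (GVar)} produces variables, and {\sc (GAbs)} and {\sc (GApp)} merely recombine previously built goal terms, so every occurrence of a symbol from ${\tt dom}(\mathbb{S})$ inside $G$ must have entered through {\sc (GFml)} or {\sc (GTerm)}, i.e.\ inside a first-order formula $\varphi \in Fm$ (of sort $o$) or a first-order term $t \in Tm$ whose side condition in {\sc (GTerm)} forces it to have a base sort $b$. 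Thus it suffices to analyse occurrences of arrow-sorted constants inside well-sorted first-order terms of base sort.

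The key lemma I would prove is: if $\Delta \vdash s : \beta$ is a well-sorted first-order term in the sense of the calculus {\sc (TCst)}--{\sc (TApp)} and $\beta$ is a base sort, then every occurrence in $s$ of a constant of arrow sort is applied to exactly its arity many arguments, hence is not strictly partially applied. The proof is by structural induction on $s$, organised around spines: write each subterm in head form $h\, t_1 \cdots t_n$ with $h$ not itself an application. Because $\Sigma$ is first-order, each constant has a sort of order at most $2$, hence of shape $b_1 \to \cdots \to b_k \to b'$ with all $b_i, b'$ base; and by inspection of the rules $h$ is either a constant, a variable (base sort, so $n = 0$), or an $\exists_b x.\, t'$ (sort $o$, so $n = 0$). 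If $h$ is a constant $c$ of arrow sort $b_1 \to \cdots \to b_k \to b'$, then {\sc (TApp)} forces every $t_i$ to have base sort $b_i$ and the spine $c\, t_1 \cdots t_n$ to have sort $b_{n+1} \to \cdots \to b_k \to b'$. A strictly partially applied occurrence would mean $n < k$ with the spine maximal, so the spine has an arrow sort; but such a term can occur nowhere in a base-sorted $s$ — it cannot be an argument of {\sc (TApp)} (arguments are base-sorted), nor the head of a longer spine (that contradicts maximality), nor the body of {\sc (TExi)} (that has sort $o$), nor all of $s$ (whose sort $\beta$ is base). Hence every spine's head constant is fully applied; since all spine arguments and all $\exists$-bodies are again base-sorted, the induction hypothesis discharges the remaining occurrences.

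Finally I would combine the two steps: an arrow-sorted $\mathbb{S}$-constant occurring in a goal term $G$ occurs inside some $\varphi \in Fm$ of sort $o$ or some $t \in Tm$ of base sort injected via {\sc (GTerm)} — both of which are base-sorted well-sorted first-order terms (even though $Tm$ itself may contain arrow-sorted terms, these never reach a goal term) — so by the lemma that constant is fully applied, i.e.\ not strictly partially applied. I expect the only delicate point to be the bookkeeping in the spine analysis, specifically ruling out that a partially applied constant hides as an application argument or inside a larger arrow-sorted term; this is handled uniformly by the fact that {\sc (TApp)} only ever supplies base-sorted arguments and that the $Fm$/$Tm$ fragments entering a goal term are themselves base-sorted. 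The rest is routine structural induction.
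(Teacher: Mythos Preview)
Your proposal is correct and follows essentially the same approach as the paper: reduce to the base-sorted $Fm$/$Tm$ fragments injected by {\sc (GFml)}/{\sc (GTerm)}, then argue that a maximal strictly-partially-applied spine has arrow sort yet every possible position inside a base-sorted first-order term (argument of {\sc (TApp)}, body of {\sc (TExi)}, or the whole term) forces base sort. Your spine/head-form organisation and explicit treatment of the $\exists$ case make the argument tidier than the paper's, but the underlying idea is the same.
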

\begin{proof}
Functions from $\mathbb{S}$ are introduced into goal terms by {\sc (GFml)} and {\sc (GTerm)}. Let $s$ be a first-order term (or formula) introduced by these two rules. Also, let $f \in \mathbb{S}$ be a function and $t$
be a first-order term $f \ t_1 \ \cdots t_k$, where $k < {\bf ar}(f)$. Hence, $t$ is strictly partially applied. In addition, assume that $t$ cannot be applied to another first-order term. This means that $t$ is maximal with respect to function application.
Since {\sc (GFml)} and {\sc (GTerm)} require $s$ to be of base sort, $s$ itself cannot be strictly partially applied. Thus, $t$ could only possibly appear (strictly) inside $s$.

Furthermore, because $t$ is assumed to be maximal with respect to function application, inside $s$, we cannot have $t \ u$ for some first-order term $u$. Thus, the only possibility for $u$ being located inside $s$ is that 
$s$ contains $u \ t$ for some $u$. However, as indicated by the conclusions in the six typing rules, first-order terms have sorts of order at most 2. Every subexpression of a first-order term is also a first-order term
and hence has a sort of order at most 2. Thus, the sort of $u$ has order at most 2; that is, the sort of $u$ looks like $b_1 \to \cdots \to b_n$, where $n > 1$ and $b_i \in \mathbb{B}$ for each $1 \leq i \leq n$. Since $u$ is applied
to $t$, the sort of $t$ must be $b_1$; that is, $t$ cannot have an arrow sort. Therefore, $t$ cannot appear inside $s$. This concludes the proof. 
\end{proof}

\subsection{Type preservation proof}

In this subsection, I prove that in an output of the defunctionalization algorithm, the logic program and the goal formula are well-sorted. Let $\mathcal{P} = (\Delta, P, G)$ be a source monotone problem and $\Sigma = (\mathbb{B}, \mathbb{S})$
be a first-order signature for $\mathcal{P}$. $P$ and $G$ are assumed to be well-sorted. Further, let $\mathcal{P}' = (\Delta', P', G')$ be the result of defunctionalizing $\mathcal{P}$ and $\Sigma' = (\mathbb{B}', \mathbb{S}')$ 
be a signature for $\mathcal{P}'$. 

The first theorem establishes well-sortedness of equations defining $Apply_{A}$. 

\begin{theorem} \label{well-sortedness of P' Apply}
Every equation in $P'_{\text{Apply}}$ is well-sorted. 
\end{theorem}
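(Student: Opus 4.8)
The plan is to fix an arbitrary equation of $P'_{\text{Apply}}$, read off the relational sort that $\Delta'$ assigns to its left-hand side, and then build a sorting derivation showing its right-hand side has exactly that sort. Concretely, suppose the equation
\[
Apply_{\sigma_{n+1}'} = \lambda x, y, z.\ \bigl(\exists a_1, \ldots, a_n.\ x = C^{n}_{X}\, a_1 \cdots a_n \land z = C^{n+1}_{X}\, a_1 \cdots a_n\, y\bigr)
\]
arises, as in \eqref{new P for Apply}, from $(X = \lambda x_1{:}\sigma_1, \ldots, x_m{:}\sigma_m.\ F) \in P$ with ${\bf ar}(X) = m$, $0 \le n \le m-2$, and $\sigma_{n+1} \leadsto_{T} \sigma_{n+1}'$. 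Since $P$ is well-sorted over $\Delta$, we have $X : \sigma_1 \to \cdots \to \sigma_m \to o \in \Delta$.

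First I would pin down the sorts involved. From the definition of ${\tt ord}$ one sees immediately that a sort has order $1$ iff it is a base sort (an arrow sort $\sigma_1 \to \sigma_2$ has order $\ge {\tt ord}(\sigma_1) + 1 \ge 2$). Hence, since $\leadsto_{T}$ is defined by \textsc{(Base)} and \textsc{(Arrow)} by a case split on order, every sort $\tau$ satisfies: either $\tau$ is a base sort and $\tau \leadsto_{T} \tau$, or ${\tt order}(\tau) > 1$ and $\tau \leadsto_{T} {\bf closr}$; in both cases the image lies in $\mathbb{B}' = \mathbb{B} \cup \{{\bf closr}\}$. Applying this to $\sigma_{n+1}$ gives $\sigma_{n+1}' \in \mathbb{B}'$, so by \eqref{new Delta} $Apply_{\sigma_{n+1}'} \in {\tt dom}(\Delta')$ and $\Delta'(Apply_{\sigma_{n+1}'}) = {\bf closr} \to \sigma_{n+1}' \to {\bf closr} \to o$, a relational sort. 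Applying it to $\sigma_1, \ldots, \sigma_{n+1}$ gives base sorts $\sigma_1', \ldots, \sigma_{n+1}' \in \mathbb{B}'$; and since $n < m$ and $n+1 < m$ (the latter being exactly the constraint $n \le m-2$), the definition of $\mathbb{S}'$ supplies both data constructors $C^{n}_{X} : \sigma_1' \to \cdots \to \sigma_n' \to {\bf closr}$ and $C^{n+1}_{X} : \sigma_1' \to \cdots \to \sigma_{n+1}' \to {\bf closr}$ in $\mathbb{S}'$, as well as $(=_{\bf closr}) : {\bf closr} \to {\bf closr} \to o$.

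Next I would construct the sorting derivation of the body under the (conflict-free, since $x, y, z, a_1, \ldots, a_n$ are chosen fresh) sort environment $\Gamma = \Delta', x{:}{\bf closr}, y{:}\sigma_{n+1}', z{:}{\bf closr}, a_1{:}\sigma_1', \ldots, a_n{:}\sigma_n'$. Using \textsc{(GVar)} for the variables and iterated \textsc{(GApp)} with the constructor sorts just established, $C^{n}_{X}\, a_1 \cdots a_n$ and $C^{n+1}_{X}\, a_1 \cdots a_n\, y$ both get sort ${\bf closr}$; hence each equality $x = C^{n}_{X}\, a_1 \cdots a_n$ and $z = C^{n+1}_{X}\, a_1 \cdots a_n\, y$ gets sort $o$, either by two applications of \textsc{(GApp)} to the constant $(=_{\bf closr}) \in \mathbb{S}'$ or, equivalently, by \textsc{(GFml)}/\textsc{(GTerm)} since such a formula lies in $Fm'$/$Tm'$ by the construction of the target constraint language. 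The conjunction then has sort $o$ (via $\land : o \to o \to o$), and since each $\sigma_j'$ is a base sort, each $\exists_{\sigma_j'}$ is a legal goal-term constant, so iterating \textsc{(GCst)}, \textsc{(GAbs)}, \textsc{(GApp)} over $a_1, \ldots, a_n$ keeps the sort $o$. Finally three applications of \textsc{(GAbs)} — for $z$, then $y$, then $x$, with annotations ${\bf closr}$, $\sigma_{n+1}'$, ${\bf closr}$ — yield $\Delta' \vdash \lambda x, y, z.\ (\dots) : {\bf closr} \to \sigma_{n+1}' \to {\bf closr} \to o$, which is exactly $\Delta'(Apply_{\sigma_{n+1}'})$; the resulting goal term is moreover closed, so ${\tt FV}(\cdot) \subseteq {\tt dom}(\Delta')$ holds trivially. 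As the equation was arbitrary, every equation of $P'_{\text{Apply}}$ is well-sorted.

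The only genuinely load-bearing observations — the "hard part", such as it is — are that $\leadsto_{T}$ always produces a base sort in $\mathbb{B}'$ (so that both $Apply_{\sigma_{n+1}'} \in {\tt dom}(\Delta')$ and the quantifiers $\exists a_j$ are legitimate) and that equalities on ${\bf closr}$ qualify as well-sorted goal terms of sort $o$ even though $Tm'$ and $Fm'$ are only specified informally; both follow at once from the order-vs.-base-sort fact and the explicit memberships $(=_{\bf closr}), C^i_X \in \mathbb{S}'$. Everything else is a routine unwinding of the goal-term sorting rules. (Note that $F$ plays no role here, which is why this case is easy; the analogous statement for $P'_{\text{IOMatch}}$ is where the work lies, since there the defunctionalized body $F'$ must be shown to have sort $o$.)
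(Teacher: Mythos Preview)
Your proposal is correct and follows essentially the same approach as the paper: fix an arbitrary equation of $P'_{\text{Apply}}$, read off the constructor sorts from $\mathbb{S}'$ and the sort of $(=_{\bf closr})$, build the obvious typing derivation for the body, and match the resulting sort ${\bf closr} \to \sigma_{n+1}' \to {\bf closr} \to o$ against $\Delta'(Apply_{\sigma_{n+1}'})$. Your treatment is, if anything, slightly more careful than the paper's in making explicit that $\leadsto_T$ always lands in $\mathbb{B}'$ (so that $Apply_{\sigma_{n+1}'}\in{\tt dom}(\Delta')$ and the quantifiers over the $a_j$ are legal) and that the bound $n \le m-2$ is exactly what guarantees $C^{n+1}_X \in \mathbb{S}'$.
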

\begin{proof}
By \eqref{new P for Apply}, every equation in $P'_{\text{Apply}}$ takes the form
\begin{equation}
Apply_{\sigma_{n+1}'} = \lambda x, y, z. (\exists a_1, \ldots, a_n. x = C^{n}_{X} \ a_1 \ \cdots \ a_n \land z = C^{n+1}_{X} \ a_1 \ \cdots \ a_n \ y), \label{equation in P' Apply}
\end{equation}
where $X: \sigma_1 \to \cdots \to \sigma_m \to o \in \Delta$, $0\leq n \leq m - 2$, and $\sigma_i \leadsto_{T} \sigma_i'$ for all $1 \leq i \leq n + 1$. In \eqref{equation in P' Apply}, the equality between objects of sort {\bf closr} refers to $(=_{\bf closr})$ declared
in $\mathbb{S}'$. The sort of $(=_{\bf closr})$ is
\begin{equation*}
\vdash (=_{\bf closr}): {\bf closr} \to {\bf closr} \to o.
\end{equation*}
Note that I omit $\mathbb{S}'$ from typing judgements whenever its omission does not cause confusion. 

From \eqref{new Delta}, we know
\begin{align*}
\Delta' & \vdash C^{n}_{X}: \sigma_1' \to \cdots \to \sigma_{n}' \to {\bf closr} \\
\Delta' & \vdash C^{n+1}_{X}: \sigma_1' \to \cdots \to \sigma_{n+1}' \to {\bf closr}. 
\end{align*}
Let us denote $\{ a_i: \sigma_i' \mid 1 \leq i \leq n \}$ by $\{a_i: \sigma_i' \}$ for brevity. Applying (GApp) repeatedly, we can build the following typing derivations:
\begin{prooftree}
\AxiomC{$\Delta', \{a_i: \sigma_i' \} \vdash \{a_i: \sigma_i' \}$}
\AxiomC{$\Delta', \{a_i: \sigma_i' \} \vdash C^{n}_{X}: \sigma_1' \to \cdots \to \sigma_{n}' \to {\bf closr}$}
\BinaryInfC{$\Delta', \{a_i: \sigma_i' \} \vdash C^{n}_{X} \ a_1 \ \cdots \ a_n : {\bf closr}$}
\end{prooftree}
\begin{prooftree}
\AxiomC{$\Delta', \{a_i: \sigma_i' \}, y: \sigma_{n+1}' \vdash \{a_i: \sigma_i' \}, y: \sigma_{n+1}'$}
\AxiomC{$\Delta', \{a_i: \sigma_i' \}, y: \sigma_{n+1}' \vdash C^{n+1}_{X}: \sigma_1' \to \cdots \to \sigma_{n+1}' \to {\bf closr}$}
\BinaryInfC{$\Delta', \{a_i: \sigma_i' \}, y: \sigma_{n+1}' \vdash C^{n+1}_{X} \ a_1 \ \cdots \ a_n \ y: {\bf closr}$}
\end{prooftree}
Hence, for the two disjuncts in \eqref{equation in P' Apply}, we have
\begin{align*}
\Delta', \{a_i: \sigma_i' \}, x: {\bf closr}: \sigma_n' & \vdash (x = C^{n}_{X} \ a_1 \ \cdots \ a_n) : o \\
\Delta', \{a_i: \sigma_i' \}, y: \sigma_{n+1}', z: {\bf closr}: \sigma_n' & \vdash (z = C^{n+1}_{X} \ a_1 \ \cdots \ a_n \ y) : o. 
\end{align*}
These two typing judgements yield
\begin{equation*}
\Delta', x: {\bf closr}, y: \sigma_{n+1}', z: {\bf closr} \vdash (\exists a_1, \ldots, a_n. x = C^{n}_{X} \ a_1 \ \cdots \ a_n \land z = C^{n+1}_{X} \ a_1 \ \cdots \ a_n \ y): o.
\end{equation*}
Finally, by {\sc (GAbs)}, we obtain
\begin{equation*}
\Delta' \vdash \lambda x, y, z. (\exists a_1, \ldots, a_n. x = C^{n}_{X} \ a_1 \ \cdots \ a_n \land z = C^{n+1}_{X} \ a_1 \ \cdots \ a_n \ y): {\bf closr} \to \sigma_{n+1}' \to {\bf closr} \to o.
\end{equation*}
Whether $\sigma_{n+1} \in \mathbb{B}$ or $\sigma_{n+1} = {\bf closr}$, $\sigma_{n+1}' \in \mathbb{B}'$ holds by the definition of $\mathbb{B}'$. Thus, it is given by \eqref{new Delta} that
\begin{equation*}
\Delta' \vdash Apply_{\sigma_{n+1}'}: {\bf closr} \to \sigma_{n+1}' \to {\bf closr} \to o.
\end{equation*}
Therefore, the left and right hand sides of \eqref{equation in P' Apply} have the same sort as required. 
\end{proof}

The next lemma plays a pivotal role in proving that all equations in $P'_{\text{IOMatch}}$ are well-sorted. 

\begin{lemma} \label{lemma for well-sortedness of P' IOMatch}
Let $s$ be a well-sorted source goal term over $\Sigma = (\mathbb{B}, \mathbb{S})$ that contains no lambda abstraction. Also, suppose $App = \{Apply_{A}: {\bf closr} \to A \to {\bf closr} \to o \mid A \in \mathbb{B} \cup \{ {\bf closr} \} \}$ and 
$IO = \{IOMatch_{A}: {\bf closr} \to A \to o \mid A \in \mathbb{B} \cup \{{\bf closr} \} \}$. 

If $\Gamma \vdash s: b$, where $b \in \mathbb{B}$, then $s \leadsto t$ holds for some goal term $t$. Furthermore, we have $\Gamma', App, IO \vdash t: b$, where $\Gamma' = \{v: \sigma' \mid v: \sigma \in \Gamma, \sigma \leadsto_{T} \sigma' \}$. 
Here, we use the fact that $\leadsto_{T}$ is a function.

Otherwise, if $\Gamma \vdash s: \rho$, where $\rho \notin \mathbb{B}$, then $s \leadsto^{X} t$ holds for some goal term $t$. Furthermore, we have $\Gamma', App, IO, X: {\bf closr} \vdash t: o$, where 
$\Gamma' = \{v: \sigma' \mid v: \sigma \in \Gamma, \sigma \leadsto_{T} \sigma' \}$. 
\end{lemma}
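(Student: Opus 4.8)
\emph{Proof proposal.} The plan is to prove both clauses of the lemma simultaneously by structural induction on the source goal term $s$. The two clauses must be carried together because $\leadsto$ and $\leadsto^{X}$ are mutually recursive: defunctionalizing an application $E\,F$ invokes $\leadsto^{x}$ on the function part $E$ (which always has an arrow sort) and invokes $\leadsto$ or $\leadsto^{y}$ on the argument $F$ according to whether $F$ has a base or an arrow sort. Before the induction I would record a few standing facts. First, $\leadsto_{T}$ is a total function on sorts, since rules \textsc{(Base)} and \textsc{(Arrow)} partition all sorts according to whether their order is $1$ or $>1$; hence $\Gamma'$ is well defined. Second, $\mathbb{B}' = \mathbb{B}\cup\{{\bf closr}\}$ and $\mathbb{S}'\supseteq\mathbb{S}\cup\{(=_{\bf closr})\}\cup\{C^{i}_{X}\}$ as in Subsection~\ref{definition of target monotone problems}, so every constant appearing in a target term is available over $\Sigma'$. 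Third, every $Apply_{A}$ or $IOMatch_{A}$ a transformation rule can emit has $A\in\mathbb{B}\cup\{{\bf closr}\}$ (the starred \textsc{Base} variants produce $Apply_{\sigma}$ / $IOMatch_{\sigma}$ with $\sigma\in\mathbb{B}$, since $\sigma\leadsto_{T}\sigma$; the \textsc{Arrow} variants produce $Apply_{\bf closr}$ / $IOMatch_{\bf closr}$), hence lies in $App\cup IO$. Finally, the parameter $X$ in a judgement $s\leadsto^{X}t$, and likewise the internal fresh variables $x,y$ introduced by the starred rules, are schematic, so the accompanying typing derivations may be instantiated with any concrete fresh variable and the bound variables renamed; this is exactly what the freshness side condition permits. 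Uniqueness of sorts of goal terms (Proposition~\ref{uniqueness of sorts of goal terms}) makes the base-sort/arrow-sort dichotomy on $s$ well posed.

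For the base cases: if $s = x$ is a variable of sort $o$, rule \textsc{(Var-Base)} gives $x\leadsto x$ and $x:o\in\Gamma'$; if $x$ has an arrow sort $\rho$ then $\rho\leadsto_{T}{\bf closr}$, and \textsc{(TopVar)} (when $x\in\Delta$) or \textsc{(Var-Arrow)} (otherwise) gives $x\leadsto^{X}(X=C^{0}_{x})$ or $x\leadsto^{X}(X=x)$, each a well-sorted equation of sort $o$ over $\Sigma'$ using $X:{\bf closr}$ together with $C^{0}_{x}:{\bf closr}\in\mathbb{S}'$ or $x:{\bf closr}\in\Gamma'$. If $s=\varphi\in Fm\cup Tm$, rule \textsc{(ConstrLan)} gives $\varphi\leadsto\varphi$; since every free variable of a first-order term has base sort (Proposition~\ref{free variables in first-order terms}), $\Gamma'$ agrees with $\Gamma$ on all variables of $\varphi$ and the universes and constant interpretations are inherited in $\Sigma'$, so $\varphi$ retains its base sort $b$. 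The remaining non-application constructs, $\exists_{b}x.F$ via \textsc{(Exi)} and $c\,E\,F$ with $c\in\{\land,\lor\}$ via \textsc{(LogSym)}, follow immediately from the induction hypothesis applied to the subterms, which here all have sort $o$, noting that $\land,\lor,\exists_{b}$ remain legal head symbols for target goal terms.

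The crux is the genuine application case $s = E\,F$ with $\text{head}(E)\notin\{\land,\lor\}$. Since $s$ is an application, its typing is by \textsc{(GApp)}, so $\Gamma\vdash E:\sigma\to\rho$ and $\Gamma\vdash F:\sigma$; because ${\tt ord}(\sigma\to\rho)>1$ the induction hypothesis on $E$ yields $E\leadsto^{x}E'$ with $\Gamma',App,IO,x:{\bf closr}\vdash E':o$. I would then split four ways: on whether $\rho=o$ (so \textsc{(Match)} fires, landing in the $\leadsto$ clause) or $\rho\neq o$ (so \textsc{(App)} fires, landing in the $\leadsto^{X}$ clause), and within each, on whether $\sigma$ is a base sort (so $\sigma\leadsto_{T}\sigma$ and the \textsc{Base} starred rule fires, the induction hypothesis on $F$ giving $F\leadsto F'$, $\Gamma',App,IO\vdash F':\sigma$) or an arrow sort (so $\sigma\leadsto_{T}{\bf closr}$ and the \textsc{Arrow} starred rule fires, giving $F\leadsto^{y}F'$, $\Gamma',App,IO,y:{\bf closr}\vdash F':o$). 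In each branch one assembles the prescribed quantified conjunction — e.g.\ $\exists_{\bf closr}x.(E'\land Apply_{\sigma}\,x\,F'\,X)$ for \textsc{(App-Base)}, or $\exists_{\bf closr}x.(E'\land\exists_{\bf closr}y.(F'\land IOMatch_{\bf closr}\,x\,y))$ for \textsc{(Match-Arrow)} — and checks it has sort $o$ (equivalently $\rho=o$ in the \textsc{Match} branches), using that $Apply_{\sigma}:{\bf closr}\to\sigma\to{\bf closr}\to o\in App$ (respectively $Apply_{\bf closr}$, $IOMatch_{\sigma}$, $IOMatch_{\bf closr}$) and that the fresh $x,y$ are bound by the freshly introduced $\exists_{\bf closr}$.

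The main obstacle is not any single deep step but the discipline of this four-way bookkeeping: for each rule shape one must keep straight which of the two clauses of the induction hypothesis to apply to $E$ versus to $F$, track that the schematic parameter $X$ survives only in the $\leadsto^{X}$ branches (and is discharged into $C^{0}_{x}$, into a variable, or into the third argument slot of $Apply$), and verify that the exact combination of $C^{i}_{X}$, $Apply_{A}$ and $IOMatch_{A}$ emitted by each rule type-checks in $\Sigma'$ under $\Gamma',App,IO$ (plus $X:{\bf closr}$ where relevant). The uniform invariant that makes the induction close — that the output has sort $o$ whenever $s$ has an arrow sort and has exactly the base sort of $s$ otherwise — is precisely what is being proved, so it must be threaded consistently through every case. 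As an immediate corollary, applying the lemma to the body $F$ of each $\eta$-expanded top-level equation $X=\lambda x_{1},\ldots,x_{m}.F$, where $F:o$ by well-sortedness of $P$, shows that the clauses of $P'_{\text{IOMatch}}$ in \eqref{new P for IOMatch} are well-sorted, as asserted there.
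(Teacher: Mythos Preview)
Your proposal is correct and follows essentially the same approach as the paper's proof: a simultaneous structural induction on $s$, handling the leaf cases \textsc{(ConstrLan)}, \textsc{(Var-Base)}, \textsc{(Var-Arrow)}, \textsc{(TopVar)} directly, the connectives \textsc{(LogSym)} and \textsc{(Exi)} by the induction hypothesis on subterms of sort $o$, and the genuine application case via the four-way split on whether the result sort is $o$ and whether the argument sort is base or arrow. If anything your write-up is slightly more complete than the paper's, which treats only \textsc{(App-Arrow)} in detail and omits an explicit mention of the \textsc{(Exi)} case.
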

\begin{proof}
The proof proceeds by structural induction on $s$. 

For the base case, suppose $s \in Fm \cup Tm$. Then the only inference rule that is applicable is {\sc (ConstrLan)}, which gives $s \leadsto s$. Because $s$ is well-sorted, all free variables in $s$ should be included in $\Gamma$. 
This can be formally proved, but I will not do it here. Additionally, by Proposition~\ref{free variables in first-order terms}, every free variable occurring in first-order terms have base sorts. As $b \leadsto_{T} b$ for any $b \in \mathbb{B}$,
we have
\begin{align*}
\Gamma' & =\{v: \sigma' \mid v: \sigma \in \Gamma, \sigma \leadsto_{T} \sigma' \} \\
& = \{u: b \mid u \in {\tt FV}(s), u: b \in \Gamma \} \\
& \quad \cup \{v: \sigma' \mid v \notin {\tt FV}(s), v: \sigma \in \Gamma, \sigma \leadsto_{T} \sigma' \}.
\end{align*}
Hence, free variables in $s$ have the same sorts in $\Gamma'$ as in $\Gamma$. As the sort of $s$ depends only on the sorts of free variables in $s$, we obtain
\begin{gather*}
\Gamma' \vdash s: b \\
\therefore \Gamma', App, IO \vdash s: b.
\end{gather*}
Thus, the claim holds in this case. The case for {\sc (Var-Base)} can be proved analogously.

Next, consider the case of {\sc (Var-Arrow)}. According to the rule, we have $s = x$, where $\Gamma \vdash x: \rho$ and $\rho$ is a relational arrow sort. {\sc (Var-Arrow)} yields that $x \leadsto^{X} X = x$. As $s$ is
well-sorted under $\Gamma$, it is given by {\sc (GVar)} that $x \in {\tt dom}(\Gamma)$. 
Because $\rho \leadsto_{T} {\bf closr}$ for any relational arrow sort $\rho$, we have $(x: {\bf closr}) \in \Gamma'$. It is straightforward to see that $x: {\bf closr}, X: {\bf closr} \vdash (X = x): o$ holds. It thus follows 
that $\Gamma', App, IO, X: {\bf closr} \vdash (X = x): o$ holds. Therefore, the claim is true in this case. The case for {\sc (TopVar)} can be proved in the same fashion. 

For the inductive case, assume $s = c \ E \ F$, where $c \in \{\land, \lor \}$. $s$ is thus defunctionalized by {\sc (LogSym)}. Since $s$ is well-sorted, by {\sc (GCst)} and {\sc (GApp)}, we have
\begin{align*}
\Gamma & \vdash c: o \to o \to o \\
\Gamma & \vdash E: o \\
\Gamma & \vdash F: o. 
\end{align*}
By the inductive hypothesis, $\Gamma', App, IO \vdash E': o$ and $\Gamma', App, IO \vdash F': o$ hold, where $E \leadsto E'$ and $F \leadsto F'$. It follows that $\Gamma', App, IO \vdash (c \ E' \ F'): o$. 

Next, suppose $s = E \ F$, where $E \ F$ and $F$ have arrow sorts. This case of $s$ is handled by {\sc (App)} and {\sc (App-Arrow)}. Thus, we have
\begin{equation*}
s \leadsto^{X} t,
\end{equation*}
where $t= \exists_{\bf closr} x. (E' \land \exists_{\bf closr} y. (F' \land Apply_{\bf closr} \ x \ y \ X))$ and $E \leadsto^{x} E'$ and $F \leadsto^{y} F'$. Because both $E$ and $F$ have arrow sorts, the inductive hypothesis gives
\begin{align*}
\Gamma', App, IO, x: {\bf closr} & \vdash E': o \\
\Gamma', App, IO, y: {\bf closr} & \vdash F': o. 
\end{align*}
It is therefore possible to construct a typing derivation tree for $\Gamma', App, IO, X: {\bf closr} \vdash t: o$, although I omit it because it takes a lot of space. 

The remaining three cases when $s = E \ F$ can be proved analogously.
\end{proof}

\begin{theorem} \label{well-sortedness of P' IOMatch}
Every equation from $P'_{\text{IOMatch}}$ is well-sorted. 
\end{theorem}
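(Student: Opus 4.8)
The plan is to fix an arbitrary equation of $P'_{\text{IOMatch}}$, which by \eqref{new P for IOMatch} has the form
\[
IOMatch_{\sigma_m'} = \lambda x, x_m.\, (\exists x_1, \ldots, x_{m-1}.\ x = C^{m-1}_{X}\ x_1 \cdots x_{m-1} \land F'),
\]
arising from a definition $(X = \lambda x_1{:}\sigma_1, \ldots, x_m{:}\sigma_m.\ F) \in P$ with ${\bf ar}(X) = m$, $\sigma_m \leadsto_{T} \sigma_m'$, and $F \leadsto F'$, and to show that both sides have sort ${\bf closr} \to \sigma_m' \to o$. For the left-hand side this is immediate from \eqref{new Delta} once we note that $\sigma_m' \in \mathbb{B}'$: by {\sc (Base)}/{\sc (Arrow)}, $\leadsto_{T}$ maps a base sort to itself (in $\mathbb{B}$) and a higher-order sort to ${\bf closr}$, and $\mathbb{B}' = \mathbb{B} \cup \{{\bf closr}\}$ contains both. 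The substance of the proof is typing the right-hand side.

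The key inputs I would assemble are the following. First, $F$ has sort $o$ and contains no lambda abstraction: after the preprocessing of Subsection~\ref{preprocessing} the outermost lambda of $X$ is fully $\eta$-expanded, so ${\bf ar}(X) = m$ together with $X : \sigma_1 \to \cdots \to \sigma_m \to o$ and uniqueness of sorts forces $\Gamma \vdash F : o$ for $\Gamma := \Delta, x_1{:}\sigma_1, \ldots, x_m{:}\sigma_m$, and $F$ is a preprocessed body and hence lambda-free. Since $o \in \mathbb{B}$, Lemma~\ref{lemma for well-sortedness of P' IOMatch} applies to $F$, and (using that $\leadsto$ is functional, so the target it produces is exactly $F'$) yields $\Gamma', App, IO \vdash F' : o$, where $\Gamma' = \{ v{:}\sigma' \mid v{:}\sigma \in \Gamma,\ \sigma \leadsto_{T} \sigma' \}$ and $App, IO$ are the sets of $Apply_A$ and $IOMatch_A$ for $A \in \mathbb{B}'$, so that $App \cup IO = \Delta'$. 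Now $\Gamma'$ splits as $\{ Y{:}{\bf closr} \mid Y \in {\tt dom}(\Delta)\} \cup \{ x_j{:}\sigma_j' \mid 1 \le j \le m \}$, since every top-level relational variable has an arrow sort and hence $\leadsto_{T} {\bf closr}$. The variables $Y \in {\tt dom}(\Delta)$ never occur free in $F'$ — the only rule consuming a top-level relational variable is {\sc (TopVar)}, which replaces it by the constant $C^0_Y$ — so strengthening the sort environment gives $x_1{:}\sigma_1', \ldots, x_m{:}\sigma_m', \Delta' \vdash F' : o$.

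Next I would build the derivation of the body bottom-up, exactly in the style of the proof of Theorem~\ref{well-sortedness of P' Apply}. By the definition of $\mathbb{S}'$ in Subsection~\ref{definition of target monotone problems}, $C^{m-1}_{X} : \sigma_1' \to \cdots \to \sigma_{m-1}' \to {\bf closr} \in \mathbb{S}'$ (legitimate since $0 \le m-1 < m$ and $\sigma_j \leadsto_{T} \sigma_j'$ for $1 \le j \le m-1$), and $({=_{\bf closr}}) : {\bf closr} \to {\bf closr} \to o \in \mathbb{S}'$. Repeated {\sc (GApp)} gives $x_1{:}\sigma_1', \ldots, x_{m-1}{:}\sigma_{m-1}', \Delta' \vdash C^{m-1}_{X}\ x_1 \cdots x_{m-1} : {\bf closr}$; adjoining $x{:}{\bf closr}$ and applying $({=_{\bf closr}})$ types the equality conjunct at $o$; combining it with $F' : o$ through $\land$ types the conjunction at $o$ under $x{:}{\bf closr}, x_1{:}\sigma_1', \ldots, x_m{:}\sigma_m', \Delta'$. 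Then $m-1$ existential steps (each legitimate since $\sigma_i' \in \mathbb{B}'$) bind $x_1, \ldots, x_{m-1}$, and two uses of {\sc (GAbs)} abstract $x$ and $x_m$, yielding $\Delta' \vdash \lambda x, x_m.\,(\exists x_1, \ldots, x_{m-1}.\ \cdots) : {\bf closr} \to \sigma_m' \to o$, which is the sort of $IOMatch_{\sigma_m'}$. The degenerate case $m = 1$ is the same argument with an empty existential prefix.

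The derivation-chasing above is routine; the one step that needs care, and which I expect to be the main obstacle, is the passage from Lemma~\ref{lemma for well-sortedness of P' IOMatch} to the ambient environment $\Delta'$ — that is, justifying that the closure-typed images of $\Delta$'s relational variables, which appear in $\Gamma'$, may be discarded because they do not occur in $F'$, and that the side conditions built into \eqref{new P for IOMatch} (the newly bound $x, x_m$ being distinct from each other and from every variable of $F'$, and $x_m$ being the parameter the transformation already used at sort $\sigma_m'$) ensure the {\sc (GAbs)} side conditions hold and no variable capture occurs. Once that bookkeeping is pinned down, both sides of the equation carry the sort ${\bf closr} \to \sigma_m' \to o$, so the equation is well-sorted.
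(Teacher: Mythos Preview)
Your proposal is correct and follows essentially the same approach as the paper's proof: both apply Lemma~\ref{lemma for well-sortedness of P' IOMatch} to the body $F$ to obtain $F':o$ under the transformed environment, discard the closure-typed images of $\Delta$'s variables (since {\sc (TopVar)} replaces them by constants), and then assemble the typing of the right-hand side from $C^{m-1}_X$, $(=_{\bf closr})$, conjunction, existentials, and {\sc (GAbs)}. If anything, your version is more explicit than the paper's in justifying $\sigma_m' \in \mathbb{B}'$, in explaining why the $\Delta$-variables vanish from $F'$, and in flagging the {\sc (GAbs)} freshness side conditions.
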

\begin{proof}
By \eqref{new P for IOMatch}, each rule in $P'_{\text{IOMatch}}$ has the form
\begin{equation*}
IOMatch_{\sigma_{m}'} = \lambda x, x_{m}. (\exists x_1, \ldots, x_{m-1}. x = C^{m-1}_{X} \ x_1 \ \cdots \ x_{m-1} \land F'),
\end{equation*}
where $X = \lambda x_1 {:} \sigma_1, \ldots, x_m {:} \sigma_m. F$ is in $P$. Here, ${\bf ar}(X) = m$ and $F \leadsto F'$. As ${\bf ar}(X) = m$, $F$ cannot be a lambda abstraction. 
Further, equations in $P$ are assumed to be well-sorted. Thus, we obtain
\begin{gather*}
\Delta \vdash (\lambda x_1 {:} \sigma_1, \ldots, x_m {:} \sigma_m. F): \sigma_1 \to \cdots \to \sigma_m \to o \\
\therefore \Delta, \{x_i: \sigma_i \mid 1 \leq i \leq m \} \vdash F: o. 
\end{gather*}
Lemma~\ref{lemma for well-sortedness of P' IOMatch} yields that
\begin{equation*}
\Gamma, \{x_i: \sigma_i' \mid 1 \leq i \leq m, \sigma_i \leadsto_{T} \sigma_i' \}, \Delta' \vdash F': o,
\end{equation*}
where $\Gamma = \{X: \sigma' \mid X: \sigma \in \Delta, \sigma \leadsto_{T} \sigma' \}$ and $\Delta'$ is given by \eqref{new Delta}. It is relatively straightforward to prove that $F'$ does not contain any top-level
relational variable symbol from $\Delta$. Therefore, $\Gamma$ does not affect the sort of $F'$. Consequently, we obtain
\begin{equation*}
\{x_i: \sigma_i' \mid 1 \leq i \leq m, \sigma_i \leadsto_{T} \sigma_i' \}, \Delta' \vdash F': o. 
\end{equation*}
It is possible to construct a valid typing derivation tree for
\begin{equation*}
\Delta' \vdash \lambda x, x_{m}. (\exists x_1, \ldots, x_{m-1}. x = C^{m-1}_{X} \ x_1 \ \cdots \ x_{m-1} \land F'): {\bf closr} \to \sigma_m' \to o.
\end{equation*}
This is consistent with the sort of $IOMatch_{\sigma_m'}$ given by \eqref{new Delta}. Therefore, each equation in $P'_{\text{IOMatch}}$ is indeed well-sorted. 
\end{proof}

\begin{theorem}
Each equation in $P'$ and $G'$ is well-sorted. 
\end{theorem}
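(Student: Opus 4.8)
The plan is to reduce the statement to the two preceding theorems together with Lemma~\ref{lemma for well-sortedness of P' IOMatch}, so that essentially no new work is required. For the first half --- that every equation in $P'$ is well-sorted --- I would simply invoke \eqref{new P}, which gives $P' = P'_{\text{Apply}} \cup P'_{\text{IOMatch}}$, and then cite Theorem~\ref{well-sortedness of P' Apply} for the $Apply_{A}$ equations and Theorem~\ref{well-sortedness of P' IOMatch} for the $IOMatch_{A}$ equations. Nothing further is needed for this part.

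The substance lies in the claim that $G'$ is well-sorted. First I would observe that $G$ is a goal formula, so $\Delta \vdash G: o$ with $o \in \mathbb{B}$, and that after the preprocessing of Subsection~\ref{preprocessing} the term $G$ contains no anonymous functions; since moreover $G$ has sort $o$ it is not itself a lambda abstraction, and as lambda abstractions occur only at the top level of syntax trees, $G$ is entirely free of lambda abstractions. Hence $G$ satisfies the hypotheses of Lemma~\ref{lemma for well-sortedness of P' IOMatch}, applied with $\Gamma = \Delta$ and $b = o$. That lemma then yields a goal term $G'$ with $G \leadsto G'$ (so that the defunctionalized goal is well-defined) together with the judgement $\Gamma', App, IO \vdash G': o$, where $\Gamma'$ is the pointwise $\leadsto_{T}$-image of $\Delta$, $App = \{Apply_{A}: {\bf closr} \to A \to {\bf closr} \to o \mid A \in \mathbb{B} \cup \{{\bf closr}\}\}$, and $IO = \{IOMatch_{A}: {\bf closr} \to A \to o \mid A \in \mathbb{B} \cup \{{\bf closr}\}\}$, using $\mathbb{B}' = \mathbb{B} \cup \{{\bf closr}\}$.

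The one point needing a small argument is that $\Gamma'$ may be discarded from this judgement. Its domain consists precisely of (the renamings of) the top-level relational variables of $\Delta$, and --- as already noted in the proof of Theorem~\ref{well-sortedness of P' IOMatch} --- none of those symbols survives in a defunctionalized term: each occurrence is rewritten to a constant $C^{0}_{X} \in \mathbb{S}' \setminus \mathbb{S}$ by {\sc (TopVar)}. Since the sort assigned to a goal term depends only on the sorts of the symbols actually occurring in it, I would conclude $App \cup IO \vdash G': o$, and then, by \eqref{new Delta}, $App \cup IO = \Delta'$, whence $\Delta' \vdash G': o$. Thus $G'$ is well-sorted over $\Sigma'$, which finishes the proof.

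I expect the only mild obstacle to be formally justifying that $G'$ contains no top-level relational variable symbol (so that $\Gamma'$ genuinely drops out) and that the constant symbols it does introduce --- the $C^{i}_{X}$ and $(=_{\bf closr})$ --- all lie in $\mathbb{S}'$. Both facts are, however, immediate from the shapes of the inference rules defining $\leadsto$ and from the construction of $\mathbb{S}'$ in Subsection~\ref{definition of target monotone problems}; indeed the second is already built into the statement and proof of Lemma~\ref{lemma for well-sortedness of P' IOMatch}, so it need only be invoked, not reproved.
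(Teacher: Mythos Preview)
Your proposal is correct and follows essentially the same route as the paper: both split $P'$ into $P'_{\text{Apply}} \cup P'_{\text{IOMatch}}$ and cite the two preceding theorems, then handle $G'$ by applying Lemma~\ref{lemma for well-sortedness of P' IOMatch} with $\Gamma = \Delta$ and dropping the resulting $\Gamma'$ from the judgement on the grounds that no top-level relational variable survives in $G'$. Your write-up is in fact more explicit than the paper's (you justify why $G$ is lambda-free and why $App \cup IO = \Delta'$), but the argument is the same.
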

\begin{proof}
Well-sortedness of equations in $P'$ follows from Theorem~\ref{well-sortedness of P' Apply} and Theorem~\ref{well-sortedness of P' IOMatch}. As for $G'$, because each $s \in G$ is free of lambda abstractions, by 
Lemma~\ref{lemma for well-sortedness of P' IOMatch}, we have $\Gamma, \Delta' \vdash t: o$, where $s \leadsto t$ and $\Gamma = \{X: \sigma' \mid X: \sigma \in \Delta\}$. Since $G'$ does not contain any top-level relational variable symbols
from $\Delta$, $\Gamma$ can be removed from the typing judgement. This results in $\Delta' \vdash G': o$. 
\end{proof}

\bibliographystyle{splncs04} 
\bibliography{references}

\begin{thebibliography}{10}
\providecommand{\url}[1]{\texttt{#1}}
\providecommand{\urlprefix}{URL }
\providecommand{\doi}[1]{https://doi.org/#1}

\bibitem{Beckert2014}
Beckert, B., H{\" a}hnle, R.: Reasoning and verification: State of the art and
  current trends. IEEE Intelligent Systems  \textbf{29}(1),  20--29 (Jan 2014).
  \doi{10.1109/MIS.2014.3}

\bibitem{Bell1997}
Bell, J.M., Bellegarde, F., Hook, J.: Type-driven defunctionalization. In:
  ICFP. pp. 25--37 (1997)

\bibitem{Bjorner2015}
Bj{\o}rner, N., Gurfinkel, A., McMillan, K., Rybalchenko, A.: Horn clause
  solvers for program verification. In: Fields of Logic and Computation II, pp.
  24--51. Springer (2015)

\bibitem{Bjorner2012}
Bj{\o}rner, N., McMillan, K.L., Rybalchenko, A.: Program verification as
  satisfiability modulo theories. SMT@ IJCAR  \textbf{20},  3--11 (2012)

\bibitem{Bjorner2013}
Bj{\o}rner, N., McMillan, K.L., Rybalchenko, A.: Higher-order program
  verification as satisfiability modulo theories with algebraic data-types.
  CoRR  \textbf{abs/1306.5264} (2013), \url{http://arxiv.org/abs/1306.5264}

\bibitem{Ramsay2017}
{Cathcart Burn}, T., Ong, C.L., Ramsay, S.J.: Higher-order constrained horn
  clauses for verification. {PACMPL}  \textbf{2}({POPL}),  11:1--11:28 (2018).
  \doi{10.1145/3158099}, \url{http://doi.acm.org/10.1145/3158099}

\bibitem{Hogger1990}
Hogger, C.J.: Essnetials of Logic Programming. Oxford University Press (1990)

\bibitem{Jochems18}
Jochems, J.: {HORS} safety verification by reduction to {HoCHC} (2018), working
  paper

\bibitem{Kobayashi2011}
Kobayashi, N., Sato, R., Unno, H.: Predicate abstraction and cegar for
  higher-order model checking. In: ACM SIGPLAN Notices. vol.~46, pp. 222--233.
  ACM (2011)

\bibitem{Lloyd1987}
Lloyd, J.W.: Foundations of Logic Programming. Springer-Verlag (1987)

\bibitem{Pottier2004}
Pottier, F., Gauthier, N.: {Polymorphic Typed Defunctionalization}. In: POPL.
  pp. 89--98 (2004)

\bibitem{Reynolds1972}
Reynolds, J.C.: Definitional interpreters for higher-order programming
  languages. In: Proceedings of the ACM annual conference-Volume 2. pp.
  717--740. ACM (1972)

\bibitem{Roscoe1997}
Roscoe, A.W.: The Theory and Practice of Concurrency. Prentice Hall (1997)

\bibitem{Sato2012}
Sato, R., Unno, H., Kobayashi, N.: Mochi: Software model checker for a
  higher-order functional language

\bibitem{Unno2013}
Unno, H., Terauchi, T., Kobayashi, N.: Automating relatively complete
  verification of higher-order functional programs. In: Proceedings of the 40th
  Annual ACM SIGPLAN-SIGACT Symposium on Principles of Programming Languages.
  pp. 75--86. POPL '13, ACM, New York, NY, USA (2013).
  \doi{10.1145/2429069.2429081},
  \url{http://doi.acm.org/10.1145/2429069.2429081}

\end{thebibliography}

\end{document}